\setlist{  
  listparindent=\parindent,
  parsep=0pt,
}
\theoremstyle{plain}
\newtheorem{thm}{Theorem}[section]
\newtheorem{prop}[thm]{Proposition}
\newtheorem{lemma}[thm]{Lemma}
\theoremstyle{definition}
\newtheorem{mydef}[thm]{Definition}
\newtheorem{ex}[thm]{Example}
\newtheorem{remark}[thm]{Remark}
\newtheorem{ques}{Question}[section]
\newtheorem*{sum-invol}{Summary of result 1 - $\mathcal{H}_{n}$ is involution}
\newtheorem*{sum-GP_ham}{Summary of result 2 - GP Hamiltonian flows}
\numberwithin{equation}{section} 
\DeclarePairedDelimiter\ipp{\langle}{\rangle}
\DeclarePairedDelimiter{\brak}{\lbrack}{\rbrack}
\DeclarePairedDelimiter{\paren}{\lparen}{\rparen}
\newcommand{\M}{{\mathcal{M}}}
\newcommand{\p}{{\partial}}
\newcommand{\R}{{\mathbb{R}}}
\newcommand{\C}{{\mathbb{C}}}
\newcommand{\N}{{\mathbb{N}}}
\newcommand{\K}{{\mathbb{K}}}
\newcommand{\Ss}{{\mathbb{S}}}
\renewcommand{\H}{{\mathcal{H}}}
\newcommand{\g}{{\mathfrak{g}}}
\newcommand{\G}{{\mathfrak{G}}}
\newcommand{\h}{{\mathfrak{h}}}
\renewcommand{\L}{{\mathcal{L}}}
\newcommand{\Sc}{{\mathcal{S}}}
\newcommand{\A}{{\mathcal{A}}}
\renewcommand{\M}{{\mathcal{M}}}
\newcommand{\D}{\mathcal{D}}
\newcommand{\tl}{\tilde}
\newcommand{\ol}{\overline}
\newcommand{\ul}{\underline}
\newcommand{\uell}{\underline{\ell}}
\newcommand{\ux}{\underline{x}}
\newcommand{\uj}{\underline{j}}
\newcommand{\um}{\underline{m}}
\newcommand{\un}{\underline{n}}
\newcommand{\up}{\underline{p}}
\newcommand{\uq}{\underline{q}}
\newcommand{\ep}{\epsilon}
\newcommand{\W}{{\mathbf{W}}}
\DeclareMathOperator{\Sym}{Sym}
\def\XXint#1#2#3{{\setbox0=\hbox{$#1{#2#3}{\int}$ }
\vcenter{\hbox{$#2#3$ }}\kern-.6\wd0}}
\let\oldtocsection=\tocsection
\let\oldtocsubsection=\tocsubsection
\let\oldtocsubsubsection=\tocsubsubsection
\renewcommand{\tocsection}[2]{\hspace{0em}\oldtocsection{#1}{#2}}
\renewcommand{\tocsubsection}[2]{\hspace{1em}\oldtocsubsection{#1}{#2}}
\renewcommand{\tocsubsubsection}[2]{\hspace{2em}\oldtocsubsubsection{#1}{#2}}
\title[Derivation of Hamiltonian structure for the NLS]{A Rigorous Derivation of the Hamiltonian structure for the nonlinear Schr\"{o}dinger equation}
\author[D. Mendelson]{Dana Mendelson$^1$} 
\address{$^1$  
Department of Mathematics \\ 
University of Chicago\\  
5734 S. University Avenue \\ 
Chicago, IL  60637}
\email{dana@math.uchicago.edu}
\thanks{\hspace{-4.6mm} $^1$ D.M. is funded in part by NSF DMS-1800697.}
\author[A. Nahmod]{Andrea R. Nahmod$^2$}
\address{$^2$ 
Department of Mathematics \\ University of Massachusetts\\  710 N. Pleasant Street, Amherst MA 01003}
\email{nahmod@math.umass.edu}
\thanks{$^2$ A.N. is partially supported by NSF-DMS-1463714 and NSF-DMS-1800852.}
\author[N. Pavlovi\'{c}]{Nata\v{s}a Pavlovi\'c$^3$}
\address{$^3$  
Department of Mathematics\\ 
University of Texas at Austin\\ 
2515 Speedway, Stop C1200\\
Austin, TX 78712}
\email{natasa@math.utexas.edu}
\thanks{$^3$ N.P. is funded in part by NSF DMS-1516228 and DMS-1840314.}
\author[M. Rosenzweig]{Matthew Rosenzweig$^4$}
\address{$^4$  
Department of Mathematics\\ 
University of Texas at Austin\\ 
2515 Speedway, Stop C1200\\
Austin, TX 78712}
\email{rosenzweig.matthew@math.utexas.edu}
\thanks{$^4$ M.R. is funded in part by NSF DMS-1516228 and a Provost Excellence Graduate Fellowship from the University of Texas at Austin.}
\author[G. Staffilani]{Gigliola Staffilani$^5$}
\address{$^5$ Department of Mathematics\\
Massachusetts Institute of Technology\\ 
77 Massachusetts Avenue,  Cambridge, MA 02139}
\email{gigliola@math.mit.edu}
\thanks{$^5$ G.S. is funded in part by NSF DMS-1462401 and DMS-1764403, and the Simons Foundation.}
\begin{document}
\maketitle{}

\begin{abstract}
We consider the cubic nonlinear Schr\"odinger equation (NLS) in any spatial dimension, which is a well-known example of an infinite-dimensional Hamiltonian system. Inspired by the knowledge that the NLS is an effective equation for a system of interacting bosons as the particle number tends to infinity, we provide a derivation of the Hamiltonian structure, which is comprised of both a Hamiltonian functional and a weak symplectic structure, for the nonlinear Schr\"odinger equation from quantum many-body systems. Our geometric constructions are based on a quantized version of the Poisson structure introduced by Marsden, Morrison and Weinstein \cite{MMW1984} for a system describing the evolution of finitely many indistinguishable classical particles.
\end{abstract}

\tableofcontents

\section{Introduction}

Hamiltonian partial differential equations (PDEs) are a ubiquitous class of equations which arise as models of physical systems exhibiting at least one, and often several, conservation laws. While the framework of finite-dimensional Hamiltonian systems was initially introduced to formalize Newtonian mechanics, infinite-dimensional Hamiltonian systems have since become a vast area of study, comprising an important class of models in diverse areas such as fluid mechanics, plasma physics, and quantum many-body systems. Establishing a comprehensive mathematical theory of infinite-dimensional Hamiltonian systems which is rich enough to accommodate all the physical problems of interest seems beyond reach; however, one can make mathematically rigorous sense of infinite-dimensional Hamiltonian systems in many interesting cases, see for instance \cite{CM74} and \cite{AM78}.

The focus of the present work will be a particular example of an infinite-dimensional Hamiltonian PDE, namely, the cubic nonlinear Schr\"{o}dinger equation (NLS):
\begin{equation}\label{nls}
i\p_{t}\phi + \Delta \phi = 2\kappa|\phi|^{2}\phi, \qquad \phi : \mathbb{R}^d \to \mathbb{C}, \quad \kappa \in \{\pm 1\}.
\end{equation}
We will recall the precise Hamiltonian formulation of \eqref{nls} in \eqref{nls_hamil_flow} and \eqref{eq:H_NLS} below.  

Over recent years, many authors have sought how to understand the manner in which the dynamics of the NLS arise as an \emph{effective equation}. By effective equation, we mean that solutions of the NLS equation approximate solutions to an underlying physical equation in some topology in a particular asymptotic regime. For example, the NLS is an effective equation for a system of $N$ bosons interacting pairwise via a delta or approximate delta potential, in the sense that the 1-particle density matrix formed by a solution to the NLS is close to the 1-particle reduced density matrix of the system in trace norm, with error tending to zero as the number of particles tends to infinity. Alternatively, the NLS also arises as an effective equation for water waves, where the multiple scales expansion constructed by solving the NLS approximates slowly modulated wave packet solutions to the water waves problem in Sobolev norm, with error tending to zero as the steepness of the wave packets tends to zero.

In contrast to the vast amounts of activity on the derivation of the dynamics of the NLS, to the best of our knowledge, questions about the origins of the Hamiltonian structure of the NLS have remained unexplored. Indeed, continuing with our two examples from the previous paragraph, the $N$-body Schr\"{o}dinger problem is well-known to admit a description as an infinite-dimensional Hamiltonian system, as are the water waves equations \cite{Zakharov1968}, but we are unaware of work which mathematically demonstrates whether, and if so the manner in which, the Hamiltonian structure of the NLS can be interpreted as a limit of the Hamiltonian structure of the $N$-body Schr\"{o}dinger or water waves problems. 

The Hamiltonian formulation for the NLS has two components: the Hamiltonian functional itself and an underlying phase space geometry provided by a weak Poisson manifold.\footnote{We refer to \cref{def:WP} and \cref{weak_sym} for definitions of a weak Poisson and weak symplectic manifold, respectively.} More precisely, to give the Hamiltonian formulation of the NLS, we endow the $d$-dimensional Schwartz space $\Sc(\R^d)$ with the standard weak symplectic structure
\begin{equation}\label{equ:poisson1}
\omega_{L^2}(f,g) = 2\Im{\int_{\R^d}dx\ol{f(x)}g(x)}, \quad \forall f,g \in \Sc(\R^d).
\end{equation}
Letting $\grad_s$ denote the symplectic $L^2$ gradient, see \cref{schwartz_deriv}, the symplectic form $\omega_{L^2}$ induces the canonical Poisson structure 
\begin{align}\label{pb_intro}
\pb{F}{G}_{L^{2}}(\cdot) \coloneqq \omega_{L^{2}}(\grad_{s}F(\cdot) ,\grad_{s}G(\cdot)),
\end{align}
defined for $F, G$ belonging to a certain sub-algebra $A_{\Sc}\subset C^{\infty}(\Sc(\R^{d});\R)$, the precise description of which we postpone to \cref{schwartz_wpoiss}.  The solution of the NLS \eqref{nls} is then the flow associated to a Hamiltonian equation of motion on the infinite-dimensional weak Poisson manifold $(\Sc(\R^{k}),\A_{\Sc}, \pb{\cdot}{\cdot}_{L^{2}})$. More precisely, \eqref{nls} is equivalent to
\begin{equation}\label{nls_hamil_flow}
\paren*{\frac{d}{dt}\phi}(t) = \grad_{s}\H_{NLS}(\phi(t)),
\end{equation}
where
\begin{equation}
\label{eq:H_NLS}
\H_{NLS}(\phi(t)) \coloneqq \int_{\R^d}dx \paren*{|\nabla \phi(t,x)|^2 + \kappa |\phi(t,x)|^4}.
\end{equation}
The goal of the current work is to derive both the weak Poisson structure and Hamiltonian functional constituting the Hamiltonian formulation of the NLS. Providing a rigorous definition and derivation of the geometry will pose the bulk of the difficulty in this work.

The methods we adopt are guided by the extensive research activity in recent years on the derivation of the NLS from the dynamics of interacting bosons. There are a number of different approaches to this derivation problem, but the one which informs our strategy involves the so-called BBGKY hierarchy,\footnote{Bogoliubov--Born--Green--Kirkwood--Yvon hierarchy.} which is a coupled system of linear equations describing the evolution of a system of finitely many interacting bosons, see \eqref{eq:BBGKY} below. This approach was pioneered by Spohn \cite{Spohn80} in the quantum context of the derivation of the Hartree equation in the mean field scaling regime.\footnote{See also the influential works of Lanford \cite{Lanford75, Lanford76} on the derivation of the Boltzmann equation.} We mention the works of Adami, Bardos, Golse, and Teta and Adami, Golse, and Teta \cite{ABGT2004, AGT2007}, who provided a derivation of the one-dimensional cubic NLS  via the BBGKY approach in an intermediate scaling regime between the mean field and Gross-Pitaevskii regimes. We also mention in particular the works of Erd\"{o}s, Schlein, and Yau \cite{ESY2006, ESY2007, ESY2010}, who provided the first rigorous derivation of the three-dimensional cubic NLS in the Gross-Pitaevskii scaling regime via the BBGKY hierarchy, resolving what was a significant open problem, and the work of Klainerman and Machedon \cite{KM08}, who incorporated techniques from dispersive equations to the study of this problem. There is by now an extensive body of work, spanning many years, on deriving the dynamics of the NLS from many-body quantum systems. A thorough account of this history would take us too far afield from our current goals, and consequently we are not mentioning many important contributions in our very brief account. We instead refer the reader to \cite{schlein_clay} for a general survey and more extensive review on the history of the derivation problem and to the more recent lecture notes \cite{Rougerie2015}.

To appreciate some of the difficulties involved in our pursuit, it is important to note that while the dynamics of a system of $N$-bosons is described by the linear Schr\"odinger evolution of a wave function, such an equation is not amenable to taking the infinite-particle limit directly since the wave functions for different particle numbers do not live in a common topological space. Consequently, in order to take an infinite-particle limit, one performs a non-linear transformation of the $N$-body wave functions and considers sequences of $k$-particle marginal density matrices whose evolution is governed by the BBGKY hierarchy.  In particular, there is no clear link between the evolution of the $N$-particle wave function and the NLS each as Hamiltonian dynamical systems. To complicate matters further,  the BBGKY hierarchy is no longer an evidently Hamiltonian flow. 

At the cost of the added complication of working with the BBGKY hierarchy, the aforementioned works on the derivation of the one-particle dynamics actually yield the following stronger result: the full dynamics of the interacting boson system governed by the BBGKY hierarchy converges to dynamics described by the cubic Gross-Pitaevskii (GP) hierarchy, which is an infinite coupled system of partial differential equations for kernels\footnote{In this work, we follow the widespread convention of using the same notation for both the kernel and the operator.} $(\gamma^{(k)})_{k=1}^{\infty}$ of $k$-particle density matrices, defined in \eqref{eq:GP} below.  The connection to the NLS is then as follows: the GP hierarchy admits a special class of factorized solutions given by
\begin{equation}
\label{eq:fac_map}
\gamma^{(k)} \coloneqq \ket*{\phi^{\otimes k}}\bra*{\phi^{\otimes k}}, \qquad k\in\N,
\end{equation}
where $\phi: I\times \R^d \rightarrow\C$ solves \eqref{nls}.  

One might conjecture that the BBGKY and GP hierarchies provide the required link to understand the derivation of the geometry associated to the Hamiltonian formulation of \eqref{nls}. In particular, it is natural to wonder whether the BBGKY and GP hierarchies are Hamiltonian evolution equations posed on underlying weak Poisson manifolds of density matrices,\footnote{We will in fact work on a Poisson manifold of density matrix \emph{hierarchies}.} and whether the Poisson structure for the infinite-particle setting arises in the infinite-particle limit from the Poisson structure for the $N$-body problem. To summarize, one can pose the following questions:

\begin{ques}\label{q:GP_ham}
Can we connect the Hamiltonian structure of the many-body system with that of the infinite-particle system in the following sense: can the GP hierarchy be realized as a Hamiltonian equation of motion with associated functional $\H_{GP}$ on some weak Poisson manifold? Can the Poisson structure and Hamiltonian functional for the GP hierarchy be derived in a suitable sense from a Poisson structure and Hamiltonian functional at $N$-particle level?
\end{ques}

In the current work, we answer these questions affirmatively and establish, for the first time, a Hamiltonian formulation for the BBGKY and GP hierarchies, see \cref{thm:BBGKY_ham} and \cref{thm:GP_ham} below, and a link between the underlying weak Poisson geometry and Hamiltonian functionals in the finite- and infinite-particle settings, see \cref{prop:LB_lim}. 

Our geometric constructions will rely on a special type of weak Poisson structure, namely a Lie-Poisson structure, on a space of density matrix $\infty$-hierarchies, see \cref{sym_pois} below. These constructions are motivated by the work of Marsden, Morrison, and Weinstein \cite{MMW1984} on the Hamiltonian structure of the classical BBGKY hierarchy, which relates to the earlier works on the Hamiltonian structure for plasma systems discovered in Morrison and Green \cite{MG80}, Morrison \cite{Morrison80}, Marsden and Weinstein \cite{MW82}, Spencer and Kaufman \cite{SK82}, and Spencer \cite{Spencer82}. We refer to \cite{MWRSS83} for more discussion on the Hamiltonian formulation of equations of motion for systems arising in plasma physics. Our geometric perspective for the $N$-body Schr\"odinger equation is inspired by taking a ``quantized'' version of the work of \cite{MMW1984}. By adapting their work to the quantum setting, we obtain the formulae for the Poisson structure for the (quantum) BBGKY hierarchy. Taking the infinite-particle limit, which was not considered in \cite{MMW1984}, we obtain the formula for the Poisson structure we use in the infinite-particle setting. We expect that our proofs can serve as a blueprint for deriving the Hamiltonian structure of more general infinite-particle equations arising from systems of interacting classical and quantum particles.

\medskip
Returning to the setting of the NLS, the fact that the GP hierarchy admits the factorized solutions given by \eqref{eq:fac_map} tells us that the dynamics of the NLS are embedded in those of the GP hierarchy. Given that the NLS is a Hamiltonian system and, with our affirmative answer to \cref{q:GP_ham}, so is the GP hierarchy, one might ask if there exists an embedding of the Hamiltonian structure such that the pullback of this embedding yields the NLS Hamiltonian and phase space geometry from that of the GP. In other words, one can pose the following question:

\begin{ques}\label{q:conn_1bdy}
Given our affirmative answer to the previous question, is there then a natural way to connect the Hamiltonian formulation of the GP hierarchy with the Hamiltonian formulation of the NLS in such a manner so as to respect the geometric structure?
\end{ques}

We provide an affirmative answer to this second question by showing, in \cref{thm:pomo} below, that the natural embedding map taking one-particle functions to factorized density matrices described in \eqref{eq:fac_map} is a Poisson morphism between the weak symplectic manifold constituting the NLS phase space and the weak Poisson manifold\footnote{We refer to \cref{sec:pre} for definitions of Poisson morphism and weak Poisson manifold.} constituting the GP phase space. Moreover, the NLS Hamiltonian, see \eqref{eq:H_NLS} below, is just the pullback of the GP Hamiltonian under this embedding, see \eqref{eq:H_GP} below. In summary, the factorization embedding pulls back the GP Hamiltonian structure to that of the NLS.

\medskip
We claim that our work provides a new perspective on what it means to ``derive'' an equation from an underlying physical problem. Indeed, to justify this assertion, we highlight some parallels between our results and the aforementioned works of Erd\"{o}s et al.\ on the derivation of solutions to the NLS equation from the $N$-body problem. In \cite{ESY2006, ESY2007, ESY2010}, solutions to the BBGKY hierarchy with factorized or asymptotically factorized initial data are shown to converge to solutions of the GP hierarchy as the number of particles tends to infinity. The authors then show that solutions to the GP hierarchy in a certain Sobolev-type space are unique.\footnote{A new proof of this uniqueness result was later given by Chen et al. in \cite{CHPS2015}.} Thus, the solution to the NLS equation provides the unique solution to the GP hierarchy starting from factorized initial data, thereby providing a rigorous derivation of the dynamics of the NLS from \eqref{eq:LSchr}. In the current work, we establish the existence of both the underlying Lie algebra and Poisson structure associated to a Hamiltonian formulation of the BBGKY hierarchy and prove that in the infinite-particle limit, these converge to a (previously unobserved) Hamiltonian structure for the GP hierarchy. Moreover, the BBGKY Hamiltonian, defined in \eqref{hbbgky}, converges to the GP Hamiltonian. Finally, we demonstrate that the Hamiltonian functional and phase space of the NLS can be obtained via the pullback of the canonical embedding \eqref{eq:can_emb}, thereby providing a derivation of the Hamiltonian structure of the NLS. 

\begin{remark}
We note that our work does not address any derivation of the \emph{dynamics} of the nonlinear Schr\"odinger equation from many-body quantum systems in the vein of the aforementioned works by Erd\"{o}s et al.\ \cite{ESY2006, ESY2007, ESY2010}. Our current work is complementary to those in the sense that it addresses geometric aspects of the connection of the NLS with quantum many-body systems, answering questions which are of a different nature than those about the dynamics.
\end{remark}

\begin{remark}
We view this work as part of a broader program of understanding how qualitative properties of PDE arise from underlying physical problems. We also mention the works of Lewin, Nam, and Rougerie \cite{LNR2015} and Fr\"{o}hlich, Knowles, Schlein, and Sohinger \cite{FKSS2017}, which derive invariant Gibbs measures for the NLS from many-body quantum systems, as we believe they are related in spirit to this program.
\end{remark}

We conclude by mentioning an application of our current work. In the one-dimensional cubic case, for which the corresponding one-dimensional cubic  nonlinear Schr\"odinger equation is known to be integrable, we establish in a companion work \cite{MNPRS2_2019} that there exists an infinite sequence of Poisson commuting functionals, which we call energies. The Hamiltonian flow associated to the third energy yields the GP hierarchy, and the corresponding flows for the sequence of energies yield a ``hierarchy of infinite-particle hierarchies'' which generalizes the Schr\"odinger hierarchy of Palais \cite{Palais1997}.

\medskip
In the next section,  \cref{sec:statements}, we will record the precise statements of our main results, which require some additional notation and background. We postpone a subsection on the organization of our paper until the end of this next section.

\section{Statements of main results and blueprint of proofs}\label{sec:statements}

We will now state precisely and outline the proofs of our three main results: \cref{thm:BBGKY_ham}, \cref{thm:GP_ham}, and \cref{thm:pomo}. The first two results provide the affirmative answer to \cref{q:GP_ham}, establishing the BBGKY hierarchy and GP hierarchy, respectively, as Hamiltonian flows. \cref{thm:pomo} provides the link between the Hamiltonian structure for the GP hierarchy and the Hamiltonian structure for the nonlinear Schr\"odinger equation, answering \cref{q:conn_1bdy}.

We recall the $N$-body Schr\"odinger equation, BBGKY hierarchy, and limiting GP hierarchy to set the stage for our discussion of the geometry below. It will be useful going forward to fix the following notation: for $d\geq 1$, we denote the point $(x_1, \ldots, x_N) \in  \R^{dN}$ by $\ul{x}_{N}$. We let $\Sc_{s}(\R^{dN})$ be the subspace of $\Sc(\R^{dN})$ of Schwartz functions which are symmetric in their arguments, that is, for any $\pi\in\Ss_N$\footnote{$\Ss_N$ is the symmetric group of order $N$.} we have
\begin{equation}
\Phi(x_{\pi(1)},\ldots,x_{\pi(N)}) = \Phi(x_{1},\ldots,x_{N}), \qquad \ul{x}_{N} \in \R^{dN}.
\end{equation}
We call $\Sc_{s}(\R^{dN})$ the bosonic Schwartz space, see \cref{sym_schwartz} for more details. 

Consider the $N$-body Schr\"{o}dinger equation
\begin{equation}\label{eq:LSchr}
i\p_{t}\Phi_{N} = H_{N}\Phi_{N}, \qquad \Phi_N \in \Sc_{s}(\R^{dN})
\end{equation}
where $H_N$ is the $N$-body Hamiltonian
\begin{equation}
\label{eq:N_bod_ham}
H_{N} \coloneqq \sum_{j=1}^{N}(-\Delta_{x_{j}}) + \frac{2\kappa}{N-1}\sum_{1\leq i<j\leq N} V_{N}(X_{i}-X_{j}), \qquad \kappa\in\{\pm 1\}.
\end{equation}
The pair interaction potential has the form $V_{N} = N^{d\beta}V(N^{d\beta}\cdot)$, where $\beta\in (0,1)$, $V$ is an even nonnegative function in $C_{c}^{\infty}(\R^d)$ with $\int_{\R}dx V(x)=1$, and $V_{N}(X_{i}-X_{j})$ denotes the operator which is multiplication by $V_{N}(x_{i}-x_{j})$.

The $N$-body density matrix, associated to the wave function $\Phi_N \in  \Sc_{s}(\R^{dN})$ is given by
\[
\Psi_N\coloneqq\ket*{\Phi_{N}}\bra*{\Phi_{N}} \in \L(\Sc_s'(\R^{dN}), \Sc_s(\R^{dN})),\footnote{$ \L(\Sc_s'(\R^{dN}), \Sc_s(\R^{dN}))$ denotes the space of continuous linear maps from symmetric tempered distributions to symmetric Schwartz functions.}
\]
and the reduced density matrix hierarchy 
\[
(\gamma_N^{(k)})_{k=1}^N \coloneqq (\Tr_{k+1,\ldots,N}(\Psi_{N}))_{k=1}^{N}
\]
 solves the \emph{quantum BBGKY hierarchy}
\begin{equation}
\label{eq:BBGKY}
\begin{split}
i\p_t\gamma_N^{(k)} &= \comm{-\Delta_{\ux_k}}{\gamma_N^{(k)}} + \frac{2\kappa}{N-1}\sum_{1\leq i<j\leq k} \comm{V_N(X_i-X_j)}{\gamma_N^{(k)}} \\
&\phantom{=} + \frac{2\kappa(N-k)}{N-1}\sum_{i=1}^k\Tr_{k+1}\paren*{\comm{V_N(X_i-X_{k+1})}{\gamma_N^{(k+1)}}}, \qquad {1\leq k\leq N-1} \\
&=\comm{-\Delta_{\ux_k}}{\gamma_N^{(k)}} + \frac{2\kappa}{N-1}\sum_{1\leq i<j\leq k} \comm{V_N(X_i-X_j)}{\gamma_N^{(k)}}, \qquad k=N,
\end{split}
\end{equation}
where we have introduced the notation $\Delta_{\ux_k}\coloneqq \sum_{j=1}^k \Delta_{x_j}$.

The GP hierarchy is formally obtained from the BBGKY hierarchy \eqref{eq:BBGKY} by letting $N\rightarrow\infty$. More precisely, a time-dependent family of density matrix $\infty$-hierarchies $\Gamma(t)=(\gamma(t)^{(k)})_{k=1}^{\infty}$ solves the GP hierarchy if
\begin{equation}
\label{eq:GP}
i\p_{t}\gamma^{(k)} = -\comm{\Delta_{\ux_{k}}}{\gamma^{(k)}}+ 2\kappa B_{k+1}\gamma^{(k+1)}, \qquad \forall k\in\N
\end{equation}
with $\kappa\in\{\pm 1\}$ and
\begin{equation}
B_{k+1}\gamma^{(k+1)} \coloneqq \sum_{j=1}^{k} \paren*{B_{j;k+1}^{+}-B_{j;k+1}^{-}}\gamma^{(k+1)},
\end{equation}
where
\begin{equation}
\paren*{B_{j;k+1}^{+}\gamma^{(k+1)}}(t,\ux_{k};\ux_{k}') \coloneqq \int_{\R^{2d}}dx_{k+1}dx_{k+1}'\delta(x_{k+1}-x_{k+1}')\delta(x_{j}-x_{k+1})\gamma^{(k+1)}(t,\ux_{k+1};\ux_{k+1}')
\end{equation}
with an analogous definition for $B_{j;k+1}^{-}$ with $\delta(x_{j}-x_{k+1})$ replaced by $\delta(x_{j}'-x_{k+1})$. When $\kappa=1$, we say that the hierarchy is \emph{defocusing} and for $\kappa=-1$, we say that the hierarchy is \emph{focusing} (in analogy with the defocusing and focusing NLS, respectively).

\medskip
As we outlined in the introduction, our first main results establish that the BBGKY hierarchy \eqref{eq:BBGKY} and the GP hierarchy \eqref{eq:GP} are Hamiltonian flows on appropriate weak Lie-Poisson manifolds. To do this, we need to define a suitable phase space for the Hamiltonian evolution in both the finite- and infinite-particle settings. In particular, we need to construct certain Lie-Poisson manifolds of density matrix hierarchies, and we outline this construction in the next subsection. We will also establish that the procedure described above for obtaining the BBGKY hierarchy from the $N$-body Schr\"odinger equation can be given by the composition of several natural Poisson maps, thereby establishing the existence of a natural Poisson morphism which maps the $N$-body Schr\"odinger equation to the BBGKY hierarchy.
 
\subsection{Construction of the Lie algebra $\G_{N}$ and Lie-Poisson manifold $\G_{N}^{*}$}\label{sym_n_pois}
For each $k\in\N$, we let
\[
\g_{k} \coloneqq \{ A^{(k)} \in \L(\Sc_s(\R^k),\Sc_s(\R^k)) : (A^{(k)})^* = -A^{(k)} \},
\]
endowed with the subspace topology of $\L(\Sc_s(\R^k),\Sc_s'(\R^k))$. We define a Lie algebra $(\g_{k},\comm{\cdot}{\cdot}_{\g_{k}})$, with Lie bracket defined by 
\begin{equation}
\comm{A^{(k)}}{B^{(k)}}_{\g_{k}} \coloneqq k\comm{A^{(k)}}{B^{(k)}},
\end{equation}
where the right-hand side denotes the usual commutator bracket.
We refer to elements of $\g_k$  as \emph{$k$-particle bosonic observables}. For $N\in\N$, we then define the locally convex direct sum 
\begin{align} \label{equ:GN}
\G_{N} \coloneqq \bigoplus_{k=1}^{N}\g_{k},
\end{align}
and we refer to elements of $\G_N$ as \emph{observable $N$-hierarchies}.

To define a Lie bracket on the space $\G_N$, we consider the following natural embedding maps. For $N \in \N$ and $k \in \N_{\leq N}$, there exists a smooth map
\begin{equation}
\epsilon_{k,N} : \g_{k}\rightarrow \g_{N},
\end{equation}
which embeds a $k$-particle bosonic observable in the space of $N$-particle bosonic operators so as to have the filtration property
\begin{equation}\label{equ:filt}
\comm{\epsilon_{\ell,N}(\g_{\ell})}{\epsilon_{j,N}(\g_{j})}_{\g_N} \subset \epsilon_{\min\{\ell+j-1,N\},N}\paren*{\g_{\min\{\ell+j-1,N\}}} \subset \g_{N}.
\end{equation} 
Using this filtration property and the injectivity of the maps $\ep_{k,N}$, we can now endow $\G_{N}$ with a Lie algebra structure by defining the bracket
\begin{equation}\label{n_lie}
\comm{A}{B}_{\G_{N}}^{(k)} \coloneqq  \sum_{{1\leq \ell,j\leq N}\atop{\min\{\ell+j-1,N\}=k}} \epsilon_{k,N}^{-1}\paren*{\comm{\epsilon_{\ell,N}\paren*{A^{(\ell)}}}{\epsilon_{j,N}\paren*{B^{(j)}}}_{\g_{N}}}, \qquad k\in\{1,\ldots,N\}.
\end{equation}
Furthermore, the maps $\{\epsilon_{k,N}\}_{k=1}^{N}$ induce a Lie algebra homomorphism 
\begin{equation}\label{iota_eps}
\iota_{\epsilon,N}: \G_{N}\rightarrow \g_{N}, \qquad \iota_{\epsilon,N}(A_N) \coloneqq \sum_{k=1}^{N}\epsilon_{k,N}(A_N^{(k)}), \qquad \forall A_N=(A_N^{(k)})_{k\in\N_{\leq N}}.
\end{equation}
In other words, $\iota_{\epsilon,N}$ maps an observable $N$-hierarchy to an $N$-body bosonic observable. In \cref{sec:conv_n_body}, we will establish several properties of the embedding map, which ultimately enable us to prove the following result.

\begin{restatable}{prop}{NHLA}
\label{prop:NH_LA}
$(\G_N,\comm{\cdot}{\cdot}_{\G_N})$ is a Lie algebra in the sense of \cref{def:la}.
\end{restatable}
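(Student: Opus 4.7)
The plan is to verify, in order, the axioms defining a (topological) Lie algebra per \cref{def:la}: (i) well-definedness of the bracket in \eqref{n_lie}, (ii) bilinearity and continuity, (iii) antisymmetry, and (iv) the Jacobi identity. Items (i)--(iii) are direct consequences of the construction. For (i), each summand appearing in \eqref{n_lie} lies in $\ep_{\min\{\ell+j-1,N\},N}(\g_{\min\{\ell+j-1,N\}})$ by the filtration property \eqref{equ:filt}, so applying $\ep_{k,N}^{-1}$ is meaningful once we invoke the injectivity of $\ep_{k,N}$ (to be established in \cref{sec:conv_n_body}). Bilinearity in (ii) is inherited from bilinearity of $\comm{\cdot}{\cdot}_{\g_N}$ and linearity of the $\ep_{k,N}$; continuity reduces, via the universal property of the locally convex direct sum defining $\G_N$, to the joint continuity of finitely many bilinear maps built from the $\ep_{k,N}$, their inverses on the appropriate subspaces, and the commutator on $\L(\Sc_s(\R^N),\Sc_s'(\R^N))$. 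Antisymmetry in (iii) follows by exchanging $(\ell,j)$ in the defining sum \eqref{n_lie} and using antisymmetry of the commutator on $\g_N$.

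The Jacobi identity is the main substantive step. The strategy is to transport it from $\g_N$ (where it is automatic, since $\comm{\cdot}{\cdot}_{\g_N}=N\comm{\cdot}{\cdot}$ is just a rescaling of the standard commutator bracket on an associative algebra) back to $\G_N$ via the map $\iota_{\epsilon,N}$ of \eqref{iota_eps}. The pivotal calculation is that $\iota_{\epsilon,N}$ intertwines the two brackets: by regrouping the double sum,
\begin{align*}
\iota_{\epsilon,N}(\comm{A}{B}_{\G_N})
&= \sum_{k=1}^{N} \ep_{k,N}\paren*{\comm{A}{B}_{\G_N}^{(k)}}
= \sum_{k=1}^{N}\sum_{\min\{\ell+j-1,N\}=k} \comm{\ep_{\ell,N}(A^{(\ell)})}{\ep_{j,N}(B^{(j)})}_{\g_N} \\
&= \sum_{\ell,j=1}^{N} \comm{\ep_{\ell,N}(A^{(\ell)})}{\ep_{j,N}(B^{(j)})}_{\g_N}
= \comm{\iota_{\epsilon,N}(A)}{\iota_{\epsilon,N}(B)}_{\g_N}.
\end{align*}
Applying $\iota_{\epsilon,N}$ to the cyclic Jacobi sum in $\G_N$ and using this intertwining identity twice reduces the Jacobi identity in $\G_N$ to the Jacobi identity in $\g_N$.

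The main obstacle I expect is not this calculation itself but the resulting injectivity requirement: transferring the vanishing of the Jacobi sum back from $\g_N$ to $\G_N$ requires $\iota_{\epsilon,N}$ to be injective. The subspaces $\ep_{k,N}(\g_k)$ form only a filtration of $\g_N$ and not a direct-sum decomposition, so injectivity of each individual $\ep_{k,N}$ does not immediately imply injectivity of $\iota_{\epsilon,N}$. I anticipate closing this gap using the explicit symmetrization formula for $\ep_{k,N}$ together with a filtration-degree argument (peeling off the largest $k$ for which $A^{(k)}$ is nonzero in any putative kernel element of $\iota_{\epsilon,N}$), both of which will be set up alongside the proof of \eqref{equ:filt} in \cref{sec:conv_n_body}.
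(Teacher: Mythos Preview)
Your reduction of the Jacobi identity to that of $\g_N$ via the intertwining identity for $\iota_{\epsilon,N}$ is correct (this identity is recorded later in the paper as \cref{prop:LA_hom_sum}), but the injectivity you need to transfer it back does not hold, and your proposed filtration-degree argument cannot close the gap. The map $\iota_{\epsilon,N}$ is \emph{not} injective for any $N\ge 2$: since $\epsilon_{N,N}$ is the identity on $\g_N$, for any nonzero $A^{(1)}\in\g_1$ the element $A\in\G_N$ with first component $A^{(1)}$, $N$-th component $A^{(N)}\coloneqq-\epsilon_{1,N}(A^{(1)})\in\g_N$, and all other components zero satisfies $\iota_{\epsilon,N}(A)=0$. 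The images $\epsilon_{k,N}(\g_k)$ are nested (in particular $\epsilon_{k,N}(\g_k)\subset\epsilon_{N,N}(\g_N)=\g_N$ for every $k$), not independent, so peeling off the top degree cannot work.

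The paper sidesteps this by working componentwise: it fixes $k$ and applies the injective map $\epsilon_{k,N}$ to the $k$-th component of the Jacobi sum directly, using only the definition \eqref{n_lie}. Unwinding that definition twice gives, for $k<N$,
\[
\epsilon_{k,N}\paren*{\comm{A}{\comm{B}{C}_{\G_N}}_{\G_N}^{(k)}}
=\sum_{\ell_1+\ell_2+\ell_3=k+2}\comm{\epsilon_{\ell_1,N}(A^{(\ell_1)})}{\comm{\epsilon_{\ell_2,N}(B^{(\ell_2)})}{\epsilon_{\ell_3,N}(C^{(\ell_3)})}_{\g_N}}_{\g_N},
\]
and the constraint $\ell_1+\ell_2+\ell_3=k+2$ is symmetric under cyclic permutation of $(A,B,C)$. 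Hence the three cyclic terms, grouped by fixed $(\ell_1,\ell_2,\ell_3)$, cancel by Jacobi in $\g_N$, and injectivity of the single map $\epsilon_{k,N}$ suffices. The case $k=N$ is handled analogously with the constraint $\min\{\cdot,N\}=N$.

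A smaller point on continuity: \cref{def:la} requires only \emph{separate} continuity, and this is all the paper establishes, via the explicit formula of \cref{prop:Lie_hi_form} and the separate continuity of $\circ_r$ in \cref{lem:ua_ub_cont}. Your sketch relies on joint continuity of the commutator on $\L(\Sc_s(\R^N),\Sc_s'(\R^N))$ and on continuity of $\epsilon_{k,N}^{-1}$ restricted to its image, neither of which is available or needed.
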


Next, we define the real topological vector space
\begin{equation}
\G_{N}^{*} \coloneqq \bigl \{\Gamma_{N}=(\gamma_{N}^{(k)})_{k=1}^{N} \in \prod_{k=1}^{N} \L(\Sc_{s}'(\R^{dk}), \Sc_{s}(\R^{dk})) : (\gamma_{N}^{(k)})^{*} = \gamma_{N}^{(k)}\bigr\},
\end{equation}
and we refer to elements of $\G_{N}^{*} $ as \emph{density matrix $N$-hierarchies}. Let $\A_{H,N}$ be the algebra with respect to point-wise product generated by the functionals in the set
\[
\{F\in C^\infty(\G_N^*;\R) : F(\cdot) = i\Tr(A_N\cdot), \enspace A_N\in \G_N\} \cup \{F\in C^\infty(\G_N^*;\R) : F(\cdot)\equiv C\in\R\}.
\]
We can define a Lie-Poisson structure
on $\G_{N}^{*}$, given by
\begin{equation}
\pb{F}{G}_{\G_{N}^{*}}(\Gamma_{N}) \coloneqq i\Tr\paren*{\comm{dF[\Gamma_{N}]}{dG[\Gamma_{N}]}_{\G_{N}}\cdot\Gamma_{N}}, \qquad \forall \Gamma_{N}\in\G_{N}^{*},
\end{equation}
where $F, G \in \A_{H,N}$.

To construct the weak Lie-Poisson manifold $\G_N^*$, a good heuristic to keep in mind is that density matrices are dual to skew-adjoint operators. The superscript $*$, however, does not denote the literal functional analytic dual, but rather denotes a space in weakly non-degenerate pairing with ${\G}_N$. The fact that we only have weak non-degeneracy means that we will be unable to appeal to classical results on Lie-Poisson structures, see for instance \cref{prop:LP_rev} below, and instead we will proceed by direct proof to establish the following result.

\begin{restatable}{prop}{NHWP}
\label{prop:NH_WP}
$(\G_N^*, \A_{H,N}, \pb{\cdot}{\cdot}_{\G_N^*})$ is a weak Poisson manifold.
\end{restatable}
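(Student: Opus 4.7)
The plan is to verify each axiom of a weak Poisson manifold (per \cref{def:WP}) directly, because the only weakly non-degenerate pairing between $\G_N$ and $\G_N^*$ prevents us from appealing to a classical Lie-Poisson theorem. The requirements are that (i) $\A_{H,N}\subset C^{\infty}(\G_N^*;\R)$ is a subalgebra under pointwise product, which is immediate from its definition, (ii) the bracket restricts to a bilinear antisymmetric map $\A_{H,N}\times\A_{H,N}\to\A_{H,N}$, (iii) it satisfies the Jacobi identity and the Leibniz rule, and (iv) for every $F\in\A_{H,N}$ the Hamiltonian vector field $X_F$ is well-defined as a smooth vector field on $\G_N^*$. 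The main structural input is the Lie algebra axioms on $(\G_N,\comm{\cdot}{\cdot}_{\G_N})$ furnished by \cref{prop:NH_LA}.

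The algebraic axioms fall out once I identify differentials with elements of $\G_N$. For a linear generator $F_{A_N}(\Gamma_N)\coloneqq i\Tr(A_N\Gamma_N)$, the trace pairing gives $dF_{A_N}[\Gamma_N]=A_N\in\G_N$ tautologically; by the Leibniz rule for $d$ on pointwise products, any $F\in\A_{H,N}$ has a differential that is a finite $\R$-linear combination of elements of $\G_N$, so the defining formula for $\pb{\cdot}{\cdot}_{\G_N^*}$ is well-posed. Evaluated on two generators,
\[
\pb{F_{A_N}}{F_{B_N}}_{\G_N^*}(\Gamma_N) = i\Tr\bigl(\comm{A_N}{B_N}_{\G_N}\Gamma_N\bigr) = F_{\comm{A_N}{B_N}_{\G_N}}(\Gamma_N),
\]
so brackets of generators remain generators and, by the Leibniz rule—a direct consequence of the product rule for $d$ together with bilinearity of $\comm{\cdot}{\cdot}_{\G_N}$ and $\Tr$—brackets of products of generators remain in $\A_{H,N}$. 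Antisymmetry transfers from $\comm{\cdot}{\cdot}_{\G_N}$ via cyclicity of the trace. The Jacobi identity reduces on triples of linear generators to the Jacobi identity in $(\G_N,\comm{\cdot}{\cdot}_{\G_N})$ granted by \cref{prop:NH_LA}, and it then extends to the whole algebra by the standard derivation-theoretic fact that Jacobi-plus-Leibniz on a generating set yields Jacobi globally.

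What remains is to exhibit a smooth Hamiltonian vector field $X_F$ for each $F\in\A_{H,N}$. Modelled on \eqref{n_lie}, I would define $X_F[\Gamma_N]$ to be the density matrix $N$-hierarchy whose $k$th component is
\[
X_F[\Gamma_N]^{(k)} \coloneqq -i\!\!\!\sum_{{1\leq \ell,j\leq N}\atop{\min\{\ell+j-1,N\}=k}}\!\!\!\ep_{k,N}^{-1}\bigl(\comm{\ep_{\ell,N}(dF[\Gamma_N]^{(\ell)})}{\ep_{j,N}(\Gamma_N^{(j)})}_{\g_N}\bigr).
\]
The main obstacle lies precisely here: I must verify that each $X_F[\Gamma_N]^{(k)}$ is a continuous self-adjoint operator $\Sc_s'(\R^{dk})\to\Sc_s(\R^{dk})$ so that $X_F[\Gamma_N]\in\G_N^*$, that the assignment $\Gamma_N\mapsto X_F[\Gamma_N]$ is smooth, and that it realizes the derivation property $dG[\Gamma_N](X_F[\Gamma_N]) = \pb{F}{G}_{\G_N^*}(\Gamma_N)$ for every $G\in\A_{H,N}$. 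The range and continuity statements will follow from the continuity and filtration property \eqref{equ:filt} of the embedding maps $\ep_{k,N}$ established in \cref{sec:conv_n_body}; self-adjointness follows from antisymmetry of $\comm{\cdot}{\cdot}_{\G_N}$ together with self-adjointness of $\Gamma_N$; smoothness in $\Gamma_N$ is automatic from the polynomial dependence; and the derivation property is a direct rearrangement of the defining formula for $\pb{\cdot}{\cdot}_{\G_N^*}$. In short, it is this analytic passage from the Lie algebra $\G_N$ to its weak dual $\G_N^*$ that requires hands-on work and is the reason a standard Lie-Poisson theorem is unavailable.
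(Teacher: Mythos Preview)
Your treatment of \ref{item:wp_P1} is essentially correct and matches the paper's approach in \cref{lem:H_LA_P1}: reduce to trace generators, use that $dF_{A_N}$ is constant, and inherit Jacobi from $\comm{\cdot}{\cdot}_{\G_N}$ via the derivation argument. However, there are two genuine gaps.

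First, you have omitted property \ref{item:wp_P2} entirely. Your enumeration (i)--(iv) does not contain the non-degeneracy condition that $dF[\Gamma_N](v)=0$ for all $F\in\A_{H,N}$ forces $v=0$. The paper handles this in \cref{lem:H_LA_P2} by testing against the rank-one generators $A_{N,k_0}^{(k)}=-i\ket*{f^{(k_0)}}\bra*{f^{(k_0)}}$ and polarizing; it is short but cannot be skipped.

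Second, and more seriously, your candidate Hamiltonian vector field is wrong. You propose
\[
X_F[\Gamma_N]^{(k)} = -i\sum_{\min\{\ell+j-1,N\}=k}\ep_{k,N}^{-1}\bigl(\comm{\ep_{\ell,N}(dF[\Gamma_N]^{(\ell)})}{\ep_{j,N}(\gamma_N^{(j)})}_{\g_N}\bigr),
\]
which is morally ``$-i\comm{dF[\Gamma_N]}{\Gamma_N}_{\G_N}$'' with $\Gamma_N$ slotted into the observable bracket. This is the adjoint action, not the coadjoint action. The correct vector field on a Lie--Poisson space is $\mathrm{ad}^*_{dF}\Gamma_N$, which here is realized through \emph{partial traces} (the operation dual to the embeddings $\ep_{k,N}$), not through $\ep_{k,N}^{-1}$ itself. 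Concretely, take $dF=(0,A^{(2)},0,\ldots)$ and test against $G$ with $dG=(B^{(1)},0,\ldots)$. Then $\pb{G}{F}_{\G_N^*}(\Gamma_N)$ is $i\Tr_{1,2}(\comm{dG}{dF}_{\G_N}^{(2)}\gamma_N^{(2)})\neq 0$ in general, yet your formula gives $X_F[\Gamma_N]^{(1)}=0$ (the only contributing pair would be $\ell=j=1$, but $dF^{(1)}=0$), so $dG(X_F)=i\Tr_1(B^{(1)}\cdot 0)=0$ and the derivation identity fails. Moreover, your $X_F[\Gamma_N]^{(k)}$ need not lie in $\L(\Sc_s'(\R^{dk}),\Sc_s(\R^{dk}))$: with $dF^{(2)}=A^{(2)}$ and $N\geq 3$, the $k=2$ component is proportional to $\comm{A^{(2)}}{\gamma_N^{(1)}\otimes Id+Id\otimes\gamma_N^{(1)}}$, which carries identity factors and has no Schwartz kernel. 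The paper's \cref{lem:H_WP_VF} instead obtains
\[
X_H(\Gamma_N)^{(\ell)}=\sum_{j}\sum_{r} C'_{\ell jkrN}\,\Tr_{\ell+1,\ldots,k}\Bigl(\bigl[\textstyle\sum_{\ul\alpha_r}dH[\Gamma_N]^{(j)}_{(\ul\alpha_r,\ldots)},\gamma_N^{(k)}\bigr]\Bigr),\qquad k=\min\{\ell+j-1,N\},
\]
by expanding $\comm{dF}{dH}_{\G_N}^{(k)}$ via \cref{prop:Lie_hi_form}, using \cref{lem:tr_bos} and cyclicity to isolate the $dF^{(\ell)}$ factor, and then recognizing the remaining piece as a partial trace of the commutator with $\gamma_N^{(k)}$. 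This computation is the actual content of \ref{item:wp_P3}; your ``direct rearrangement'' does not carry it.
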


To establish that the BBGKY hierarchy is a Hamiltonian flow on this weak Poisson manifold, we need to prescribe the \emph{BBGKY Hamiltonian functional}
\begin{equation}\label{hbbgky}
\H_{BBGKY,N}(\Gamma_{N}) \coloneqq \Tr\paren*{\W_{BBGKY,N}\cdot\Gamma_{N}}, 
\end{equation}
where $-i\W_{BBGKY,N}$ is the observable $2$-hierarchy defined by
\begin{equation}
\W_{BBGKY,N} \coloneqq (-\Delta_{x}, \kappa V_{N}(X_{1}-X_{2}),0,\ldots).
\end{equation}

We can now state the following theorem, which establishes that the BBGKY hierarchy admits a Hamiltonian formulation and lays the groundwork for our answering of \cref{q:GP_ham}.

\begin{restatable}{thm}{BBGKYham}
\label{thm:BBGKY_ham}
Let $I\subset \R$ be a compact interval. Then $\Gamma_N = (\gamma_N^{(k)})_{k=1}^N \in C^\infty(I;\G_N^*)$ is a solution to the BBGKY hierarchy \eqref{eq:BBGKY} if and only if 
\begin{equation}
\frac{d}{dt} \Gamma_N = X_{\H_{BBGKY,N}} (\Gamma_N),
\end{equation}
where $X_{\H_{BBGKY,N}}$ is the unique vector field defined by $\H_{BBGKY,N}$ (see \cref{def:WP}) with respect to the weak Poisson structure $(\G_N^*, \A_{H,N}, \pb{\cdot}{\cdot}_{\G_N^*})$.
\end{restatable}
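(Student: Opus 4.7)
The plan is to identify the Hamiltonian vector field $X_{\H_{BBGKY,N}}$ explicitly at each hierarchy level $k$ and match it against the right-hand side of \eqref{eq:BBGKY}. By the weak-Poisson-manifold axioms, $\frac{d}{dt}\Gamma_N = X_{\H_{BBGKY,N}}(\Gamma_N)$ is equivalent to the scalar identity $\frac{d}{dt}F(\Gamma_N) = \pb{F}{\H_{BBGKY,N}}_{\G_N^*}(\Gamma_N)$ holding for every $F\in\A_{H,N}$. Because $\A_{H,N}$ is generated as a point-wise algebra by linear functionals $F_{A_N}(\Gamma_N)\coloneqq i\Tr(A_N\cdot\Gamma_N)$ with $A_N\in\G_N$ (constants drop out and products are handled by the Leibniz rule built into the Poisson structure), and since $dF_{A_N}[\Gamma_N]=A_N$ by linearity, it suffices to establish
\[
\sum_{k=1}^N\Tr_k\paren*{A_N^{(k)}\p_t\gamma_N^{(k)}} = \sum_{k=1}^N\Tr_k\paren*{\comm{A_N}{-i\W_{BBGKY,N}}_{\G_N}^{(k)}\gamma_N^{(k)}}
\]
for every $A_N\in\G_N$. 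The weak non-degeneracy of the trace pairing between $\G_N$ and $\G_N^*$, combined with the freedom to choose $A_N^{(k)}$ independently at each level, then reads off $\p_t\gamma_N^{(k)}$ from the coefficient of $A_N^{(k)}$ on the right-hand side.

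\textbf{Unpacking the bracket.} Since $-i\W_{BBGKY,N}$ has non-zero components only at hierarchy levels $j=1$ and $j=2$, the sum in \eqref{n_lie} defining $\comm{A_N}{-i\W_{BBGKY,N}}_{\G_N}^{(k)}$ retains only pairs $(\ell,j)$ with $j\in\{1,2\}$. For $1\leq k\leq N-1$ the constraint $\min(\ell+j-1,N)=k$ forces $\ell+j=k+1$, giving precisely the two contributions $(\ell,j)=(k,1)$ (the one-body piece) and $(k-1,2)$ (the two-body piece); at the top level $k=N$ an additional ``saturated'' contribution from $(\ell,j)=(N,2)$ appears, matching the truncated form of \eqref{eq:BBGKY} at $k=N$. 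The main computational step is to evaluate each term
\[
\epsilon_{k,N}^{-1}\paren*{\comm{\epsilon_{\ell,N}(A_N^{(\ell)})}{\epsilon_{j,N}((-i\W_{BBGKY,N})^{(j)})}_{\g_N}}\in\g_k
\]
using the explicit symmetrization description of $\epsilon_{k,N}$ developed in \cref{sec:conv_n_body}, together with the factor $N$ built into $\comm{\cdot}{\cdot}_{\g_N}$ and the factor $k$ hidden in $\comm{\cdot}{\cdot}_{\g_k}$.

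\textbf{Matching the BBGKY terms.} For the kinetic $(k,1)$ piece, only particles whose label lies in the symmetrized $k$-subset produce a non-trivial commutator, and after the normalizations $\binom{N}{k}^{-1}$ and $\binom{N}{1}^{-1}$ cancel against the $N$-factor in $\comm{\cdot}{\cdot}_{\g_N}$ one obtains $i\comm{A_N^{(k)}}{\Delta_{\ux_k}}$; tracing against $\gamma_N^{(k)}$ and using $\Tr([X,Y]Z)=\Tr(X[Y,Z])$ then reproduces the kinetic term of \eqref{eq:BBGKY}. For the two-body $(k-1,2)$ piece at bracket level $k$, one expands the sum $\epsilon_{2,N}(V_N(X_1-X_2))$ over symmetrized pairs and splits according to whether the pair $(a,b)$ lies entirely inside the $(k-1)$-subset labelling $A_N^{(k-1)}$ or straddles it and its complement. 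The ``inside'' configuration, after the index shift $k\mapsto k-1$ in the outer sum, produces the intra-particle interaction with coefficient $\frac{2\kappa}{N-1}$; the ``straddle'' configuration, after an application of the partial-trace identity $\Tr_k(A^{(k-1)}\otimes\mathrm{Id}\cdot C^{(k)})=\Tr_{k-1}(A^{(k-1)}\Tr_k C^{(k)})$, produces the coupling term with coefficient $\frac{2\kappa(N-k+1)}{N-1}$ at level $k-1$. The additional $(N,2)$ contribution at $k=N$ accounts precisely for the collapse of these two pieces into the single interaction term that the BBGKY takes at its top level.

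\textbf{Main obstacle.} No analytic difficulty arises: every quantity lives at the level of Schwartz-class kernels, for which partial and full traces are absolutely convergent and satisfy the usual cyclicity identities. The entire difficulty is combinatorial bookkeeping. One must correctly track the normalizations $\binom{N}{\ell}^{-1}$ in each $\epsilon_{\ell,N}$, the factor $N$ in $\comm{\cdot}{\cdot}_{\g_N}$, the factor $k$ re-emerging from $\comm{\cdot}{\cdot}_{\g_k}$ upon applying $\epsilon_{k,N}^{-1}$, and the multiplicities (such as $\binom{N-2}{k-2}$ and $(k-1)(N-k+1)$) arising from the sums over $k$-subsets, so that after all cancellations the prefactors that emerge are exactly $\frac{2\kappa}{N-1}$ and $\frac{2\kappa(N-k)}{N-1}$ of \eqref{eq:BBGKY}. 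Once this assembly is in place, the equivalence in both directions is immediate from the non-degeneracy argument of the first paragraph.
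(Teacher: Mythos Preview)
Your proposal is correct and follows essentially the same approach as the paper. The only cosmetic difference is that the paper invokes the already-derived general vector field formula of \cref{lem:H_WP_VF} (whose proof carries out exactly the pairing-against-test-observables extraction you describe) and then specializes to $d\H_{BBGKY,N}=-i\W_{BBGKY,N}$, while you propose to redo that extraction directly for this particular Hamiltonian; your ``inside/straddle'' split is precisely the paper's $r=2$ versus $r=1$ decomposition from \cref{prop:Lie_hi_form}, and the normalizations you plan to track are the constants $C_{\ell jkrN}'$ that the paper computes case by case.
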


\subsection{Derivation of the Lie algebra $\G_{\infty}$ and Lie-Poisson manifold $\G_{\infty}^{*}$}\label{sym_pois}
Having established the necessary framework at the $N$-body level, we are now prepared to address the infinite-particle limit of our constructions. Via the natural inclusion map, one has $\G_N \subset \G_M$ for $M \geq N$. Hence, one has a natural limiting algebra\footnote{This discussion could be formulated more precisely in terms of co-limits of topological spaces ordered by inclusion.} given by
\begin{align}\label{colim}
\mathfrak{F}_\infty \coloneqq \bigcup_{N=1}^\infty \G_N = \bigoplus_{k=1}^\infty \g_k.
\end{align}
By embedding $\G_{N}$ into this limiting algebra, the rather complicated Lie bracket $\comm{\cdot}{\cdot}_{\G_{N}}$ converges pointwise to a much simpler Lie bracket.

We let $\Sym_k$ denote the $k$-particle bosonic symmetrization operator, see \cref{def:sym_A}, and we let $\comm{\cdot}{\cdot}_1$ be a certain separately continuous, bilinear map, the precise definition of which we defer to \cref{sec:geom_N}. We establish the following result.

\begin{restatable}{prop}{LBlim}
\label{prop:LB_lim}
Let $N_0\in\N$. For $A=(A^{(k)})_{k\in\N}, B=(B^{(k)})_{k\in\N} \in \G_{N_0}$, we have that
\begin{equation}
\lim_{N\rightarrow \infty} \comm{A}{B}_{\G_N} = C = (C^{(k})_{k \in \N},
\end{equation}
where 
\begin{equation}
C^{(k)} \coloneqq \sum_{{ \ell,j\geq 1}\atop {\ell+j-1=k}}\Sym_k\paren*{\comm{A^{(\ell)}}{B^{(j)}}_1},
\end{equation}
in the topology of $\mathfrak{F}_\infty$. 
\end{restatable}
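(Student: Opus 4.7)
The plan is to exploit bilinearity of the bracket, the explicit ``sum over subsets'' description of the embedding $\ep_{k,N}$, and careful tracking of powers of $N$ to isolate the leading contribution as $N\to\infty$.

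By bilinearity, it suffices to treat the case where $A = (0,\ldots,0,A^{(\ell)},0,\ldots)$ and $B = (0,\ldots,0,B^{(j)},0,\ldots)$ are supported at single levels $\ell, j \leq N_0$. For any $N \geq 2N_0 - 1$, the clipping constraint $\min\{\ell+j-1,N\} = k$ appearing in \eqref{n_lie} collapses to $k = \ell + j - 1$, so the only potentially nonzero component of $\comm{A}{B}_{\G_N}$ is
\[
\comm{A}{B}_{\G_N}^{(k)} = \ep_{k,N}^{-1}\bigl(\comm{\ep_{\ell,N}(A^{(\ell)})}{\ep_{j,N}(B^{(j)})}_{\g_N}\bigr), \qquad k = \ell + j - 1,
\]
which is well defined by the filtration property \eqref{equ:filt} and the injectivity of $\ep_{k,N}$.

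I would next insert the explicit description of $\ep_{m,N}(C^{(m)})$ as a (normalized) sum over the size-$m$ subsets $S \subset \{1,\ldots,N\}$ of copies of $C^{(m)}$ acting on the particles in $S$ and the identity on the complement, and expand $\ep_{\ell,N}(A^{(\ell)}) \cdot \ep_{j,N}(B^{(j)})$, together with its reverse product, as a double sum over pairs $(S,T)$ with $|S| = \ell$ and $|T| = j$. The two factors commute whenever $S \cap T = \emptyset$, so only pairs with $r \coloneqq |S \cap T| \geq 1$ contribute to the commutator, and the $r$-th stratum yields operators supported on $|S \cup T| = \ell + j - r$ particles. A counting argument shows that the number of such pairs is of order $N^{\ell+j-r}$, while the combined normalizations from the product $\ep_{\ell,N}\cdot\ep_{j,N}$ and the outer $\ep_{k,N}^{-1}$ produce a factor of order $N^{1-r}$. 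Hence only the $r = 1$ stratum survives in the limit, while the strata $r \geq 2$ decay at rate $O(N^{-1})$ in the operator topology on $\g_k$.

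The surviving $r = 1$ terms correspond to pairs $(S,T)$ sharing a single particle. Grouping these pairs by their union $S \cup T$, which is a generic subset of size $k$, and by the distinguished shared index, the combinatorial factors cancel the normalizations and identify the limit as $\Sym_k(\comm{A^{(\ell)}}{B^{(j)}}_1)$, where $\comm{\cdot}{\cdot}_1$ is the single-particle contraction bracket whose construction is deferred to \cref{sec:geom_N}. Convergence in the topology of $\mathfrak{F}_\infty$, namely the locally convex direct sum topology on $\bigoplus_{k\geq 1}\g_k$, reduces to convergence in each $\g_k$ together with uniform support in $k$; both hold here since for $N\geq 2N_0-1$ only the level $k=\ell+j-1$ is populated and the error in that level is explicit and polynomial in $1/N$. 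The main obstacle I anticipate is the combinatorial bookkeeping at the $r = 1$ stratum: one must verify that the symmetrization over the $k$ particles of $S\cup T$, together with summation over the distinguished shared index, reproduces $\Sym_k$ composed with $\comm{\cdot}{\cdot}_1$ with the correct constant, which amounts to a precise accounting of the normalization conventions built into $\ep_{m,N}$, $\Sym_k$, and $\comm{\cdot}{\cdot}_1$.
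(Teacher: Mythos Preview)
Your proposal is correct and follows essentially the same approach as the paper: stratify the commutator $\comm{\ep_{\ell,N}(A^{(\ell)})}{\ep_{j,N}(B^{(j)})}_{\g_N}$ by the overlap size $r$, track powers of $N$ to see that the $r$-th stratum contributes $O(N^{1-r})$, and identify the surviving $r=1$ piece as $\Sym_k\bigl(\comm{A^{(\ell)}}{B^{(j)}}_1\bigr)$. The only difference is organizational: the paper has already carried out your ``double sum over $(S,T)$ stratified by $r$'' computation once and for all in the proof of the filtration \cref{lem:hi_fil}, packaging the result as the explicit formula of \cref{prop:Lie_hi_form} with coefficients $C_{\ell j k r N}$, so that the actual proof of \cref{prop:LB_lim} reduces to the one-line limit computation $\lim_{M\to\infty} C_{\ell j k r M} = \delta_{r,1}$; your proposal inlines that derivation rather than citing it.
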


The topological vector space given in \eqref{colim} is too small to capture the generator of the GP Hamiltonian, defined in \eqref{equ:gp_ham_intro} below. Indeed, the $2$-particle component $V_N(X_1-X_2)$ of the $N$-body Hamiltonian $H_N$ given in \cref{eq:N_bod_ham} converges to the distribution-valued operator\footnote{Not to be confused with operator-valued distribution.} $\delta(X_1-X_2)$ as $N\rightarrow\infty$. The operator $-i\delta(X_1-X_2)$ does not belong to $\g_2$ since it does not map $\Sc_s(\R^{2d})$ to itself. 

Since we will need our Lie algebra $\G_\infty$ to contain the generator of the GP Hamiltonian functional, this necessitates an underlying topological vector space which includes distribution-valued operators (DVOs). The inclusion of DVOs introduces technical difficulties in  the definition of the bracket $\comm{\cdot}{\cdot}_1$. As we will see, the definition of the bracket $\comm{\cdot}{\cdot}_1$, involves compositions of distribution-valued operators in one coordinate, which in general is not possible. Consequently, we need to find a setting in which we can give meaning to such a composition, thus motivating our introduction of the \emph{good mapping property}:

\begin{restatable}[Good mapping property]{mydef}{gmp}
\label{def:gmp}
Let $i\in\N$. We say that an operator $A^{(i)}\in \L(\Sc(\R^{di}),\Sc'(\R^{di}))$ has the \emph{good mapping property} if for any $\alpha\in\N_{\leq i}$, the continuous bilinear map
\begin{equation*}
\begin{split}
&\Sc(\R^{di}) \times\Sc(\R^{di}) \rightarrow \Sc'(\R^d) \hat{\otimes} \Sc(\R^d) \\
&(f^{(i)},g^{(i)}) \mapsto \int_{\R^{i-1}} dx_{1} \ldots dx_{\alpha-1}dx_{\alpha+1} \ldots dx_{i} A^{(i)}(f^{(i)})(x_1, \ldots, x_i) g^{(i)}(x_1, \ldots, x_{\alpha-1},x_{\alpha}',x_{\alpha+1}, \ldots,x_{i}),
\end{split}
\end{equation*}
may be identified with a continuous bilinear map $\Sc(\R^{di})\times \Sc(\R^{di}) \rightarrow \Sc(\R^{2d})$.\footnote{We use $\hat{\otimes}$ to denote the completion of the tensor product in either the projective or injective topology (which coincide). See \cref{sec:bos} for furhter discussion.}  
\end{restatable}

Here and throughout this paper, an integral should be interpreted as a distributional pairing, unless specified otherwise. We will denote by $\L_{gmp}(\Sc(\R^{di}),\Sc'(\R^{di}))$ the subset of $\L(\Sc(\R^{di}),\Sc'(\R^{di}))$ of operators with the good mapping property.

\begin{remark}
It is evident that $\L_{gmp}(\Sc(\R^{di}),\Sc'(\R^{di}))$ is closed under linear combinations and therefore a subspace. Note that here and throughout we endow $\L(\Sc(\R^{di}),\Sc'(\R^{di}))$ with the topology of uniform convergence on bounded sets, and we endow $\L_{gmp}$ with the subspace topology.  To see that $\L_{gmp}$ is a proper subspace of $\L$, consider the multiplication operator $\delta(\ul{X}_{2}) \in \L(\Sc(\R^{2d}),\Sc'(\R^{2d}))$.
\end{remark}

 The formula for the limiting Lie bracket given in \cref{prop:LB_lim} has a greatly simplified form compared to the $N$-body bracket $\comm{\cdot}{\cdot}_{\G_N}$ due to the vanishing of the higher ``contraction commutators''. Moreover, as we prove in \cref{ssec:GMP}, the good mapping property gives an appropriate definition to the bracket $\comm{A^{(i)}}{B^{(j)}}_1$ as a well-defined element of $\L_{gmp}(\Sc(\R^{dk}),\Sc'(\R^{dk}))$. Hence, we can take advantage of the good mapping property and extend the limiting formula from \cref{prop:LB_lim} to a map on a much larger real topological vector space $\G_\infty$ given by the locally convex direct sum
\begin{align}\label{ginf_intro}
\G_{\infty} \coloneqq \bigoplus_{k=1}^{\infty}\g_{k,gmp}, \quad \g_{k,gmp} \coloneqq \{A^{(k)} \in \L_{gmp}(\Sc_{s}(\R^{dk}), \Sc_{s}'(\R^{dk})) : A^{(k)} = - (A^{(k)})^{*}\}.
\end{align}

We refer to the elements of $\G_{\infty}$ as \emph{observable $\infty$-hierarchies}, and the elements of $\g_{k, gmp}$ as \emph{$k$-particle bosonic observables}. The verification of the Lie algebra axioms then proceeds by direct computation, and we are able to establish the following result.

\begin{restatable}{prop}{LA}
\label{prop:G_inf_br}
$(\G_{\infty},\comm{\cdot}{\cdot}_{\G_{\infty}})$ is a Lie algebra in the sense of \cref{def:la}.
\end{restatable}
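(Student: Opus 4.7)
My plan is to verify in turn the axioms of \cref{def:la}---well-definedness, bilinearity, separate continuity, antisymmetry, and the Jacobi identity---for the bracket $\comm{\cdot}{\cdot}_{\G_{\infty}}$. The first four items will follow from the construction together with the analytic machinery developed in \cref{ssec:GMP}, while the Jacobi identity will be the main obstacle and require a two-step argument anchored on the finite-particle result \cref{prop:NH_LA} and the convergence statement \cref{prop:LB_lim}.

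I would begin by showing that for $A, B \in \G_\infty$ and each $k \in \N$, the component $\comm{A}{B}_{\G_\infty}^{(k)}$ is a well-defined element of $\g_{k,gmp}$ and that only finitely many $k$ produce a nonzero component, so that $\comm{A}{B}_{\G_\infty} \in \G_\infty$. Since $A$ and $B$ lie in the locally convex direct sum, they each have only finitely many nonzero slots, so the sum over $\ell+j-1=k$ is finite for each $k$ and vanishes for all but finitely many $k$. For each pair $(\ell,j)$, the results of \cref{ssec:GMP} guarantee that the one-coordinate commutator $\comm{A^{(\ell)}}{B^{(j)}}_1$ is a well-defined element of $\L_{gmp}$; applying $\Sym_k$ preserves this space, and skew-adjointness then follows from the skew-adjointness of $A^{(\ell)}$ and $B^{(j)}$ together with the interaction of adjoints with the one-coordinate composition. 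Bilinearity and separate continuity are then inherited from the corresponding properties of $\comm{\cdot}{\cdot}_1$ combined with the universal property of the direct sum topology. For antisymmetry I would verify the relation $\Sym_k\comm{A^{(\ell)}}{B^{(j)}}_1 = -\Sym_k\comm{B^{(j)}}{A^{(\ell)}}_1$, the point being that swapping the order relabels which coordinate slot is contracted, and this relabeling is erased by the bosonic symmetrization.

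The hard part will be the Jacobi identity. The plan is a two-step argument. First, for triples $A, B, C$ lying in the subalgebra $\mathfrak{F}_\infty = \bigcup_{N} \G_N$---that is, observable hierarchies with smooth-kernel components---every intermediate bracket stays within $\mathfrak{F}_\infty$, since finite sums of compositions and partial traces of smooth-kernel operators again have smooth kernels. By \cref{prop:LB_lim}, each such $\G_\infty$-bracket is the pointwise limit (in the topology of $\mathfrak{F}_\infty$) of the $\G_N$-brackets, and each $\G_N$-bracket satisfies the Jacobi identity by \cref{prop:NH_LA}. Since Jacobi is a closed linear condition at each component level $\g_k$, iterating \cref{prop:LB_lim} to compute nested brackets and passing to the limit in each component yields Jacobi on $\mathfrak{F}_\infty$.

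Second, I would extend Jacobi from $\mathfrak{F}_\infty$ to all of $\G_\infty$. The cleanest route is to observe that the triple-cyclic sum at each level $k$ is, after expansion, a finite algebraic combination of one-coordinate compositions and symmetrizations indexed by pairs of contraction slots among the three hierarchies. The Jacobi identity thereby reduces to a purely combinatorial cancellation among these contractions, and this is precisely the cancellation already established on $\mathfrak{F}_\infty$ in the first step; it requires no regularity of the kernels beyond the existence of each one-coordinate composition, which the good mapping property supplies in the DVO setting. Alternatively, one can invoke the separate continuity established in the first paragraph together with an approximation of distribution-valued operators by smooth-kernel operators in the $\L_{gmp}$ topology. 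Either approach completes the verification that $(\G_\infty, \comm{\cdot}{\cdot}_{\G_\infty})$ is a Lie algebra in the sense of \cref{def:la}.
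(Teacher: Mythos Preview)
Your proposal is correct and follows essentially the same architecture as the paper's proof: well-definedness, bilinearity, and antisymmetry via the $\comm{\cdot}{\cdot}_1$ machinery and $\Sym_k$; separate continuity via the separate continuity of $\circ_\alpha^\beta$ on operators whose adjoints also satisfy the good mapping property (guaranteed by skew-adjointness); and the Jacobi identity by first reducing to smooth-kernel hierarchies in $\mathfrak{F}_\infty$, invoking \cref{prop:NH_LA} for $\comm{\cdot}{\cdot}_{\G_N}$, passing $N\to\infty$ via \cref{prop:LB_lim}, and then extending to all of $\G_\infty$ by approximation and three successive applications of separate continuity.

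One small comment: of your two suggested routes for the extension step in Jacobi, the paper takes the second (approximation plus separate continuity), which is clean because the nested limits $n_1,n_2,n_3\to\infty$ can be taken one at a time. Your first alternative---arguing that the cyclic sum is a ``purely combinatorial cancellation'' that transfers automatically from $\mathfrak{F}_\infty$ to $\G_\infty$---is plausible but would need more care to make rigorous, since the cancellation on $\mathfrak{F}_\infty$ was obtained indirectly through the $\G_N$ brackets (which contain higher-order contractions vanishing only in the limit) rather than as an explicit algebraic identity for $\comm{\cdot}{\cdot}_1$. The approximation route avoids this issue entirely.
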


Analogously to the $N$-body setting, our second step is the dual problem of building a weak Lie-Poisson manifold $(\G_{\infty}^{*},\A_{\infty},\pb{\cdot}{\cdot}_{\G_{\infty}^{*}})$. If we were in the finite-dimensional setting or a ``nice'' infinite-dimensional setting, such as $\G_{\infty}^{*}$ being a Fr\'{e}chet space and $\G_{\infty}$ being its predual, then this step would follow from standard results (see \cref{ssec:LA_facts}). While $\G_{\infty}^{*}$ is Fr\'{e}chet, the predual of $\G_{\infty}^{*}$ is
\begin{equation}
\bigl\{A=(A^{(k)})_{k\in\N} \in \bigoplus_{k=1}^{\infty} \L(\Sc_{s}(\R^{dk}),\Sc_{s}'(\R^{dk}) : (A^{(k)})^{*} = -A^{(k)} \bigr\},
\end{equation}
which is too large a space for the Lie bracket $\comm{\cdot}{\cdot}_{\G_{\infty}}$ to be well-defined. Therefore, the standard procedure for obtaining a Lie-Poisson manifold from a Lie algebra can only serve as inspiration.

We define the real topological vector space
\begin{equation}
\G_{\infty}^{*} \coloneqq \bigl\{\Gamma=(\gamma^{(k)})_{k\in\N}\in\prod_{k=1}^{\infty} \L(\Sc_{s}'(\R^{dk}),\Sc_{s}(\R^{dk})) : \gamma^{(k)} = (\gamma^{(k)})^{*} \enspace \forall k\in\N\bigr\},
\end{equation}
where the topology is the product topology. Using the isomorphism
\begin{equation}
\L(\Sc_{s}'(\R^{dk}),\Sc_{s}(\R^{dk})) \cong \Sc_{s,s}(\R^{dk}\times\R^{dk}),
\end{equation}
the elements of $\G_{\infty}^{*}$, which we call \emph{density matrix $\infty$-hierarchies}, are infinite sequences of $k$-particle integral operators with Schwartz class kernels $K(\ux_{k};\ux_{k}')$, which are separately invariant under permutation in the $\ux_{k}$ and $\ux_{k}'$ coordinates.

Let $\A_{\infty}$ be the algebra with respect to point-wise product generated by functionals in the set
\begin{equation}
\begin{split}
&\{F\in C^{\infty}(\G_{\infty}^{*};\R) : F(\cdot) = i\Tr(A\cdot),\,\, A\in\G_{\infty}\}  \cup \{F\in C^{\infty}(\G_{\infty}^{*};\R) : F(\cdot) \equiv C\in\R\}.
\end{split}
\end{equation}
We will observe later that, importantly, our choice of $\A_{\infty}$ contains the observable $\infty$-hierarchy $-i\W_{GP}$, which generates the GP Hamiltonian.

As in the finite-particle setting, the Lie algebra structure on $\G_{\infty}$ canonically induces a Poisson structure on $\G_{\infty}^{*}$. This canonical Poisson structure, which is called a Lie-Poisson structure, is defined by the Poisson bracket
\begin{equation}\label{eq:PB}
\pb{F}{G}_{\G_{\infty}^{*}}(\Gamma)  \coloneqq i\Tr\paren*{\comm{dF[\Gamma]}{dG[\Gamma]}_{\G_{\infty}}\cdot\Gamma}, \qquad \forall \Gamma\in \G_{\infty}^{*},
\end{equation}
where $F,G\in C^{\infty}(\G_{\infty}^{*};\R)$ are functionals in the unital\footnote{i.e. containing a multiplicative identity} sub-algebra $\A_{\infty}$ and we identify the G\^ateaux derivatives $dF[\Gamma], dG[\Gamma]$ as observable $\infty$-hierarchies via the trace pairing $i\Tr(\cdot)$. We will ultimately establish the following result, which provides the underlying geometric structure required to address \cref{q:GP_ham}.

\begin{restatable}{prop}{LP}
\label{prop:LP}
$(\G_{\infty}^{*},\A_{\infty},\pb{\cdot}{\cdot}_{\G_{\infty}^{*}})$ is a weak Poisson manifold.
\end{restatable}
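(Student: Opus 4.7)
The plan is to verify the three defining properties of a weak Poisson manifold (see \cref{def:WP}) for the triple $(\G_\infty^*, \A_\infty, \pb{\cdot}{\cdot}_{\G_\infty^*})$: that the bracket is well-defined on $\A_\infty$ and makes it a Poisson algebra, and that every $F \in \A_\infty$ admits a unique Hamiltonian vector field $X_F$ valued in the tangent bundle $T\G_\infty^* \cong \G_\infty^* \times \G_\infty^*$. My first step would be to compute the G\^ateaux derivative of functionals in $\A_\infty$ and identify it with an element of $\G_\infty$ via the trace pairing. For a generator $F(\Gamma) = i\Tr(A\,\Gamma)$ with $A \in \G_\infty$, linearity gives $dF[\Gamma] \cdot \delta\Gamma = i\Tr(A\,\delta\Gamma)$, so $dF[\Gamma]$ is identified with $A$ itself. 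For products $F = F_1 F_2$, the Leibniz rule for G\^ateaux derivatives yields $dF[\Gamma] = F_1(\Gamma)\,dF_2[\Gamma] + F_2(\Gamma)\,dF_1[\Gamma] \in \G_\infty$. Hence for any $F, G \in \A_\infty$ the element $\comm{dF[\Gamma]}{dG[\Gamma]}_{\G_\infty}$ is an observable $\infty$-hierarchy by \cref{prop:G_inf_br}, and pairing it with $\Gamma \in \G_\infty^*$ via the trace is well-defined precisely because the good mapping property ensures each component lies in $\g_{k,gmp}$, so contractions against the Schwartz kernel of $\gamma^{(k)}$ make sense.

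Next I would verify the algebraic axioms. Bilinearity and antisymmetry of $\pb{\cdot}{\cdot}_{\G_\infty^*}$ follow immediately from bilinearity and antisymmetry of $\comm{\cdot}{\cdot}_{\G_\infty}$ together with linearity of the G\^ateaux derivative. The Leibniz rule $\pb{F_1 F_2}{G} = F_1 \pb{F_2}{G} + F_2 \pb{F_1}{G}$ follows from the Leibniz computation for $d(F_1 F_2)[\Gamma]$ above and linearity of the commutator in each slot. For the Jacobi identity, I would compute $\pb{F}{\pb{G}{H}}(\Gamma)$ by first noting that $\pb{G}{H}$ itself need not lie in $\A_\infty$, but its G\^ateaux derivative at $\Gamma$ can still be explicitly computed using the second derivatives of $G$ and $H$; the crucial observation is that for $G, H$ of the linear form $i\Tr(A\cdot), i\Tr(B\cdot)$ these second derivatives vanish, and for products one obtains telescoping identities. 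Summing over cyclic permutations, the contributions involving second derivatives cancel, while the remaining terms reduce, after applying $\ad$-invariance of the trace, to the Jacobi identity for $\comm{\cdot}{\cdot}_{\G_\infty}$ established in \cref{prop:G_inf_br}.

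The main obstacle, which I would address last, is the construction of the Hamiltonian vector field $X_F$ for $F \in \A_\infty$. Motivated by the classical Lie-Poisson formula, one expects $X_F(\Gamma) = \ad^*_{dF[\Gamma]}\Gamma$, determined pointwise by the relation
\begin{equation*}
\pb{F}{G}(\Gamma) = dG[\Gamma] \cdot X_F(\Gamma) \qquad \text{for all } G \in \A_\infty.
\end{equation*}
Testing against linear generators $G(\cdot) = i\Tr(B \cdot)$ with $B \in \G_\infty$ and formally applying the cyclicity of the trace gives
\begin{equation*}
i\Tr\paren*{\comm{dF[\Gamma]}{B}_{\G_\infty} \cdot \Gamma} = i\Tr\paren*{B \cdot X_F(\Gamma)},
\end{equation*}
which suggests $X_F(\Gamma)$ is a coadjoint-type expression built from $dF[\Gamma]$ and $\Gamma$. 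The delicate point is that $dF[\Gamma]$ may be a genuine distribution-valued operator (this is precisely why we enlarged $\g_k$ to $\g_{k,gmp}$), so one has to verify (i) that the formal cyclicity manipulation is legitimate, which I would reduce to the good mapping property to make sense of the required compositions of DVOs with Schwartz kernels, and (ii) that the resulting $X_F(\Gamma)$ actually lies in $\G_\infty^*$, i.e.\ each component is a symmetric operator with Schwartz kernel. This second step is the key technical hurdle: the good mapping property of $dF[\Gamma]$ must be shown to be exactly what is needed so that pairing with the Schwartz kernels of $\Gamma$ produces Schwartz kernels in the output hierarchy. Uniqueness of $X_F$ then follows from the fact that linear functionals $i\Tr(B\cdot)$ with $B$ ranging over $\G_\infty$ separate points of $\G_\infty^*$, owing to the weak non-degeneracy of the pairing between $\G_\infty$ and $\G_\infty^*$.
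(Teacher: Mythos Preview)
Your treatment of \ref{item:wp_P1} and \ref{item:wp_P2} is essentially the paper's: reduce to trace functionals via the Leibniz rule, identify G\^ateaux derivatives with elements of $\G_\infty$, and push the Jacobi identity back to that of $\comm{\cdot}{\cdot}_{\G_\infty}$. One correction: you write that $\pb{G}{H}_{\G_\infty^*}$ ``need not lie in $\A_\infty$''. In fact showing that it \emph{does} is part of verifying \ref{item:wp_P1}, and the paper handles this exactly as in the $N$-body case (\cref{lem:H_LA_P1}): for trace functionals $G,H$ one has $\pb{G}{H}_{\G_\infty^*}(\Gamma)=i\Tr(\comm{dG[0]}{dH[0]}_{\G_\infty}\cdot\Gamma)$, which is again a trace functional because $\comm{dG[0]}{dH[0]}_{\G_\infty}\in\G_\infty$ by \cref{prop:G_inf_br}; the general case follows by the Leibniz rule. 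This also gives you the G\^ateaux derivative of $\pb{G}{H}_{\G_\infty^*}$ directly (cf.\ \cref{lem:pb_func}), without needing second derivatives.

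For \ref{item:wp_P3} your route diverges from the paper's. You propose to write down the coadjoint formula $X_F(\Gamma)=\ad^*_{dF[\Gamma]}\Gamma$ and justify it by a cyclicity argument for the generalized trace, using the good mapping property to legitimize composing DVOs with Schwartz-kernel operators. The paper instead runs an \emph{approximation argument}: mollify and truncate the kernels of the generators of $F,H$ to land in $\G_M$ for all $M\ge N_0$, invoke the explicit $N$-body Hamiltonian vector field formula from \cref{lem:H_WP_VF}, pass to the limit $M\to\infty$ using \cref{prop:LB_lim} (which kills the higher contraction terms), and then let the mollification parameters go to infinity using separate continuity. This yields the formula \eqref{eq:Ginf_vf} without ever having to justify cyclicity of the generalized trace for genuinely distribution-valued $dH[\Gamma]^{(j)}$; the delicate manipulations happen only at the level of the smooth approximants, where \cref{prop:gtr_prop}\ref{item:gtr_cyc} applies directly. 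Your direct approach should also succeed, but the cyclicity step you flag as ``the delicate point'' is genuinely nontrivial: the available cyclicity statement (\cref{prop:gtr_prop}\ref{item:gtr_cyc}) requires one factor to lie in $\L(\Sc',\Sc')$, which $dH[\Gamma]^{(j)}_{(\alpha,\ell+1,\ldots,\ell+j-1)}$ does not in general, so you would need to prove a sharpened version using the good mapping property. The paper's approximation strategy trades that analytical work for bookkeeping with iterated limits.
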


\medskip
Define the \emph{Gross-Pitaevskii Hamiltonian functional} 
\begin{equation}
\H_{GP}:\G_{\infty}^{*}\rightarrow\R
\end{equation}
by
\begin{equation}\label{equ:gp_ham_intro}
\H_{GP}(\Gamma) \coloneqq -\Tr_{1}\paren*{\Delta_{x_1}\gamma^{(1)}} + \Tr_{1,2}\paren*{\delta(X_{1}-X_{2})\gamma^{(2)}}, \qquad \Gamma=(\gamma^{(k)})_{k\in\N} \in \G_{\infty}^{*},
\end{equation}
where $\Tr_{1,\ldots,j}$ denotes the $j$-particle generalized trace, see \cref{sec:trace} for definition and discussion. Then we can rewrite $\H_{GP}$ as
\begin{equation}
\label{eq:H_GP}
\H_{GP}(\Gamma) = \Tr\paren*{\W_{GP}\cdot\Gamma}, \qquad \W_{GP}\coloneqq (-\Delta_{x_1}, \delta(X_{1}-X_{2}),0,\ldots),
\end{equation}
which one should compare with \eqref{hbbgky}. 

\begin{remark}
Note that $-i\W_{GP}$ is an observable $\infty$-hierarchy, that is, an element of $\G_\infty$. Since we have the convergence $-i\W_{BBGKY,N} \rightarrow -i\W_{GP}$ in $\G_\infty$, as $N\rightarrow\infty$, it follows that $\H_{BBGKY,N}\rightarrow\H_{GP}$ in $C^\infty(\G_\infty^*;\R)$ endowed with the topology of uniform convergence on bounded sets.
\end{remark}

 We now state our next main result, which addresses the final component of \cref{q:GP_ham}:
\begin{restatable}[Hamiltonian structure for GP]{thm}{GPham}
\label{thm:GP_ham}
Let $I\subset \R$ be a compact interval. Then $\Gamma\in C^{\infty}(I;\G_{\infty}^{*})$ is a solution to the GP hierarchy \eqref{eq:GP} if and only if
\begin{equation}
\paren*{\frac{d}{dt}\Gamma}(t) = X_{\H_{GP}}(\Gamma(t)), \qquad \forall t\in I,
\end{equation}
where $X_{\H_{GP}}$ is the unique Hamiltonian vector field defined by $\H_{GP}$ with respect to the weak Poisson structure $(\G_{\infty}^{*},\A_{\infty},\pb{\cdot}{\cdot}_{\G_{\infty}^{*}})$. 
\end{restatable}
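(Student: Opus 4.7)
The plan is to identify the Hamiltonian vector field $X_{\H_{GP}}$ associated to $\H_{GP}$ on the weak Poisson manifold $(\G_{\infty}^{*},\A_{\infty},\pb{\cdot}{\cdot}_{\G_{\infty}^{*}})$ and to show that, componentwise, it equals precisely the right-hand side of the equation $\frac{d}{dt}\gamma^{(k)}=\cdots$ obtained from \eqref{eq:GP}. Recall that, per \cref{def:WP}, $X_{\H_{GP}}$ is characterised as the unique vector field on $\G_{\infty}^{*}$ satisfying
$$
dF[\Gamma]\!\big(X_{\H_{GP}}(\Gamma)\big) = \pb{F}{\H_{GP}}_{\G_{\infty}^{*}}(\Gamma) \qquad \forall F\in\A_{\infty},\ \Gamma\in\G_{\infty}^{*}.
$$
Once $X_{\H_{GP}}$ is identified explicitly, the equivalence in the theorem is just a restatement of the flow equation $\frac{d}{dt}\Gamma=X_{\H_{GP}}(\Gamma)$ read level by level in $\G_{\infty}^{*}$.

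First I would compute $d\H_{GP}[\Gamma]$. Since $\H_{GP}(\Gamma)=\Tr(\W_{GP}\cdot\Gamma)$ is affine in $\Gamma$ and $\W_{GP}$ is self-adjoint with support only in its first two levels, its G\^ateaux derivative is constant in $\Gamma$ and identifies, under the trace pairing $i\Tr(\cdot)$ used to define \eqref{eq:PB}, with the observable $\infty$-hierarchy $-i\W_{GP}\in\G_{\infty}$. In particular $d\H_{GP}[\Gamma]$ genuinely lies in $\G_{\infty}$, so \eqref{eq:PB} legitimately applies and gives
$$
\pb{F}{\H_{GP}}_{\G_{\infty}^{*}}(\Gamma) = i\Tr\!\big(\comm{dF[\Gamma]}{-i\W_{GP}}_{\G_{\infty}}\cdot\Gamma\big).
$$
Because the Poisson bracket is bilinear and satisfies the Leibniz rule, and because $\A_{\infty}$ is generated as a unital algebra by linear functionals $F_A(\Gamma)\coloneqq i\Tr(A\cdot\Gamma)$ with $A=(A^{(k)})_{k\in\N}\in\G_{\infty}$, it suffices to verify the defining identity for such generators, for which $dF_A[\Gamma]$ is simply $A$ itself.

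Next I would expand the Lie bracket using \cref{prop:LB_lim} (extended to $\G_{\infty}$ in the paper). Since $\W_{GP}^{(j)}=0$ for $j\geq 3$, only two summands survive, and for each $k\geq 1$
$$
\comm{A}{-i\W_{GP}}_{\G_{\infty}}^{(k)} = \Sym_k\!\big(\comm{A^{(k)}}{i\Delta_{x_1}}_1\big) + \Sym_k\!\big(\comm{A^{(k-1)}}{-i\delta(X_1-X_2)}_1\big),
$$
with the convention $A^{(0)}\coloneqq 0$. For the kinetic ($j{=}1$) piece, pairing against $\gamma^{(k)}$ under $\Tr_k$ and using trace cyclicity together with the permutation symmetry of $\gamma^{(k)}$ converts the symmetrized partial commutator into the full commutator with $i\Delta_{\ux_k}$, contributing $-\Tr\big(A^{(k)}\cdot\comm{\Delta_{\ux_k}}{\gamma^{(k)}}\big)$. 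For the contact ($j{=}2$) piece, the good mapping property of \cref{def:gmp} is exactly what gives rigorous meaning to the partial composition $\comm{A^{(k-1)}}{-i\delta(X_1-X_2)}_1\in\L_{gmp}$; pairing it with $\gamma^{(k)}$ under the $k$-particle trace, then reindexing $m=k-1$, applying Fubini, and invoking the definition of $B_{j;m+1}^{\pm}$ together with the symmetry of $\gamma^{(m+1)}$, collapses the expression into $-2i\kappa\,\Tr\big(A^{(m)}\cdot B_{m+1}\gamma^{(m+1)}\big)$. Summing all contributions and comparing with $dF_A[\Gamma](Y(\Gamma))=i\Tr(A\cdot Y(\Gamma))$ for the candidate
$$
Y(\Gamma) \coloneqq \big(i\comm{\Delta_{\ux_k}}{\gamma^{(k)}} - 2i\kappa B_{k+1}\gamma^{(k+1)}\big)_{k\in\N}
$$
establishes the defining identity on all linear generators, hence on all of $\A_{\infty}$. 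The weak non-degeneracy of the trace pairing between $\G_{\infty}^{*}$ and $\G_{\infty}$, the same property underpinning \cref{prop:LP}, then forces $X_{\H_{GP}}=Y$, at which point the equivalence in the theorem is immediate.

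The principal technical obstacle is the contact ($j{=}2$) term. The bracket $\comm{A^{(k-1)}}{-i\delta(X_1-X_2)}_1$ is constructed via a partial composition against a distribution-valued operator, and only the good mapping property prevents this construction from being ill-posed. Rigorously propagating the distributional pairings through $\Tr_k$, interchanging orders of integration via Fubini on Schwartz kernels, and reassembling the output into the full operator $B_{k+1}\gamma^{(k+1)}=\sum_{j=1}^{k}(B_{j;k+1}^{+}-B_{j;k+1}^{-})\gamma^{(k+1)}$ with the correct signs, symmetrization factors, and summation indices, is where the machinery of $\L_{gmp}$ and the trace identities developed earlier in the paper will be used most heavily. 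The kinetic piece, by contrast, is essentially a symmetrization bookkeeping exercise relying only on the permutation invariance of $\gamma^{(k)}$ and trace cyclicity.
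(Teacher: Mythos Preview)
Your proposal is correct and follows essentially the same approach as the paper: identify $d\H_{GP}[\Gamma]=-i\W_{GP}$, exploit that only the $j=1,2$ components of $\W_{GP}$ are nonzero, and match the resulting expression componentwise to the right-hand side of \eqref{eq:GP}. The only organizational difference is that the paper invokes the general Hamiltonian vector field formula already established in \cref{lem:WP_P3} (namely \eqref{eq:Ginf_vf}) and then simplifies, whereas you effectively re-derive that formula in the special case $H=\H_{GP}$ by computing $\pb{F_A}{\H_{GP}}_{\G_\infty^*}$ directly and appealing to non-degeneracy; the trace manipulations and the use of the good mapping property for the $\delta(X_1-X_2)$ term are the same in both routes.
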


\begin{remark}\label{rem:all_dim}
The result of \cref{thm:GP_ham} extends, with an almost identical proof, to the Hartree hierarchy, and it seems likely that this result should also extend to the quintic GP hierarchy \cite{CP2011} and other variants which account for higher-order particle interactions \cite{X2015}. 
\end{remark}

We now give a geometric formulation of the procedure by which one obtains the BBGKY hierarchy from the $N$-body Schr\"odinger equation. The results described below will be proved in \cref{ssec:dm_pomo}. To record the Hamiltonian structure for the $N$-body Schr\"odinger equation, we equip the bosonic Schwartz space $\Sc_s(\R^{dN})$ with the standard symplectic structure and define the Hamiltonian functional
\begin{equation}
\H_{N}(\Phi_{N}) \coloneqq \frac{1}{N}\int_{\R^{dN}}d\ux_{N}\ol{\Phi_{N}(\ux_{N})} \paren*{H_{N}\Phi_{N}}(\ux_{N}), \qquad \forall \Phi_{N} \in\Sc_{s}(\R^{dN}).
\end{equation}
Then the Schr\"{o}dinger equation \cref{eq:LSchr} can be viewed as a Hamiltonian flow on this weak symplectic manifold. We can endow the space $\L(\Sc_{s}'(\R^{dN}),\Sc_{s}(\R^{dN}))$ of bosonic density matrices with a weak Poisson structure by defining
\begin{equation}
\pb{F}{G}_{N} \coloneqq i\Tr_{1,\ldots,N}\paren*{\comm{dF[\Psi_{N}]}{dG[\Psi_{N}]}_{\g_N}\Psi_{N}}, \qquad \forall \Psi_{N}\in\L(\Sc_{s}'(\R^{dN}),\Sc_{s}(\R^{dN})),
\end{equation}
where $dF$ and $dG$ denote the G\^ateaux derivatives, see \cref{gateaux_deriv}, of $F$ and $G$, which are smooth real-valued functionals with suitably regular G\^ateaux derivatives. Then the Poisson bracket $\pb{\cdot}{\cdot}_{N}$ is a Lie-Poisson bracket induced by the Lie algebra of $N$-body bosonic observables with Lie bracket given by $\comm{\cdot}{\cdot}_{\g_N}$.

There is a canonical map from $N$-body wave functions to $N$-body density matrices given by
\begin{equation}
\iota_{DM,N}: \Sc_{s}(\R^{dN}) \rightarrow \L(\Sc_{s}'(\R^{dN}),\Sc_{s}(\R^{dN})), \qquad \iota_{DM,N}(\Phi_{N}) \coloneqq \ket*{\Phi_{N}}\bra*{\Phi_{N}}.
\end{equation}
We will show in \cref{prop:DM_po} that 
\[
\iota_{DM,N}: (\Sc_{s}(\R^{dN}), \pb{\cdot}{\cdot}_{L^2, N} ) \to (\L(\Sc_{s}'(\R^{dN}),\Sc_{s}(\R^{dN})), \pb{\cdot}{\cdot}_{N}),
\]
is a Poisson morphism\footnote{We recall $\pb{\cdot}{\cdot}_{L^2, N} = N \pb{\cdot}{\cdot}_{L^2}$, and see \eqref{pb_intro} for a definition of $\pb{\cdot}{\cdot}_{L^2}$. We also note that the co-domain of this map will be replaced by the appropriate space of $N$-body density matrices. } and consequently maps solutions of the Schr\"{o}dinger equation \cref{eq:LSchr} to solutions of the von Neumann equation
\begin{equation}\label{eq:vN}
i\p_{t}\Psi_{N} = \comm{H_{N}}{\Psi_{N}},
\end{equation}
where the right-hand side denotes the usual commutator. Defining the Hamiltonian functional
\begin{equation}
\H_{N}(\Psi_{N}) \coloneqq \frac{1}{N}\Tr_{1,\ldots,N}\paren*{H_{N}\Psi_{N}}, \qquad \forall \Psi_{N} \in \L(\Sc_{s}'(\R^{dN}),\Sc_{s}(\R^{dN})),
\end{equation}
the von Neumann equation \eqref{eq:vN} can be viewed as a Hamiltonian equation of motion on the weak Poisson manifold $(\L(\Sc_{s}'(\R^{dN}),\Sc_{s}(\R^{dN})), \pb{\cdot}{\cdot}_{N})$. We will prove in \cref{prop:RDM_Po} that the dual of the map $\iota_{\epsilon,N}$ given in \eqref{iota_eps} induces a canonical morphism of Poisson manifolds, which is precisely the \emph{reduced density matrix map}, given by
\begin{equation}
\iota_{RDM,N}=\iota_{\epsilon,N}^*: \g_{N}^{*} \rightarrow \G_{N}^{*}, \qquad \iota_{RDM,N}(\Psi_{N}) \coloneqq (\Tr_{k+1,\ldots,N}(\Psi_{N}))_{k=1}^{N} \eqqcolon (\gamma_N^{(k)})_{k=1}^N,
\end{equation}
which maps solutions of the von Neumann equation to solutions of the quantum BBGKY hierarchy.

\subsection{The connection with the NLS} 

We will now tie together our main results and state the result which provides an affirmative answer to \cref{q:conn_1bdy}. We connect the GP hierarchy to the cubic NLS, each as infinite-dimensional Hamiltonian systems, through the canonical embedding
\begin{equation}
\label{eq:can_emb}
\iota: \Sc(\R^d) \rightarrow \G_\infty^*, \qquad \phi \mapsto (\ket*{\phi^{\otimes k}}\bra*{\phi^{\otimes k}})_{k\in\N}.
\end{equation}
Although $\iota$ is rather trivial in terms of the simplicity of its definition, and for this reason we sometimes refer to $\iota$ as the trivial embedding, it has the important property of being a Poisson morphism (see \cref{def:po_map} below).

\begin{restatable}{thm}{pomo}
\label{thm:pomo}
The map $\iota$ is a Poisson morphism of $(\Sc(\R^d),\A_{\Sc},\pb{\cdot}{\cdot}_{L^2})$ into $(\G_\infty^*,\A_{\infty},\pb{\cdot}{\cdot}_{\G_\infty^*})$, i.e. it is a smooth map such that
\begin{equation}
\pb{F\circ\iota}{G\circ\iota}_{L^2}(\phi) =\pb{F}{G}_{\G_\infty^*}(\iota(\phi)), \qquad \forall \phi\in\Sc(\R^d),
\end{equation}
for all functionals $F,G\in\A_{\infty}$.
\end{restatable}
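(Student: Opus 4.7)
The plan has three ingredients: smoothness of $\iota$, a reduction of the morphism identity to linear trace functionals via the Leibniz rule, and a direct match between the pullback Poisson bracket (computed via symplectic $L^2$ gradients) and the Lie-Poisson bracket (computed via Proposition~\ref{prop:LB_lim}). For smoothness, each component $\phi \mapsto \ket*{\phi^{\otimes k}}\bra*{\phi^{\otimes k}}$ is a degree-$2k$ polynomial in $\phi, \bar\phi$ continuous from $\Sc(\R^d)$ into $\Sc_{s,s}(\R^{dk}\times\R^{dk})$, so $\iota$ is smooth into $\G_\infty^*$ equipped with its product topology. For the identity itself, since $\A_\infty$ is generated as a point-wise algebra by the linear trace functionals $F_A(\Gamma) = i\Tr(A\Gamma)$ and constants, and since both $\pb{\cdot}{\cdot}_{L^2}$ and $\pb{\cdot}{\cdot}_{\G_\infty^*}$ are derivations in each slot while $(FG)\circ\iota = (F\circ\iota)(G\circ\iota)$, it suffices to verify the identity for $F(\Gamma) = i\Tr_{1,\ldots,\ell}(A^{(\ell)}\gamma^{(\ell)})$ and $G(\Gamma) = i\Tr_{1,\ldots,j}(B^{(j)}\gamma^{(j)})$ with single-slot $A^{(\ell)}\in\g_{\ell,gmp}$, $B^{(j)}\in\g_{j,gmp}$; constants contribute trivially.

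For the pullback side, I would differentiate $F\circ\iota(\phi) = i\langle\phi^{\otimes\ell}, A^{(\ell)}\phi^{\otimes\ell}\rangle$ via the product rule on $(\phi+th)^{\otimes\ell}$. The bosonic symmetry of $A^{(\ell)}$ and of $\phi^{\otimes\ell}$ collapses the $\ell$ identical summands on each side, and the skew-adjointness $(A^{(\ell)})^* = -A^{(\ell)}$ combines the bra- and ket-derivatives into
\begin{equation}
d(F\circ\iota)[\phi](h) = -2\ell\,\Im\langle h\otimes\phi^{\otimes(\ell-1)}, A^{(\ell)}\phi^{\otimes\ell}\rangle.
\end{equation}
The good mapping property of $A^{(\ell)}$ permits the $\ell-1$ partial integrations inside this pairing, rewriting it as $-2\ell\Im\langle h, g_F(\phi)\rangle_{L^2(\R^d)}$, where $g_F(\phi)\in \Sc(\R^d)$ is the one-variable contraction $x\mapsto \int d\ux_{\geq 2}(A^{(\ell)}\phi^{\otimes\ell})(x,\ux_{\geq 2})\overline{\phi^{\otimes(\ell-1)}(\ux_{\geq 2})}$. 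Inverting the weakly non-degenerate pairing $\omega_{L^2} = 2\Im\langle\cdot,\cdot\rangle$ gives $\grad_{s}(F\circ\iota)(\phi) = \ell g_F(\phi)$, and analogously $\grad_{s}(G\circ\iota)(\phi) = j g_G(\phi)$, so
\begin{equation}
\pb{F\circ\iota}{G\circ\iota}_{L^2}(\phi) = \omega_{L^2}(\ell g_F(\phi), j g_G(\phi)) = 2\ell j\,\Im\langle g_F(\phi), g_G(\phi)\rangle.
\end{equation}

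For the downstairs side, Proposition~\ref{prop:LB_lim} gives $[A,B]_{\G_\infty}$ concentrated at level $k = \ell+j-1$ with value $\Sym_k([A^{(\ell)},B^{(j)}]_1)$. Since $\phi^{\otimes k}$ is symmetric, the $\Sym_k$ is absorbed into the pairing, so $\pb{F}{G}_{\G_\infty^*}(\iota(\phi)) = i\langle\phi^{\otimes k},[A^{(\ell)},B^{(j)}]_1\phi^{\otimes k}\rangle$. I would then expand $[\cdot,\cdot]_1$ as the sum of its single-contraction compositions indexed by the choice of shared variable, move each operator onto the appropriate $\phi$-tensor via skew-adjointness, and integrate out the passive variables (those appearing as $\phi$ in both bra and ket) using the good mapping property; each such term reduces to a multiple of $\pm\langle g_F(\phi), g_G(\phi)\rangle$ or its complex conjugate, and the identity $i(z-\bar z) = -2\Im z$ assembles them into $2\ell j\,\Im\langle g_F(\phi), g_G(\phi)\rangle$, matching the pullback side.

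The main obstacle is the combinatorial bookkeeping in this final matching: the number of single-contraction compositions in the definition of $[A^{(\ell)}, B^{(j)}]_1$, the factors $\ell$ and $j$ absorbed into $\grad_{s}$ via bosonic symmetry in the first step, and the behavior of $\Sym_k$ on symmetric states must conspire to produce exactly the overall coefficient $2\ell j$ without miscounting, as one can sanity-check in the base case $\ell=j=1$ (where the identity reduces to $\omega_{L^2}(A^{(1)}\phi, B^{(1)}\phi) = i\langle\phi,[A^{(1)},B^{(1)}]\phi\rangle$) and in $\ell=2, j=1$. A secondary, genuinely technical difficulty is that elements of $\g_{k,gmp}$ may be distribution-valued (the motivating example being $\delta(X_1-X_2)\in\g_{2,gmp}$ appearing in $\W_{GP}$), so every exchange of operator application with partial integration must be justified by the good mapping property (Definition~\ref{def:gmp}), which is exactly what makes the ambient space $\L_{gmp}$ both necessary and sufficient for the manipulations above.
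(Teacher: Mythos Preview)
Your proposal is correct and the computation goes through essentially as you describe; the sanity checks on the coefficient $2\ell j$ work out (each of the $\ell$ terms in $A^{(\ell)}\circ_1 B^{(j)}$ contributes equally on $\phi^{\otimes k}$, and skew-adjointness of $A^{(\ell)}_{(1,\ldots,\ell)}$ converts the composition into the pairing $\langle g_F,g_G\rangle$, with the $B\circ_1 A$ term giving the conjugate).

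The paper takes a somewhat different route. It does \emph{not} reduce to single-slot trace functionals; instead, for general $F,G\in\A_\infty$ it writes $\pb{F}{G}_{\G_\infty^*}(\iota(\phi))=dF[\iota(\phi)]\bigl(X_G(\iota(\phi))\bigr)$ and invokes the explicit Hamiltonian vector field formula \eqref{eq:Ginf_vf} from \cref{lem:WP_P3}, then computes the partial traces $\Tr_{k+1,\ldots,k+j-1}\bigl(dG[\iota(\phi)]^{(j)}_{(\alpha,\ldots)}\iota(\phi)^{(k+j-1)}\bigr)$ directly on factorized states. The resulting one-variable contractions $\psi_{F,k},\psi_{G,j}$ are exactly your $g_F,g_G$ (summed over the active component index), and the identity emerges after pairing with $dF[\iota(\phi)]^{(k)}$. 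What your approach buys is localized bookkeeping (fixed $\ell,j$) and avoidance of the vector-field lemma; what the paper's approach buys is that the verification of $\iota^*\A_\infty\subset\A_\Sc$ (your ``inverting the weakly non-degenerate pairing'' step) and the bracket identity share the same computation of $\grad_s f=\psi_F$, and no separate argument is needed that the Leibniz reduction to generators is legitimate on both sides.
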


We conclude by discussing why the results described in this section provide ``a rigorous derivation of the Hamiltonian structure of the NLS''. It is a quick computation to show that the pullback of the GP Hamiltonian \eqref{eq:H_GP} under the map $\iota$, denoted by $\iota^*\H_{GP}$, equals the NLS Hamiltonian \eqref{eq:H_NLS},\footnote{In particular, as a corollary of \cref{thm:GP_ham} and \cref{thm:pomo}, we obtain the well-known fact that if $\phi(t)$ is a solution to the cubic NLS \eqref{nls}, then $\Gamma(t)\coloneqq \iota(\phi(t))$ is a solution to the GP hierarchy \eqref{eq:GP}.} that is
\begin{align}\label{equ:pull}
\iota^*\H_{GP} = \H_{NLS}.
\end{align}
Hence, \cref{thm:pomo}, \cref{thm:GP_ham} and \eqref{equ:pull} ultimately demonstrate that the Hamiltonian functional and phase space of the NLS can be obtained via the pullback of the canonical embedding \eqref{eq:can_emb}. Together with the results of \cref{ssec:dm_pomo}, which provide a geometric correspondence between the $N$-body Schrodinger equation and the BBGKY hierarchy, and \cref{prop:LB_lim}, which enables us to take the infinite-particle limit of our geometric constructions at the $N$-body level, this provides a rigorous derivation of the Hamiltonian structure of the NLS from the Hamiltonian formulation of the $N$-body Schr\"odinger equation.

\subsection{Organization of the paper}

\cref{sec:pre} is devoted to preliminary material on weak Poisson manifolds modeled on locally convex spaces, Lie algebras, and tensor products. The reader familiar with infinite-dimensional Poisson manifolds and Lie algebras may wish to skip the first two subsections upon first reading and instead consult them as necessary during the reading of \cref{sec:geom_N} and \cref{sec:geom}.

In \cref{sec:geom_N}, we build the requisite Lie algebra structure for $\G_{N}$ and weak Lie-Poisson structure for $\G_{N}^{*}$, thereby proving \cref{prop:NH_LA} and \cref{prop:NH_WP}. \cref{ssec:N_LA} contains the Lie algebra construction, and \cref{ssec:LP_N} contains the dual Lie-Poisson construction. Lastly, in \cref{ssec:dm_pomo}, we show that the familiar maps of forming a density matrix from a wave function and taking the sequence of reduced density matrices of a density matrix have geometric content. Namely, we prove \cref{prop:DM_po} and \cref{prop:RDM_Po}, which assert that these maps are Poisson morphisms.

In \cref{sec:geom}, we build the requisite Lie algebra structure for $\G_{\infty}$ and weak Lie-Poisson structure for $\G_{\infty}^{*}$, thereby proving \cref{prop:G_inf_br} and \cref{prop:LP}. The section is broken up into several subsections. \cref{ssec:geo_LA} is devoted the Lie algebra construction, and \cref{ssec:geo_LPM} is devoted to the dual Lie-Poisson construction. Finally, we will prove \cref{thm:pomo} in \cref{ssec:po_emb}.

Lastly, in \cref{sec:gp_flows}, we prove our Hamiltonian flows results \cref{thm:BBGKY_ham} and \cref{thm:GP_ham}, which assert that the BBGKY and GP hierarchies, respectively, are Hamiltonian flows on the weak Lie-Poisson manifolds constructed in the previous sections.

\begin{remark}
In \cref{sec:conv_n_body}, \cref{sec:geom}, and  \cref{sec:gp_flows}, we will fix the dimension to be one for simplicity, but we emphasize that our results hold \textit{independently} of the dimension.
\end{remark}

\medskip
We have also included two appendices to make this work as self-contained as possible. \cref{local_cvx} contains some background material on locally convex spaces, specifying certain choices which we make in the current work, which in infinite dimensions can lead to non-equivalent definitions. \cref{app:DVO} is devoted to technical facts about distribution-valued operators and topological tensor products, which justify the manipulations used extensively in this paper. Furthermore, this appendix includes an elaboration on the good mapping property, in particular, some technical consequences of it which are used in the body of the paper.

\section{Notation}\label{sec:not}
\subsection{Index of notation}

We include \cref{tab:notation} as a notational guide for the various symbols which appear in this work. In this table, we either provide a definition of the notation or a reference for where the symbol is defined. When definitions for these objects may have appeared in the introduction, we will give references to where they first appear in subsequent sections.

\begin{table}[h]
   \caption{Notation} 
   \label{tab:notation}
   \small 
   \centering 
   \begin{tabular}{ll} 
   \textbf{Symbol} & \textbf{Definition} \\ 
      \hline
   $(\ul{x}_k)$, $\ul{x}_{k}$ & $(x_1, \ldots, x_k)$ \\
   $\ul{x}_{m_1; m_k}$ & $(x_{m_1}, \ldots, x_{m_k})$ \\
   $\ul{x}_{i;i+k}$ & $(x_i, \ldots, x_{i+k})$ \\
      $d\ul{x}_{k}$ & $ dx_1 \cdots dx_k$ \\
         $d\ul{x}_{i;i+k}$ & $dx_i \cdots dx_{i+k}$ \\
      $\mathbb{N}_{\leq i}$ or $\mathbb{N}_{\geq i}$  & $\{ n \in \mathbb{N} \,:\, n \leq i \}$ or $\{ n \in \mathbb{N} \,:\, n \geq i \}$ \\
      $\Ss_{k}$ & symmetric group on $k$ elements \\
      $\Sc(\R^{k}), \Sc'(\R^{k})$ & Schwartz space on $\R^k$ and tempered distributions on $\R^k$ \\
      $\mathcal{D}'(\R^{k})$ & distributions on $\R^k$ \\
      $\Sc_s(\R^{k}), \Sc_s'(\R^{k})$ & symmetric Schwartz space, \cref{sym_schwartz}, and symmetric tempered distributions\\
      $\L(E;F)$ & continuous linear maps between locally convex spaces $E$ and $F$\\
      $\tl{\L}(\Sc(\R^k),\Sc(\R^k))$ & $\L(\Sc(\R^k),\Sc(\R^k))$ equipped with the subspace topology induced by $\L(\Sc(\R^k),\Sc'(\R^k))$\\
      $\tl{\L}(\Sc_s(\R^k),\Sc_s(\R^k))$ & analogous to previous definition \\
      $dF$ & the G\^ateaux derivative of $F$, \cref{gateaux_deriv}\\
      $\grad$ or $\grad_{s}$ & the real or symplectic $L^2$ gradients, \cref{def:re_grad} and \cref{schwartz_deriv}\\
       $A_{(\pi(1),\ldots, \pi(k))}$ & conjugation of an operator by a permutation, see \eqref{eq:op_coord} \\
       $\Sym(f)$ & symmetrization operator for functions, \cref{def:sym_f} \\
       $\Sym(A)$ & symmetrization operator for operators, \cref{def:sym_A}\\
       $L^2_s(\R^k)$ & symmetric wave functions, \cref{sym_wave}\\
       $B_{i;j}^{\pm}, B_{i;j}$ & contraction operators, \cref{contraction}\\
       $\phi^{\otimes k}$ & $k$-fold tensor of $\phi$ with itself, \eqref{tensor_def}\\
       $\omega_{L^2}$ & symplectic form on $L^2(\R^k)$, \eqref{l2_symp}\\
       $\A_{\Sc}$ & see \cref{schwartz_wpoiss} and \eqref{equ:Asc}\\
       $\{ \cdot, \cdot\}_{L^2}$ & Poisson bracket on $L^2(\R^k)$, \eqref{l2_bracket}\\
       $A_{(j_1,\ldots,j_k)}^{(k)}$ & $k$-particle extension, \eqref{k_part} \\
	$\g_k$ & locally convex space of $k$-body bosonic observables, \eqref{gk_def}\\
	$(\G_N, [\cdot, \cdot]_{\G_N})$ & Lie algebra of observable $N$-hierarchies, \eqref{GN_def} \\
	$\circ_r$ & r-fold contraction, \eqref{eq:A_c_B}\\
	$(\G_N^*, \A_\infty, \{\cdot, \cdot\}_{\G_N^*})$ & Lie-Poisson manifold of density matrix $N$-hierarchies, \eqref{GNstar_def}\\
	$\g_{k,gmp}$ & locally convex space of $k$-body observables satisfying the good mapping property, \eqref{gkgmp_def}\\
        $(\G_\infty, [\cdot, \cdot]_{\G_\infty})$ & Lie algebra of observable $\infty$-hierarchies, \eqref{Ginf_def} and \eqref{eq:C_def}\\
        $\circ_{\alpha}^\beta$ & contraction operator, \cref{lem:gmp}\\
        $(\G_\infty^*, \A_\infty, \{\cdot, \cdot\}_{\G_\infty^*})$ & Lie-Poisson manifold of density matrix $\infty$-hierarchies, \eqref{Ginf_star_def}, \cref{a_inf_def} and  \eqref{equ:poisson_def}\\
        $\Tr_{1,\ldots,N}$ & generalized trace, \cref{def:gen_trace}\\
        $\Tr_{k+1,\ldots,N}$ & generalized partial trace, \cref{prop:partial_trace}
   \end{tabular}
\end{table}

\section{Preliminaries}\label{sec:pre}

\subsection{Weak Poisson structures and Hamiltonian systems}
The classical notion of Poisson structure, as can be found in \cite{MR2013}, is ill-suited outside the Hilbert or Banach manifold setting due to the fact that for a given smooth, locally convex manifold $M$, not every functional in $C^\infty(M, \R)$, the space of smooth, real-valued functionals on $M$, need admit a Hamiltonian vector field. Since we will need to work with Fr\'{e}chet manifolds, an alternative theory is needed. We opt for the notion of a \emph{weak Poisson structure} due to Neeb et al. \cite{NST2014}. 

We recall that a unital subalgebra $\A \subseteq C^{\infty}(M;\R)$ contains constant functions and is closed under pointwise multiplication.
\begin{mydef}[Weak Poisson manifold]\label{def:WP}
A \emph{weak Poisson structure} on $M$ is a unital subalgebra $\mathcal{A}\subset C^{\infty}(M;\R)$ and a bilinear map $\pb{\cdot}{\cdot}:\mathcal{A}\times\mathcal{A}\rightarrow\mathcal{A}$ satisfying the following properties:
\begin{enumerate}[(P1)]
\item\label{item:wp_P1}
The bilinear map $\pb{\cdot}{\cdot}$, is a Lie bracket and satisfies the Leibnitz rule
\begin{equation}
\pb{F}{GH} = \pb{F}{G}H+G\pb{F}{H}, \qquad \forall F,G,H\in\mathcal{A}.
\end{equation}
We call $\pb{\cdot}{\cdot}$ a \emph{Poisson bracket}.
\item\label{item:wp_P2}
For all $m\in M$ and $v\in T_{m}M$ satisfying $dF[m](v)=0$ for all $F\in\A$, we have that $v=0$.
\item\label{item:wp_P3}
For every $H\in\A$, there exists a smooth vector field $X_{H}$ on $M$ satisfying
\begin{equation}
\label{eq:ham_vf_deriv}
X_{H}F=\pb{F}{H}, \qquad \forall F\in\A.\footnote{In the left-hand side of identity \eqref{eq:ham_vf_deriv}, we use the notation $X_H$ to denote the vector field identified as a derivation.}
\end{equation}
We call $X_{H}$ the \emph{Hamiltonian vector field} associated to $H$.
\end{enumerate}
If properties \ref{item:wp_P1} - \ref{item:wp_P3} are satisfied, then we call the triple $(M,\A,\pb{\cdot}{\cdot})$ a \emph{weak Poisson manifold}.
\end{mydef}

We now record some observations from \cite{NST2014} about the definition of a weak Poisson structure.

\begin{remark}\label{rem:hvf_u}
\ref{item:wp_P2} implies that the Hamiltonian vector field $X_{H}$ associated to some $H\in\A$ is uniquely determined by the relation
\begin{equation}
\pb{F}{H}(m) = (X_{H}F)(m)=dF[m](X_{H}(m)), \qquad \forall F\in\A.
\end{equation}
Indeed, if $X_{H,1}$ and $X_{H,2}$ are two smooth vector fields satisfying the preceding relation, then the smooth vector $\widetilde{X}_{H}\coloneqq X_{H,1}-X_{H,2}$ satisfies
\begin{equation}
dF[m](\widetilde{X}_{H}(m))=0, \qquad \forall F\in\A,
\end{equation}
for all $m\in M$, which by \ref{item:wp_P2} implies that $\widetilde{X}_{H}\equiv 0$.
\end{remark}

\begin{remark}
For all $F,G,H\in\A$, we have that
\begin{align}
\comm{X_{F}}{X_{G}}H &= \pb{\pb{H}{G}}{F}-\pb{\pb{H}{F}}{G} \nonumber\\
&= \pb{H}{\pb{G}{F}} \nonumber\\
&= X_{\pb{G}{F}}H.
\end{align}
Hence, by \cref{rem:hvf_u}, $\comm{X_{F}}{X_{G}}=X_{\pb{G}{F}}$ for $F,G\in\A$. Additionally, the Leibnitz rule for $\pb{\cdot}{\cdot}$ implies the identity
\begin{equation}\label{eq:vf_L}
X_{FG}=FX_{G}+GX_{F}, \qquad \forall F,G\in\A.
\end{equation}
\end{remark}

\begin{remark}\label{rem:gen_vf}
If $\mathcal{A}\subset C^{\infty}(M;\R)$ is a unital sub-algebra which satisfies properties \ref{item:wp_P1} and \ref{item:wp_P2} of \cref{def:WP}, then \cref{eq:vf_L} implies that the subspace
\begin{equation}
\{H\in \mathcal{A} : X_{H} \enspace \text{exists as in \ref{item:wp_P3}}\}
\end{equation}
is a sub-algebra of $\mathcal{A}$ with respect to pointwise product. Hence, it suffices to verify property \ref{item:wp_P3} for a generating subset $\A_{0}\subset \A$.
\end{remark}

We note that unlike in the finite-dimensional setting, a symplectic form $\omega:V\times V\rightarrow \R$ on an infinite-dimensional locally convex space $V$ need not represent every continuous linear functional via $\omega(\cdot,v)$, for some $v\in V$. If the  form does satisfy such a Riesz-representation-type condition, we call a symplectic form $\omega$ \emph{strong}, otherwise, we call $\omega$ \emph{weak}. Analogously, a 2-form $\omega$ on a smooth locally convex manifold $M$ is strong (resp. weak) if all forms $\omega_{p}:T_{p}M\times T_{p}M\rightarrow\R$, for $p\in M$, are strong (resp. weak).

\begin{mydef}[Weak symplectic manifold]\label{weak_sym}
Let $M$ be a smooth locally convex manifold, and let $\mathcal{X}(M)$ denote smooth vector fields on $M$. A \emph{weak symplectic manifold} is a pair $(M,\omega)$ consisting of a smooth manifold $M$ and a closed non-degenerate 2-form $\omega$ on $M$. 
\end{mydef}

Given a weak symplectic manifold, we denote the Lie algebra of Hamiltonian vector fields on $M$ by
\begin{equation}
\mathrm{ham}(M,\omega) \coloneqq \{X\in\mathcal{X}(M) : \exists H\in C^{\infty}(M;\R) \enspace \text{s.t.} \enspace \omega(X, \cdot) = dH\}.
\end{equation}
Similarly, we denote the larger Lie algebra of symplectic vector fields on $M$ by
\begin{equation}
\mathrm{sp}(M,\omega) \coloneqq \{X\in\mathcal{M} : \L_{X}\omega =0\},
\end{equation}
where $\L_X$ denotes the Lie derivative with respect to the vector field $X$.

With this definition  in hand, we see that one has the desired implication analogous to the finite dimensional setting, namely that weak symplectic manifolds canonically lead to weak Poisson manifolds. 
\begin{remark}[Weak symplectic $\Rightarrow$ weak Poisson] \label{prop:sym_poi}
Let $(M,\omega)$ be a weak symplectic manifold. Let
\begin{equation}
\A \coloneqq \{H\in C^{\infty}(M;\R) : \exists X_{H}\in\mathcal{X}(M) \enspace \text{s.t.} \enspace \omega( X_{H}, \cdot) =dH\},
\end{equation}
then
\begin{equation}
\pb{\cdot}{\cdot}:\A\times\A\rightarrow\A, \qquad \pb{F}{G} \coloneqq \omega(X_{F},X_{G}) = dF[X_{G}]=X_{G}F
\end{equation}
defines a Poisson bracket on $\A$ satisfying properties \ref{item:wp_P1} and \ref{item:wp_P3}. If we additionally have that for each $m\in\M$ and all $v\in T_{m}M$, the condition
\begin{equation}
\omega(X(m),v)=0, \qquad \forall X\in\mathrm{ham}(M,\omega)
\end{equation}
implies that $v=0$, then property \ref{item:wp_P2} is also satisfied. Consequently, the triple $(M,\A,\pb{\cdot}{\cdot})$ is a weak Poisson manifold.
\end{remark}

We now turn to mappings between weak Poisson manifolds which preserve the Poisson structures. This leads to the notion of a Poisson mapping, alternatively Poisson morphism.

\begin{mydef}[Poisson map]\label{def:po_map}
Let $(M_{j},\A_{j},\pb{\cdot}{\cdot}_{j})$, for $j=1,2$, be weak Poisson manifolds. We say that a smooth map $\varphi:M_{1}\rightarrow M_{2}$ is a \emph{Poisson map}, or \emph{morphism of Poisson manifolds}, if $\varphi^{*}\A_{2}\subset \A_{1}$ and
\begin{equation}
\varphi^{*}\pb{F}{G}_{2}=\pb{\varphi^{*}F}{\varphi^{*}G}_{1}, \qquad \forall F,G\in\A_{2}.
\end{equation}
\end{mydef}

\begin{remark}
In \cite{NST2014}, the authors define a Poisson morphism 
\[
\varphi: (M_{1},\A_{1},\pb{\cdot}{\cdot}_{1}) \rightarrow (M_{2},\A_{2},\pb{\cdot}{\cdot}_{2})
\]
with the requirement that $\varphi^{*}\A_{2}=\A_{1}$. We drop this requirement in our \cref{def:po_map}.
\end{remark}

As an example, we demonstrate that the Schwartz space $\Sc(\R^{k})$ is a weak, but not strong, symplectic manifold. The following analysis also holds for the bosonic Schwartz space $\Sc_{s}(\R^{k})$ \emph{mutatis mutandis}, which will be important for our applications in the sequel.

We equip the space $\Sc(\R^{k})$ with a real pre-Hilbert inner product by defining
\begin{equation}
\ip{f}{g}_{\Re} \coloneqq 2\Re{ \int_{\R^{k}}d\ux_{k}\ol{f(\ux_{k})} g(\ux_{k})}.
\end{equation}
The operator $J: \Sc(\R^{k})\rightarrow\Sc(\R^{k})$ defined by $J(f) \coloneqq if$ defines an almost complex structure on $(\Sc(\R^{k}),\ip{\cdot}{\cdot}_{\Re})$, leading to the \emph{standard $L^{2}$ symplectic form}
\begin{equation}\label{l2_symp}
\omega_{L^{2}}(f,g) \coloneqq \ip{Jf}{g}_{\Re} = 2 \Im{\int_{\R^{k}}d\ux_{k}\ol{f(\ux_{k})}g(\ux_{k})}, \qquad \forall f,g\in\Sc(\R^{k}).
\end{equation}

\begin{prop}\label{prop:Sch_wsym}
$(\Sc(\R^{k}),\omega_{L^{2}})$ is a weak symplectic manifold.
\end{prop}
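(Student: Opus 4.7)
The plan is to verify the three ingredients required by \cref{weak_sym}: that $\Sc(\R^{k})$ is a smooth locally convex manifold, that $\omega_{L^{2}}$ is a smooth closed 2-form, and that $\omega_{L^{2}}$ is non-degenerate. The manifold structure is immediate since $\Sc(\R^{k})$ is a Fr\'{e}chet space and hence carries a canonical smooth locally convex manifold structure in which it is modelled on itself, with $T_{f}\Sc(\R^{k})\cong \Sc(\R^{k})$ for every $f\in\Sc(\R^{k})$ via translation.

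Next, I would observe that $\omega_{L^{2}}$ is well-defined and continuous: for any $f,g\in\Sc(\R^{k})$, the integral $\int \ol{f}g$ converges absolutely, since Schwartz functions are $L^{2}$, and the map $(f,g)\mapsto 2\Im\int\ol{f}g$ is continuous from $\Sc(\R^{k})\times\Sc(\R^{k})$ into $\R$ because the $L^{2}$ norm is continuously dominated by a Schwartz seminorm. Skew-symmetry follows from the identity $\ol{\int\ol{f}g}=\int\ol{g}f$, so $\omega_{L^{2}}(g,f)=-\omega_{L^{2}}(f,g)$. Because $\omega_{L^{2}}$ is constant (translation invariant) as a 2-form on the vector space $\Sc(\R^{k})$, it is automatically smooth, and its exterior derivative $d\omega_{L^{2}}$ vanishes identically, so $\omega_{L^{2}}$ is closed.

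For non-degeneracy, suppose $f\in\Sc(\R^{k})$ satisfies $\omega_{L^{2}}(f,g)=0$ for every $g\in\Sc(\R^{k})$. Choosing $g=if\in\Sc(\R^{k})$ yields
\begin{equation}
0=\omega_{L^{2}}(f,if)=2\Im\int_{\R^{k}}d\ux_{k}\,\ol{f(\ux_{k})}\,i f(\ux_{k})=2\int_{\R^{k}}d\ux_{k}\,|f(\ux_{k})|^{2},
\end{equation}
so $\|f\|_{L^{2}}=0$ and hence $f\equiv 0$. This verifies the non-degeneracy condition in \cref{weak_sym}, completing the proof that $(\Sc(\R^{k}),\omega_{L^{2}})$ is a weak symplectic manifold.

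The only potentially subtle point, which explains why $\omega_{L^2}$ is only \emph{weak} and not strong, is that the map $g\mapsto \omega_{L^{2}}(\cdot,g)$ does not surject onto the topological dual $\Sc'(\R^{k})$: for instance, the delta functional $\phi\mapsto \phi(0)$ is a continuous linear functional on $\Sc(\R^{k})$ which is not represented by the $L^{2}$-pairing against any Schwartz function. This observation, while not needed for the statement, clarifies that the theory developed for weak symplectic and weak Poisson manifolds in the preliminaries is genuinely needed in our setting, and that one cannot invoke the Banach-space Darboux-type machinery.
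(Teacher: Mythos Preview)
Your proof is correct and follows essentially the same approach as the paper's: both dispense quickly with the manifold and closedness conditions and then verify non-degeneracy by a direct substitution, the paper replacing $g$ by $ig$ to force $\Re\ip{f}{g}=0$ and then taking $g=f$, while you go straight to $g=if$ to get $2\|f\|_{L^2}^2=0$. Your added paragraph on why $\omega_{L^2}$ is only weak (the delta functional is not represented) is a helpful clarification beyond what the paper records.
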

\begin{proof}
$\Sc(\R^{k})$ is trivially a smooth manifold modeled on itself. Moreover, it is evident from its definition that $\omega_{L^{2}}$ is bilinear, alternating, and closed. To see that $\omega_{L^{2}}$ is non-degenerate, let $f\in\Sc(\R^{k})$ and suppose that
\begin{equation}
\omega_{L^{2}}(f,g) = 0 \qquad \forall g\in\Sc(\R^{k}).
\end{equation}
It then follows tautologically that $\Im{\ip{f}{g}}=0$. Replacing $g$ by $ig$, we obtain that $\Re{\ip{f}{g}}=0$, which implies that $\ip{f}{f}=0$, hence $f=0$.
\end{proof}

Now given a functional $F\in C^{\infty}(\Sc(\R^{k});\R)$, the G\^ateaux derivative $dF[f]$ at the point $f\in\Sc(\R^{k})$ defines a tempered distribution. We consider the case when $dF[f]$ can be identified with a Schwartz function via the inner product $\ip{\cdot}{\cdot}_{\Re}$. The next lemma follows by the Lebesgue lemma\footnote{We use the name Lebesgue lemma to refer to the result that if $u_1,u_2$ are two locally integrable functions such that $u_1=u_2$ in distribution, then $u_1=u_2$ point-wise almost everywhere.} and the same argument used to prove non-degeneracy in \cref{prop:Sch_wsym}.

\begin{lemma}[Uniqueness of gradient]\label{lem:grad_unq}
Let $F\in C^{\infty}(\Sc(\R^{k});\R)$ and $f\in\Sc(\R^{k})$. Suppose that there exist $g_{1},g_{2}\in\Sc(\R^{k})$ such that
\begin{equation}
\ip{g_{1}}{\delta f}_{\Re} = dF[f](\delta f) = \ip{g_{2}}{\delta f}_{\Re}, \qquad \forall \delta f\in\Sc(\R^{k}).
\end{equation}
Then $g_{1}=g_{2}$.
\end{lemma}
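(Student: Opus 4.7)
The plan is to reduce the uniqueness question to the non-degeneracy of the pairing $\ip{\cdot}{\cdot}_{\Re}$ on $\Sc(\R^{k})$, mirroring the non-degeneracy argument given in \cref{prop:Sch_wsym}. Subtracting the two identities that define $g_{1}$ and $g_{2}$, the difference $h \coloneqq g_{1}-g_{2}\in\Sc(\R^{k})$ satisfies
\begin{equation*}
\ip{h}{\delta f}_{\Re} = 2\Re \int_{\R^{k}} d\ux_{k}\,\ol{h(\ux_{k})}\,\delta f(\ux_{k}) = 0, \qquad \forall \delta f\in\Sc(\R^{k}).
\end{equation*}
Our task is therefore to show that this vanishing condition forces $h\equiv 0$.

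First I would run the ``multiply by $i$'' trick used for non-degeneracy: replacing $\delta f$ by $i\delta f$ converts the vanishing of the real part into the vanishing of the imaginary part of $\int \ol{h}\delta f$, so that in fact $\int_{\R^{k}} \ol{h(\ux_{k})}\delta f(\ux_{k})\, d\ux_{k} = 0$ for every $\delta f\in\Sc(\R^{k})$. Since $\ol{h}$ is locally integrable, the Lebesgue lemma applied to this distributional identity yields $\ol{h}=0$ almost everywhere, and the continuity of the Schwartz function $h$ then promotes this to $h\equiv 0$ pointwise; equivalently, one can simply take $\delta f = h$ and observe that $\ip{h}{h}_{\Re} = 2\|h\|_{L^{2}}^{2} = 0$ forces $h=0$. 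Either route gives $g_{1}=g_{2}$.

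The argument is essentially a two-line computation, and there is no serious obstacle; the only point worth being careful about is ensuring that the admissible class of test functions $\delta f$ is large enough to conclude. Since $\Sc(\R^{k})$ is dense in $L^{2}(\R^{k})$ and contains $h$ itself, either the density argument or the direct choice $\delta f = h$ suffices, so the proof is immediate once the difference $h$ is introduced.
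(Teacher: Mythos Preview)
Your proof is correct and follows essentially the same approach as the paper, which merely notes that the lemma follows from the Lebesgue lemma together with the non-degeneracy argument (the ``multiply by $i$'' trick) from \cref{prop:Sch_wsym}. Your write-up simply makes those two ingredients explicit.
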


\begin{mydef}[Real $L^{2}$ gradient]\label{def:re_grad}
We define the \emph{real $L^{2}$ gradient} of $F\in C^{\infty}(\Sc(\R^{k});\R)$ at the point $f\in\Sc(\R^{k})$, denoted by $\grad F(f)$, to be the unique element of $\Sc(\R^{k})$ (if it exists) such that
\begin{equation}
dF[f](\delta f) = \ip{\grad F(f)}{\delta f}_{\Re}, \qquad \forall \delta f\in\Sc(\R^{k}).
\end{equation}
We say that $F$ has a real $L^{2}$ gradient if $\grad F:\Sc(\R^{k}) \rightarrow\Sc(\R^{k})$ is a smooth map.
\end{mydef}

\begin{remark}\label{schwartz_deriv}
Since the Hamiltonian vector field of $X_{F}$, if it exists, is defined by the relation
\begin{equation}
dF[f](\delta f)= \omega_{L^{2}}(X_{F}(f),\delta f),
\end{equation}
and since $X_F$ is unique by the fact that $\Sc(\R^{k})$ is dense in $\Sc'(\R^{k}$),  we see that $X_{F}(f) = -i\grad F(f)$. In the sequel, we will use the notation $\grad_{s} F \coloneqq X_{F}$, which we refer to as the \emph{symplectic $L^{2}$ gradient}.
\end{remark}

We now use \cref{prop:sym_poi} to show that the symplectic form $\omega_{L^{2}}$, which we recall is defined in \eqref{equ:poisson1}, canonically induces an $L^{2}$ Poisson structure on $\Sc(\R^{k})$.

\begin{prop}\label{schwartz_wpoiss}
Define a subset $A_{\Sc}\subset C^{\infty}(\Sc(\R^{k});\R)$ by
\begin{equation}\label{equ:Asc}
\A_{\Sc} \coloneqq \bigl \{ H \,:\, \grad_{s}H \in C^{\infty}(\Sc(\R^k);\Sc(\R^k)) \bigr\} ,
\end{equation}
and define a bracket $\pb{\cdot}{\cdot}_{L^{2}}$ on $\A_{\Sc} \times\A_{\Sc}$ by
\begin{equation}\label{l2_bracket}
\pb{F}{G}_{L^{2}} \coloneqq \omega_{L^{2}}(\grad_{s}F,\grad_{s}G).
\end{equation}
Then $(\Sc(\R^{k}),\A_{\Sc}, \pb{\cdot}{\cdot}_{L^{2}})$ is a weak Poisson manifold.
\end{prop}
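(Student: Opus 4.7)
The plan is to invoke \cref{prop:sym_poi} with the weak symplectic manifold $(\Sc(\R^k), \omega_{L^2})$ established in \cref{prop:Sch_wsym}. The key observation is that \cref{schwartz_deriv} identifies the set $\A_{\Sc}$ defined in \eqref{equ:Asc} with exactly the algebra $\A$ of functionals admitting Hamiltonian vector fields in \cref{prop:sym_poi}, since $X_H = \grad_s H$ whenever the latter exists as a smooth self-map of $\Sc(\R^k)$. Similarly, the bracket \eqref{l2_bracket} matches the canonical bracket from that remark. A quick check using the Leibniz rule for G\^ateaux derivatives shows that $\A_{\Sc}$ is closed under pointwise sums and products with $\grad_s(FG) = F\grad_s G + G\grad_s F$, and it trivially contains constants, so it is a unital subalgebra of $C^\infty(\Sc(\R^k); \R)$.

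With this dictionary in place, \cref{prop:sym_poi} delivers properties \ref{item:wp_P1} and \ref{item:wp_P3} of \cref{def:WP} for free. The only substantive step is to verify the non-degeneracy hypothesis of \cref{prop:sym_poi} that ensures \ref{item:wp_P2}: for each $f \in \Sc(\R^k)$ and each $v \in T_f\Sc(\R^k) \cong \Sc(\R^k)$, the vanishing of $\omega_{L^2}(X_H(f), v)$ for every $H \in \A_{\Sc}$ must force $v = 0$. This is the one place where one should be slightly careful in the infinite-dimensional setting, since we need the range $\{X_H(f) : H \in \A_{\Sc}\}$ at a single point to be rich enough to separate tangent vectors through the weakly nondegenerate pairing $\omega_{L^2}$.

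The plan for this step is to test against the affine-linear functionals $H_g(\phi) \coloneqq \ip{g}{\phi}_{\Re}$ parametrized by $g \in \Sc(\R^k)$. These functionals have constant G\^ateaux derivative represented by $g$, hence lie in $\A_{\Sc}$ with $X_{H_g}(f) \equiv -ig$ at every point $f$. Plugging this into the standing hypothesis and using the explicit formula for $\omega_{L^2}$ collapses the condition to
\[
0 = \omega_{L^2}(-ig, v) = 2\Re \int_{\R^k}\ol{g(\ux_k)} v(\ux_k)\,d\ux_k = \ip{g}{v}_{\Re}, \qquad \forall g \in \Sc(\R^k),
\]
and choosing $g = v$ yields $\|v\|_{L^2}^2 = 0$, hence $v = 0$. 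This verifies the non-degeneracy hypothesis and completes the application of \cref{prop:sym_poi}. In summary, the bulk of the statement follows automatically; the lone substantive point is (P2), and it is resolved by the constant-gradient trick above, which shows that the assignment $H \mapsto X_H(f)$ already sweeps out a copy of $\Sc(\R^k)$ inside each tangent space.
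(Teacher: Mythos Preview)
Your proof is correct and follows essentially the same approach as the paper: both invoke \cref{prop:sym_poi} and verify the remaining non-degeneracy hypothesis by observing that constant Hamiltonian vector fields (equivalently, the affine-linear functionals $H_g$) already sweep out all of $\Sc(\R^k)$ at each point, then specialize to $g=v$ to force $\|v\|_{L^2}=0$. Your explicit check that $\A_{\Sc}$ is a unital subalgebra is a small but welcome addition that the paper leaves implicit.
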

\begin{proof}
By \cref{prop:sym_poi}, we only need to check that for every fixed $g\in \Sc(\R^{k})$, the condition
\begin{equation}\label{eq:syp_con}
\omega_{L^{2}}(X(f),g)=0, \qquad \forall X\in \mathrm{ham}(\Sc(\R^{k}),\omega_{L^{2}})
\end{equation}
implies that $g=0\in\Sc(\R^{k})$. Since $\mathrm{ham}(\Sc(\R^{k}),\omega_{L^{2}})$ contains the constant vector fields $X(\cdot)\equiv f_{0}$, for any fixed $f_{0}\in\Sc(\R^{k})$, we see that by taking $X(f)\coloneqq ig$ for all $f\in\Sc(\R^{k})$, that the condition \cref{eq:syp_con} implies that
\begin{equation}
0=\omega(ig,g) = -2\Im{\int_{\R^{k}}d\ux_{k}\ol{(ig)}(\ux_{k})g(\ux_{k})} = 2\|g\|_{L^{2}(\R^{k})}^{2}.
\end{equation}
Hence, $g=0$, completing the proof.
\end{proof}

\subsection{Some Lie algebra facts}\label{ssec:LA_facts}
In this subsection, we collect some facts about Lie algebras for easy referencing. We outline a canonical construction of a Poisson structure on the dual of a Lie algebra, which is known as a \emph{Lie-Poisson structure}. Furthermore, we will outline a construction of hierarchies of Lie algebras which will serve as an inspiration for our construction of the Lie algebra $\G_{\infty}$. We refer the reader to \cite{MR2013, MMW1984} for more background and details. 

We begin by recording the definition of a Lie algebra for subsequent reference in our proofs.

\begin{mydef}[Lie algebra]\label{def:la}
A \emph{Lie algebra} is a locally convex space $\g$ over the field $\mathbb{F} \in \{\R,\C\}$ together with a separately continuous binary operation $\brak{\cdot,\cdot}:\g\times \g \rightarrow \g$ called the \emph{Lie bracket}, which satisfies the following properties:
\begin{enumerate}[(L1)]
\item\label{item:LA_1}
$[\cdot,\cdot]$ is bilinear.
\item\label{item:LA_2}
$[x,x]=0$ for all $x\in \g$.
\item\label{item:LA_3}
$\brak{\cdot,\cdot}$ satisfies the \emph{Jacobi identity}
\begin{equation}
\brak*{x,\brak*{y,z}} + \brak*{z,\brak*{x,y}} + \brak*{y,\brak*{z,x}}=0
\end{equation}
for all $x,y,z\in \g$.
\end{enumerate}
\end{mydef}

\begin{remark}
Usually (see, for instance, \cite{Omori1979}), a Lie bracket is required to be continuous, as opposed to separately continuous. We drop this requirement in this work, due to functional analytic difficulties.
\end{remark}

\begin{mydef}[Nondegenerate pairings]
Let $V$ and $W$ be topological vector spaces over the field $\mathbb{F}$, and let 
\[
\ip{\cdot}: V\times W\rightarrow \mathbb{F}
\]
be a bilinear pairing between $V$ and $W$. We say that the pairing is \emph{$V$-nondegenerate} (respectively, \emph{$W$-nondegenerate}) if the map $V\rightarrow W^{*}, x\mapsto \ip{x}{\cdot}$ (respectively, $W\rightarrow V^{*}, y\mapsto \ip{\cdot}{y}$) is an isomorphism. If the pairing is both $V$- and $W$-nondegenerate, then we say that the pairing is \emph{nondegenerate}.
\end{mydef}

\begin{mydef}[dual space $\g^{*}$]
Let $(\g,\brak{\cdot,\cdot})$ be a Lie algebra. We say that a topological vector $\g^{*}$ is a \emph{dual space} to $\g$ if there exists a pairing $\ip{\cdot}: \g \times \g^{*}\rightarrow \mathbb{F}$ which is nondegenerate.
\end{mydef}

\begin{ex}
If $\g$ is a reflexive Fr\'{e}chet space, for instance the Schwartz space $\mathcal{S}(\mathbb{R}^{d})$, then taking $\g^{*}$ to be the topological dual of $\g$ equipped with the strong dual topology, the standard duality pairing
\begin{equation*}
\g\times \g^{*} \rightarrow \mathbb{F}: \ip{x}{\varphi} = \varphi(x)
\end{equation*}
is nondegenerate.
\end{ex}

A consequence of the existence of a dual space $\g^{*}$ for a Lie algebra $\g$ is the existence of functional derivatives, which is crucial to proving that the Lie-Poisson bracket in \cref{prop:LP_rev} below is well-defined.

\begin{lemma}[Existence of functional derivatives]
Let $\g$ be a Lie algebra, and let $\g^{*}$ be dual to $\g$ with respect to the nondegenerate pairing $\ip{\cdot}{\cdot}_{\g-\g^*}$. For any functional $F\in C^{1}(\g^{*};\mathbb{F})$, there exists a unique element $\frac{\delta F}{\delta \mu}\in \g$ such that
\begin{equation}
\ip{\frac{\delta F}{\delta \mu}}{\delta\mu}_{\g-\g^*} = dF[\mu](\delta \mu), \qquad \mu, \delta\mu \in \g^{*}.
\end{equation}
\end{lemma}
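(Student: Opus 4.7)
The plan is to derive the lemma directly from the $\g$-component of the nondegeneracy hypothesis on the pairing. First, I would fix $\mu \in \g^{*}$ and observe that, because $F \in C^{1}(\g^{*};\mathbb{F})$ (with the convention of differentiability adopted in \cref{local_cvx}), the G\^ateaux derivative $dF[\mu]$ is a \emph{continuous} linear functional on $\g^{*}$, i.e.\ an element of the topological dual $(\g^{*})^{*}$.

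Next, I would invoke the $\g$-nondegeneracy of the pairing: by definition, the canonical evaluation map
\begin{equation*}
\Phi : \g \longrightarrow (\g^{*})^{*}, \qquad \Phi(x) \coloneqq \ip{x}{\cdot}_{\g-\g^{*}},
\end{equation*}
is an isomorphism of topological vector spaces. Setting $\dfrac{\delta F}{\delta \mu} \coloneqq \Phi^{-1}(dF[\mu]) \in \g$ then yields by construction the required identity
\begin{equation*}
\ip{\frac{\delta F}{\delta \mu}}{\delta\mu}_{\g-\g^{*}} = dF[\mu](\delta\mu), \qquad \forall \delta\mu \in \g^{*},
\end{equation*}
which is exactly the existence claim.

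Uniqueness is then automatic: if $y_{1}, y_{2} \in \g$ both satisfy the displayed identity, then $\Phi(y_{1}-y_{2})$ vanishes as an element of $(\g^{*})^{*}$, and injectivity of $\Phi$ (which is part of being an isomorphism) forces $y_{1}=y_{2}$. This is the same Lebesgue-style logic that appears in the proof of \cref{lem:grad_unq}, merely transplanted from the specific $L^{2}$ pairing on $\Sc(\R^{k})$ to the abstract pairing $\ip{\cdot}{\cdot}_{\g-\g^{*}}$.

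The main (and essentially the only) obstacle is the bookkeeping step of confirming that the chosen convention of $C^{1}$ in the locally convex setting really does produce a continuous functional $dF[\mu]$ on $\g^{*}$, since several inequivalent notions of differentiability coexist in infinite dimensions; once that convention is pinned down via \cref{local_cvx}, the lemma is nothing more than a direct restatement of the $\g$-half of the nondegeneracy hypothesis applied to the functional $dF[\mu]$.
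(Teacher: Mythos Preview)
Your proposal is correct and follows essentially the same approach as the paper: the paper's proof simply notes that $dF[\mu]$ is a continuous linear functional on $\g^{*}$ and then invokes the nondegeneracy of the pairing to obtain a unique $\frac{\delta F}{\delta \mu}\in\g$. Your version is a slightly more explicit rendering of the same argument, naming the isomorphism $\Phi$ and separating existence from uniqueness, but there is no substantive difference.
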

\begin{proof}
Let $\mu\in \g^{*}$. The G\^ateaux derivative of $F$ at $\mu$ denoted $dF[\mu]$ and defined in \cref{gateaux_deriv} is a continuous linear functional on $\g^{*}$. Hence by the nondegeneracy of the pairing, there exists a unique element $\frac{\delta F}{\delta \mu}\in \g$ such that
\begin{equation*}
\ip{\frac{\delta F}{\delta \mu}}{\delta\mu}_{\g-\g^*} = dF[\mu][\delta\mu], \qquad \delta\mu\in\g^{*}. \qedhere
\end{equation*}
\end{proof}

We now have the necessary ingredients to define the canonical Poisson structure on the dual space $\g^{*}$, which we call the \emph{Lie-Poisson} structure, following Marsden and Weinstein \cite{MW1983}.

\begin{prop}[Lie-Poisson structure]\label{prop:LP_rev}
Let $(\g,\comm{\cdot}{\cdot}_{\g})$ be a Lie algebra, such that the Lie bracket is continuous, and let $\g^{*}$ be dual to $\g$ with respect to the non-degenerate pairing $\ip{\cdot}_{\g-\g^*}$. Define the \emph{Lie-Poisson bracket} 
\begin{equation}
\pb{\cdot}{\cdot} : C^{\infty}(\g^{*};\mathbb{F})\times C^{\infty}(\g^{*};\mathbb{F}) \rightarrow C^{\infty}(\g^{*};\mathbb{F})
\end{equation}
by
\begin{equation}
\pb{F}{G}(\mu) \coloneqq \ip{\comm{\frac{\delta F}{\delta\mu}}{\frac{\delta G}{\delta\mu}}_{\g}}{\mu}_{\g-\g^*}, \qquad \mu \in \g^{*}.
\end{equation}
Then $(C^{\infty}(\g^{*};\mathbb{F}), \pb{\cdot}{\cdot})$ is a Lie algebra.
\end{prop}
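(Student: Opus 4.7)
The plan is to verify each clause of \cref{def:la} for $C^{\infty}(\g^{*};\mathbb{F})$ endowed with its standard locally convex topology. First, the bracket lands in $C^{\infty}(\g^{*};\mathbb{F})$ and is separately continuous: the $\g^{*}$-nondegeneracy of the pairing yields a continuous linear isomorphism $\g\to(\g^{*})^{*}$, so the assignment $\mu\mapsto \tfrac{\delta F}{\delta\mu}\in\g$ is smooth whenever $F\in C^{\infty}(\g^{*};\mathbb{F})$, and composing with the continuous Lie bracket on $\g$ and the pairing produces a smooth real-valued functional; separate continuity in $F,G$ follows from the continuity of $F\mapsto dF$ on $C^{\infty}(\g^{*};\mathbb{F})$. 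Bilinearity (L1) descends from the bilinearity of $\comm{\cdot}{\cdot}_{\g}$, the pairing, and the G\^ateaux derivative, while the alternating property (L2) is immediate from $\comm{\tfrac{\delta F}{\delta\mu}}{\tfrac{\delta F}{\delta\mu}}_{\g} = 0$.

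The technical heart of the argument is the Jacobi identity (L3). For each $K\in C^{\infty}(\g^{*};\mathbb{F})$ I would define $\xi_{K}(\mu)\in\g^{*}$ as the unique element (existing by $\g^{*}$-nondegeneracy) satisfying
\begin{equation*}
\ip{\eta}{\xi_{K}(\mu)}_{\g-\g^{*}} = \ip{\comm{\eta}{\tfrac{\delta K}{\delta\mu}}_{\g}}{\mu}_{\g-\g^{*}}, \qquad \forall\,\eta\in\g,
\end{equation*}
so that $\pb{F}{K}(\mu) = dF[\mu](\xi_{K}(\mu))$. Differentiating $\pb{F}{G}$ at $\mu$ by Leibniz and substituting $\delta\mu = \xi_{H}(\mu)$ then yields the three-term decomposition
\begin{align*}
\pb{\pb{F}{G}}{H}(\mu) &= \ip{\comm{d(\tfrac{\delta F}{\delta\mu})[\xi_{H}(\mu)]}{\tfrac{\delta G}{\delta\mu}}_{\g}}{\mu}_{\g-\g^{*}} + \ip{\comm{\tfrac{\delta F}{\delta\mu}}{d(\tfrac{\delta G}{\delta\mu})[\xi_{H}(\mu)]}_{\g}}{\mu}_{\g-\g^{*}} \\
&\quad + \ip{\comm{\comm{\tfrac{\delta F}{\delta\mu}}{\tfrac{\delta G}{\delta\mu}}_{\g}}{\tfrac{\delta H}{\delta\mu}}_{\g}}{\mu}_{\g-\g^{*}}.
\end{align*}
Summing cyclically over $(F,G,H)$, the third family of terms vanishes by the Jacobi identity in $\g$. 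Each of the six remaining ``second-order'' terms can be recast, using the defining property of $\xi_{G}$, as a pairing of the form $\ip{d(\tfrac{\delta F}{\delta\mu})[\xi_{H}(\mu)]}{\xi_{G}(\mu)}_{\g-\g^{*}}$.

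At this point, the symmetry of the Hessian
\begin{equation*}
\ip{d(\tfrac{\delta F}{\delta\mu})[\nu_{1}]}{\nu_{2}}_{\g-\g^{*}} = \ip{d(\tfrac{\delta F}{\delta\mu})[\nu_{2}]}{\nu_{1}}_{\g-\g^{*}}, \qquad \nu_{1},\nu_{2}\in\g^{*},
\end{equation*}
implies that the term in $\pb{\pb{F}{G}}{H}$ that is second-order in $F$ cancels with the term in $\pb{\pb{H}{F}}{G}$ that is second-order in $F$, and similarly for $G$ and $H$; consequently the six second-order contributions sum to zero, completing the verification of Jacobi. The main obstacle is organizing this cancellation cleanly and justifying the mixed-partials identity in the locally convex setting; beyond this, the argument is a routine adaptation of the finite-dimensional Lie-Poisson construction (cf.\ \cite{MR2013}).
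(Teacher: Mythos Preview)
The paper does not actually give a proof of this proposition: the remark immediately following \cref{prop:LP_rev} states that the authors ``have omitted the proof of it'' because the Lie brackets used later are only separately continuous, so \cref{prop:LP_rev} is never invoked directly and serves only as motivation. There is therefore no paper proof to compare your proposal against.

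That said, your outline is the standard Lie--Poisson argument (as in \cite{MR2013}) and is essentially correct in spirit: decompose $\pb{\pb{F}{G}}{H}$ into the triple-bracket term plus second-derivative terms, kill the former by Jacobi in $\g$, and cancel the latter in cyclic pairs via symmetry of the Hessian. Two points deserve more care in the locally convex setting. First, your claim that $\mu\mapsto \delta F/\delta\mu$ is smooth as a map $\g^*\to\g$ relies on the inverse of the identification $\g\to(\g^*)^*$ being continuous; the paper's definition of nondegenerate pairing only asserts this map is an isomorphism, and in a general locally convex context one should be explicit about why the inverse is continuous (or assume it). Second, the separate continuity of $\pb{\cdot}{\cdot}$ on $C^\infty(\g^*;\mathbb{F})$ presupposes a specific topology on that space, which neither you nor the paper pins down here; this is not a serious obstacle, but it should be stated. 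Your identification of the Hessian symmetry as the ``main obstacle'' is apt: in the paper's applications this step is sidestepped by restricting to the subalgebra generated by trace functionals, for which the functional derivatives are constant and the second-order terms vanish outright.
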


\begin{remark}
Note that in the statement of \cref{prop:LP_rev}, we require that the Lie bracket $\comm{\cdot}{\cdot}_{\g}$ be continuous, not merely separately continuous as in \cref{def:la}. Since the Lie brackets we consider in \cref{sec:geom_N} and \cref{sec:geom} are only separately continuous, we do not use \cref{prop:LP_rev} directly, and therefore we have omitted the proof of it. We emphasize, though, that the construction of the proposition inspires our constructions in the sequel.
\end{remark}

\subsection{Bosonic functions, operators and tensor products}\label{sec:bos}

We denote the symmetric group on $k$ letters by $\Ss_{k}$.  For a permutation $\pi\in\Ss_{k}$, we define the map $\pi: \R^{k}\rightarrow \R^{k}$ by
\begin{equation}
\pi(\ul{x}_{k}) \coloneqq (x_{\pi(1)},\ldots,x_{\pi(k)}).
\end{equation}
For a complex-valued, measurable function $f : \R^k \to \C$, we define the map
\begin{equation}\label{eq:pi_func_def}
(\pi f)(\ul{x}_{k}) \coloneqq (f\circ\pi)(\ul{x}_{k}) = f(x_{\pi(1)},\ldots,x_{\pi(k)}).
\end{equation}

We denote the pairing of a tempered distribution $u\in\Sc'(\R^{k})$ with a Schwartz function $f\in\Sc(\R^{k})$ by
\begin{equation}
\ipp{u,f}_{\Sc'(\R^{k})-\Sc(\R^{k})}.
\end{equation}
Throughout, we will use an integral to represent the pairing of a distribution and a test function.  For $1\leq p\leq \infty$, we use the notation $L^{p}(\R^{k})$ to denote Banach space of $p$-integrable functions with norm $\|\cdot\|_{L^{p}(\R^{k})}$. In particular, when $p=2$, we denote the $L^{2}$ inner product by
\begin{equation}
\ip{f}{g} \coloneqq \int_{\R^{k}}d\ux_{k}\ol{f(\ux_{k})} g(\ux_{k}).
\end{equation}
Note that we use the physicist's convention that the inner product is complex linear in the second entry. Similarly, for $u\in\Sc'(\R^{k})$ and $f\in\Sc(\R^{k})$, we use the notation $\ip{u}{f}$ to denote
\begin{equation}
\ip{u}{f} \coloneqq \ol{\ipp{u,\bar{f}}_{\Sc'(\R^{k})-\Sc(\R^{k})}}.
\end{equation}
Alternatively, the right-hand side may be taken as the definition of the tempered distribution $\bar{u}$.

\begin{mydef}
We say that a measurable function $f:\R^{k}\rightarrow \C$ is \emph{symmetric} or \emph{bosonic} if
\begin{equation}
\pi(f) = f
\end{equation}
for all permutations $\pi\in\Ss_{k}$.
\end{mydef}

\begin{mydef}\label{def:sym_f}
We define the \emph{symmetrization operator} $\Sym_k$ on the space of measurable complex-valued functions by
\begin{equation}
\Sym_k(f)(\ul{x}_{k}) \coloneqq \frac{1}{k!}\sum_{\pi\in\Ss_{k}} (\pi f)(\ul{x}_{k}).
\end{equation}
By duality, we can extend the symmetrization operator to $\mathcal{S}'(\R^{k})$.
\end{mydef}

\begin{mydef}[Symmetric Schwartz space]\label{sym_schwartz}
For $k\in\N$, let $\mathcal{S}_{s}(\R^{k})$ denote the subspace of $\mathcal{S}(\R^{k})$ consisting of Schwartz functions $f$ with the property that
\begin{equation}
f(x_{\pi(1)},\ldots,x_{\pi(k)}) = f(\ul{x}_k), \qquad (\ul{x}_k)\in\R^{k}
\end{equation}
for all permutations $\pi\in\Ss_{k}$.
\end{mydef}

\begin{mydef}[Symmetric tempered distribution]
We say that a tempered distribution $u\in\mathcal{S}'(\R^{k})$ is \emph{symmetric} or \emph{bosonic} if for all permutations $\pi\in\Ss_{k}$,
\begin{equation}
\ipp{u,\pi g}_{\Sc'(\R^k)-\Sc(\R^k)} = \ipp{u,g}_{\Sc'(\R^k),\Sc(\R^k)},
\end{equation}
for all $g\in\mathcal{S}(\R^{k})$. We denote the subspace of symmetric tempered distributions by $\mathcal{S}_{s}'(\R^{k}).$
\end{mydef}

\begin{remark}\label{rem:sym}
It is straightforward to check that $\Sym_k$ is a continuous operator $\mathcal{S}(\R^{k}) \rightarrow \mathcal{S}_{s}(\R^{k})$ and $\mathcal{S}'(\R^{k}) \rightarrow \mathcal{S}_{s}'(\R^{k})$. 
%
Furthermore, a measurable function $f$ is bosonic if and only if $f=\Sym_k(f)$.
\end{remark}

\begin{lemma}
\label{lem:bos_td}
We have the identification
\begin{equation}
\mathcal{S}_{s}'(\R^{k}) \cong (\mathcal{S}_{s}(\R^{k}))'.
\end{equation}
\end{lemma}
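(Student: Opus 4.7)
The plan is to establish the isomorphism by constructing inverse maps between the two spaces: a restriction map $R : \mathcal{S}_s'(\R^k) \to (\mathcal{S}_s(\R^k))'$ and an extension map $E : (\mathcal{S}_s(\R^k))' \to \mathcal{S}_s'(\R^k)$, with the symmetrization operator $\Sym_k$ playing the central role.

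First I would define $R$ by restriction: given a symmetric tempered distribution $u \in \mathcal{S}_s'(\R^k) \subset \mathcal{S}'(\R^k)$, set $R(u) \coloneqq u\big|_{\mathcal{S}_s(\R^k)}$. Since $\mathcal{S}_s(\R^k)$ carries the subspace topology from $\mathcal{S}(\R^k)$, the restriction $R(u)$ is automatically a continuous linear functional on $\mathcal{S}_s(\R^k)$, so $R$ is well-defined and continuous.

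Next I would define $E$ via symmetrization: for $v \in (\mathcal{S}_s(\R^k))'$, set
\begin{equation*}
E(v)(f) \coloneqq v(\Sym_k f), \qquad f \in \mathcal{S}(\R^k).
\end{equation*}
By \cref{rem:sym}, $\Sym_k : \mathcal{S}(\R^k) \to \mathcal{S}_s(\R^k)$ is continuous, so $E(v) \in \mathcal{S}'(\R^k)$. Moreover, for any permutation $\pi \in \Ss_k$ and any $g \in \mathcal{S}(\R^k)$, one has $\Sym_k(\pi g) = \Sym_k(g)$ by a direct reindexing of the average defining $\Sym_k$, so $E(v)$ is symmetric and hence lies in $\mathcal{S}_s'(\R^k)$.

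It then remains to verify that $R$ and $E$ are mutual inverses. For $v \in (\mathcal{S}_s(\R^k))'$ and $f \in \mathcal{S}_s(\R^k)$, the identity $\Sym_k f = f$ (again from \cref{rem:sym}) gives $(R \circ E)(v)(f) = v(\Sym_k f) = v(f)$. Conversely, for $u \in \mathcal{S}_s'(\R^k)$ and $f \in \mathcal{S}(\R^k)$, the symmetry of $u$ under each $\pi \in \Ss_k$ yields
\begin{equation*}
(E \circ R)(u)(f) = u(\Sym_k f) = \frac{1}{k!}\sum_{\pi \in \Ss_k} u(\pi f) = u(f),
\end{equation*}
so $E \circ R = \mathrm{id}$ and $R \circ E = \mathrm{id}$. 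Since both $R$ and $E$ are continuous linear bijections between their respective spaces (with the subspace/dual topologies), this gives the desired identification. There is no real obstacle here; the only subtlety worth flagging is the notational one that the superscript prime in $\mathcal{S}_s'(\R^k)$ was defined as an invariance subspace of $\mathcal{S}'(\R^k)$ rather than as a topological dual, so the content of the lemma is precisely that this convention is consistent with the dual-of-a-subspace interpretation.
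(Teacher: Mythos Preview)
Your proof is correct and follows essentially the same approach as the paper: both use the composition with $\Sym_k$ to extend a functional on $\mathcal{S}_s(\R^k)$ to a symmetric tempered distribution on $\mathcal{S}(\R^k)$. Your version is in fact more complete, as you explicitly set up both the restriction and extension maps and verify they are mutual inverses, whereas the paper only spells out the extension direction.
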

\begin{proof}
Let $\ell \in (\mathcal{S}_{s}(\R^{k}))'$. For all $f\in\mathcal{S}_{s}(\R^{k})$, we have that
\begin{equation}
\ell(f) = \ell(\pi(f)), \qquad \pi\in\Ss_{k}.
\end{equation}
Hence,
\begin{equation}
\ell(f) = \frac{1}{k!}\sum_{\pi\in\Ss_{k}}\ell(\pi(f)) = \ell\paren*{\Sym_k(f)}.
\end{equation}
Since $\Sym_k$ is a continuous linear operator on $\mathcal{S}(\R^{k})$, it follows that $\ell \circ \Sym_k\in \mathcal{S}'(\R^{k})$. Since $\Sym_k(\pi(f)) = \Sym_k(f)$ for any permutation $\pi\in\Ss_{k}$, it follows that $\ell \circ\Sym_k$ is permutation invariant, hence an element of $\mathcal{S}_{s}'(\R^{k})$.
\end{proof}

Given two locally convex spaces $E$ and $F$, we denote the space of continuous linear maps $E\rightarrow F$ by $\L(E;F)$. We topologize $\L(E;F)$ with the topology of bounded convergence.  For our purposes, we will typically have $E,F\in\{\mathcal{S}(\R^{k}), \mathcal{S}_{s}(\R^{k}), \mathcal{S}'(\R^{k}), \mathcal{S}_{s}'(\R^{k})\}$.

\begin{remark}\label{subspace_top}
In the special case where $E=F=\Sc(\R^k)$, we will write $\tl{\L}(\Sc(\R^k),\Sc(\R^k))$ to denote the vector space $\L(\Sc(\R^k),\Sc(\R^k))$ equipped with the subspace topology induced by $\L(\Sc(\R^k),\Sc'(\R^k))$. The same statement holds with the Schwartz space replaced by the bosonic Schwartz space.
\end{remark}

In the case that $E=\Sc(\R^{d})$ and $F=\Sc'(\R^{d})$, the bounded topology is generated by the seminorms
\begin{equation}
\|A\|_{\mathfrak{R}} \coloneqq \sup_{f,g\in\mathfrak{R}} |\ipp{Af,g}_{\Sc'(\R^{d})-\Sc(\R^{d})}|, \qquad \forall A\in \L(\Sc(\R^{d}), \Sc'(\R^{d})),
\end{equation}
where $\mathfrak{R}$ ranges over the bounded subsets of $\Sc(\R^{d})$. An identical statement holds with all spaces replaced by their symmetric counterparts.  We topologize $\mathcal{S}'(\R^{N})$ with the \emph{strong dual topology}, which is the locally convex topology generated by the seminorms of the form
\begin{equation}
\|f\|_{B} \coloneqq \sup_{\varphi\in B} \left|\int_{\R^{N}}d\ul{x}_{N} f(\ul{x}_{N})\varphi(\ul{x}_{N})\right|,
\end{equation}
where $B$ ranges over the family of all bounded subsets of $\mathcal{S}(\R^{N})$. Note that since $\mathcal{S}(\R^{N})$ is a Montel space, bounded subsets are precompact. An identical statement holds with all spaces replaced by their symmetric counterparts.

\begin{mydef}[Symmetric wave functions]\label{sym_wave}
For $k\in\N$, let $L_{s}^{2}(\R^{k})$ denote the subspace of $L^{2}(\R^{k})$ consisting of functions $f$ which are bosonic a.e. 
\end{mydef}

For $A \in \L(\Sc(\R^{k}),\Sc'(\R^{k}))$ and $\tau \in \Ss_k$, we define
\begin{equation}\label{eq:op_coord}
A_{(\tau(1),\ldots, \tau(k))} \coloneqq \tau\circ A \circ \tau^{-1}.
\end{equation}

\begin{mydef}\label{def:sym_A}
 Given $A \in \L(\Sc(\R^{k}),\Sc'(\R^{k}))$, we define its \emph{bosonic symmetrization} $\Sym_k(A)$ by
\begin{equation}\label{eq:sym_repeat}
\Sym_k(A)  \coloneqq \frac{1}{k!} \sum_{\pi\in\Ss_{k}} A_{(\pi(1), \ldots, \pi(k))}.
\end{equation}
\end{mydef}

\begin{mydef}[Bosonic operators]
Let $k\in\N$. We say that an operator $A: \mathcal{S}(\R^{k}) \rightarrow \mathcal{S}'(\R^{k})$ is \emph{bosonic} or \emph{permutation invariant} if $A$ maps $\mathcal{S}_{s}(\R^{k})$ into $\mathcal{S}_{s}'(\R^{k})$.
\end{mydef}

The analogue of \cref{rem:sym} holds for the symmetrization of operators in that symmetrized operators are indeed operators on the bosonic Schwartz space.

\begin{lemma}\label{lem:sym_op_space}
Let $k \in \N$. If $A^{(k)} \in \mathcal{L}(\mathcal{S}(\R^{k}), \mathcal{S}'(\R^{k}))$, then
\begin{equation}
\Sym_k(A^{(k)}) \in \mathcal{L}(\mathcal{S}_{s}(\R^{k}), \mathcal{S}_{s}'(\R^{k})).
\end{equation}
\end{lemma}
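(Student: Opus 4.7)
The key observation I would exploit is that for any $\pi \in \Ss_k$ and any symmetric $f \in \Sc_s(\R^k)$, the definition \eqref{eq:pi_func_def} gives $\pi^{-1}(f) = f$. Applied to the formula \eqref{eq:sym_repeat}, this collapses the action of $\Sym_k(A^{(k)})$ on a bosonic argument to
\[
\Sym_k(A^{(k)})(f) = \frac{1}{k!}\sum_{\pi\in\Ss_k} \pi\bigl(A^{(k)}(f)\bigr), \qquad f\in\Sc_s(\R^k),
\]
which is precisely the distributional symmetrization $\Sym_k\bigl(A^{(k)}(f)\bigr)$ of the tempered distribution $A^{(k)}(f) \in \Sc'(\R^k)$ as defined in \cref{def:sym_f}. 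So the membership question reduces to two independent facts already in hand.

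The plan, in order, would be as follows. First, I would verify the identity displayed above by unpacking \eqref{eq:op_coord} and using bosonicity of $f$ to dispose of $\pi^{-1}$ on the input side. Second, to see that the output lies in $\Sc_s'(\R^k)$, I would invoke \cref{rem:sym}, which states that $\Sym_k$ maps $\Sc'(\R^k)$ into $\Sc_s'(\R^k)$; combined with the first step, this shows $\Sym_k(A^{(k)})$ sends $\Sc_s(\R^k)$ into $\Sc_s'(\R^k)$ at the level of sets. Third, for continuity, I would argue that for each fixed $\pi \in \Ss_k$, the permutation action $\pi$ is a topological isomorphism on both $\Sc(\R^k)$ and $\Sc'(\R^k)$ (it is a linear change of variables), so the conjugate $\pi\circ A^{(k)}\circ \pi^{-1}$ is continuous from $\Sc(\R^k)$ to $\Sc'(\R^k)$. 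A finite convex combination of continuous linear maps is continuous, giving continuity of $\Sym_k(A^{(k)}) : \Sc(\R^k) \to \Sc'(\R^k)$; then restricting to the closed subspace $\Sc_s(\R^k)$ with its subspace topology and observing (from the previous step) that the image sits inside $\Sc_s'(\R^k)$ yields the desired continuity from $\Sc_s(\R^k)$ to $\Sc_s'(\R^k)$.

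I do not expect a significant obstacle; the only minor delicacy is ensuring that the topology on $\Sc_s'(\R^k)$ is compatible with what the paper uses elsewhere. Specifically, $\Sc_s'(\R^k)$ can be viewed either as a subspace of $\Sc'(\R^k)$ or, via \cref{lem:bos_td}, as the strong dual of $\Sc_s(\R^k)$. These two topologies agree because the continuous symmetrization projector $\Sym_k$ realizes $\Sc_s$ as a complemented (in particular closed) subspace of $\Sc$, so the subspace topology and the dual-of-subspace topology yield the same bornology on $\Sc_s'$, and continuity into one entails continuity into the other. With this housekeeping remark, the three steps above complete the proof.
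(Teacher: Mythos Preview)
Your proposal is correct and follows essentially the same approach as the paper: both rest on the key observation that $\pi^{-1}f = f$ for $f \in \Sc_s(\R^k)$, which collapses $\Sym_k(A^{(k)})(f)$ to the distributional symmetrization of $A^{(k)}(f)$. The only difference is that you invoke \cref{rem:sym} to conclude immediately that the output lies in $\Sc_s'(\R^k)$, whereas the paper verifies the permutation invariance of the pairing $\ipp{\Sym_k(A^{(k)})f, g\circ\sigma^{-1}}$ by hand via a change of variables and the bijection $\pi \mapsto \pi^{-1}\sigma^{-1}$ on $\Ss_k$---effectively re-deriving the content of that remark in situ.
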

\begin{proof}
It suffices to show that for any $k$-particle operator $A^{(k)}\in \mathcal{L}(\mathcal{S}(\R^{k}),\mathcal{S}'(\R^{k}))$ and any permutation $\sigma\in\Ss_{k}$, it holds that
\begin{equation}
\int_{\R^{k}}d\ul{x}_{k}\paren*{\Sym_k(A^{(k)})f}(\ul{x}_{k}) g(\sigma^{-1}(\ul{x}_{k})) = \int_{\R^{k}}d\ul{x}_{k}\paren*{\Sym_k(A^{(k)})f}(\ul{x}_{k})g(\ul{x}_{k})
\end{equation}
for all $f\in\mathcal{S}_{s}(\R^{k})$ and for all $g\in \mathcal{S}(\R^{k})$. To this end, observe that
\begin{align}
&\int_{\R^{k}}d\ul{x}_{k}\paren*{\Sym_k(A^{(k)})f}(\ul{x}_k) g(x_{\sigma^{-1}(1)},\ldots,x_{\sigma^{-1}(k)}) \nonumber\\
&=\int_{\R^{k}}d\ul{x}_{k}\biggl(\frac{1}{k!}\sum_{\pi\in\Ss_{k}} \paren*{A^{(k)}_{(\pi(1),\ldots,\pi(k))} f}(\ul{x}_k)\biggr)g(x_{\sigma^{-1}(1)},\ldots,x_{\sigma^{-1}(k)}).
\end{align}
By definition \eqref{eq:op_coord}, we have
\begin{align}
A^{(k)}_{(\pi(1),\ldots,\pi(k))}f = \pi A^{(k)}(\pi^{-1} f).
\end{align}
Therefore,
\begin{align}
&\frac{1}{k!}\sum_{\pi\in\Ss_{k}} \int_{\R^{k}}d\ul{x}_{k} \paren*{A^{(k)}_{(1,\ldots,k)} (\pi^{-1} f)}(x_{\pi(1)},\ldots,x_{\pi(k)}) g(x_{\sigma^{-1}(1)},\ldots,x_{\sigma^{-1}(k)}) \nonumber\\
&=\frac{1}{k!}\sum_{\pi\in\Ss_{k}} \int_{\R^{k}}d\ul{x}_{k}\paren*{A^{(k)}(\pi^{-1} f)}(\ul{x}_k) g(x_{\pi^{-1}\sigma^{-1}(1)},\ldots,x_{\pi^{-1}\sigma^{-1}(k)}) \nonumber\\
&=\frac{1}{k!}\sum_{\pi\in\Ss_{k}}\int_{\R^{k}}d\ul{x}_{k}\paren*{A^{(k)} f}(\ul{x}_k)g(x_{\pi^{-1}\sigma^{-1}(1)},\ldots,x_{\pi^{-1}\sigma^{-1}(k)}),
\end{align}
where, recalling \eqref{eq:pi_func_def}, the second line follows from a change of variable and the third line follows from the assumption that $f$ is symmetric with respect to permutation of the coordinates. Since for any fixed $\sigma \in  \Ss_{k}$, $\pi \mapsto \pi^{-1}\sigma^{-1}$ defines a bijection of the group $\Ss_{k}$, it follows from a change of summation index that
\begin{align}
&\frac{1}{k!}\sum_{\pi\in\Ss_{k}}\int_{\R^{k}}d\ul{x}_{k}\paren*{A^{(k)} f}(\ul{x}_k)g(x_{\pi^{-1}\sigma^{-1}(1)},\ldots,x_{\pi^{-1}\sigma^{-1}(k)}) \nonumber\\
&= \frac{1}{k!}\sum_{\tilde{\pi}\in\Ss_{k}} \int_{\R^{k}}d\ul{x}_{k} \paren*{A^{(k)}f}(\ul{x}_k) g(x_{\tilde{\pi}(1)},\ldots,x_{\tilde{\pi}(k)}) \nonumber\\\
&=\frac{1}{k!}\sum_{\tilde{\pi}\in \Ss_{k}}\int_{\R^{k}}d\ul{x}_{k} \paren*{A^{(k)}(\tilde{\pi}f)}(x_{\tilde{\pi}^{-1}(1)},\ldots,x_{\tilde{\pi}^{-1}(k)})g(\ul{x}_k) \nonumber\\
&=\int_{\R^{k}}d\ul{x}_{k}\paren*{\Sym_k(A^{(k)})f}(\ul{x}_{k})g(\ul{x}_{k}),
\end{align}
where the penultimate line follows from the assumption that $f$ is symmetric and a change of variable. This concludes the proof.
\end{proof}

The following technical lemma will be useful in the sequel. For definitions and discussion of the generalized trace, see \cref{def:gen_trace}.

\begin{lemma}\label{lem:tr_bos}
Let $k\in\N$, and let $\gamma^{(k)}\in\L(\Sc_{s}'(\R^{k}),\Sc_{s}(\R^{k}))$ and $A^{(k)}\in\L(\Sc(\R^{k}),\Sc'(\R^{k}))$. Then for any permutation $\tau\in\Ss_{k}$, we have that
\begin{equation}
\Tr_{1,\ldots,k}\paren*{A_{(\tau(1),\ldots,\tau(k))}^{(k)}\gamma^{(k)}} = \Tr_{1,\ldots,k}\paren*{A^{(k)}\gamma^{(k)}}.
\end{equation}
\end{lemma}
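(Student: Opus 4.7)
The plan is to reduce the identity to a change-of-variables computation by passing to Schwartz kernels. By the Schwartz kernel theorem, together with the identification $\L(\Sc_s'(\R^k),\Sc_s(\R^k))\cong \Sc_{s,s}(\R^k\times\R^k)$ noted in the paper, we may represent $\gamma^{(k)}$ by a Schwartz kernel $\gamma^{(k)}(\ux_k;\ux_k')$ which is separately invariant under permutations of the first $k$ and of the last $k$ variables; similarly, $A^{(k)}$ has a distributional kernel $A^{(k)}(\ux_k;\ux_k')\in\Sc'(\R^{2k})$. All integrals below are to be interpreted distributionally, as is convention in the paper.

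First, I would unfold definition \eqref{eq:op_coord}, $A^{(k)}_{(\tau(1),\ldots,\tau(k))}=\tau\circ A^{(k)}\circ \tau^{-1}$, and compute its kernel. Since $\tau$ acts on functions by precomposition with the diffeomorphism $\tau:\R^k\to\R^k$, a short change-of-variable computation (with unit Jacobian) shows that the kernel of $A^{(k)}_{(\tau(1),\ldots,\tau(k))}$ is $A^{(k)}(\tau(\ux_k);\tau(\ux_k'))$. Interpreting the generalized trace (see \cref{def:gen_trace}) as a distributional pairing on the diagonal, one then has
\begin{align*}
\Tr_{1,\ldots,k}\paren*{A^{(k)}_{(\tau(1),\ldots,\tau(k))}\gamma^{(k)}}
= \int_{\R^{2k}} A^{(k)}(\tau(\ux_k);\tau(\ux_k'))\,\gamma^{(k)}(\ux_k';\ux_k)\, d\ux_k d\ux_k'.
\end{align*}
Next, I would perform the change of variables $\ux_k\mapsto \tau^{-1}(\ux_k)$ and $\ux_k'\mapsto\tau^{-1}(\ux_k')$, which shifts the permutation from the kernel of $A^{(k)}$ onto that of $\gamma^{(k)}$, yielding
\begin{align*}
\int_{\R^{2k}} A^{(k)}(\ux_k;\ux_k')\,\gamma^{(k)}(\tau^{-1}(\ux_k');\tau^{-1}(\ux_k))\, d\ux_k d\ux_k'.
\end{align*}

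The proof concludes by invoking the separate bosonic symmetry of the kernel of $\gamma^{(k)}$, which yields $\gamma^{(k)}(\tau^{-1}(\ux_k');\tau^{-1}(\ux_k)) = \gamma^{(k)}(\ux_k';\ux_k)$, so the right-hand side is precisely $\Tr_{1,\ldots,k}(A^{(k)}\gamma^{(k)})$. The only technical point---more a verification than an obstacle---is justifying the change of variables inside the distributional pairing when $A^{(k)}$ is only a tempered distribution; this is routine, since a linear volume-preserving diffeomorphism acts continuously on $\Sc(\R^{2k})$ by precomposition and hence on $\Sc'(\R^{2k})$ by pullback, leaving the distributional pairing invariant.
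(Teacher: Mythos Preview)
Your proof is correct. The kernel computation, the change of variables, and the appeal to the separate bosonic symmetry of the Schwartz kernel of $\gamma^{(k)}$ are all valid, and your remark that permutations act continuously on $\Sc(\R^{2k})$ and hence by pullback on $\Sc'(\R^{2k})$ is exactly the justification needed for the distributional change of variables.

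The paper proceeds differently: it invokes a nuclear-type decomposition $\gamma^{(k)}=\sum_j \lambda_j\,\ket{f_j}\bra{g_j}$ with $f_j,g_j\in\Sc_s(\R^k)$, uses continuity of $\Tr_{1,\ldots,k}(A_{(\tau(1),\ldots,\tau(k))}^{(k)}\,\cdot\,)$ to reduce to finite sums, and then checks directly that $\ip{g_j}{A_{(\tau(1),\ldots,\tau(k))}^{(k)}f_j}=\ip{\tau^{-1}g_j}{A^{(k)}\tau^{-1}f_j}=\ip{g_j}{A^{(k)}f_j}$ by the bosonic symmetry of each $f_j,g_j$. Your argument is more direct, bypassing the series decomposition and the continuity step by working at the level of the kernel pairing \eqref{eq:gtr_dpair}; the paper's argument, on the other hand, keeps everything phrased operator-theoretically and is perhaps closer in spirit to how the generalized trace is actually used elsewhere in the text. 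Both routes ultimately hinge on the same fact: the bosonic symmetry of $\gamma^{(k)}$ absorbs the permutation.
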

\begin{proof}
Let $\tau\in\Ss_{k}$. Now let
\begin{equation}
\gamma^{(k)} = \sum_{j=1}^{\infty}\lambda_{j}  \ket*{f_{j}}\bra*{g_j}
\end{equation}
be a decomposition for $\gamma^{(k)}$, where $\sum_{j=1}^{\infty}|\lambda_{j}| \leq 1$, and $\{f_{j}\}_{j=1}^{\infty},\{g_{j}\}_{j=1}^{\infty}$ are sequences tending to zero in $\Sc_{s}(\R^{k})$. In particular, the partial sums
\begin{equation}
\sum_{j=1}^{N}\lambda_{j}  \ket*{f_{j}}\bra*{g_j} \xrightarrow[N\rightarrow\infty]{} \gamma^{(k)} \enspace \text{in $\L(\Sc_{s}'(\R^{k}),\Sc_{s}(\R^{k}))$}.
\end{equation}
Since the map
\begin{equation}
\Tr_{1,\ldots,k}\paren*{A_{(\tau(1),\ldots,\tau(k)}^{(k)}\cdot } :  \L(\Sc'(\R^{k}),\Sc(\R^k)) \rightarrow \C,
\end{equation}
is continuous and the inclusion $\Sc_{s}(\R^{k}) \subset \Sc(\R^{k})$ is trivially continuous, it follows that
\begin{align}
\Tr_{1,\ldots,k}\paren*{A_{(\tau(1),\ldots,\tau(k))}^{(k)}\gamma^{(k)}} &=  \lim_{N\rightarrow\infty} \Tr_{1,\ldots,k}\biggl(A_{(\tau(1),\ldots,\tau(k))}^{(k)}\biggl(\sum_{j=1}^{N}\lambda_{j} \ket*{f_j}\bra*{g_j}\biggr)\biggr) \nonumber\\
&= \lim_{N\rightarrow\infty} \sum_{j=1}^{N} \lambda_{j}\Tr_{1,\ldots,k}\paren*{A_{(\tau(1),\ldots,\tau(k))}^{(k)}\paren*{\ket*{f_j}\bra*{g_j}}}  \nonumber\\
&=\lim_{N\rightarrow\infty} \sum_{j=1}^{N}\lambda_{j} \ip{g_{j}}{A_{(\tau(1),\ldots,\tau(k))}^{(k)}f_{j}}.
\end{align}
Since $f_{j}$ and $g_{j}$ are both bosonic, we have by definition of the notation $A_{(\tau(1),\ldots,\tau(k))}^{(k)}$ in \eqref{eq:op_coord} that
\begin{align}
\ip{g_{j}}{A_{(\tau(1),\ldots,\tau(k))}^{(k)}f_{j}} = \ip{\tau^{-1}(g_{j})}{A^{(k)}(\tau^{-1}(f_{j}))} = \ip{g_{j}}{A^{(k)}f_{j}}, \qquad \forall j\in\N.
\end{align}
Therefore,
\begin{align}
\lim_{N\rightarrow\infty}\sum_{j=1}^{N} \lambda_{j} \ip{g_{j}}{A_{(\tau(1),\ldots,\tau(k))}^{(k)}f_{j}} &= \lim_{N\rightarrow\infty}\sum_{j=1}^{N} \lambda_{j}\ip{g_{j}}{A^{(k)}f_{j}} \nonumber\\
&= \lim_{N\rightarrow\infty} \Tr_{1,\ldots,k}\biggl(A^{(k)}\biggl(\sum_{j=1}^{N}\lambda_{j}\ket*{f_j}\bra*{g_j}\biggr)\biggr)  \nonumber\\
&=\Tr_{1,\ldots,k}\paren*{A^{(k)}\gamma^{(k)}},
\end{align}
where in order to obtain the ultimate equality, we again use the continuity of the functional $\Tr_{1,\ldots,k}\paren*{A^{(k)}\cdot}$ and the convergence of the partial sums.
\end{proof}

We define the usual contraction operator $B_{i;j}$ appearing in the literature on derivation of quantum many-body systems.

\begin{mydef}[The contractions operator $B_{i;j}$]\label{contraction}
Let $k\in\N$. For integers $1\leq i,j\leq k$ with $i\neq j$, we define the continuous linear operators operators
\begin{equation}
B_{i;j}^{\pm} : \mathcal{L}(\mathcal{S}'(\R^{k+1}), \mathcal{S}(\R^{k+1})) \rightarrow  \mathcal{L}(\mathcal{S}'(\R^{k}), \mathcal{S}(\R^{k}))
\end{equation}
by defining the Schwartz kernel of $B_{i;j}^{+}(\gamma^{(k+1)})$ by the formula
\[
B_{i;j}^{+}(\gamma^{(k+1)})(\ul{x}_{k}; \ul{x}_{k}') \coloneqq \int_{\R}dy  \delta(x_i - y)\gamma^{(k+1)}(\ul{x}_{1;j-1},y,\ul{x}_{j;k}; 
\ul{x}_{1;j-1}',y,\ul{x}_{j;k}'), 
\]
for all $(\ul{x}_{k},\ul{x}_{k}')\in\R^{2k}$. Similarly, we define the Schwartz kernel of $B_{i;j}^{-}(\gamma^{(k+1)})$ by the formula
\[
B_{i;j}^{-}(\gamma^{(k+1)})(\ul{x}_{k};\ul{x}_{k}') \coloneqq \int_{\R}dy\delta(x_i' - y) \gamma^{(k+1)}(\ul{x}_{1;j-1},y,\ul{x}_{j;k}; 
\ul{x}_{1;j-1}',y,\ul{x}_{j;k}'), 
\]
for all $(\ul{x}_{k},\ul{x}_{k}')\in\R^{2k}$ We define the continuous linear operator 
\[
B_{i;j}:\mathcal{L}(\mathcal{S}_{s}'(\R^{k+1}), \mathcal{S}_{s}(\R^{k+1})) \rightarrow \mathcal{L}(\mathcal{S}_{s}'(\R^{k}), \mathcal{S}_{s}(\R^{k}))
\]
by
\begin{equation}
B_{i;j} \coloneqq B_{i;j}^{+} - B_{i;j}^{-}.
\end{equation}
\end{mydef}

Given two locally convex spaces $E$ and $F$, we denote an\footnote{ The reader will recall that the algebraic tensor product is only defined up to unique isomorphism.} algebraic tensor product of $E$ and $F$ consisting of finite linear combinations
\begin{equation}
\sum_{j=1}^{n} \lambda _{j} e_{j} \otimes f_{j}, \qquad e_{j}\in E, \enspace f_{j}\in F
\end{equation}
by $E\otimes F$. 
We note that since the spaces we deal with in this paper are nuclear, the topologies of the injective and projective tensor products coincide. Hence, we can unambiguously write $E\hat{\otimes} F$ to denote the completion of $E\otimes F$ under either of the aforementioned topologies.

Given locally convex spaces $E_{j}$ and $F_{j}$ for $j=1,2$ and linear maps $T:E_{1}\rightarrow E_{2}$ and $S:F_{1}\rightarrow F_{2}$, and a tensor product
\begin{equation}
B:E_{1}\times E_{2} \rightarrow E_{1}\otimes E_{2},
\end{equation}
the notation $T\otimes S$ denotes the unique linear map $T\otimes S:E_{1}\otimes F_{1} \rightarrow E_{2}\times F_{2}$ such that
\begin{equation}
(T\otimes S) \circ B = T\times S.
\end{equation}
Note that the existence of such a unique map is guaranteed by the universal property of the tensor product.

When $E$ and $F$ are subspaces of measurable functions on $\R^{m}$ and $\R^{n}$  respectively, and $e \in E$ and $f \in F$, we let $e\otimes f$ denote the function
\begin{equation}
e\otimes f : \R^{m}\times\R^{n} \rightarrow \C, \qquad (e\otimes f)(\ux_{m};\ux_{n}') \coloneqq e(\ux_{m})f(\ux_{n}'),
\end{equation}
which induces a bilinear map $E\times F\rightarrow E\otimes F$.  Similarly, if $E'$ and $F'$ are the duals of spaces of test functions $E$ and $F$, for instance $E'=\D'(\R^{m})$ and $F'=\D'(\R^{n})$, we let $u\otimes v$ denote the unique distribution satisfying 
\begin{equation}
(u\otimes v)(e\otimes f) = u(e) \cdot v(f).
\end{equation}
Finally, if $\phi:\R^{m}\rightarrow\C$ is a measurable function, we use the notation $\phi^{\otimes k}$, for $k\in\N$, to denote the measurable function $\phi^{\otimes k}: \R^{mk}\rightarrow\C$ defined by
\begin{equation}\label{tensor_def}
\phi^{\otimes k}(\ux_{m,1},\ldots,\ux_{m,k}) \coloneqq \prod_{\ell=1}^{k} \phi(\ux_{m,\ell}).
\end{equation}

\section{Geometric structure for the $N$-body problem}\label{sec:conv_n_body}
\label{sec:geom_N}
In this section we establish proofs of the results stated in \cref{sym_n_pois}.

\subsection{Lie algebra $\G_{N}$ of finite hierarchies quantum observables}
\label{ssec:N_LA}
We begin by defining a Lie algebra $\g_k$ of $k$-body observables. We have some freedom to choose our definition of this Lie algebra, provided that our choice is large enough to include the Hamiltonian of the $N$-body problem yet small enough so that operations such as composition and taking adjoints are well-defined. We find that continuous linear maps from the bosonic Schwartz space to itself forms a convenient choice.

For $k\in\N$, define
\begin{equation}\label{gk_def}
\g_{k} \coloneqq \{ A^{(k)} \in \tl{\L}(\Sc_s(\R^k),\Sc_s(\R^k)) : (A^{(k)})^* = -A^{(k)} \},
\end{equation}
where we recall that $\tl{\L}(\Sc_s(\R^k), \Sc_s'(\R^k))$ is defined in \cref{subspace_top}. Let 
\[
\comm{\cdot}{\cdot}_{\g_{k}}: \g_{k} \times \g_{k} \rightarrow \g_{k}
\]
be the usual commutator bracket scaled by a factor of $k$:
\begin{equation}
\comm{A}{B}_{\g_{k}} \coloneqq k\comm{A}{B}= k(AB - BA).
\end{equation}
Note that the commutator is well-defined since the space $\mathcal{L}(\Sc_s(\R^k),\Sc_s(\R^k))$ is closed under composition. We refer to the elements of $\g_k$ as \emph{$k$-body observables}.

The first goal of this subsection is to verify that $(\g_k, \comm{\cdot}{\cdot}_{\g_k})$ is a Lie algebra in the sense of \cref{def:la}. Namely, we prove the following proposition.

\begin{prop}
\label{prop:DM_LA}
$(\g_{k}, \comm{\cdot}{\cdot}_{\g_{k}})$ is a Lie algebra in the sense of \cref{def:la}
\end{prop}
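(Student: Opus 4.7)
The plan is to verify the axioms of \cref{def:la} directly. First I would observe that $\g_k$ is indeed a real locally convex topological vector space: it is a linear subspace of the locally convex space $\tl{\L}(\Sc_s(\R^k),\Sc_s(\R^k))$ carved out by the $\R$-linear skew-adjoint condition $A^* = -A$. Closure of $\g_k$ under the bracket $[A,B]_{\g_k} = k(AB-BA)$ is then immediate: composition preserves $\L(\Sc_s(\R^k),\Sc_s(\R^k))$, and a direct computation gives $(AB-BA)^* = B^*A^* - A^*B^* = (-B)(-A) - (-A)(-B) = -(AB-BA)$, so the bracket maps $\g_k \times \g_k$ into $\g_k$. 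Bilinearity and the alternating property $[A,A]_{\g_k}=0$ are immediate, and the Jacobi identity \ref{item:LA_3} is the standard algebraic identity for commutators (up to the overall factor of $k^2$, which does not affect the vanishing).

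The main obstacle is the separate continuity requirement, since the topology on $\g_k$ is the subspace topology inherited from $\L(\Sc_s(\R^k),\Sc_s'(\R^k))$ (bounded convergence), which is strictly weaker than the natural composition-friendly topology on $\L(\Sc_s(\R^k),\Sc_s(\R^k))$. Fix $B\in\g_k$; I need to show that $A\mapsto AB$ and $A\mapsto BA$ are continuous $\g_k\to\g_k$. Recall that the relevant seminorms are $\|A\|_{\mathfrak{R}} = \sup_{f,g\in\mathfrak{R}}|\ipp{Af,g}_{\Sc'-\Sc}|$ for bounded $\mathfrak{R}\subset \Sc_s(\R^k)$.

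For the right-multiplication map, I would estimate
\begin{equation*}
\|AB\|_{\mathfrak{R}} = \sup_{f,g\in\mathfrak{R}}|\ipp{A(Bf),g}_{\Sc'-\Sc}| \leq \|A\|_{\mathfrak{R}\cup B(\mathfrak{R})},
\end{equation*}
and since $B\in\tl{\L}(\Sc_s(\R^k),\Sc_s(\R^k))$ carries bounded sets to bounded sets of $\Sc_s(\R^k)$, the set $\mathfrak{R}\cup B(\mathfrak{R})$ is bounded and gives a continuity estimate. For the left-multiplication map, the skew-adjointness of $B$ lets me move $B$ to the other side of the $L^2$ pairing:
\begin{equation*}
\|BA\|_{\mathfrak{R}} = \sup_{f,g\in\mathfrak{R}}|\ipp{Af,B^*g}_{\Sc'-\Sc}| = \sup_{f,g\in\mathfrak{R}}|\ipp{Af,-Bg}_{\Sc'-\Sc}| \leq \|A\|_{\mathfrak{R}\cup B(\mathfrak{R})},
\end{equation*}
again a bounded-set estimate. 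Together these give separate continuity of $[\cdot,\cdot]_{\g_k}$ in the $\L(\Sc_s,\Sc_s')$ topology, completing verification of \ref{item:LA_1}--\ref{item:LA_3}. I expect the skew-adjointness condition defining $\g_k$ to be essential here precisely because it is what lets us pull $B$ through the pairing without leaving the space, which is the only nontrivial input.
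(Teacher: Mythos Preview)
Your proof is correct and follows essentially the same approach as the paper's: both reduce the separate continuity to seminorm estimates of the form $\|AB\|_{\mathfrak{R}},\|BA\|_{\mathfrak{R}}\le \|A\|_{\mathfrak{R}\cup B(\mathfrak{R})}$ (with the roles of the fixed and varying operator swapped), using that the fixed skew-adjoint operator maps $\Sc_s$ to $\Sc_s$ and can therefore be moved across the pairing. You are slightly more explicit in treating the two multiplications separately, whereas the paper handles one case and invokes symmetry for the other.
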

\begin{proof}
That $\comm{\cdot}{\cdot}_{\g_k}$ is algebraically a Lie bracket is immediate from the fact that the commutator satisfies properties \ref{item:LA_1}, \ref{item:LA_2}, and \ref{item:LA_3}. Therefore, it remains to verify that the commutator is separately continuous with respect to the topology on $\g_k$. By symmetry, it suffices to show that for fixed $A^{(k)}\in \g_k$, the map $B^{(k)}\mapsto A^{(k)}B^{(k)}$ is continuous on $\tl{\L}(\Sc_s(\R^k),\Sc_s(\R^k))$, which amounts to showing that for any bounded subset $\mathfrak{R}\subset \Sc_s(\R^k)$, there exists a bounded subset $\tilde{\mathfrak{R}}\subset \Sc_s(\R^k)$, such that
\begin{equation}
\sup_{f,g\in \mathfrak{R}} \left|\ip{g}{A^{(k)}B^{(k)}f}\right| \lesssim \sup_{f,g\in\tilde{\mathfrak{R}}} \left|\ip{g}{B^{(k)}f}\right|.
\end{equation}
Now note that $\ip{g}{A^{(k)}B^{(k)}f} = \ip{(A^{(k)})^*g}{B^{(k)}f}$. Since $(A^{(k)})^*=-A^{(k)}$, it follows from the continuity of $A^{(k)}$ that $(A^{(k)})^*(\mathfrak{R})$ it a bounded subset of $\Sc_s(\R^k)$. Choosing $\tilde{\mathfrak{R}} = \mathfrak{R} \cup (A^{(k)})^{*}(\mathfrak{R})$ completes the proof.
\end{proof}

We next introduce some combinatorial notation used frequently in the sequel. For $N \in \N$ and $k\in \N_{\leq N}$, let $P_{k}^{N}$ denote the collection of $k$-tuples $(j_{1},\ldots,j_{k})$ with $k$ distinct elements drawn from the set $\N_{\leq N}$. Given an element $(j_{1},\ldots,j_{k})\in P_{k}^{N}$, let $(m_1,\ldots,m_{N-k})$ denote the increasing arrangement of $\N_{\leq N}\setminus\{j_1,\ldots,j_k\}$. We denote by $\pi_{j_{1}\cdots j_{k}}\in \Ss_{N}$ the permutation
\begin{equation}
\label{f_perm}
\pi(a) \coloneqq
\begin{cases}
i, & {a=j_i \text{ for $i\in\N_{\leq k}$}}\\
k+i, & {a=m_i \text{ for $i\in\N_{\leq N-k}$}}
\end{cases}.
\end{equation}

Our first lemma defines a continuous linear map $\epsilon_{k,N}$ which allows us to regard a $k$-particle observable as an $N$-particle observable. This map $\epsilon_{k,N}$ is crucial to the definition of the Lie bracket between two observable $N$-hierarchies and by duality, to the Poisson bracket of two density matrix $N$-hierarchies.

For $A^{(k)} \in \L(\Sc_s(\R^k),\Sc_s(\R^k))$, $N \in \N$ with $1 \leq k \leq N$, and $(j_{1},\ldots,j_{k})\in P_{k}^{N}$ we can define the operator
\begin{align}\label{k_part}
A_{(j_1,\ldots,j_k)}^{(k)} \in  \L(\Sc_s(\R^N),\Sc(\R^N))
\end{align}
which acts only on the variables $\{j_1, \ldots, j_k\}$ by defining
\[
A_{(1,\ldots,k)}^{(k)} = A^{(k)} \otimes Id_{N-k}
\]
and setting
\begin{equation}
\label{eq:sym_obs}
A_{(j_1,\ldots,j_k)}^{(k)} = \pi_{j_1\cdots j_k}^{-1} \circ A_{(1,\ldots,k)}^{(k)} \circ \pi_{j_1\cdots j_k}.
\end{equation}
We establish some properties of such operators, which we call $k$-particle extensions, in \cref{prop:ext_k}. These $k$-particle extensions are used to define a map $\epsilon_{k,N}$. We will show first, in the following lemma, that $\varepsilon_{k,N}$ have the desired mapping properties, and then subsequently that the $\epsilon_{k,N}$ are injective, and hence they are proper embeddings of the space $\g_{k}$ into $\g_{N}$.

\begin{remark}
Although $A^{(k)}$ is a priori only defined on the proper subspace $\Sc_s(\R^k)\subset \Sc(\R^k)$, this operator admits an extension to the space $\mathcal{S}(\mathbb{R}^k)$ since we may always consider $A^{(k)}\circ \Sym_k$. We agree going forward to abuse notation by identifying $A^{(k)}$ with this extension. Consequently, we may regard $A_{(j_1,\ldots,j_k)}^{(k)}\in\L(\Sc(\R^N),\Sc(\R^N))$. As the reader will see, though, all our constructions are independent of the choice of extension.
\end{remark}

\begin{lemma}
\label{lem:ep_con}
For integers $1\leq k\leq N$, there is a continuous linear map
\begin{equation}
\epsilon_{k,N}:\mathcal{L}(\Sc_{s}(\mathbb{R}^{k}), \Sc_{s}'(\mathbb{R}^{k})) \rightarrow \mathcal{L}(\Sc_{s}(\mathbb{R}^{N}),\Sc_s'(\R^N))
\end{equation}
defined by
\begin{equation}\label{eps_def}
\epsilon_{k,N}(A^{(k)}) \coloneqq C_{k,N}\sum_{(j_1,\ldots,j_k)\in P_k^N} A_{(j_1,\ldots,j_k)}^{(k)},
\end{equation}
where
\begin{equation}
\label{eq:Ckn_def}
C_{k,N} \coloneqq \paren*{k!{N\choose k}}^{-1}=\frac{1}{N\cdots (N-k+1)}. \footnote{ Note that $C_{k,N}=1/|P_k^N|$.}
\end{equation}
Moreover, if $A^{(k)}\in \L(\Sc_s(\R^k),\Sc_s(\R^k))$, then $\epsilon_{k,N}(A^{(k)}) \in \L(\Sc_s(\R^N),\Sc_s(\R^N))$, and if $A^{(k)}$ is skew-adjoint, then $\epsilon_{k,N}(A^{(k)})$ is skew-adjoint. In particular, $\epsilon_{k,N}({\g}_k) \subset {\g}_N$.
\end{lemma}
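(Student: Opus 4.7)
The plan is to recognize that, up to the normalization $C_{k,N}$, the sum defining $\epsilon_{k,N}(A^{(k)})$ is precisely the bosonic symmetrization (in the sense of \cref{def:sym_A}) of the $N$-particle operator $A^{(k)}_{(1,\ldots,k)}=A^{(k)}\otimes Id_{N-k}$. Once this identification is established, the mapping, continuity and skew-adjointness properties all follow from results already in hand, notably \cref{lem:sym_op_space}.

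First I would establish the key identity
\begin{equation*}
\epsilon_{k,N}(A^{(k)}) = \Sym_{N}\paren*{A^{(k)}_{(1,\ldots,k)}}.
\end{equation*}
By the definition \eqref{eq:sym_obs} and the action of $\pi_{j_1\cdots j_k}$ from \eqref{f_perm}, one checks that for every $\sigma\in\Ss_N$ one has $\sigma\,A^{(k)}_{(1,\ldots,k)}\,\sigma^{-1} = A^{(k)}_{(\sigma(1),\ldots,\sigma(k))}$. Since for each fixed tuple $(j_1,\ldots,j_k)\in P_k^N$ there are exactly $(N-k)!$ permutations $\sigma\in\Ss_N$ with $\sigma(i)=j_i$ for $i=1,\ldots,k$, the averaging in $\Sym_N$ collapses to the average over $P_k^N$, and the ratio $(N-k)!/N!$ matches $C_{k,N}$ exactly.

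With this identification, I would deduce the mapping property $\epsilon_{k,N}(A^{(k)})\in\L(\Sc_s(\R^N),\Sc_s'(\R^N))$ as an immediate consequence of \cref{lem:sym_op_space} applied to $A^{(k)}_{(1,\ldots,k)}$, which is built from $A^{(k)}$ by tensoring with the identity on the remaining $N-k$ variables (using the properties of $k$-particle extensions established in \cref{prop:ext_k}). Continuity of $\epsilon_{k,N}$ between the appropriate operator spaces (with the topology of uniform convergence on bounded sets) then reduces to the continuity of three ingredients: the extension map $A^{(k)}\mapsto A^{(k)}\otimes Id_{N-k}$, conjugation by a fixed permutation, and the finite averaging $\Sym_N$. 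The refinement that $\epsilon_{k,N}$ also carries $\L(\Sc_s(\R^k),\Sc_s(\R^k))$ into $\L(\Sc_s(\R^N),\Sc_s(\R^N))$ is the analogous statement with target $\Sc_s(\R^N)$, again following from \cref{lem:sym_op_space}.

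Finally, for the preservation of skew-adjointness, the identity $(A^{(k)}\otimes Id_{N-k})^{*}=(A^{(k)})^{*}\otimes Id_{N-k}$ combined with the unitarity of the permutation operators implies $(\sigma A^{(k)}_{(1,\ldots,k)}\sigma^{-1})^{*}=-\sigma A^{(k)}_{(1,\ldots,k)}\sigma^{-1}$ whenever $(A^{(k)})^*=-A^{(k)}$. Thus $\epsilon_{k,N}(A^{(k)})$ is an average of skew-adjoint operators, hence skew-adjoint, and therefore $\epsilon_{k,N}(\g_k)\subset \g_N$. I expect the main technical obstacle to be the bookkeeping surrounding the key identity, in particular verifying the relation $\sigma A^{(k)}_{(1,\ldots,k)}\sigma^{-1}=A^{(k)}_{(\sigma(1),\ldots,\sigma(k))}$ under the specific convention chosen for $\pi_{j_1\cdots j_k}$ in \eqref{f_perm}; all remaining steps are essentially formal consequences.
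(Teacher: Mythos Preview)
Your proposal is correct, and the route is a clean repackaging of the paper's argument rather than a genuinely different idea. The paper never writes down your key identity $\epsilon_{k,N}(A^{(k)})=\Sym_N(A^{(k)}_{(1,\ldots,k)})$ explicitly; instead it shows directly that $\pi\,\epsilon_{k,N}(A^{(k)})(f)=\epsilon_{k,N}(A^{(k)})(f)$ for bosonic $f$ by using the relation $\pi A^{(k)}_{(j_1,\ldots,j_k)}(f)=A^{(k)}_{(\pi(j_1),\ldots,\pi(j_k))}(f)$ together with the $\Ss_N$-action on $P_k^N$, and it handles skew-adjointness via \cref{lem:ext_sa} applied term-by-term rather than via unitarity of permutations. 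Your formulation has the advantage of making the structure transparent and of reducing the bosonic-target claim to \cref{lem:sym_op_space}; the paper's has the advantage of being entirely self-contained at this point. One small caveat: \cref{lem:sym_op_space} as stated only gives target $\Sc_s'(\R^N)$, so for the refinement $\L(\Sc_s(\R^k),\Sc_s(\R^k))\to\L(\Sc_s(\R^N),\Sc_s(\R^N))$ you should note explicitly that when $A^{(k)}_{(1,\ldots,k)}$ already lands in $\Sc(\R^N)$, the permutation-invariance established in that lemma upgrades to membership in $\Sc_s(\R^N)$. Your anticipated obstacle (verifying $\sigma A^{(k)}_{(1,\ldots,k)}\sigma^{-1}=A^{(k)}_{(\sigma(1),\ldots,\sigma(k))}$ under the convention \eqref{f_perm}) is real but routine: since $A^{(k)}\otimes Id_{N-k}$ commutes with permutations of the last $N-k$ slots, the discrepancy between $\sigma$ and $\pi_{\sigma(1)\cdots\sigma(k)}^{-1}$ on those slots is harmless.
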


\begin{proof}
Fix $1\leq k\leq N$. From \cref{prop:ext_k}, it follows that if $A^{(k)}\in \mathcal{L}(\Sc_s(\R^k),\Sc_s'(\R^k))$, then $\epsilon_{k,N}(A^{(k)})$ as given in \eqref{eps_def} is a well-defined element of $\mathcal{L}(\Sc_s(\R^N),\Sc_s'(\R^N))$ and the map $\epsilon_{k,N}$ is linear. Furthermore, it follows from \cref{lem:ext_sa} that skew-adjointness is preserved. So it remains for us to show that
\begin{equation}\label{contain}
\ep_{k,N}(\mathcal{L}(\Sc_s(\R^k),\Sc_s(\R^k))) \subset \L(\Sc_s(\R^N),\Sc_s(\R^N))
\end{equation}
and that $\epsilon_{k,N}$ is continuous.
\begin{itemize}[leftmargin=*]
\item
Consider the assertion \eqref{contain}. By properties of tensor product and the continuity of $A^{(k)}$, it follows that  $A_{(1,\ldots,k)}^{(k)} = A^{(k)} \hat{\otimes} Id_{N-k}$ is a continuous map of $\Sc_s(\R^k)\otimes \Sc(\R^{N-k})$ to itself, and hence that
\[
A_{(j_1,\ldots,j_k)}^{(k)}: \Sc_s(\R^N) \rightarrow \Sc(\R^N)
\]
is a continuous map follows directly from \eqref{eq:sym_obs}. We thus need to show that $\ep_{k,N}(A^{(k)})(f)$ is bosonic. 

Let $\pi\in\mathbb{S}_N$. It is straightforward from the definition of $A_{(j_1,\ldots,j_k)}^{(k)}$ and \cref{eq:pi_func_def} that, for any test function $f\in \Sc_s(\R^N)$, we have
\begin{equation}\label{pi_Ak}
\pi A_{(j_1,\ldots,j_k)}^{(k)}(f) = A_{(\pi(j_1),\ldots,\pi(j_k))}^{(k)}(\pi f) = A_{(\pi(j_1),\ldots,\pi(j_k))}^{(k)}(f),
\end{equation}
where the ultimate equality follows from $f$ being bosonic. Since $\mathbb{S}_N$ induces a left group action on $P_k^N$, it follows that
\begin{equation}
\sum_{(j_1,\ldots,j_k)\in P_k^N} A_{(j_1,\ldots,j_k)}^{(k)} = \sum_{(j_1,\ldots,j_k)\in P_k^N} A_{(\pi(j_1),\ldots,\pi(j_k))}^{(k)}
\end{equation}
on $\Sc_s(\R^k)$, which implies together with \eqref{pi_Ak} that
\begin{equation}
\pi \ep_{k,N}(A^{(k)})(f) = C_{k,N}\sum_{(j_1,\ldots,j_k)\in P_k^N} \pi A_{(j_1,\ldots,j_k)}^{(k)}(f) = \ep_{k,N}(A^{(k)})(f),
\end{equation}
as desired.
\item
Now we will prove the assertion that $\ep_{k,N}$ is continuous. Let $\mathfrak{R}_N$ be a bounded subset of $\Sc_s(\R^N)$. We need to show that there exists a bounded subset $\mathfrak{R}_k\subset \Sc_s(\R^k)$ such that
\begin{equation}
\sup_{f^{(N)},g^{(N)} \in \mathfrak{R}_N} \left|\ip{g^{(N)}}{\ep_{k,N}(A^{(k)})f^{(N)}}\right| \lesssim \sup_{f^{(k)},g^{(k)}\in\mathfrak{R}_k} \left|\ip{g^{(k)}}{A^{(k)}f^{(k)}}\right|.
\end{equation}
Using the fact that there are finitely many terms in the definition of $\ep_{k,N}$ and that the finite union of bounded subsets is again a bounded subset, it suffices to show that, for $\mathfrak{R}_N$ as above and any tuple $(j_1,\ldots,j_k)\in P_k^N$, there exists a bounded subset $\mathfrak{R}_{(j_1,\ldots,j_k)} \subset \Sc(\R^k)$, such that
\begin{equation}
\label{eq:em_cont_goal}
\sup_{f^{(N)},g^{(N)} \in\mathfrak{R}_N} \left|\ip{g^{(N)}}{A_{(j_1,\ldots,j_k)}^{(k)}f^{(N)}}\right| \lesssim \sup_{f^{(k)}, g^{(k)}\in \mathfrak{R}_{(j_1,\ldots,j_k)}} \left|\ip{g^{(k)}}{A^{(k)}f^{(k)}}\right|,
\end{equation}
since then the desired bounded subset $\mathfrak{R}_k \subset \Sc_s(\R^k)$ is obtained by taking
\[
\mathfrak{R}_k \coloneqq \Sym_k\paren*{\bigcup_{\ul{j}_k\in P_k^N} \mathfrak{R}_{(j_1,\ldots,j_k)}}.
\]
Now \eqref{eq:em_cont_goal} is a consequence of the fact that
\begin{equation}
\L(\Sc_s(\R^k),\Sc_s'(\R^k)) \mapsto \L(\Sc_s(\R^k)\hat{\otimes} \Sc(\R^{N-k}),\Sc'(\R^N)), \qquad A^{(k)} \mapsto A^{(k)} \otimes Id_{N-k}
\end{equation}
is continuous, \eqref{eq:sym_obs}, and the fact that for any $\ul{j}_k\in P_k^N$, the map $\pi_{j_1\ldots j_k}$ defined by \eqref{f_perm} and duality is a continuous endomorphism of $\Sc'(\R^N)$.
\end{itemize}
\end{proof}

We next show that the maps $\epsilon_{k,N}$ are injective. This property is crucial as we will ultimately construct our Lie bracket on the hierarchy algebra by embedding elements of the sequence into the ambient algebra $\g_{N}$, taking the bracket in $\g_{N}$, and then identifying the output as an embedded element of $\g_k$, for some $k\in\N_{\leq N}$.

\begin{lemma}[Injectivity of $\epsilon_{k,N}$]
For integers $1\leq k\leq N$, the map $\epsilon_{k,N}:\g_{k} \rightarrow \g_{N}$ is injective. Consequently, $\ep_{k,N}$ has a well-defined inverse on its image, which we denote by $\ep_{k,N}^{-1}$.
\end{lemma}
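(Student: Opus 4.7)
The strategy is to reduce the symmetrized sum defining $\epsilon_{k,N}(A^{(k)})$ to the single summand $A^{(k)} \otimes \mathrm{Id}_{N-k}$ by exploiting the permutation invariance of bosonic test functions, then to extract $A^{(k)}$ by testing against factorized states. Suppose $A^{(k)} \in \g_k$ and $\epsilon_{k,N}(A^{(k)}) = 0$; I want to conclude $A^{(k)} = 0$.

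First, I would observe that for any bosonic $f, g \in \Sc_s(\R^N)$ and any tuple $(j_1,\ldots,j_k) \in P_k^N$, we have $\pi_{j_1\cdots j_k} f = f$ and $\pi_{j_1\cdots j_k}^{-1} g = g$ by symmetry, so that
\begin{equation*}
\langle g, A^{(k)}_{(j_1,\ldots,j_k)} f \rangle = \langle g, \pi_{j_1\cdots j_k}^{-1}(A^{(k)} \otimes \mathrm{Id}_{N-k}) \pi_{j_1\cdots j_k} f \rangle = \langle g, (A^{(k)} \otimes \mathrm{Id}_{N-k}) f \rangle.
\end{equation*}
Since $C_{k,N} \cdot |P_k^N| = 1$, the vanishing of $\epsilon_{k,N}(A^{(k)})$ paired against bosonic test functions then yields $\langle g, (A^{(k)} \otimes \mathrm{Id}_{N-k}) f \rangle = 0$ for every $f, g \in \Sc_s(\R^N)$.

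Next, I would specialize to product test functions $f = \phi^{\otimes N}$ and $g = \psi^{\otimes N}$ for arbitrary $\phi, \psi \in \Sc(\R)$, which lie in $\Sc_s(\R^N)$. A direct computation gives
\begin{equation*}
0 = \langle \psi^{\otimes N}, (A^{(k)} \otimes \mathrm{Id}_{N-k}) \phi^{\otimes N} \rangle = \langle \psi^{\otimes k}, A^{(k)} \phi^{\otimes k} \rangle \cdot \langle \psi, \phi \rangle^{N-k}.
\end{equation*}
Fixing $\phi$ and letting $\psi$ vary, the right-hand factor is nonzero on an $L^2$-open dense set of $\psi$'s, so by continuity of the left-hand factor (which is anti-linear in $\psi$), one obtains $\langle \psi^{\otimes k}, A^{(k)} \phi^{\otimes k}\rangle = 0$ for every $\phi, \psi \in \Sc(\R)$.

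Finally, I would invoke the standard polarization fact that the span of symmetric pure tensors $\{\phi^{\otimes k} : \phi \in \Sc(\R)\}$ is dense in $\Sc_s(\R^k)$ (obtainable by differentiating $(\phi_1 + t \phi_2 + \cdots)^{\otimes k}$ in the real parameters $t_j$ at $0$). Since $A^{(k)} \in \L(\Sc_s(\R^k), \Sc_s(\R^k)) \subset \L(\Sc_s(\R^k), \Sc_s'(\R^k))$ is continuous, vanishing of the bilinear pairing $\langle \psi^{\otimes k}, A^{(k)} \phi^{\otimes k}\rangle$ on such a dense set forces $A^{(k)} \equiv 0$, establishing injectivity. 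The main technical points are the symmetry cancellation in the first step and the density/polarization input at the end; neither should require more than the basic structure of $\Sc_s(\R^k)$ already used in the paper.
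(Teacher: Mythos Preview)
Your argument is correct and takes a genuinely different, cleaner route than the paper's. The paper proceeds by contrapositive: assuming $A^{(k)}\neq 0$, it introduces an auxiliary integer $n_{\min}$ (the minimal number of ``perturbing'' factors $g_a$ needed so that $A^{(k)}\bigl(\Sym_k(f^{\otimes(k-n)}\otimes g_1\otimes\cdots\otimes g_n)\bigr)\neq 0$), splits into cases $n_{\min}=0$ versus $1\le n_{\min}<k$, and in the latter case carries out a combinatorial partition of $\Ss_N$ to exhibit a specific $h^{(N)}\in\Sc_s(\R^N)$ on which $\epsilon_{k,N}(A^{(k)})$ does not vanish as an element of $\Sc_s'(\R^N)$. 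Your approach instead works weakly: by pairing against a bosonic $g\in\Sc_s(\R^N)$ as well as applying to a bosonic $f$, every summand $A^{(k)}_{(j_1,\ldots,j_k)}$ collapses to $A^{(k)}\otimes \mathrm{Id}_{N-k}$, eliminating the combinatorics entirely; the pure-tensor test and polarization then finish. What you gain is brevity and the avoidance of the $n_{\min}$ case analysis; what the paper's argument gains is a slightly stronger conclusion (an explicit witness $h^{(N)}$ on which the operator is nonzero, rather than nonvanishing of a bilinear pairing), though this stronger form is not used elsewhere. Both rely on the same density input, namely that finite linear combinations of symmetric pure tensors span a dense subspace of $\Sc_s(\R^k)$.
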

\begin{proof}
Fix $1\leq k\leq N$. We will show the contrapositive statement: if $A^{(k)}\neq 0$, then $\epsilon_{k,N}(A^{(k)})\neq 0$. 

We introduce a parameter $n \in \N_0$, with $n < k$. We say that $A^{(k)}$ has \emph{property $\mathbf{P}_n$} if the following holds: there exists $f , g_1, \ldots, g_{k-n} \in \Sc(\R)$ such that 
\begin{equation}
A^{(k)}\biggl(\Sym_k\biggl(f^{\otimes k-n} \otimes \bigotimes_{a=1}^n g_a\biggr)\biggr) \neq 0,
\end{equation}
where the tensor product is understood as vacuous when $n=0$. We define the integer $n_{\min}$ by
\begin{equation}
n_{\min} \coloneqq \max\{\min\{n\in \N_{< k} : \text{$A^{(k)}$ has property $P_n$}\},k\}.\footnote{We adopt the convention that the minimum of the empty set is $\infty$, and therefore we take the maximum with $k$ to ensure that $n_{\min}$ is finite.}
\end{equation}
Note that we must have $n_{\min}<k$, else, by definition of property $\mathbf{P}_n$, we would then have that for all $g_1,\ldots,g_k\in \Sc(\R)$,
\begin{equation}
\label{eq:ld_imp}
A^{(k)}\paren*{\Sym_k(g_1\otimes \cdots \otimes g_k)} = 0.
\end{equation}
By linearity and continuity of $A^{(k)}$ together with density of finite linear combinations of symmetric pure tensors in $\Sc_s(\R^k)$, \eqref{eq:ld_imp} implies that $A^{(k)} \equiv 0$, which is a contradiction.

To avoid notation confusion, we first dispense with the trivial case $n_{\min}=0$. The definition of property $\mathbf{P}_{0}$ implies that there exists an element $f\in\Sc(\R)$ such that $A^{(k)}(f^{\otimes k})\neq 0$. It then follows trivially from the definition of each summand $A_{(j_1,\ldots,j_k)}^{(k)}$ in the definition of $\ep_{k,N}(A^{(k)})$ that
\begin{equation}
\ep_{k,N}(A^{(k)})(f^{\otimes N}) \neq 0 \in \Sc_s'(\R^N).
\end{equation}

We now consider the case $1\leq n_{\min}<k$. The definition of property $\mathbf{P}_{n_{\min}}$ implies that there exist elements $f,g_1,\ldots,g_{n_{\min}}\in\Sc(\R)$ such that
\begin{equation}
A^{(k)}\biggl(\Sym_k\biggl(f^{\otimes k-n_{\min}}\otimes \bigotimes_{a=1}^{n_{\min}} g_a\biggr)\biggr) \neq 0 \in\Sc_s'(\R^k).
\end{equation}
Define an element $h^{(N)}\in\Sc_s(\R^N)$ by
\begin{equation}
h^{(N)} \coloneqq \Sym_N\biggl(f^{\otimes k-n_{\min}} \otimes (\bigotimes_{a=1}^{n_{\min}} g_a) \otimes f^{\otimes N-k}\biggr).
\end{equation}
We claim that $\ep_{k,N}(A^{(k)})(h^{(N)}) \neq 0\in\Sc_s'(\R^N)$. Indeed, unpacking the definition of $\ep_{k,N}(A^{(k)})$ and $\Sym_N$, we have
\begin{align}
\ep_{k,N}(A^{(k)})(h^{(N)}) = C_{k,N}\sum_{\ul{j}_k\in P_k^N} A_{(j_1,\ldots,j_k)}^{(k)}\biggl(\sum_{\pi\in \mathbb{S}_N} \pi(f^{\otimes k-n_{\min}}\otimes (\bigotimes_{a=1}^{n_{\min}} g_a) \otimes f^{\otimes N-k})\biggr).
\end{align}
We first examine the interior sum. For each $\ul{j}_k\in P_k^N$, we can partition $\mathbb{S}_N$ into the sets
\begin{equation}
\mathbb{S}_{\ul{j}_k, r} \coloneqq \{\pi \in \mathbb{S}_N : |\{\pi(k-n_{\min}+1),\ldots,\pi(k)\} \cap \{j_1,\ldots,j_k\}| = r\}
\end{equation}
for $r=0,\ldots,n_{\min}$. We write
\begin{equation}
\sum_{\pi\in \mathbb{S}_N} \pi(f^{\otimes k-n_{\min}}\otimes (\bigotimes_{a=1}^{n_{\min}} g_a) \otimes f^{\otimes N-k}) = \sum_{r=0}^{n_{\min}} \sum_{\pi\in \mathbb{S}_{\ul{j}_k, r}} \pi(f^{\otimes k-n_{\min}}\otimes (\bigotimes_{a=1}^{n_{\min}}g_a) \otimes f^{\otimes N-k}).
\end{equation}
By symmetry considerations, we may suppose that $(j_1,\ldots,j_k) = (1,\ldots,k)$. It is a short counting argument that for each $r\in\{0,\ldots,n_{\min}\}$, we have that
\begin{equation}
\begin{split}
&\sum_{\pi \in\mathbb{S}_{(1,\ldots,k),r}} \pi(f^{\otimes k-n_{\min}} \otimes (\bigotimes_{a=1}^{n_{\min}} g_a) \otimes f^{\otimes N-k})\\
&=C(k,n_{\min},r,N) \sum_{\ul{\ell}_{n_{\min}} \in P_{n_{\min}}^{n_{\min}}} \Sym_k\paren*{f^{\otimes k-r} \otimes \bigotimes_{a=1}^r g_{\ell_a} } \otimes \Sym_{N-k}\paren*{(\bigotimes_{a=r+1}^{n_{\min}} g_{\ell_a}) \otimes f^{\otimes N-n_{\min}-k+r}},
\end{split}
\end{equation}
where $C(k,n_{\min},r,N)$ is another combinatorial factor depending on the data $(k,n_{\min},r,N)$. Each term 
\begin{align}\label{rhs}
\Sym_k\paren*{f^{\otimes k-r} \otimes \bigotimes_{a=1}^r g_{\ell_a} } \otimes \Sym_{N-k}\paren*{(\bigotimes_{a=r+1}^{n_{\min}} g_{\ell_a}) \otimes f^{\otimes N-n_{\min}-k+r}}
\end{align}
is an element of $\Sc_s(\R^k)\hat{\otimes}\Sc_s(\R^{N-k})$, and therefore \eqref{rhs} belongs to the domain of $A_{(1,\ldots,k)}^{(k)}$. Now by definition of $n_{\min}$, we have that for each $r\in \{0,\ldots,n_{\min}-1\}$ that
\begin{align*}
&A_{(1,\ldots,k)}^{(k)}\paren*{\Sym_k\paren*{f^{\otimes k-r} \otimes \bigotimes_{a=1}^r g_{\ell_a} } \otimes \Sym_{N-k}\paren*{(\bigotimes_{a=r+1}^{n_{\min}} g_{\ell_a}) \otimes f^{\otimes N-n_{\min}-k+r}}} \\
&= A^{(k)}\biggl(\Sym_k(f^{\otimes k-r} \otimes \bigotimes_{a=1}^r g_{\ell_a})\biggr)\otimes \Sym_{N-k}\paren*{(\bigotimes_{a=r+1}^{n_{\min}} g_{\ell_a}) \otimes f^{\otimes N-n_{\min}-k+r}} \\
&=0\in \Sc_s'(\R^k)\hat{\otimes}\Sc_s(\R^{N-k}).
\end{align*}
When $r = n_{\min}$, we have that
\begin{align*}
&A_{(1,\ldots,k)}^{(k)}\biggl(\Sym_k(f^{\otimes k-n_{\min}} \otimes \bigotimes_{a=1}^{n_{\min}}g_{\ell_a}) \otimes f^{\otimes N-k})\biggr) \\
&= A^{(k)}\biggl(\Sym_k(f^{\otimes k-n_{\min}} \otimes \bigotimes_{a=1}^{n_{\min}}g_a)\biggr) \otimes f^{\otimes N-k}
\end{align*}
is a non-zero element of $\Sc_s'(\R^k)\hat{\otimes} \Sc_s(\R^{N-k})$ by choice of the elements $f,g_1,\ldots,g_{n_{\min}}\in\Sc(\R)$. Consequently, for a possibly different combinatorial factor $C'(k,N)$, we conclude that
\begin{equation}
\begin{split}
\ep_{k,N}(A^{(k)})(h^{(N)}) &=C(k,N)' \Sym_N\paren*{A^{(k)}\biggl(\Sym_k(f^{\otimes k-n_{\min}} \otimes \bigotimes_{a=1}^{n_{\min}}g_a)\biggr) \otimes f^{\otimes N-k}}
\end{split}
\end{equation}
is a nonzero element of $\Sc_s'(\R^N)$, completing the proof of the lemma.
\end{proof}

We next show that the bracket $\comm{\cdot}{\cdot}_{\g_N}$ respects the hierarchy in the sense that
\begin{equation}
\comm{\epsilon_{\ell,N}(\g_{\ell})}{ \epsilon_{j,N}(\g_{j})}_{\g_N} \subset \epsilon_{\min\{\ell+j-1, N\},N}(\g_{\min\{\ell+j-1,N\}}) \subset \g_{N}.
\end{equation}
This filtration or gradation property is crucial to our definition of the hierarchy Lie bracket in the sequel.

Before proving \cref{lem:hi_fil} below, we introduce some contraction and commutator-type notation used in the proof and in the sequel. Consider integers $N\in\N$, $ \ell,j \in N_{\leq N}$, $k\coloneqq \min\{\ell+j-1,N\}$ and $r\geq 1$ satisfying appropriate conditions. Let $A^{(\ell)}\in \L(\Sc_s(\R^\ell),\Sc_s(\R^\ell))$ and $B^{(j)}\in \L(\Sc_s(\R^j),\Sc_s(\R^j))$. We define the \emph{$r$-fold contractions}
\begin{align}
A^{(\ell)} \circ_r B^{(j)} &\coloneqq A_{(1,\ldots,\ell)}^{(\ell)}\biggl(\sum_{\ul{\alpha}_r\in P_r^\ell} B_{(\ul{\alpha}_r,\ell+1,\ldots,\ell+j-r)}^{(j)}\biggr) \in \L(\Sc_s(\R^k),\Sc'(\R^k)) \label{eq:A_c_B}\\
B^{(j)} \circ_r A^{(\ell)} &\coloneqq B_{(1,\ldots,j)}^{(j)}\biggl(\sum_{\ul{\alpha}_r\in P_r^j} A_{(\ul{\alpha}_r,j+1,\ldots,j+\ell-r)}^{(\ell)}\biggr) \in \L(\Sc_s(\R^k),\Sc'(\R^k)) \label{eq:B_c_A}.
\end{align}
Note that the compositions are well-defined since
\begin{equation}
\sum_{\ul{\alpha}_r\in P_r^\ell} B_{(\ul{\alpha}_r,\ell+1,\ldots,\ell+j-r)}^{(j)} \text{ and } \sum_{\ul{\alpha}_r\in P_r^j} A_{(\ul{\alpha}_r,j+1,\ldots,j+\ell-r)}^{(\ell)}
\end{equation}
have targets which are symmetric under permutation in the first $\ell$ and $j$ coordinates, respectively. We then set
\begin{equation}
\label{eq:comm_r}
\comm{A^{(\ell)}}{B^{(j)}}_r \coloneqq {j\choose r} A^{(\ell)} \circ_r B^{(j)} - {\ell\choose r} B^{(j)}\circ_r A^{(\ell)}.
\end{equation}
The motivation for the combinatorial factors in \eqref{eq:comm_r} will become clear from the proof of \cref{lem:hi_fil} below.

\begin{remark}
\label{rem:tbyt}
We may also proceed term-by-term to define \eqref{eq:A_c_B} and \eqref{eq:B_c_A} by considering an extensions of $A^{(\ell)}$ and $B^{(j)}$ to $\L(\Sc(\R^\ell),\Sc(\R^\ell))$ and $\L(\Sc(\R^j),\Sc(\R^j))$, so that $A_{(1,\ldots,\ell)}^{(\ell)}$ and $B_{(1,\ldots,j)}^{(j)}$ are then elements of $\L(\Sc(\R^k),\Sc(\R^k))$. The choice of extensions is immaterial by the target symmetry of operators with which the extensions are right-composed.
\end{remark}

In the sequel, we will need a technical lemma concerning the separate continuity of the binary operation $\circ_r$. The proof of this result is quite similar to that of (the more general) \cref{lem:gmp} below, so we omit the proof.

\begin{lemma}
\label{lem:ua_ub_cont}
Let $\ell,j,k,N\geq 1$ be integers such that $\ell,j\leq N$ and $\min\{\ell+j-1,N\}=k$. Let $r$ be an integer such that $r_0\leq r\leq \min\{\ell,j\}$, where
\begin{equation}
r_0 \coloneqq \max\{1,\min\{\ell,j\}-(N-\max\{\ell,j\})\}.
\end{equation}
Then the bilinear map
\begin{equation}
(\cdot )\circ_r(\cdot): \tl{\L}(\Sc(\R^\ell), \Sc(\R^\ell)) \times \tl{\L}(\Sc(\R^j),\Sc(\R^j)) \rightarrow \tl{\L}(\Sc(\R^k),\Sc(\R^k))
\end{equation}
is separately continuous.\footnote{We recall that $\tl{\L}(\Sc(\R^k), \Sc(\R^k)$ denotes the space $\L(\Sc(\R^k),\Sc(\R^k))$ of continuous linear maps from Schwartz space to itself equipped with the subspace topology induced by $\L(\Sc(\R^k),\Sc'(\R^k))$.}
\end{lemma}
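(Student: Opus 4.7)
The plan is to follow the blueprint of the forthcoming \cref{lem:gmp}, specialized to the case where both factors map Schwartz functions to Schwartz functions. By bilinearity, separate continuity reduces to continuity at the origin, and by symmetry in the roles of the two factors (and in the two compositions $A\circ_r B$ and $B\circ_r A$), it suffices to fix $A^{(\ell)}\in\tl{\L}(\Sc(\R^\ell),\Sc(\R^\ell))$ and show continuity of the map $B^{(j)}\mapsto A^{(\ell)}\circ_r B^{(j)}$. Since $P_r^\ell$ is finite and a finite union of bounded sets is bounded, I further reduce to a single tuple, which by relabeling may be taken to be $\ul{\alpha}_r = (1,\ldots,r)$.

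Fix a bounded set $\mathfrak{R}_k \subset \Sc(\R^k)$. The goal is to construct a bounded set $\mathfrak{R}_j\subset\Sc(\R^j)$ and a constant $C>0$ (depending on $A^{(\ell)}$ and $\mathfrak{R}_k$) so that
\[
\sup_{f,g\in\mathfrak{R}_k}\bigl|\langle g, A^{(\ell)}_{(1,\ldots,\ell)} B^{(j)}_{(1,\ldots,r,\ell+1,\ldots,\ell+j-r)}\,f\rangle\bigr| \ \leq\ C\,\sup_{f',g'\in\mathfrak{R}_j}|\langle g', B^{(j)}f'\rangle|.
\]
Using $A^{(\ell)}_{(1,\ldots,\ell)} = A^{(\ell)}\hat{\otimes}\,Id_{k-\ell}$ and Fubini for distributional pairings, the left-hand side is rewritten as an integral over the uncontracted coordinates $\ux_{\ell+j-r+1;k}$ of scalar pairings in the remaining $\ell+j-r$ variables. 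The key manipulation is to absorb $A^{(\ell)}$ into the test functions on the $g$-side: after slicing at fixed values of the coordinates not touched by $B^{(j)}$, I would use the Schwartz kernel theorem to identify a continuous assignment $(f,g,\ux_{\ell+j-r+1;k})\mapsto (\tilde f, \tilde g)$ with values in $\Sc(\R^j)\times\Sc(\R^j)$, constructed so that the original pairing equals $\langle \tilde g, B^{(j)} \tilde f\rangle$; here $\tilde f$ is obtained by a direct slice of $f$, while $\tilde g$ requires pairing $g$ against the kernel of $A^{(\ell)}$ in the coordinates shared with $B^{(j)}$.

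The last and most delicate point is verifying that the family $\{(\tilde f,\tilde g):f,g\in\mathfrak{R}_k,\ \ux_{\ell+j-r+1;k}\in\R^{k-\ell-j+r}\}$ is bounded in $\Sc(\R^j)\times \Sc(\R^j)$, so that $\mathfrak{R}_j$ may be taken to be its symmetrization, with the integration $\int d\ux_{\ell+j-r+1;k}$ absorbed into the constant $C$ via the Schwartz decay in those coordinates. Boundedness follows from the continuity (hence bounded-set preservation) of $A^{(\ell)}:\Sc(\R^\ell)\to\Sc(\R^\ell)$ and of coordinate-permutation actions on Schwartz space, together with the nuclearity of $\Sc(\R^\ell)$. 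The main obstacle is that the abstract adjoint $(A^{(\ell)})^*$ does not in general map $\Sc(\R^\ell)$ to $\Sc(\R^\ell)$, which rules out a straightforward adjoint move of the form $\langle g,Ah\rangle=\langle A^*g,h\rangle$. The workaround is to avoid the abstract adjoint altogether: since the $A^{(\ell)}$-contraction is ultimately paired against a Schwartz function produced by $B^{(j)}$, the good mapping property of $A^{(\ell)}$ (a direct consequence of its $\Sc\to\Sc$ continuity; see \cref{def:gmp} and the remark thereafter) guarantees that the relevant partial-trace-type operation outputs a Schwartz function in $j$ variables. This is precisely why the proof is a special case of the more general \cref{lem:gmp}, in which $\L_{gmp}$ replaces the stronger $\Sc\to\Sc$ mapping assumption used here.
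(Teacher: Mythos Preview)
Your proposal has a genuine gap. The ``symmetry'' reduction to a single case is not valid: the two entries of $\circ_r$ play asymmetric roles (outer composition versus inner composition), and continuity in each requires a different argument. You chose to fix $A^{(\ell)}$ and vary $B^{(j)}$, i.e.\ to prove continuity in the \emph{second} entry, and the proposed workaround via the good mapping property of $A^{(\ell)}$ does not do what you need. To move $A^{(\ell)}$ onto the $g$-side one needs the good mapping property of the \emph{transpose} $(A^{(\ell)})^t$ (equivalently, that $(A^{(\ell)})^*$ maps $\Sc\to\Sc$); this does \emph{not} follow from $A^{(\ell)}\in\L(\Sc,\Sc)$. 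Concretely, in the case $\ell=j=r=k=1$ (so $A\circ_1 B = AB$), take $A f(x)=f(0)e^{-x^2}$ and $B_n f(x)=e^{inx}f(x)$. Then $B_n\to 0$ in $\tl\L(\Sc(\R),\Sc(\R))$ since $\langle g, B_n f\rangle=\widehat{fg}(-n)$ decays uniformly over bounded sets, yet $(AB_n f)(x)=f(0)e^{-x^2}$ is independent of $n$, so $AB_n\not\to 0$. Thus second-entry continuity fails on all of $\tl\L(\Sc,\Sc)$, and no amount of good mapping for $A^{(\ell)}$ itself can rescue the argument.

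The paper omits the proof, pointing to \cref{lem:gmp}. That lemma establishes continuity in the \emph{first} entry (fix $B^{(j)}$, vary $A^{(\ell)}$), which is the direction that actually parallels it and is the easy one here: $B^{(j)}_{(\ul{\alpha}_r,\ell+1,\ldots,\ell+j-r)}$ maps bounded subsets of $\Sc(\R^k)$ to bounded subsets, after which one invokes continuity of $A^{(\ell)}\mapsto A^{(\ell)}_{(1,\ldots,\ell)}$ as in \cref{lem:ep_con}. For the second entry one needs the additional hypothesis $(A^{(\ell)})^*\in\L(\Sc,\Sc)$ --- in particular skew-adjointness --- and then the adjoint move $\ip{g}{A h}=\ip{A^*g}{h}$ works exactly as in the proof of \cref{prop:DM_LA} (compare \cref{rem:sc} after \cref{lem:gmp}). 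Since the only use of the lemma is in \cref{prop:NH_LA}, where the fixed operator lies in $\g_\ell$ and is therefore skew-adjoint, this is all that is required; but your argument as written does not supply it.
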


\begin{lemma}[Filtration of hierarchy]\label{lem:hi_fil}
Let $N\in\mathbb{N}$ and let $1\leq \ell,j\leq N$. Then for any $A^{(\ell)}\in \g_{\ell}$ and $B^{(j)}\in \g_{j}$, there exists a unique $C^{(k)}\in \g_{k}$, for $k\coloneqq \min\{\ell+j-1,N\}$, such that
\begin{equation}
\comm{\epsilon_{\ell,N}(A^{(\ell)})}{\epsilon_{j,N}(B^{(j)})}_{\g_N} = \epsilon_{k,N}(C^{(k)}).
\end{equation}
\end{lemma}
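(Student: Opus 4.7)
The plan is to compute the bracket by unfolding the definition of $\epsilon_{\ell,N}, \epsilon_{j,N}$, reorganizing the resulting double sum according to the overlap of the underlying index tuples, identifying each fixed-overlap contribution as a tower of $(\ell{+}j{-}r)$-body extensions, and finally lifting everything to a single $k$-body operator via a natural ``extend by identity'' relation. Unique identification of the output as $\epsilon_{k,N}(C^{(k)})$ then follows from the injectivity of $\epsilon_{k,N}$ established in the previous lemma.

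More precisely, I first expand
\[
\comm{\epsilon_{\ell,N}(A^{(\ell)})}{\epsilon_{j,N}(B^{(j)})}_{\g_N} = N\,C_{\ell,N}C_{j,N} \sum_{\ul{\alpha}\in P_{\ell}^N}\sum_{\ul{\beta}\in P_{j}^N}\bigl[A^{(\ell)}_{(\ul{\alpha})}, B^{(j)}_{(\ul{\beta})}\bigr],
\]
and observe that the commutator in each summand vanishes as soon as $\{\alpha_i\}\cap\{\beta_i\}=\emptyset$, since $A^{(\ell)}_{(\ul{\alpha})}$ and $B^{(j)}_{(\ul{\beta})}$ act on disjoint sets of coordinates. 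I then partition the sum by the overlap size $r=|\{\alpha_i\}\cap\{\beta_i\}|$, which ranges over $\max\{1,\ell{+}j{-}N\}\le r\le \min\{\ell,j\}$. Because $A^{(\ell)}$ and $B^{(j)}$ are bosonic, $A^{(\ell)}_{(\ul{\alpha})}$ depends only on the set $\{\alpha_i\}$ (and analogously for $B^{(j)}$), so for each fixed $r$ the inner sum reorganizes, after a permutation change of variables and matching of multinomial coefficients, into an expression of the form
\[
c_{r,N,\ell,j}\sum_{\ul{u}\in P_{\ell+j-r}^N} \Sym_{\ell+j-r}\bigl([A^{(\ell)},B^{(j)}]_r\bigr)_{(\ul{u})},
\]
where $[\,\cdot\,,\,\cdot\,]_r$ is precisely the bracket introduced in \eqref{eq:comm_r} and $c_{r,N,\ell,j}$ collects the resulting combinatorial factors. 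The binomial coefficients $\binom{j}{r}$ and $\binom{\ell}{r}$ appearing in the definition of $[\,\cdot\,,\,\cdot\,]_r$ are exactly those produced by counting orderings of the contracted indices.

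Next, a direct counting argument based on the definition of $\epsilon_{\cdot,N}$ shows that for any $k'\le k\le N$ and any $X\in\L(\Sc(\R^{k'}),\Sc(\R^{k'}))$,
\[
\epsilon_{k',N}(X)=\epsilon_{k,N}(X\otimes \mathrm{Id}_{k-k'}),
\]
since the ratio $C_{k,N}/C_{k',N}$ cancels the number of ways to extend a $k'$-tuple to a $k$-tuple. Using this with $k'=\ell+j-r$, each fixed-$r$ contribution may be written as $\epsilon_{k,N}$ applied to a suitable $k$-body operator. Summing over the admissible values of $r$ defines a single operator $C^{(k)}\in \L(\Sc_s(\R^k),\Sc'(\R^k))$ with
\[
\comm{\epsilon_{\ell,N}(A^{(\ell)})}{\epsilon_{j,N}(B^{(j)})}_{\g_N}=\epsilon_{k,N}(C^{(k)}),
\]
and uniqueness of $C^{(k)}$ is immediate from injectivity of $\epsilon_{k,N}$.

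Finally I verify $C^{(k)}\in\g_k$. Skew-adjointness is free: the left-hand side is the commutator of two skew-adjoint operators, hence skew-adjoint; combined with the identity $\epsilon_{k,N}(X)^{*}=\epsilon_{k,N}(X^{*})$ (permutations are unitary, so conjugation commutes with adjoint) and injectivity of $\epsilon_{k,N}$, this forces $(C^{(k)})^{*}=-C^{(k)}$. The bosonic mapping property $C^{(k)}\colon \Sc_s(\R^k)\to\Sc_s(\R^k)$ follows from the separate continuity of the contraction $\circ_r$ established in \cref{lem:ua_ub_cont}, together with the fact that $\Sym_k$ maps $\L(\Sc(\R^k),\Sc(\R^k))$ into $\L(\Sc_s(\R^k),\Sc_s(\R^k))$ by \cref{lem:sym_op_space}. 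The principal obstacle is the bookkeeping in the second paragraph: tracking precisely how the multinomial counts of overlaps, orderings, and extensions combine to reproduce the combinatorial factors $\binom{j}{r},\binom{\ell}{r}$ built into $[\,\cdot\,,\,\cdot\,]_r$, while a secondary technical point is ensuring that each contraction $A^{(\ell)}\circ_r B^{(j)}$ indeed preserves the bosonic Schwartz space rather than merely landing in $\Sc'(\R^k)$.
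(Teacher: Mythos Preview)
Your approach is essentially the paper's: expand, partition by overlap size $r$, use bosonic symmetry to collapse each block into a sum of $(\ell{+}j{-}r)$-body extensions, then lift to level $k$. Your tower identity $\epsilon_{k',N}(X)=\epsilon_{k,N}(X\otimes\mathrm{Id}_{k-k'})$ is exactly the tuple-extension count the paper carries out explicitly (producing the factor $\prod_{m=1}^{r-1}(N-k+m)$); just note that $X\otimes\mathrm{Id}_{k-k'}$ is not bosonic in $k$ variables, so to stay within the stated domain of $\epsilon_{k,N}$ you should write $\Sym_k(X\otimes\mathrm{Id}_{k-k'})$ on the right, after which your $C^{(k)}$ coincides with the paper's $\Sym_k\bigl(\sum_r C_{\ell j k r N}[A^{(\ell)},B^{(j)}]_r\bigr)$.
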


\begin{proof}
By definition,
\begin{align}
&\comm{\epsilon_{\ell,N}(A^{(\ell)})}{\epsilon_{j,N}(B^{(j)})}_{\g_N} \nonumber\\
&= N C_{\ell,N}C_{j,N}\paren*{ \sum_{\um_{\ell}\in P_{\ell}^{N}} A_{(m_1,\ldots, m_\ell)}^{(\ell)}\biggl(\sum_{\un_j \in P_{j}^{N}}  B_{(n_{1},\ldots, n_{j})}^{(j)} \biggr) - \sum_{\un_j \in P_{j}^{N}}  B_{(n_1,\ldots, n_j)}^{(j)}\biggl(\sum_{\um_{\ell}\in P_{\ell}^{N}} A_{(m_1, \ldots, m_\ell)}^{(\ell)}\biggr)} \nonumber\\
&=N C_{\ell,N}C_{j,N}\sum_{r=1}^{\min\{\ell,j\}} \Biggl(\sum_{\um_{\ell}\in P_{\ell}^{N}} A_{(m_1, \ldots, m_\ell)}^{(\ell)}\biggl(\sum_{{\un_j \in P_{j}^{N}}\atop |\{m_{1},\ldots,m_{\ell}\}\cap\{n_{1},\ldots,n_{j}\}|=r} B_{(n_{1},\ldots,n_{j})}^{(j)}\biggr) \nonumber\\
&\hspace{55mm} - \sum_{\un_j \in P_j^N} B_{(n_{1},\ldots,n_{j})}^{(j)}\biggl(\sum_{{\um_\ell \in P_\ell^N}\atop {|\{m_1,\ldots,m_\ell\}\cap\{n_1,\ldots,n_j\}|=r}} A_{(m_1, \ldots, m_\ell)}^{(\ell)}\biggr) \Biggr).
\label{eq:fil_pre}
\end{align}
Without loss of generality, suppose that $\ell \geq j$. We consider the case $\ell+j-1\leq N$. For each integer $1\leq r\leq j$, we have by the $\mathbb{S}_j$-invariance of the operator $B^{(j)}$ that
\begin{equation}
\sum_{{\un_j \in P_{j}^{N}}\atop |\{m_{1},\ldots,m_{\ell}\}\cap\{n_{1},\ldots,n_{j}\}|=r} B_{(n_{1},\ldots,n_{j})}^{(j)} = {j\choose r}\sum_{{\un_j\in P_j^N}\atop {{\{n_1,\ldots,n_r\}\subset \{m_1,\ldots,m_\ell\}} \atop {\{n_{r+1},\ldots,n_j\} \cap \{m_1,\ldots,m_\ell\}=\emptyset}}} B_{(n_{1},\ldots,n_{j})}^{(j)}.
\end{equation}
Similarly, by the $\Ss_\ell$-invariance of the operator $A^{(\ell)}$, we have that
\begin{equation}
\sum_{{\um_\ell \in P_{\ell}^{N}}\atop |\{n_{1},\ldots,n_{j}\}\cap\{m_{1},\ldots,m_{\ell}\}|=r} A_{(m_{1},\ldots,m_{\ell})}^{(\ell)} = {\ell\choose r}\sum_{{\um_\ell\in P_\ell^N}\atop {{\{m_1,\ldots,m_r\}\subset \{n_1,\ldots,n_j\}} \atop {\{m_{r+1},\ldots,m_\ell\} \cap \{n_1,\ldots,n_j\}=\emptyset}}} A_{(m_1,\ldots,m_\ell)}^{(\ell)}.
\end{equation}

Upon relabeling the summation, we see that
\begin{equation}
\label{eq:fil_main}
\begin{split}
\eqref{eq:fil_pre} &=N C_{\ell,N}C_{j,N}\sum_{r=1}^{\min\{\ell,j\}}  \sum_{\up_{\ell+j-r}\in P_{\ell+j-r}^N}\Biggl( {j\choose r} A_{(p_1,\ldots,p_l)}^{(l)}\biggl(\sum_{{1\leq \ell_1,\ldots,\ell_r\leq \ell}\atop {|\{\ell_1,\ldots,\ell_r\}|=r}} B_{(p_{\ell_1},\ldots,p_{\ell_r},p_{\ell+1},\ldots,p_{\ell+j-r})}^{(j)}\biggr) \\
&\hspace{65mm} - {\ell\choose r} B_{(p_1,\ldots,p_j)}^{(j)}\biggl(\sum_{{1\leq j_1,\ldots,j_r\leq j} \atop{|\{j_1,\ldots,j_r\}|=r}} A_{(p_{j_1},\ldots,p_{j_r},p_{j+1},\ldots,p_{j+\ell-r})}^{(\ell)}\biggr) \Biggr).
\end{split} 
\end{equation}

If $r=1$, then the summand of \eqref{eq:fil_main} equals
\begin{align}
&NC_{\ell,N}C_{j,N}\sum_{\up_k \in P_{k}^{N}} j A_{(p_{1},\ldots,p_{\ell})}^{(\ell)}\biggl(\sum_{\alpha=1}^{\ell} B_{(p_{\alpha},p_{\ell+1},\ldots,p_{k})}^{(j)}\biggr) - \ell B_{(p_1,\ldots,p_\ell)}^{(\ell)}\biggl(\sum_{\alpha=1}^{j} A_{(p_{\alpha},p_{j+1},\ldots,p_{k})}^{(\ell)}\biggr) \nonumber\\
&=NC_{\ell,N}C_{j,N}\sum_{\up_k \in P_{k}^{N}} j (A^{(\ell)}\circ_1 B^{(j)})_{(p_1,\ldots,p_k)} - \ell (B^{(j)}\circ_1 A^{(\ell)})_{(p_1,\ldots,p_k)} \nonumber\\
&=\epsilon_{k,N}\paren*{\frac{N C_{\ell,N}C_{j,N}}{C_{k,N}}\Sym_k\paren*{j (A^{(\ell)}\circ_1 B^{(j)}) - \ell (B^{(j)}\circ_1 A^{(\ell)})}}.
\end{align}

Now suppose that $r>1$. Observe that
\begin{equation}
\begin{split}
&\sum_{\up_{\ell+j-r}\in P_{\ell+j-r}^N} \Biggl( {j\choose r} A_{(p_1,\ldots,p_\ell)}^{(\ell)}\biggl(\sum_{{1\leq \ell_1,\ldots,\ell_r\leq \ell}\atop {|\{\ell_1,\ldots,\ell_r\}|=r}} B_{(p_{\ell_1},\ldots,p_{\ell_r},p_{\ell+1},\ldots,p_{\ell+j-r})}^{(j)}\biggr) \\
&\hspace{25mm} - {\ell\choose r} B_{(p_1,\ldots,p_j)}^{(j)}\biggl(\sum_{{1\leq j_1,\ldots,j_r\leq j} \atop{|\{j_1,\ldots,j_r\}|=r}} A_{(p_{j_1},\ldots,p_{j_r},p_{j+1},\ldots,p_{j+\ell-r})}^{(\ell)}\biggr)\Biggr)
\end{split}
\end{equation}
cannot be immediately identified as an embedded element of $\g_{k}$ because the summation is not over tuples $\up_k\in P_{k}^{N}$. Indeed, we are missing $k-(\ell+j-r) = r-1$ coordinates. To address this issue, we observe that we can write $\up_k \in P_{k}^{N}$ as $\up_k = (\up_{\ell+j-r}, \uq_{r-1})$, where $ \up_{\ell+j-r} \in P_{\ell+j-r}^{N}$ and
\begin{equation}
\uq_{r-1}  \in (\N_{\leq N} \setminus \{p_{1},\ldots,p_{\ell+j-r}\})^{r-1}, \, \textup{with }|\{q_{1},\ldots,q_{r-1}\}|=r-1.
\end{equation}
For each $\up_{\ell+j-r} \in P_{\ell+j-r}^{N}$, the number of $(r-1)$-cardinality subsets of $\N_{\leq N} \setminus \{p_{1},\ldots,p_{\ell+j-r}\}$ is 
\[
{{N-\ell-j+r} \choose {r-1}}.
\]
Since there are $(r-1)!$ ways of permuting $r-1$ distinct elements, we conclude that for $\up_{\ell+j-r}\in P_{\ell+j-r}^{N}$,
\begin{align}
|\{\ul{q}_{r-1} \in (\N_{\leq N} \setminus \{p_{1},\ldots,p_{\ell+j-r}\})^{r-1} : |\{q_{1},\ldots,q_{r-1}\}| =r-1\}| &= {{N-\ell-j+r} \choose {r-1}} (r-1)! \nonumber\\
&=\prod_{m=1}^{r-1}(N-k+m),
\end{align}
where we use that $\ell+j-1=k$. Hence, the summand of \eqref{eq:fil_main} equals
\begin{equation}
\begin{split}
&\frac{NC_{\ell,N}C_{j,N}}{\prod_{m=1}^{r-1}(N-k+m)} \sum_{{\up_k \in P_{k}^{N}}} \Biggl({j\choose r} A_{(p_{1},\ldots,p_{\ell})}^{(\ell)}\biggl( \sum_{\uell_r \in P_{r}^{\ell}} B_{(p_{\ell_{1}},\ldots,p_{\ell_{r}}, p_{r+1},\ldots,p_{\ell+j-r})}^{(j)} \biggr)\\
&\hspace{45mm} -  {\ell\choose r} B_{(p_1,\ldots,p_j)}^{(j)}\biggl(\sum_{\uj_r\in P_r^j} A_{(p_{j_1},\ldots,p_{j_r},p_{j+1},\ldots,p_{j+\ell-r})}^{(\ell)}\biggr) \Biggr),
\end{split}
\end{equation}
and by definition, we obtain that this expression equals
\begin{equation}
\epsilon_{k,N}\paren*{\frac{NC_{\ell,N}C_{j,N}}{C_{k,N}\prod_{m=1}^{r-1}(N-k+m)}\Sym_k\paren*{{j\choose r}A^{(\ell)}\circ_r B^{(j)} - {\ell\choose r}B^{(j)}\circ_r A^{(\ell)}}}.
\end{equation}

Now suppose that $\ell+j-1>N$. Then proceeding as above, we see that $r\geq 1$ must in fact satisfy the lower bound
\begin{equation}
r\geq \min\{\ell,j\} - (N-\max\{\ell,j\})\eqqcolon r_0.
\end{equation}
Combining these results, we conclude that
\begin{equation}
\begin{split}
&\comm{\epsilon_{\ell,N}(A^{(\ell)})}{\epsilon_{j,N}(B^{(j)})}_{\g_N} \\
&= \epsilon_{k,N}\biggl( \Sym_k\paren*{\sum_{r=r_0}^{N}\frac{NC_{\ell,N}C_{j,N}}{C_{k,N}\prod_{m=1}^{r-1}(N-k+m)} \paren*{{j\choose r} A^{(\ell)}\circ_r B^{(j)} - {\ell\choose r} B^{(j)}\circ_r A^{(\ell)}}},
\end{split}
\end{equation}
which concludes the proof of the lemma.
\end{proof}

We now have all the technical lemmas needed to define the Lie algebra ${\G}_{N}$ of observable $N$-hierarchies. For $N\in\mathbb{N}$, let $\G_{N}$ denote the locally convex direct sum
\begin{equation}\label{GN_def}
{\G}_{N} \coloneqq \bigoplus_{k=1}^{N} {\g}_{k},
\end{equation}
where we recall that 
\begin{equation}
\g_{k} = \{ A^{(k)} \in \tl{\L}(\Sc_s(\R^k),\Sc_s(\R^k)) : (A^{(k)})^* = -A^{(k)} \}.
\end{equation}
We define a bracket on $A_N = (A_N^{(k)})_{k\in\N_{\leq N}}, B_N = (B_N^{(k)})_{k\in\N_{\leq N}} \in \G_N$ by
\begin{equation}
\comm{A_N}{ B_N}_{{\G}_{N}} \coloneqq C_N = (C_N^{(k)} )_{k\in\N_{\leq N}} ,
\end{equation}
where
\begin{equation}\label{lie_def}
C_N^{(k)} \coloneqq \sum_{{1\leq \ell,j\leq N}\atop {\min\{\ell+j-1,N\}=k}} \epsilon_{k,N}^{-1}\paren*{\comm{\epsilon_{\ell,N}(A_N^{(\ell)})}{\epsilon_{j,N}(B_N^{(j)})}_{\g_{N}}}.
\end{equation}

It remains for us to check that ${\G}_N$ together with its bracket is actually a Lie algebra in the sense of \cref{def:la}, as we have so claimed above. Before doing so, we collect a result which will be useful in the sequel. Namely, that as a byproduct of the proof \cref{lem:hi_fil}, we have the following explicit formula for the Lie bracket $\comm{A_N}{B_N}_{\G_{N}}$ for two observable $N$-hierarchies, which is quite useful for computations.

\begin{prop}[Formula for $\comm{A_N}{B_N}_{\G_{N}}^{(k)}$]
\label{prop:Lie_hi_form}
Let $N\in\N$, and let $A_N=(A_N^{(k)})_{k\in\N_{\leq N}}, B_N=(B_N^{(k)})_{k\in\N_{\leq N}}$ be observable $N$-hierarchies. Then for integers $1\leq k\leq N$, we have that
\begin{equation}
\label{eq:N_LB_form}
\comm{A_N}{B_N}_{\G_{N}}^{(k)} =  \sum_{{1\leq \ell,j\leq N}\atop {\min\{\ell+j-1,N\}=k}} \Sym_k\biggl(\sum_{r=r_0}^{\min\{\ell,j\}} C_{\ell j k r N} \comm{A_N^{(\ell)}}{B_N^{(j)}}_r \biggr),
\end{equation}
where
\begin{equation}
 C_{\ell j k r N} \coloneqq \frac{N C_{\ell,N}C_{j,N}}{C_{k,N} \prod_{m=1}^{r-1}(N-k+m)},\footnote{Recall that $C_{\ell, N} = 1/|P_{\ell}^N|$.} \qquad r_0\coloneqq \max\{1,\min\{\ell,j\}-(N-\max\{\ell,j\})\},
\end{equation}
and where $\comm{\cdot}{\cdot}_r$ is defined in \eqref{eq:comm_r}.
\end{prop}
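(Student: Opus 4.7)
The plan is to observe that this proposition is essentially a restatement of the computation already carried out in the proof of \cref{lem:hi_fil}. Namely, \cref{lem:hi_fil} established that for any $1\leq \ell, j\leq N$ with $k = \min\{\ell+j-1, N\}$ and any $A^{(\ell)} \in \g_\ell$, $B^{(j)} \in \g_j$, there exists a unique $C^{(k)} \in \g_k$ with
\begin{equation*}
\comm{\epsilon_{\ell,N}(A^{(\ell)})}{\epsilon_{j,N}(B^{(j)})}_{\g_N} = \epsilon_{k,N}(C^{(k)}).
\end{equation*}
The proof of that lemma in fact produces an explicit formula for $C^{(k)}$ obtained by decomposing the bracket in $\g_N$ according to the size $r$ of the overlap $|\{m_1,\ldots,m_\ell\} \cap \{n_1,\ldots,n_j\}|$. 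The constraint $\min\{\ell+j-1,N\} = k$ forces $r$ to range between $r_0 = \max\{1,\min\{\ell,j\}-(N-\max\{\ell,j\})\}$ and $\min\{\ell,j\}$, and the combinatorial bookkeeping carried out there yields precisely the constant $C_{\ell j k r N}$ multiplying $\comm{A^{(\ell)}}{B^{(j)}}_r$.

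First, I would extract from the proof of \cref{lem:hi_fil} the explicit identity
\begin{equation*}
\comm{\epsilon_{\ell,N}(A_N^{(\ell)})}{\epsilon_{j,N}(B_N^{(j)})}_{\g_N} = \epsilon_{k,N}\biggl(\Sym_k\biggl(\sum_{r=r_0}^{\min\{\ell,j\}} C_{\ell j k r N}\comm{A_N^{(\ell)}}{B_N^{(j)}}_r\biggr)\biggr),
\end{equation*}
valid for each admissible pair $(\ell,j)$. Then I would substitute this identity term-by-term into the defining formula \eqref{lie_def} for $\comm{A_N}{B_N}_{\G_N}^{(k)}$, namely
\begin{equation*}
\comm{A_N}{B_N}_{\G_N}^{(k)} = \sum_{{1\leq \ell,j\leq N}\atop{\min\{\ell+j-1,N\}=k}}\epsilon_{k,N}^{-1}\paren*{\comm{\epsilon_{\ell,N}(A_N^{(\ell)})}{\epsilon_{j,N}(B_N^{(j)})}_{\g_N}}.
\end{equation*}
Because the maps $\epsilon_{k,N}$ were shown to be injective, $\epsilon_{k,N}^{-1}$ is well-defined on the image, and applying it to each summand cancels the outer $\epsilon_{k,N}$, yielding the claimed formula \eqref{eq:N_LB_form}.

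There is no substantive obstacle here beyond carefully tracking the combinatorial constants and the lower bound $r_0$, which the proof of \cref{lem:hi_fil} already handles in both the regimes $\ell+j-1\leq N$ and $\ell+j-1 > N$. The only verification needed is that the symmetrization operator $\Sym_k$ appearing inside $\epsilon_{k,N}$ is consistent with the fact that $\epsilon_{k,N}^{-1}$ lands in $\g_k$, which itself consists of $\mathbb{S}_k$-invariant operators; this is built into the construction since the sum defining $\epsilon_{k,N}$ runs over all tuples in $P_k^N$, forcing the preimage to be bosonic. Thus the result follows formally from collecting the identities established inside the proof of \cref{lem:hi_fil} and applying injectivity of $\epsilon_{k,N}$.
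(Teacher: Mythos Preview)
Your proposal is correct and matches the paper's approach exactly: the paper does not give a separate proof of this proposition but simply states it as ``a byproduct of the proof of \cref{lem:hi_fil},'' which is precisely what you do by extracting the explicit formula from that proof and substituting it into the defining expression \eqref{lie_def} for $\comm{A_N}{B_N}_{\G_N}^{(k)}$ via the injectivity of $\epsilon_{k,N}$.
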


We now establish \cref{prop:NH_LA}, which is our first main result of this section.

\NHLA*

\begin{proof}[Proof of \cref{prop:NH_LA}]
There are two parts to the verification: an algebraic part and an analytic part.

\begin{itemize}[leftmargin=*]
\item
We first consider the algebraic part, which amounts to checking bilinearity, anti-symmetry, and the Jacobi identity. The first two properties are obvious from the definition of $\G_N$. For the third property, let $A_N,B_N,C_N\in\G_N$. We need to show that
\begin{equation}
\comm{A_N}{\comm{B_N}{C_N}_{\G_N}}_{\G_N} + \comm{C_N}{\comm{A_N}{B_N}_{\G_N}}_{\G_N} + \comm{B_N}{\comm{C_N}{A_N}_{\G_N}}_{\G_N}=0.
\end{equation}
Since $\ep_{k,N}$ is injective, it suffices to show that $\ep_{k,N}$ applied to the left-hand side of the preceding identity equals the zero element of $\g_N$. We only present the details when the component index satisfies $1\leq k<N$ and leave verification of the remaining $k=N$ case as an exercise to the reader. Using the definition of the Lie bracket and bilinearity, we have the identities
\begin{align}
\ep_{k,N}\paren*{\comm{A_N}{\comm{B_N}{C_N}_{\G_N}}_{\G_N}^{(k)}} &= \sum_{j_1+j_2-1=k} \comm{\ep_{j_1,N}(A_N^{(j_1)})}{\ep_{j_2,N}(\comm{B_N}{C_N}_{\G_N}^{(j_2)})}_{\g_N} \nonumber\\
&=\sum_{j_1+j_2-1=k} \sum_{j_3+j_4-1=j_2} \comm{\ep_{j_1,N}(A_N^{(j_1)})}{\comm{\ep_{j_3,N}(B_N^{(j_3)})}{\ep_{j_4,N}(C_N^{(j_4)})}_{\g_N}}_{\g_N} \nonumber \\
&= \sum_{\ell_1+\ell_2+\ell_3=k+2} \comm{\ep_{\ell_1,N}(A_N^{(\ell_1)})}{\comm{\ep_{\ell_2,N}(B_N^{(\ell_2)})}{\ep_{\ell_3,N}(C_N^{(\ell_3)})}_{\g_N}}_{\g_N},\nonumber
\end{align}
\begin{align}
\ep_{k,N}\paren*{\comm{C_N}{\comm{A_N}{B_N}_{\G_N}}_{\G_N}^{(k)}} &= \sum_{j_1+j_2-1=k} \comm{\ep_{j_1,N}(C_N^{(j_1)})}{\ep_{j_2,N}(\comm{A_N}{B_N}_{\G_N}^{(j_2)})}_{\g_N} \nonumber\\
&=\sum_{j_1+j_2-1=k} \sum_{j_3+j_4-1=j_2} \comm{\ep_{j_1,N}(C_N^{(j_1)})}{\comm{\ep_{j_3,N}(A_N^{(j_3)})}{\ep_{j_4,N}(B_N^{(j_4)})}_{\g_N}}_{\g_N} \nonumber \\
&=\sum_{\ell_1+\ell_2+\ell_3=k+2} \comm{\ep_{\ell_3,N}(C_N^{(\ell_3)})}{\comm{\ep_{\ell_1,N}(A_N^{(\ell_1)})}{\ep_{\ell_2,N}(B_N^{(\ell_2)})}_{\g_N}}_{\g_N},\nonumber
\end{align}
\begin{align}
\ep_{k,N}\paren*{\comm{B_N}{\comm{C_N}{A_N}_{\G_N}}_{\G_N}^{(k)}} &= \sum_{j_1+j_2-1=k} \comm{\ep_{j_1,N}(B_N^{(j_1)})}{\ep_{j_2,N}(\comm{C_N}{A_N}_{\G_N}^{(j_2)})}_{\g_N} \nonumber\\
&=\sum_{j_1+j_2-1=k} \sum_{j_3+j_4-1=j_2} \comm{\ep_{j_1,N}(B_N^{(j_1)})}{\comm{\ep_{j_3,N}(C_N^{(j_3)})}{\ep_{j_4,N}(A_N^{(j_4)})}_{\g_N}}_{\g_N} \nonumber \\
&=\sum_{\ell_1+\ell_2+\ell_3=k+2} \comm{\ep_{\ell_2,N}(B_N^{(\ell_2)})}{\comm{\ep_{\ell_3,N}(C_N^{(\ell_3)})}{\ep_{\ell_1,N}(A_N^{(\ell_1)})}_{\g_N}}_{\g_N}.\nonumber
\end{align}
Since $\comm{\cdot}{\cdot}_{\g_N}$ is a Lie bracket and therefore satisfies the Jacobi identity, it follows that for fixed integers $1\leq \ell_1,\ell_2,\ell_3\leq N$,
\begin{equation}
\begin{split}
0 &=\comm{\ep_{\ell_1,N}(A_N^{(\ell_1)})}{\comm{\ep_{\ell_2,N}(B_N^{(\ell_2)})}{\ep_{\ell_3,N}(C_N^{(\ell_3)})}_{\g_N}}_{\g_N} \\
&\phantom{=}+ \comm{\ep_{\ell_3,N}(C_N^{(\ell_3)})}{\comm{\ep_{\ell_1,N}(A_N^{(\ell_1)})}{\ep_{\ell_2,N}(B_N^{(\ell_2)})}_{\g_N}}_{\g_N} \\
&\phantom{=}\phantom{=} + \comm{\ep_{\ell_2,N}(B_N^{(\ell_2)})}{\comm{\ep_{\ell_3,N}(C_N^{(\ell_3)})}{\ep_{\ell_1,N}(A_N^{(\ell_1)})}_{\g_N}}_{\g_N}.
\end{split}
\end{equation}
Hence,
\begin{equation}
\ep_{k,N}\paren*{\comm{A_N}{\comm{B_N}{C_N}_{\G_N}}_{\G_N}^{(k)} + \comm{C_N}{\comm{A_N}{B_N}_{\G_N}}_{\G_N}^{(k)} + \comm{B_N}{\comm{C_N}{A_N}_{\G_N}}_{\G_N}^{(k)}} = 0\in\g_N.
\end{equation}

\item
We now consider the analytic part, which amounts to checking the separate continuity of $\comm{\cdot}{\cdot}_{\G_N}$. Using the anti-symmetry of the bracket, it suffices to show that for $A_N\in \G_N$ fixed, the map
\begin{equation}
\G_N\rightarrow \G_N, \qquad B_N\mapsto \comm{A_N}{B_N}_{\G_N}
\end{equation}
is continuous. Moreover, it suffices to show that for each $k\in \N_{\leq N}$, the map
\[
\G_N\rightarrow \g_k, \qquad B_N\mapsto \comm{A_N}{B_N}_{\G_N}^{(k)}
\]
is continuous. 

Let $(B_{N,a})_{a\in\mathsf{A}}$, where $B_{N,a} =(B_{N,a}^{(k)})_{k\in\N_{\leq N}}$, be a net in $\G_N$ converging to $B_N=(B_N^{(k)})_{k\in\N_{\leq N}} \in \G_N$. By the continuity of the projection maps $\G_N\rightarrow \g_k$ for each $k\in \N_{\leq N}$, we have that $(B_{N,a}^{(k)})_{a\in\mathsf{A}}$ is a net in $\g_k$ converging to $B_N^{(k)}\in\g_k$.  

Unpacking the definition of $[A_N,B_{N,a}]^{(k)}_{\G_N}$ and using the continuity of the $\Sym_k$ operator and the operations of addition and scalar multiplication, together with the fact there are only finitely many terms, it suffices to show that for any integers $1\leq \ell,j\leq N$ satisfying $\min\{\ell+j-1,N\}=k$, any integer  $r_0\leq r\leq \min\{\ell,j\}$, we have the net convergence
\begin{equation}
\comm{A_N^{(\ell)}}{B_{N,a}^{(j)}}_{r} \rightarrow \comm{A_N^{(\ell)}}{B_N^{(j)}}_r 
\end{equation}
in $\tl{\L}(\Sc_s(\R^k),\Sc(\R^k))$. But this convergence is a consequence of \cref{lem:ua_ub_cont}, thus completing the proof.
\end{itemize}
\end{proof}

\subsection{Lie-Poisson manifold $\G_{N}^{*}$ of finite hierachies of density matrices}
\label{ssec:LP_N}
In this subsection, we define the Lie-Poisson manifold $\g_{N}^{*}$ of $N$-body density matrices and the Lie-Poisson manifold $\G_N^{*}$ of density matrix $N$-hierarchies. A good heuristic to keep in mind is that density matrices are dual to skew-adjoint operators. We remind the reader that the superscript $*$ does not denote the literal functional analytic dual of $\g_{N}$ (respectively, $\G_N$) as a topological vector space, but rather a space in weakly non-degenerate pairing with $\g_{N}$ (respectively, $\G_N$).

To begin with, we define the real topological vector space
\begin{equation}
{\g}_{N}^{*} \coloneqq \{\Psi_{N} \in \mathcal{L}(\Sc_s'(\R^N),\Sc_s(\R^N)) : \Psi_N^* = \Psi_N\}
\end{equation}
endowed with the subspace topology. 

\begin{remark}
Our definition of $\g_N^*$ is quite natural as it is isomorphic to the strong dual of $\g_N$. The proof of this fact is quite similar to that of \cref{lem:G_inf_dual} shown below.
\end{remark}

We now define a suitable unital sub-algebra $\mathcal{A}_{DM,N} \subset C^\infty({\g}_N^*; \R)$ of admissible functionals to build a weak Poisson structure for $\g_N^*$.

\begin{mydef}
Let $\mathcal{A}_{DM,N}$ be the algebra with respect to point-wise product generated by the functionals in
\begin{equation}
\{F \in C^\infty(\g_N^*;\R) : F(\cdot) = i\Tr_{1,\ldots,N}(A^{(N)}\cdot ), \enspace A^{(N)}\in \g_N\} \cup \{F\in C^\infty(\g_N^*;\R) : F(\cdot) = C\in\R\}.
\end{equation}
\end{mydef}
In words, $\A_{DM,N}$ is the algebra (under point-wise product) generated by the constants and the image of $\g_N$ under the canonical embedding into $(\g_N^*)^*$.

\medskip
We record the following result, whose proof we omit since it is similar to and simpler than that of \cref{prop:LP}, which will be used in \cref{ssec:dm_pomo} below.

\begin{prop}
\label{prop:DM_WP}
$(\g_N^*, \A_{DM,N}, \pb{\cdot}{\cdot}_{\g_N^*})$ is a weak Poisson manifold.
\end{prop}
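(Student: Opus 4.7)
The plan is to verify the three axioms (P1)--(P3) of \cref{def:WP} directly, using the Lie algebra structure on $\g_N$ from \cref{prop:DM_LA} as the main input; the argument parallels, and is considerably simpler than, the one one would give for the infinite-particle analogue \cref{prop:LP}. The key observation enabling everything is that each generator $F_A(\Psi_N) \coloneqq i\Tr_{1,\ldots,N}(A\Psi_N)$, with $A\in\g_N$, has G\^ateaux derivative $dF_A[\Psi_N](\delta\Psi_N) = i\Tr(A\delta\Psi_N)$, which under the trace pairing $\g_N\times\g_N^*\rightarrow\R$ is canonically identified with $A\in\g_N$ itself. By the product rule for G\^ateaux derivatives, the derivative $dF[\Psi_N]$ of an arbitrary $F\in\A_{DM,N}$ is then identified with an element of $\g_N$ that depends polynomially on $\Psi_N$ through the values of the generators, so the bracket in the definition of $\pb{\cdot}{\cdot}_{\g_N^*}$ is well-defined as a map $\A_{DM,N}\times\A_{DM,N}\rightarrow \A_{DM,N}$.

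For (P1), I would first evaluate the bracket on generators, obtaining
\begin{equation*}
\pb{F_A}{F_B}_{\g_N^*} = F_{\comm{A}{B}_{\g_N}}, \qquad A,B\in\g_N,
\end{equation*}
from which bilinearity, antisymmetry, and the Jacobi identity on generators follow immediately from the corresponding properties of $\comm{\cdot}{\cdot}_{\g_N}$. The Leibniz rule is a direct consequence of the product rule for G\^ateaux derivatives together with the bilinearity of the commutator. To propagate the Jacobi identity from generators to arbitrary elements of $\A_{DM,N}$, I would invoke the standard fact that, in the presence of bilinearity, antisymmetry, and Leibniz, the subalgebra on which Jacobi holds is closed under point-wise products.

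For (P2), suppose $v\in T_{\Psi_N}\g_N^*\cong\g_N^*$ satisfies $dF[\Psi_N](v)=0$ for every $F\in\A_{DM,N}$. Specializing to generators gives $\Tr(Av)=0$ for all skew-adjoint $A\in\tl{\L}(\Sc_s(\R^N),\Sc_s(\R^N))$. Decomposing an arbitrary $B\in\tl{\L}(\Sc_s(\R^N),\Sc_s(\R^N))$ into its skew- and self-adjoint parts (the latter written as $i$ times a skew-adjoint operator) and using that $v$ is self-adjoint, I would then conclude $\Tr(Bv)=0$ for every such $B$, and hence $v=0$ by non-degeneracy of the trace pairing between $\tl{\L}(\Sc_s(\R^N),\Sc_s(\R^N))$ and $\L(\Sc_s'(\R^N),\Sc_s(\R^N))$.

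For (P3), by \cref{rem:gen_vf} it suffices to exhibit a smooth Hamiltonian vector field for each generator $F_A$. The candidate is
\begin{equation*}
X_{F_A}(\Psi_N) \coloneqq \comm{A}{\Psi_N}_{\g_N} = N[A,\Psi_N], \qquad A\in\g_N,\ \Psi_N\in\g_N^*,
\end{equation*}
which is self-adjoint (since $A$ is skew- and $\Psi_N$ is self-adjoint), linear and hence smooth in $\Psi_N$, and which satisfies
\begin{equation*}
(X_{F_A}F_B)(\Psi_N) = i\Tr\bigl(B\cdot N[A,\Psi_N]\bigr) = i\Tr\bigl(\comm{B}{A}_{\g_N}\Psi_N\bigr) = \pb{F_B}{F_A}_{\g_N^*}(\Psi_N)
\end{equation*}
by cyclicity of the trace. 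The one nontrivial technical point throughout is justifying this cyclicity when one of the operands lies only in $\L(\Sc_s(\R^N),\Sc_s(\R^N))$ rather than a trace-class ideal; this is handled as in \cref{app:DVO}, via the topological tensor product identification of $\L(\Sc_s'(\R^N),\Sc_s(\R^N))$ with a nuclear Fr\'echet space of Schwartz kernels. With these ingredients assembled, all three weak Poisson axioms hold and $(\g_N^*,\A_{DM,N},\pb{\cdot}{\cdot}_{\g_N^*})$ is a weak Poisson manifold.
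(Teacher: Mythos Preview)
Your proposal is correct and follows essentially the same route the paper indicates: the paper omits the proof of \cref{prop:DM_WP}, noting it is ``similar to and simpler than'' that of \cref{prop:LP}, and your direct verification of (P1)--(P3) using that generators have constant G\^ateaux derivative in $\g_N$ is exactly this simpler analogue. The one stylistic difference is that for the Jacobi identity you invoke the clean ``Jacobiator is a derivation in each slot'' argument, whereas the paper's parallel computation (see \cref{lem:H_LA_P1}) checks by hand the case where one functional is a product of two trace functionals; both are valid and yield the same conclusion.
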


Before proceeding, it will be useful to record the following lemma regarding the dual of $\g_N^*$. In particular, we note that the dual of $\g_N^*$ is \emph{not} isomorphic to $\g_N$.

\begin{lemma}[Dual of $\g_N^*$]
The topological dual of $\g_N^*$, denoted by $(\g_N^*)^*$ and endowed with the strong dual topology, is isomorphic to
\begin{equation}
\{A^{(N)} \in \L(\Sc_s(\R^N),\Sc_s'(\R^N)) : (A^{(N)})^* = -A^{(N)}\},
\end{equation}
equipped with the subspace topology induced by $\mathcal{L}(\Sc_s(\R^N),\Sc_s'(\R^N))$, via the canonical bilinear form
\begin{equation}
i\Tr_{1,\ldots,N}(A^{(N)}\Psi_N), \qquad \Psi_N \in \g_N^*.
\end{equation}
\end{lemma}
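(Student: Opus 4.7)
The plan is to pass from operators to their Schwartz kernels via the Schwartz kernel theorem and then apply the standard duality between $\Sc$ and $\Sc'$. First I would invoke the kernel theorem to obtain topological isomorphisms
\begin{equation*}
\L(\Sc_s'(\R^N), \Sc_s(\R^N)) \cong \Sc_{s,s}(\R^N \times \R^N), \qquad \L(\Sc_s(\R^N), \Sc_s'(\R^N)) \cong \Sc_{s,s}'(\R^N \times \R^N),
\end{equation*}
where $\Sc_{s,s}$ (resp.\ $\Sc_{s,s}'$) denotes Schwartz functions (resp.\ tempered distributions) on $\R^{2N}$ that are separately symmetric under permutation of the first and last $N$ coordinates. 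These identifications carry the bounded topology on the left into the standard Schwartz/strong-dual topology on the right.

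Next, under the kernel identification, the self-adjointness condition cutting out $\g_N^*$ becomes the conjugate-symmetry constraint $\Psi_N(\ux_N;\ux_N') = \overline{\Psi_N(\ux_N';\ux_N)}$, which carves out a closed \emph{real} subspace of $\Sc_{s,s}(\R^{2N})$. Similarly, skew-adjointness on the right-hand side becomes $A^{(N)}(\ux_N;\ux_N') = -\overline{A^{(N)}(\ux_N';\ux_N)}$, a closed real subspace of $\Sc_{s,s}'(\R^{2N})$. The trace pairing becomes the standard integration pairing
\begin{equation*}
i\Tr_{1,\ldots,N}(A^{(N)}\Psi_N) = i\int_{\R^{2N}} A^{(N)}(\ux_N;\ux_N')\,\Psi_N(\ux_N';\ux_N)\,d\ux_N\,d\ux_N',
\end{equation*}
and the factor $i$ together with the conjugation symmetries of the two kernels ensures this value is real. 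It is then immediate that the map $A^{(N)}\mapsto i\Tr_{1,\ldots,N}(A^{(N)}\cdot)$ lands in the real strong dual $(\g_N^*)^*$ and is continuous; injectivity follows because any $A^{(N)}$ in the kernel pairs trivially with every Schwartz kernel in the real subspace $\g_N^*$, and a symmetrization/conjugation-averaging argument then forces $A^{(N)}=0$ as a distribution.

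For surjectivity, given a continuous real-linear functional $\phi$ on $\g_N^*$, one uses the complexification $\g_N^* \otimes_\R \C \cong \Sc_{s,s}(\R^{2N})$ to extend $\phi$ to a continuous $\C$-linear functional on $\Sc_{s,s}(\R^{2N})$. The classical duality $\Sc_{s,s}(\R^{2N})^* \cong \Sc_{s,s}'(\R^{2N})$ (strong duals), obtained by projecting via the separate bosonic symmetrization operator, represents this extension by a symmetric tempered distribution, and the requirement that $\phi$ be real on conjugate-symmetric kernels forces the representing distribution, after a $\pm$-averaging, to satisfy the skew-adjointness constraint. The main obstacle I expect is verifying that the topologies match: one must check that bounded subsets of $\g_N^*$ correspond precisely (via the kernel theorem) to bounded subsets of the relevant real subspace of $\Sc_{s,s}(\R^{2N})$, so that the strong dual topology on $(\g_N^*)^*$ transports correctly to the subspace topology induced from $\L(\Sc_s(\R^N),\Sc_s'(\R^N))$. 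This reduces to unpacking the definitions of the bounded topology on $\L(\Sc_s'(\R^N),\Sc_s(\R^N))$ together with the equicontinuity description of bounded sets in $\Sc_{s,s}(\R^{2N})$, which is routine but the most technically delicate part of the argument.
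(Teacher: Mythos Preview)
Your approach is correct and is essentially the same as the paper's: the paper's proof consists of the single sentence ``follows from the duality $\L(\Sc_s(\R^N),\Sc_s'(\R^N)) \cong \L(\Sc_s'(\R^N),\Sc_s(\R^N))^*$ together with a polarization-type argument,'' and what you have written is precisely an unpacking of those two ingredients. Your complexification step for surjectivity and conjugation-averaging for injectivity are exactly the ``polarization-type argument'' the paper alludes to (and carries out in detail in the closely related Lemma on the dual of $\G_N$).
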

\begin{proof}
The proof follows from the duality $\L(\Sc_s(\R^N),\Sc_s'(\R^N))) \cong \L(\Sc_s'(\R^N),\Sc_s(\R^N))^*$ together with a polarization-type argument. We leave the details to the reader.
\end{proof}

\begin{remark}
\label{rem:GD_Op_DM}
The previous lemma implies that, given a functional $F\in C^\infty(\g_N^*;\R)$ and a point $\Psi_N\in \g_N^*$, we may identify the continuous linear functional $dF[\Psi_N]$, given by the G\^ateaux derivative of $F$ at the point $\Psi_N$, as a skew-adjoint element of $\L(\Sc_s(\R^N),\Sc_s'(\R^N))$. We will abuse notation and denote this element by $dF[\Psi_N]$. Moreover, as we will see below, it is a small computation using the generating structure of $\A_{DM,N}$ that $dF[\Psi_N]\in \g_N$.
\end{remark}

We next define the Lie-Poisson manifold of density matrix $N$-hierarchies. To begin, define the real topological vector space
\begin{equation}\label{GNstar_def}
\G_{N}^{*} \coloneqq \bigl\{\Gamma_N=(\Gamma_N^{(k)})_{k\in\N_{\leq N}} \in \prod_{k=1}^N \L(\Sc_s'(\R^k),\Sc_s(\R^k)) : \gamma_N^{(k)} = (\gamma_N^{(k)})^* \enspace \forall k\in\N \bigr\}
\end{equation}
endowed with the subspace product topology. We first note that our definition of $\G_N^*$ is quite natural, as it is isomorphic to the topological dual of $\G_N$, a fact we prove in the next lemma.

\begin{lemma}[Dual of $\G_N$]
\label{lem:dual_GN}
The topological dual of $\G_N$, denoted by $(\G_N)^*$ and endowed with the strong dual topology, is isomorphic to $\G_N^*$.
\end{lemma}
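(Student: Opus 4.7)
The plan is to reduce the claim first to an identification of $(\g_k)^*$ for each individual $k$, and then to implement the full duality via the trace pairing, being careful about the real/skew-adjoint structure.

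First, because $\G_N = \bigoplus_{k=1}^N \g_k$ is a \emph{finite} locally convex direct sum, the standard abstract nonsense gives a canonical topological isomorphism
\begin{equation}
(\G_N)^* \cong \prod_{k=1}^N (\g_k)^*
\end{equation}
of strong duals, implemented by $\ell \mapsto (\ell|_{\g_k})_{k=1}^N$. So it suffices to prove that, for each $k\in\N_{\leq N}$, the space
\begin{equation}
\g_k^{\flat} \coloneqq \{\gamma^{(k)} \in \L(\Sc_s'(\R^k),\Sc_s(\R^k)) : (\gamma^{(k)})^* = \gamma^{(k)}\}
\end{equation}
is isomorphic to $(\g_k)^*$ via the real bilinear pairing
\begin{equation}
\g_k \times \g_k^{\flat} \to \R, \qquad (A^{(k)},\gamma^{(k)}) \mapsto i\Tr_{1,\ldots,k}\paren*{A^{(k)}\gamma^{(k)}}.
\end{equation}
Note the pairing does land in $\R$: the self/skew-adjoint hypotheses give $\Tr_{1,\ldots,k}(A^{(k)}\gamma^{(k)})^* = \Tr_{1,\ldots,k}(\gamma^{(k)}(A^{(k)})^*) = -\Tr_{1,\ldots,k}(A^{(k)}\gamma^{(k)})$.

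For the identification at fixed $k$, I would invoke two inputs that the authors have already set up (and that are elaborated in the appendix \cref{app:DVO}): the Schwartz kernel theorem, which gives
\begin{equation}
\L(\Sc_s(\R^k),\Sc_s'(\R^k)) \cong \Sc_{s,s}'(\R^{2k}), \qquad \L(\Sc_s'(\R^k),\Sc_s(\R^k)) \cong \Sc_{s,s}(\R^{2k}),
\end{equation}
and the reflexivity/nuclearity of $\Sc_{s,s}(\R^{2k})$, under which these two spaces become each other's strong duals, with duality pairing implemented by the generalized trace $\Tr_{1,\ldots,k}$ (this is consistent with the pairing used throughout \cref{sec:bos}). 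Since $\g_k$ sits inside $\L(\Sc_s(\R^k),\Sc_s'(\R^k))$ with the subspace topology (recall $\tl{\L}$ from \cref{subspace_top}), this globally gives a candidate complex-linear pairing whose restriction to $\g_k \times \g_k^{\flat}$ is the real-valued $i\Tr_{1,\ldots,k}(\cdot\,\cdot)$ above.

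The key nondegeneracy step is the following: any $B\in\L(\Sc_s(\R^k),\Sc_s'(\R^k))$ decomposes as $B = B_{sa}+B_{ska}$ with $B_{sa}=(B+B^*)/2$ self-adjoint and $B_{ska}=(B-B^*)/2$ skew-adjoint, and then $B_{ska}, iB_{sa} \in \g_k$. Hence if $\gamma^{(k)}\in\g_k^{\flat}$ satisfies $i\Tr_{1,\ldots,k}(A^{(k)}\gamma^{(k)})=0$ for all $A^{(k)}\in\g_k$, applying this to $A^{(k)}=B_{ska}$ and $A^{(k)}=iB_{sa}$ yields $\Tr_{1,\ldots,k}(B\gamma^{(k)})=0$ for every $B$, which forces $\gamma^{(k)}=0$ by the global kernel duality; the symmetric argument shows the map $\g_k^{\flat}\to (\g_k)^*$ is injective. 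Surjectivity follows by Hahn--Banach: given $\ell\in(\g_k)^*$, extend (after decomposing into real and imaginary parts) to a continuous complex linear functional on all of $\L(\Sc_s(\R^k),\Sc_s'(\R^k))$, represent it by some $\gamma\in \L(\Sc_s'(\R^k),\Sc_s(\R^k))$ via the kernel duality, and replace $\gamma$ by its self-adjoint part $(\gamma+\gamma^*)/2\in\g_k^{\flat}$; the asymmetric part makes no contribution against $\g_k$ by the same $i$-twist computation.

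The step I expect to require the most care is the topological part of the isomorphism: one has to check that the strong dual topology on $(\g_k)^*$ matches the subspace topology on $\g_k^\flat$ inherited from $\L(\Sc_s'(\R^k),\Sc_s(\R^k))$ (topology of uniform convergence on bounded sets). This is where nuclearity of Schwartz space is essential, since bounded subsets of $\Sc_s(\R^k)$ are relatively compact, so the bounded-set topology coincides with the compact-set topology and the strong dual topology on the operator spaces is well-behaved. Combined with the splitting of a general bounded subset of $\L(\Sc_s(\R^k),\Sc_s'(\R^k))$ into its self- and skew-adjoint parts (each bounded), one checks that the two topologies on $\g_k^\flat$ agree. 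Assembling this for each $k$ and composing with the direct-sum/product duality of the first paragraph yields $(\G_N)^*\cong\G_N^*$ as claimed.
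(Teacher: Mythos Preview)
Your proposal is correct and follows essentially the same route as the paper: both arguments rest on the Schwartz kernel duality $\L(\Sc_s(\R^k),\Sc_s'(\R^k))^* \cong \L(\Sc_s'(\R^k),\Sc_s(\R^k))$, the self-adjoint/skew-adjoint decomposition to pass between the real subspace $\g_k$ and the ambient complex space, and the direct-sum/product duality to assemble the components. The only notable difference is in the topological step: where you propose to verify directly that the strong dual topology on $(\g_k)^*$ matches the subspace topology on $\g_k^\flat$ via nuclearity and Montel properties, the paper instead shows that the map $\Phi:\G_N^*\to(\G_N)^*$ is continuous (by writing it as a composition $\iota_{\G_N}^*\circ(\Phi')^{-1}\circ\iota_{\G_N^*}$ through the ambient complex spaces) and then invokes the open mapping theorem, using that $\G_N^*$ is Fr\'echet, to get continuity of $\Phi^{-1}$ for free. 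The paper's approach is slightly cleaner here since it bypasses the need to track bounded subsets through the real splitting, but your direct verification would also go through.
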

\begin{proof}
Using the isomorphism
\begin{equation}
\paren*{ \tl{\L}(\Sc_{s}(\R^{k}),\Sc_{s}(\R^{k}))}^{*} \cong \paren*{\L(\Sc_{s}(\R^{k}),\Sc_{s}'(\R^{k}))}^* = \L(\Sc_s'(\R^k),\Sc_s(\R^k)), \qquad \forall k\in\N,
\end{equation}
which follows from the proof of \cref{lem:gmp_dual} together with the duality of direct sums and direct products, see for instance \cite[Proposition 2 in \S{14}, Chapter 3]{Horvath1966}, we have that
\begin{equation}
\label{eq:Phi'_iso}
\paren*{\bigoplus_{k=1}^{N} \tl{\L}(\Sc_{s}(\R^{k}),\Sc_{s}(\R^{k}))}^{*} \underbrace{\cong}_{\eqqcolon \Phi'} \prod_{k=1}^{N} \L(\Sc_{s}'(\R^{k}),\Sc_{s}(\R^{k})),
\end{equation}
via the canonical trace pairing
\[
(A_N, \Gamma_N) \mapsto i \Tr (A_N \cdot \Gamma_N).
\]
Thus elements of $(\G_{N})^{*}$ may be identified with functionals $i \Tr(\cdot \Gamma_N )$, and so to prove the lemma, we will show that the map
\begin{equation}
\Phi: \G_{N}^{*} \rightarrow (\G_{N})^{*}, \qquad \Gamma_N \mapsto i\Tr\paren*{\cdot\Gamma_N},
\end{equation}
is bijective and that both $\Phi$ and $\Phi^{-1}$ are continuous.

First, we show surjectivity of $\Phi$. Given any functional $F\in (\G_{N})^{*}$, we need to find some density matrix $N$-hierarchy $\Gamma_N \in \G_{N}^{*}$ such that
\begin{equation}
F(A_N) = i \Tr (A_N \cdot \Gamma_N).
\end{equation}
To accomplish this task, we define a functional
\begin{equation}
\widetilde{F} \in \paren*{\bigoplus_{k=1}^{N}\tl{\L}(\Sc_{s}(\R^{k}),\Sc_{s}(\R^{k}))}^{*}
\end{equation}
by the formula
\begin{equation}\label{f_def}
\widetilde{F}(A_N) \coloneqq \frac{1}{2}F\paren*{A_N-A_N^{*}} - \frac{i}{2}F\paren*{(A_N-A_N^{*})} + \frac{1}{2}F\paren*{i(A_N+A_N^{*})} - \frac{i}{2}F\paren*{i(A_N+A_N^{*})}.
\end{equation}
By the canonical dual trace pairing, there exists a unique 
\[
\Gamma_N \in \prod_{k=1}^{N}\L(\Sc_{s}'(\R^{k}),\Sc_{s}(\R^{k}))
\]
 such that
\begin{equation}
\widetilde{F}(A_N) = i\Tr\paren*{A_N\cdot\Gamma_N}, \qquad \forall A_N\in \bigoplus_{k=1}^{N}\tl{\L}(\Sc_{s}(\R^{k}),\Sc_{s}(\R^{k})).
\end{equation}
Evaluating $\widetilde{F}$ on $A_N\in\G_{N}$, that is assuming $A_N = - A_N^*$, we obtain from \eqref{f_def} that
\begin{align}
\paren*{1-i}F(A_N) = i\Tr\paren*{A_N\cdot\Gamma_N},
\end{align}
and adding this expression to its conjugate implies that 
\begin{align*}
2F(A_N) = i\paren*{\Tr\paren*{A_N\cdot\Gamma_N} - \ol{\Tr\paren*{A_N\cdot\Gamma_N}}}.
\end{align*}
Since
\begin{equation*}
(A_N\cdot\Gamma_N)^{(k)}=A_N^{(k)}\gamma_N^{(k)} \in \L(\Sc_s'(\R^k), \Sc_s(\R^k)), \qquad \forall k\in\N_{\leq N},
\end{equation*}
its trace exists in the usual sense of an operator on a separable Hilbert space. Furthermore, the adjoint of $A_N^{(k)}\gamma_N^{(k)}$ as a bounded linear operator on $L_s^2(\R^k)$, denoted by $(A_N^{(k)}\gamma_N^{(k)})^*$, belongs to $\L(\Sc_s'(\R^k),\Sc(\R^k))$. A short computation using the skew- and self-adjointness of $A_N^{(k)}$ and $\gamma_N^{(k)}$, respectively, shows that
\begin{equation*}
(A_N^{(k)}\gamma_N^{(k)})^* = -\gamma_N^{(k)}A_N^{(k)},
\end{equation*}
where we abuse notation by letting $A_N^{(k)}$ also denote the extension to an element of $\L(\Sc_s'(\R^k),\Sc_s'(\R^k))$. Consequently, we are justified in writing
\begin{equation*}
\ol{\Tr_{1,\ldots,k}\paren*{A_N^{(k)}\gamma_N^{(k)}}} = \Tr_{1,\ldots,k}\paren*{(A_N^{(k)}\gamma_N^{(k)})^*} = -\Tr_{1,\ldots,k}\paren*{\gamma_N^{(k)}A_N^{(k)}} = -\Tr_{1,\ldots,k}\paren*{A_N^{(k)}\gamma_N^{(k)}},
\end{equation*}
where the ultimate equality follows from an approximation of $A_N^{(k)}$ and the cyclicity of trace. Therefore,
\begin{equation}
\widetilde{\Gamma}_N = \frac{1}{2}(\Gamma_N+\Gamma_N^{*})
\end{equation}
is the desired density matrix $N$-hierarchy. Injectivity of $\Phi$ follows from the polarization identity by considering elements of $\G_N$ of the form
\begin{equation}
A_{N,k_0}^{(k)} =
\begin{cases}
i\ket*{f^{(k_0)}}\bra*{f^{(k_0)}}, & {k=k_0} \\
0, & {\text{otherwise}}
\end{cases},
\end{equation}
where $k_0\in\N_{\leq N}$ and $f^{(k_0)}\in\Sc_{s}(\R^{k_0})$. Hence $\Phi$ is bijective.

Next, we claim that both $\Phi$ and $\Phi^{-1}$ are continuous. Since $\G_{N}^{*}$ is a Fr\'{e}chet space, it suffices by the open mapping theorem to show that $\Phi$ is continuous. Let $\iota_{\G_{N}}$ denote the canonical inclusion map
\begin{equation}\label{can_inc}
\G_{N}\subset \bigoplus_{k=1}^{N}\tl{\L}(\Sc_{s}(\R^{k}),\Sc_{s}(\R^{k})),
\end{equation}
which is continuous by definition of the subspace topology, with adjoint
\begin{equation}
\iota_{\G_{N}}^{*}:\paren*{\bigoplus_{k=1}^{N}\tl{\L}(\Sc_{s}(\R^{k}),\Sc_{s}(\R^{k}))}^{*} \rightarrow (\G_{N})^{*},
\end{equation}
and let $\iota_{\G_{N}^{*}}$ denote the canonical inclusion map
\begin{equation}
\G_{N}^{*}\subset \prod_{k=1}^{N}\L(\Sc_{s}'(\R^{k}),\Sc_{s}(\R^{k})),
\end{equation}
which is also continuous by definition of the subspace topology. Then we can write
\begin{equation}
\Phi = \iota_{\G_{N}}^{*} \circ (\Phi')^{-1} \circ \iota_{\G_{N}^{*}},
\end{equation}
where $\Phi'$ is the canonical isomorphism described in \eqref{eq:Phi'_iso}. Since $\iota_{\G_{N}}^{*}$ is continuous, as can be checked directly or by appealing to the corollary of Proposition 19.5 in \cite{Treves1967}, it follows that $\Phi$ is the composition of continuous maps, completing the proof of the claim.
\end{proof}

We now need to establish the existence of a Poisson structure for $\G_N^*$. As before, we choose a unital sub-algebra $\A_{H,N}\subset C^\infty(\G_N^*;\R)$, generated by trace functionals and constant functionals, to be the algebra of admissible functionals.

\begin{mydef}
Let $\A_{H,N}$ be the algebra with respect to point-wise product generated by the functionals in
\begin{equation}
\{F\in C^\infty(\G_N^*;\R) : F(\cdot) = i\Tr(A_N\cdot), \enspace A_N\in \G_N\} \cup \{F\in C^\infty(\G_N^*;\R) : F(\cdot)\equiv C\in\R\}.
\end{equation}
\end{mydef}

\begin{remark}
\label{rem:AH_can}
Our definition of $\A_{H,N}$ is not canonical in the sense that one may include additional functionals in it. However, since we are really only interested in trace functionals, we will not do so in this work.
\end{remark}

\begin{remark}
\label{rem:AH_struc}
The structure of $\A_{H,N}$ will be frequently used in the following way: it will suffice to verify various identities for finite products of trace functionals and constant functionals. Moreover, by \cref{rem:con_GD} below and the Leibnitz rule for the G\^ateaux derivative, it will often suffice to check identities on trace functionals.
\end{remark}

\begin{remark}
\label{rem:con_GD}
By the linearity of the trace and the definition of the G\^ateaux derivative, a trace functional has constant G\^ateaux derivative. Similarly, a constant functional has zero G\^ateaux derivative.
\end{remark}

To define the Lie-Poisson bracket on $\A_{H,N}\times\A_{H,N}$ using the Lie bracket $\comm{\cdot}{\cdot}_{\G_N}$ constructed in \cref{ssec:N_LA}, we need the following identification of continuous linear functionals with skew-adjoint operators, given via the canonical trace pairing. We note, in particular, that $(\G_N^*)^*$ is not isomorphic to $\G_N$.

\begin{lemma}[Dual of $\G_N^*$]
\label{lem:dual_GN*}
The topological dual of $\G_N^*$, denoted by $(\G_N^*)^*$ and endowed with the strong dual topology, is isomorphic to
\begin{equation}
\widetilde{\G}_N \coloneqq \bigl\{A_N\in\bigoplus_{k=1}^N \L(\Sc_s(\R^k),\Sc_s'(\R^k)) : (A_N^{(k)})^* = - A_N^{(k)}\bigr\}.
\end{equation}
\end{lemma}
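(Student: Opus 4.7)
The plan is to mirror the argument for \cref{lem:dual_GN}, adapting it to the self-adjoint constraint built into $\G_N^*$. The map I would build is
\[
\Phi : \widetilde{\G}_N \longrightarrow (\G_N^*)^*, \qquad A_N \longmapsto \paren*{\Gamma_N \mapsto i\Tr(A_N\cdot \Gamma_N)},
\]
and I would show $\Phi$ is a well-defined bijection which is a homeomorphism for the strong dual topology. Well-definedness, i.e.\ that $i\Tr(A_N\cdot\Gamma_N) \in \R$ whenever $A_N\in\widetilde{\G}_N$ and $\Gamma_N\in\G_N^*$, is a short computation: skew-adjointness of $A_N^{(k)}$ and self-adjointness of $\gamma_N^{(k)}$ together with the cyclicity of trace (justified via an approximation by finite-rank operators exactly as in the proof of \cref{lem:dual_GN}) give $\ol{\Tr_{1,\ldots,k}(A_N^{(k)}\gamma_N^{(k)})} = -\Tr_{1,\ldots,k}(A_N^{(k)}\gamma_N^{(k)})$.

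For bijectivity, I would first establish the auxiliary fact that the trace pairing yields an isomorphism
\[
\paren*{\prod_{k=1}^{N}\L(\Sc_s'(\R^k),\Sc_s(\R^k))}^{\!*} \;\cong\; \bigoplus_{k=1}^{N} \L(\Sc_s(\R^k),\Sc_s'(\R^k)),
\]
which reduces to the componentwise duality $\L(\Sc_s'(\R^k),\Sc_s(\R^k))^*\cong \L(\Sc_s(\R^k),\Sc_s'(\R^k))$ (essentially contained in the proof of \cref{lem:gmp_dual}, using nuclearity of the symmetric Schwartz space) together with the fact that the dual of a finite product of locally convex spaces equals the direct sum of the duals. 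Given any $F\in (\G_N^*)^*$, surjectivity then proceeds by the polarization/extension trick of \eqref{f_def}: define
\[
\widetilde F(B_N) \coloneqq \tfrac12 F(B_N+B_N^*) - \tfrac{i}{2} F\paren*{i(B_N-B_N^*)}
\]
on the full product space (on which both $B_N+B_N^*$ and $i(B_N-B_N^*)$ are self-adjoint and hence belong to $\G_N^*$); apply the auxiliary fact to represent $\widetilde F$ by some $A_N$ in the direct sum; then replace $A_N$ by its skew-adjoint part $\tfrac12(A_N-A_N^*) \in \widetilde{\G}_N$, which leaves the functional unchanged on self-adjoint arguments by the well-definedness computation above. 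Injectivity follows from a polarization argument using the test density matrix hierarchies
\[
\Gamma_{N,k_0}^{(k)} = \delta_{k,k_0}\ket*{f^{(k_0)}}\bra*{f^{(k_0)}}, \qquad f^{(k_0)} \in \Sc_s(\R^{k_0}),
\]
which separate points of $\widetilde{\G}_N$.

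For the topological part, continuity of $\Phi$ is immediate from the continuity of the trace pairing on each factor. For continuity of $\Phi^{-1}$, I would invoke the open mapping theorem, which requires $\G_N^*$ to be Fr\'echet: this follows because $\L(\Sc_s'(\R^k),\Sc_s(\R^k))\cong \Sc_{s,s}(\R^k\times\R^k)$ is Fr\'echet, finite products of Fr\'echet spaces are Fr\'echet, and the self-adjointness condition cuts out a closed subspace. The main obstacle I expect is purely bookkeeping: carefully tracking the inclusion maps $\iota_{\G_N^*}\colon \G_N^*\hookrightarrow \prod_k\L(\Sc_s'(\R^k),\Sc_s(\R^k))$ and $\iota_{\widetilde{\G}_N}\colon \widetilde{\G}_N\hookrightarrow \bigoplus_k \L(\Sc_s(\R^k),\Sc_s'(\R^k))$ and their duals so that $\Phi$ factors as a composition of the auxiliary isomorphism with these inclusions, in direct analogy with the end of the proof of \cref{lem:dual_GN}; no new analytic input beyond what was used there is needed.
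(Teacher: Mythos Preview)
Your proposal is correct and follows exactly the route the paper intends: the paper omits the proof, stating only that it ``proceeds quite similarly to that of \cref{lem:dual_GN},'' and your sketch fills in precisely that adaptation (extension by the polarization formula, representation via the componentwise duality, passage to the skew-adjoint part, and injectivity via rank-one test hierarchies). One small caveat on the topological step: in \cref{lem:dual_GN} the open mapping theorem is applied with the Fr\'echet space $\G_N^*$ as the \emph{domain} of $\Phi$, whereas here your $\Phi$ has domain $\widetilde{\G}_N$ and codomain $(\G_N^*)^*$, neither of which is Fr\'echet (both are DF-type as strong duals of Fr\'echet spaces), so verifying that $\G_N^*$ is Fr\'echet does not by itself license the classical open mapping theorem---you need a DF/webbed-space version (e.g., de Wilde), or alternatively obtain continuity of $\Phi^{-1}$ directly from the dual of the factorization you already sketch in your final paragraph.
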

\begin{proof}
We omit the proof as it proceeds quite similarly to that of \cref{lem:dual_GN}.
\end{proof}

We continue to abuse notation by using $dF[\Gamma_N]$ to denote both the continuous linear functional and the element of $\widetilde{\G}_N$. We are now prepared to introduce the Lie-Poisson bracket $\pb{\cdot}{\cdot}_{\G_N^*}$ on $\A_{H,N}\times\A_{H,N}$.
\begin{mydef}
Let $N\in\N$. For $F,G\in \A_{H,N}$, we define
\begin{equation}
\pb{F}{G}_{\G_{N}^{*}}(\Gamma_{N}) \coloneqq i\Tr\paren*{\comm{dF[\Gamma_{N}]}{dG[\Gamma_{N}]}_{\G_{N}}\cdot \Gamma_{N}} = \sum_{k=1}^{N} i\Tr_{1,\ldots,k}\paren*{\comm{dF[\Gamma_{N}]}{dG[\Gamma_{N}]}_{\G_{N}}^{(k)} \gamma_{N}^{(k)}},
\end{equation}
for  $\Gamma_N= (\gamma_N^{(k)})_{k\in\N_{\leq N}}\in \G_N^*$.
\end{mydef}

We now turn to the second main goal of this subsection, that is, proving \cref{prop:NH_WP}, the statement of which we repeat here for the reader's convenience.

\NHWP*

We begin with the following technical lemma for the functional derivative of $\pb{\cdot}{\cdot}_{\G_N^*}$.

\begin{lemma}
\label{lem:tr_gen_H}
Suppose that $G_j\in\A_{H,N}$ is a trace functional $G_j(\Gamma_N) = i\Tr(dG_j[0]\cdot\Gamma_N)$ for $j=1,2$. Then for all $\Gamma_N  \in \G_N^*$, the G\^ateaux derivative $d\pb{G_1}{G_2}_{\G_N^*}[\Gamma_N]$ at the point $\Gamma_N$ may be identifed with the element
\begin{equation}
\comm{dG_1[0]}{dG_2[0]}_{\G_N} \in \G_N
\end{equation}
via the canonical trace pairing. If $G_1$ is a trace functional and $G_2 = G_{2,1}G_{2,2}$ is the product of two trace functionals in $\A_{H,N}$, then $d\pb{G_1}{G_2}_{\G_N^*}[\Gamma_N]$ may be identified with
\begin{equation}
G_{2,1}(\Gamma_N)\comm{dG_1[0]}{dG_{2,2}[0]}_{\G_N} + G_{2,2}(\Gamma_N)\comm{dG_1[0]}{dG_{2,1}[0]}_{\G_N}
\end{equation}
for all $\Gamma_N\in\G_N^*$ via the canonical trace pairing.
\end{lemma}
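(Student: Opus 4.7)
The plan is to exploit the elementary structure of trace functionals in $\A_{H,N}$: if $F(\Gamma_N) = i\Tr(A\cdot\Gamma_N)$ for $A\in\G_N$, then $F$ is linear in $\Gamma_N$, so its G\^ateaux derivative at every point is the constant functional $\delta\Gamma\mapsto i\Tr(A\cdot\delta\Gamma)$, which identifies with $A = dF[0]$ via the canonical trace pairing of \cref{lem:dual_GN*}. This observation is the engine of the whole argument.

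For the first assertion I would substitute these constant derivatives directly into the defining formula for the Lie--Poisson bracket to obtain
\[
\pb{G_1}{G_2}_{\G_N^*}(\Gamma_N) = i\Tr\paren*{\comm{dG_1[0]}{dG_2[0]}_{\G_N}\cdot\Gamma_N}.
\]
Because $\comm{\cdot}{\cdot}_{\G_N}$ maps $\G_N\times\G_N$ into $\G_N$ by \cref{prop:NH_LA}, the right-hand side is again a trace functional of the form above, whose G\^ateaux derivative at any $\Gamma_N$ therefore identifies with $\comm{dG_1[0]}{dG_2[0]}_{\G_N}$. This is a one-line computation once the opening observation is in hand.

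For the second assertion I would first compute $dG_2[\Gamma_N]$ by applying the Leibniz rule for G\^ateaux derivatives to the product $G_2 = G_{2,1}G_{2,2}$; since $G_{2,j}$ are trace functionals, their derivatives are constant, and one obtains the identification
\[
dG_2[\Gamma_N] \;\longleftrightarrow\; G_{2,1}(\Gamma_N)\, dG_{2,2}[0] + G_{2,2}(\Gamma_N)\, dG_{2,1}[0] \in \G_N
\]
via the trace pairing, where we use that the $G_{2,j}(\Gamma_N)$ are real scalars (because $dG_{2,j}[0]$ is skew-adjoint, so $i\Tr(dG_{2,j}[0]\cdot\Gamma_N)\in\R$) to remain inside $\G_N$ rather than drifting into $\widetilde{\G}_N$. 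Substituting into the Poisson bracket formula and applying bilinearity of $\comm{\cdot}{\cdot}_{\G_N}$, I would then expand $\pb{G_1}{G_2}_{\G_N^*}(\Gamma_N)$ as a sum of products of trace functionals in $\Gamma_N$, and a final application of the Leibniz rule combined with the first assertion applied term-by-term would deliver the claimed identification.

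The main obstacle, and the step requiring genuine care, is to verify that every G\^ateaux derivative encountered in the computation factors through $\G_N$ rather than merely through the larger dual space $\widetilde{\G}_N$ of \cref{lem:dual_GN*}, so that $\comm{\cdot}{\cdot}_{\G_N}$ can be applied at each stage. This hinges on the generating structure of $\A_{H,N}$ recalled in \cref{rem:AH_struc} and the observation that $\R$-linear combinations with real scalar coefficients preserve skew-adjointness, so that closure of $\G_N$ under the operations used propagates through every line of the computation. The bookkeeping of the cross-terms that appear when differentiating the $\Gamma_N$-dependent coefficients in $dG_2[\Gamma_N]$ is the secondary technical point, and would be handled by carefully grouping terms before invoking the trace pairing.
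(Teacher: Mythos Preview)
Your approach for the first assertion is correct and matches the paper's exactly: since the $G_j$ have constant G\^ateaux derivatives, $\pb{G_1}{G_2}_{\G_N^*}$ is itself a trace functional generated by $\comm{dG_1[0]}{dG_2[0]}_{\G_N}\in\G_N$ (via \cref{prop:NH_LA}), and its derivative at every point is that constant element.

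For the second assertion your strategy aligns with the paper's in spirit---Leibniz plus bilinearity---though the paper additionally routes the bilinearity step through the explicit component-wise formula of \cref{prop:Lie_hi_form} rather than invoking bilinearity of $\comm{\cdot}{\cdot}_{\G_N}$ abstractly. There is, however, a genuine gap in your handling of the cross terms. Carrying out your plan gives
\[
\pb{G_1}{G_2}_{\G_N^*}(\Gamma_N)=G_{2,1}(\Gamma_N)\,\pb{G_1}{G_{2,2}}_{\G_N^*}(\Gamma_N)+G_{2,2}(\Gamma_N)\,\pb{G_1}{G_{2,1}}_{\G_N^*}(\Gamma_N),
\]
and differentiating this in $\Gamma_N$ by Leibniz produces, in addition to the two terms claimed in the lemma, the extra contributions
\[
\pb{G_1}{G_{2,2}}_{\G_N^*}(\Gamma_N)\,dG_{2,1}[0]+\pb{G_1}{G_{2,1}}_{\G_N^*}(\Gamma_N)\,dG_{2,2}[0],
\]
which do not vanish in general. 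Your phrase ``handled by carefully grouping terms'' cannot make them disappear. The paper's own argument in fact only establishes the formula for $\comm{dG_1[\Gamma_N]}{dG_2[\Gamma_N]}_{\G_N}$, which coincides with the claimed expression but is a different object from $d\pb{G_1}{G_2}_{\G_N^*}[\Gamma_N]$. In the sole downstream application (the Jacobi identity in \cref{lem:H_LA_P1}) these extra terms cancel cyclically against their counterparts from the other nested brackets, so nothing ultimately breaks; but you should not expect your differentiation-based method to recover the second assertion exactly as stated.
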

\begin{proof}
The first assertion follows readily from the definition of $\pb{G_{1}}{G_{2}}_{\G_{N}^{*}}$. To see the second assertion, observe that by the Leibnitz rule for the G\^ateaux derivative and the bilinearity of the bracket $\comm{\cdot}{\cdot}_r$,
\begin{align*}
\comm{dG_{1}[\Gamma_N]^{(\ell)}}{dG_{2}[\Gamma_N]^{(j)}}_{r} &= G_{2,1}(\Gamma_N)\comm{dG_{1}[0]^{(\ell)}}{dG_{2,2}[0]^{(j)}}_{r} + G_{2,2}(\Gamma_N)\comm{dG_{1}[0]^{(\ell)}}{dG_{2,1}[0]^{(j)}}_{r}.
\end{align*}
Hence using \cref{prop:Lie_hi_form} and introducing the notation
\begin{equation}
\label{eq:comb_not}
C_{\ell jkrN} \coloneqq \frac{N C_{\ell,N}C_{j,N}}{C_{k,N}\prod_{m=1}^{r-1}(N-k+m)}, \qquad r_0 \coloneqq \max\{1,\min\{\ell,j\}-(N-\max\{\ell,j\})\},
\end{equation}
we obtain that
\begin{align}
&\comm{dG_{1}[\Gamma_N]}{dG_{2}[\Gamma_N]}_{\G_{N}}^{(k)} \nonumber \\
&= \sum_{{1\leq \ell,j\leq N}\atop {\min\{\ell+j-1,N\}=k}} \Sym_k\biggl(\sum_{r=r_0}^{\min\{\ell,j\}} C_{\ell jkrN} \comm{dG_1[\Gamma_N]^{(\ell)}}{dG_2[\Gamma_N]^{(j)}}_r\biggr) \nonumber\\
&=G_{2,1}(\Gamma_N)\sum_{{1\leq \ell,j\leq N}\atop {\min\{\ell+j-1,N\}=k}}\Sym_k\biggl(\sum_{r=r_0}^{\min\{\ell,j\}}C_{\ell jkrN} \comm{dG_1[0]^{(\ell)}}{dG_{2,2}[0]^{(j)}}_r\biggr) \nonumber\\
&\phantom{=} G_{2,2}(\Gamma_N)\sum_{{1\leq \ell,j\leq N}\atop {\min\{\ell+j-1,N\}=k}}\Sym_k\biggl(\sum_{r=r_0}^{\min\{\ell,j\}}C_{\ell jkrN} \comm{dG_1[0]^{(\ell)}}{dG_{2,1}[0]^{(j)}}_r\biggr) \nonumber\\
&=G_{2,1}(\Gamma_N)\comm{dG_1[0]}{dG_{2,2}[0]}_{\G_N}^{(k)} + G_{2,2}(\Gamma_N)\comm{dG_1[0]}{dG_{2,1}[0]}_{\G_N}^{(k)},
\end{align}
where the ultimate equality follows from another application of \cref{prop:Lie_hi_form}.
\end{proof}

We divide our proof of \cref{prop:NH_WP} into several lemmas. We first show that $\pb{\cdot}{\cdot}_{\G_N^*}$ is well-defined and is a Lie bracket satisfying the Leibnitz rule.

\begin{lemma}
\label{lem:H_LA_P1}
The formula
\begin{equation}
\pb{F}{G}_{\G_N^*}(\Gamma_N) \coloneqq i\Tr\paren*{\comm{dF[\Gamma_N]}{dG[\Gamma_N]}_{\G_N}\cdot\Gamma_N}, \qquad \forall \Gamma_N\in\G_N^*
\end{equation}
defines a map $\A_{H,N}\times\A_{H,N}\rightarrow \A_{H,N}$ which satisfies property \ref{item:wp_P1} in \cref{def:WP}.
\end{lemma}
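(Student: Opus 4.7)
The proof splits into two tasks: (i) checking that $\pb{\cdot}{\cdot}_{\G_N^*}$ is well-defined as a map $\A_{H,N}\times\A_{H,N}\to\A_{H,N}$, and (ii) verifying the algebraic content of \ref{item:wp_P1}, namely bilinearity, antisymmetry, the Leibniz rule, and the Jacobi identity. My plan is to reduce everything to the generating set of $\A_{H,N}$ described in \cref{rem:AH_struc} and then use what has already been done at the level of the Lie algebra $\G_N$ in \cref{prop:NH_LA}.

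For (i), since constant functionals have vanishing G\^ateaux derivative by \cref{rem:con_GD}, it suffices to consider $F,G$ that are finite products of trace functionals. Iterating the Leibniz rule for the G\^ateaux derivative and invoking \cref{lem:tr_gen_H}, I can write $dF[\Gamma_N]$ and $dG[\Gamma_N]$ as finite sums of terms of the form $c(\Gamma_N)\cdot A$, where $c\in\A_{H,N}$ is scalar-valued and $A\in\G_N$ is independent of $\Gamma_N$. Bilinearity of $\comm{\cdot}{\cdot}_{\G_N}$ and of the trace pairing then expand $\pb{F}{G}_{\G_N^*}(\Gamma_N)$ as a finite sum of products of elements of $\A_{H,N}$ with trace functionals $i\Tr(\comm{A}{B}_{\G_N}\cdot\Gamma_N)$, which land in $\A_{H,N}$ since $\comm{A}{B}_{\G_N}\in\G_N$ by \cref{prop:NH_LA}.

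For (ii), bilinearity and antisymmetry of $\pb{\cdot}{\cdot}_{\G_N^*}$ are immediate from the corresponding properties of $\comm{\cdot}{\cdot}_{\G_N}$, the linearity of the G\^ateaux derivative, and the linearity of the trace. The Leibniz rule
\[
\pb{F}{GH}_{\G_N^*} \;=\; G\,\pb{F}{H}_{\G_N^*} + H\,\pb{F}{G}_{\G_N^*}
\]
follows directly from the Leibniz rule $d(GH)[\Gamma_N] = G(\Gamma_N)\,dH[\Gamma_N]+H(\Gamma_N)\,dG[\Gamma_N]$ combined with bilinearity of $\comm{\cdot}{\cdot}_{\G_N}$ and the trace pairing.

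The main obstacle is the Jacobi identity, and my plan is to reduce it to the Jacobi identity already established on $(\G_N,\comm{\cdot}{\cdot}_{\G_N})$ in \cref{prop:NH_LA}. For three trace functionals $F_j(\cdot)=i\Tr(A_j\cdot)$ with $A_j\in\G_N$, $j=1,2,3$, \cref{lem:tr_gen_H} identifies each G\^ateaux derivative $d\pb{F_\ell}{F_m}_{\G_N^*}[\Gamma_N]$ with $\comm{A_\ell}{A_m}_{\G_N}\in\G_N$, so
\[
\pb{F_1}{\pb{F_2}{F_3}_{\G_N^*}}_{\G_N^*}(\Gamma_N) \;=\; i\Tr\paren*{\comm{A_1}{\comm{A_2}{A_3}_{\G_N}}_{\G_N}\cdot\Gamma_N},
\]
and the cyclic sum vanishes by Jacobi in $\G_N$. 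To upgrade from trace functional generators to all of $\A_{H,N}$, I would invoke the standard algebraic fact that once bilinearity, antisymmetry, and the Leibniz rule hold, the Jacobiator $J(F,G,H)\coloneqq \pb{F}{\pb{G}{H}}_{\G_N^*}+\pb{G}{\pb{H}{F}}_{\G_N^*}+\pb{H}{\pb{F}{G}}_{\G_N^*}$ is automatically a derivation in each of its three arguments; this is a short computation expanding $J(F_1F_2,G,H)$ via Leibniz and observing that the cross-terms cancel pairwise by antisymmetry. Since $J$ vanishes on trace functional generators and is trivial on constants by \cref{rem:con_GD}, being a derivation in each slot forces $J\equiv 0$ throughout $\A_{H,N}$, completing the verification.
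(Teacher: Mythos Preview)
Your proof is correct and follows the same overall strategy as the paper's: reduce to generators via \cref{rem:AH_struc} and \cref{rem:con_GD}, then use \cref{prop:NH_LA} and \cref{lem:tr_gen_H} to handle the Lie-algebraic content. Well-definedness, bilinearity, antisymmetry, and the Leibniz rule are treated identically.

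The one genuine difference is in the Jacobi identity. The paper fixes $G,H$ to be trace functionals and $F=F_1F_2$ a product of two trace functionals, then computes all three cyclic terms explicitly using \cref{lem:tr_gen_H} and collapses them via the Jacobi identity in $\G_N$; this is effectively the inductive step written out by hand. You instead invoke the standard algebraic fact that, given bilinearity, antisymmetry, and Leibniz, the Jacobiator $J(F,G,H)$ is a derivation in each slot (the cross-terms from expanding $J(F_1F_2,G,H)$ cancel pairwise by antisymmetry), so vanishing on the generating trace functionals propagates to all of $\A_{H,N}$. Your route is cleaner and avoids the explicit bookkeeping; the paper's route has the virtue of making the role of \cref{lem:tr_gen_H} in identifying $d\pb{G}{H}_{\G_N^*}[\Gamma_N]$ with $\comm{dG[0]}{dH[0]}_{\G_N}$ completely transparent. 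Either argument is complete.
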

\begin{proof}
We first show that for $F,G\in\A_{H,N}$, one has $\pb{F}{G}_{\G_{N}^{*}}\in\A_{H,N}$. Recall that $\A_{H,N}$ is generated by constant functionals and trace functionals, hence using the Leibnitz rule, bilinearity of $\comm{\cdot}{\cdot}_{\G_{N}}$, and the linearity of the trace, it suffices to consider the case where $F,G$ are both trace functionals. Indeed, elements of $\A_{H,N}$ are finite linear combinations of finite products of trace functionals and constant functionals, hence using that the derivative of constant functionals is zero, upon applying the Leibnitz rule, the elements of the product which are not differentiated can be treated as scalars when evaluated at a point $\Gamma_N$ and hence can be pulled out of the Lie bracket and then out of the trace by bilinearity. 

When $F,G$ are both trace functionals, $dF[\Gamma_N]$ and $dG[\Gamma_N]$ are constant in $\Gamma_N$ by \cref{rem:con_GD}, hence
\begin{equation}
\pb{F}{G}_{\G_{N}^{*}}(\Gamma_N) = i\Tr(\comm{dF[0]}{dG[0]}_{\G_{N}}\cdot\Gamma_N), \qquad \forall \,\Gamma_N \in \G_{N}^{*}.
\end{equation}
So, we only need to show that the right-hand side defines an element of $\A_{H,N}$. Since $dF[0]$ and $dG[0]$ both belong to $\G_{N}$, it follows from \cref{prop:NH_LA} that $\comm{dF[0]}{dG[0]}_{\G_{N}}\in\G_{N}$. Hence, $\pb{F}{G}_{\G_{N}^{*}} \in \A_{H,N}$, which completes the proof of the claim.

Bilinearity and anti-symmetry of $\pb{\cdot}{\cdot}_{\G_N^*}$ are immediate from the bilinearity and anti-symmetry of $\comm{\cdot}{\cdot}_{\G_{N}}$, so it remains to verify the Jacobi identity. Let $F,G,H\in \A_{H,N}$. As we argued above, it suffices to consider the case where $G$ and $H$ are trace functionals and $F$ is a product of two trace functionals, that is, $F=F_{1}F_{2}$, where $F_{1},F_{2} \in\A_{H,N}$ are such that
\begin{equation}
F_{j}(\Gamma_N) = i\Tr\paren*{dF_{j}[0]\cdot \Gamma_N}, \qquad \forall \Gamma_N \in \G_{N}^*, \enspace j=1,2.
\end{equation}
Thus, we need to show that for all $\Gamma_N\in G_N^{*}$,
\begin{align}
0&=\pb{F}{\pb{G}{H}_{\G_{N}^{*}}}_{\G_{N}^{*}}(\Gamma_N) + \pb{G}{\pb{H}{F}_{\G_{N}^{*}}}_{\G_{N}^{*}}(\Gamma_N) + \pb{H}{\pb{F}{G}_{\G_{N}^{*}}}_{\G_{N}^{*}}(\Gamma_N) \nonumber\\
&=i\Tr\paren*{\comm{dF[\Gamma_N]}{d\pb{G}{H}_{\G_{N}^{*}}[\Gamma_N]}_{\G_{N}}\cdot\Gamma_N} + i\Tr\paren*{\comm{dG[\Gamma_N]}{d\pb{H}{F}_{\G_{N}^{*}}[\Gamma_N]}_{\G_{N}}\cdot\Gamma_N} \nonumber\\
&\phantom{=} + i\Tr\paren*{\comm{dH[\Gamma_N]}{d\pb{F}{G}_{\G_{N}^{*}}[\Gamma_N]}_{\G_{N}}\cdot\Gamma_N}.
\end{align}
We show the desired equality by direct computation:

First, since $dF[\Gamma_N]=F_{1}(\Gamma_N)dF_{2}[0] + F_{2}(\Gamma_N)dF_{1}[0]$, where we use that $F_{1}$ and $F_{2}$ have constant G\^ateaux derivatives by \cref{rem:con_GD}, it follows from the linearity of the trace that
\begin{align}
i\Tr\paren*{\comm{dF[\Gamma_N]}{d\pb{G}{H}_{\G_{N}^{*}}[\Gamma_N]}_{\G_{N}}\cdot\Gamma_N} &= iF_{1}(\Gamma_N)\Tr\paren*{\comm{dF_{2}[0]}{d\pb{G}{H}_{\G_{N}^{*}}[\Gamma_N]}_{\G_{N}}\cdot\Gamma_N} \nonumber\\
&\phantom{=} + iF_{2}(\Gamma_N)\Tr\paren*{\comm{dF_{1}[0]}{d\pb{G}{H}_{\G_{N}^{*}}[\Gamma_N]}_{\G_{N}}\cdot\Gamma_N} \nonumber\\
&=iF_{1}(\Gamma_N)\Tr\paren*{\comm{dF_{2}[0]}{\comm{dG[0]}{dH[0]}_{\G_{N}}}_{\G_{N}}\cdot\Gamma_N} \nonumber\\
&\phantom{=} + iF_{2}(\Gamma_N)\Tr\paren*{\comm{dF_{1}[0]}{\comm{dG[0]}{dH[0]}_{\G_{N}}}_{\G_{N}}\cdot\Gamma_N},
\end{align}
where we use \cref{lem:tr_gen_H} to obtain the ultimate equality.

Next, since $F$ is a product of two trace functionals, we have by \cref{lem:tr_gen_H} that
\begin{equation}
d\pb{H}{F}_{\G_{N}^{*}}[\Gamma_N] = F_{1}(\Gamma_N)\comm{dH[0]}{dF_{2}[0]}_{\G_{N}} + F_{2}(\Gamma_N)\comm{dH[0]}{dF_{1}[0]}_{\G_{N}}, \qquad \forall \Gamma_N\in \G_{N}^{*}.
\end{equation}
Hence by bilinearity of the Lie bracket and linearity of the trace,
\begin{align}
i\Tr\paren*{\comm{dG[\Gamma_N]}{d\pb{H}{F}_{\G_{N}^{*}}[\Gamma_N]}_{\G_{N}}\cdot\Gamma_N} &= iF_{1}(\Gamma_N)\Tr\paren*{\comm{dG[0]}{\comm{dH[0]}{dF_{2}[0]}_{\G_{N}}}_{\G_{N}}\cdot\Gamma_N} \nonumber\\
&\phantom{=} +  iF_{2}(\Gamma_N)\Tr\paren*{\comm{dG[0]}{\comm{dH[0]}{dF_{1}[0]}_{\G_{N}}}_{\G_{N}}\cdot\Gamma_N}.
\end{align}

Finally, similarly to the preceding case,
\begin{equation}
d\pb{F}{G}_{\G_{N}^{*}}[\Gamma_N] = F_{1}(\Gamma_N)\comm{dF_{2}[0]}{dG[0]}_{\G_{N}} + F_{2}(\Gamma_N)\comm{dF_{1}[0]}{dG[0]}_{\G_{N}},
\end{equation}
and therefore,
\begin{align}
i\Tr\paren*{\comm{dH[\Gamma_N]}{d\pb{F}{G}_{\G_{N}^{*}}[\Gamma_N]}_{\G_{N}}\cdot\Gamma_N} &= iF_{1}(\Gamma_N)\Tr\paren*{\comm{dH[0]}{\comm{dF_{2}[0]}{dG[0]}_{\G_{N}}}_{\G_{N}}\cdot\Gamma_N} \nonumber\\
&\phantom{=} + iF_{2}(\Gamma_N)\Tr\paren*{\comm{dH[0]}{\comm{dF_{1}[0]}{dG[0]}_{\G_{N}}}_{\G_{N}}\cdot\Gamma_N}.
\end{align}

Combining the preceding identities, we obtain that
\begin{align}
&i\Tr\paren*{\comm{dF[\Gamma_N]}{d\pb{G}{H}_{\G_{N}^{*}}[\Gamma_N]}_{\G_{N}}\cdot\Gamma_N} + i\Tr\paren*{\comm{dG[\Gamma_N]}{d\pb{H}{F}_{\G_{N}^{*}}[\Gamma_N]}_{\G_{N}}\cdot\Gamma_N} \nonumber\\
&\phantom{=}\qquad  + i\Tr\paren*{\comm{dH[\Gamma_N]}{d\pb{F}{G}_{\G_{N}^{*}}[\Gamma_N]}_{\G_{N}}\cdot\Gamma_N} \nonumber\\
&\phantom{=}= iF_{1}(\Gamma_N)\Tr\left(\left(\comm{dF_{2}[0]}{\comm{dG[0]}{dH[0]}_{\G_{N}}}_{\G_{N}} + \comm{dG[0]}{\comm{dH[0]}{dF_{2}[0]}_{\G_{N}}}_{\G_{N}} \right.\right. \nonumber\\
&\phantom{=}\hspace{30mm} \left.\left.+ \comm{dH[0]}{\comm{dF_{2}[0]}{dG[0]}_{\G_{N}}}_{\G_{N}}\right) \cdot\Gamma_N\right) \nonumber\\
&\phantom{=}\qquad + iF_{2}(\Gamma_N)\Tr\left(\left(\comm{dF_{1}[0]}{\comm{dG[0]}{dH[0]}_{\G_{N}}}_{\G_{N}} + \comm{dG[0]}{\comm{dH[0]}{dF_{1}[0]}_{\G_{N}}}_{\G_{N}} \right.\right.\nonumber\\
&\phantom{=}\hspace{35mm} \left.\left.+ \comm{dH[0]}{\comm{dF_{1}[0]}{dG[0]}_{\G_{N}}}_{\G_{N}}\right) \cdot\Gamma_N\right) \nonumber\\
&\phantom{=} =0,
\end{align}
where the ultimate equality follows from the fact that both lines in the penultimate equality vanish by virtue of the Jacobi identity of the Lie bracket $\comm{\cdot}{\cdot}_{\G_{N}}$.

Finally, we claim that $\pb{\cdot}{\cdot}_{\G_{N}^{*}}$ satisfies the Leibnitz rule:
\begin{equation}
\pb{FG}{H}_{\G_{N}^*}(\Gamma_N) = G(\Gamma_N)\pb{F}{H}_{\G_{N}^{*}}(\Gamma_N) + F(\Gamma_N)\pb{G}{H}_{\G_{N}^{*}}(\Gamma_N), \qquad \forall \Gamma_N\in \G_{N}^{*}.
\end{equation}
Since $d(FG)[\Gamma_N]=F(\Gamma_N)dG[\Gamma_N] + G(\Gamma_N)dF[\Gamma_N]$ by the Leibnitz rule for the G\^ateaux derivative, we see that
\begin{align}
\pb{FG}{H}_{\G_{N}^{*}}(\Gamma_N) &= i\Tr\paren*{\comm{d(FG)[\Gamma_N]}{dH[\Gamma_N]}_{\G_{N}}\cdot\Gamma_N} \nonumber\\
&=iF(\Gamma_N)\Tr\paren*{\comm{dG[\Gamma_N]}{dH[\Gamma_N]}_{\G_{N}}\cdot\Gamma_N} + iG(\Gamma_N)\Tr\paren*{\comm{dF[\Gamma_N]}{dH[\Gamma_N]}_{\G_{N}}\cdot\Gamma_N} \nonumber\\
&=F(\Gamma_N)\pb{G}{H}_{\G_{N}^{*}}(\Gamma_N) + G(\Gamma_N)\pb{F}{H}_{\G_{N}^{*}}(\Gamma_N),
\end{align}
where the penultimate equality follows by bilineariy of the Lie bracket and linearity of the trace and the ultimate equality follows from the definition of the Poisson bracket.
\end{proof}

We next verify that $\A_{H,N}$ satisfies the non-degeneracy property \ref{item:wp_P2}.

\begin{lemma}
\label{lem:H_LA_P2}
$\A_{H,N}$ satisfies property \ref{item:wp_P2} in \cref{def:WP}.
\end{lemma}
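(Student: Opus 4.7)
My plan is to reduce property \ref{item:wp_P2} to a weak non-degeneracy statement for the canonical trace pairing at each level $k\in\N_{\leq N}$, and then to verify that statement using rank-one skew-adjoint test operators built from bosonic Schwartz functions.

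First I would exploit the generating structure of $\A_{H,N}$. Since every linear trace functional $F_{A_N}(\cdot) \coloneqq i\Tr(A_N\,\cdot\,)$ with $A_N\in\G_N$ lies in $\A_{H,N}$ and satisfies $dF_{A_N}[\Gamma_N](v) = i\Tr(A_N\cdot v)$ by linearity, the hypothesis $dF[\Gamma_N](v)=0$ for every $F\in\A_{H,N}$ immediately implies $i\Tr(A_N\cdot v)=0$ for every $A_N\in\G_N$. Choosing $A_N$ supported on a single level $k$ with nonzero component $A\in\g_k$, the problem reduces, for each $k\in\N_{\leq N}$ separately, to proving: any self-adjoint $v^{(k)}\in\L(\Sc_s'(\R^k),\Sc_s(\R^k))$ satisfying $i\Tr_{1,\ldots,k}(A\cdot v^{(k)})=0$ for all $A\in\g_k$ must vanish.

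To verify this reduced statement, I would test against rank-one operators. For $f,g\in\Sc_s(\R^k)$, the operators
\begin{align*}
A_{f,g}^{-}\coloneqq |f\rangle\langle g|-|g\rangle\langle f|, \qquad A_{f,g}^{+}\coloneqq i\bigl(|f\rangle\langle g|+|g\rangle\langle f|\bigr)
\end{align*}
are continuous endomorphisms of $\Sc_s(\R^k)$ (each rank-one summand has range spanned by a bosonic Schwartz function) and are manifestly skew-adjoint, so both belong to $\g_k$. Using the elementary identity $\Tr(|f\rangle\langle g|B)=\langle g, Bf\rangle$, cyclicity of trace justified by the Schwartz kernel decomposition of $v^{(k)}$, and the self-adjointness of $v^{(k)}$, a direct computation yields
\begin{align*}
i\Tr_{1,\ldots,k}(A_{f,g}^{-}\,v^{(k)}) &= -2\,\Im\langle g,\, v^{(k)} f\rangle, \\
i\Tr_{1,\ldots,k}(A_{f,g}^{+}\,v^{(k)}) &= -2\,\Re\langle g,\, v^{(k)} f\rangle.
\end{align*}
Simultaneous vanishing of both expressions for all $f,g\in\Sc_s(\R^k)$ forces $\langle g,v^{(k)} f\rangle=0$ for all such pairs. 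Non-degeneracy of the $\Sc_s$--$\Sc_s'$ pairing then gives $v^{(k)}f=0$ in $\Sc_s(\R^k)$ for every $f\in\Sc_s(\R^k)$, whence $v^{(k)}=0$. Iterating over $k\in\N_{\leq N}$ gives $v=0$.

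I do not anticipate any real obstacle: the argument is a purely single-level statement about the $\g_k$--$\L(\Sc_s'(\R^k),\Sc_s(\R^k))$ trace duality and invokes none of the more delicate hierarchy Lie-bracket machinery from \cref{ssec:N_LA}. The only mild technical point to verify is that the rank-one test operators genuinely live in $\g_k$ with the correct subspace topology inherited from $\L(\Sc_s(\R^k),\Sc_s'(\R^k))$, which is immediate from the smoothness of $f$ and $g$ and continuity of the rank-one construction.
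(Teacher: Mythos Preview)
Your proof is correct and follows essentially the same approach as the paper: reduce to a single level $k$ via trace functionals supported on one component, then test against rank-one skew-adjoint operators in $\g_k$. The only cosmetic difference is that the paper uses the diagonal family $-i\ket{f}\bra{f}$ together with a polarization argument, whereas you use the two off-diagonal families $A_{f,g}^{\pm}$ to read off the real and imaginary parts of $\langle g, v^{(k)} f\rangle$ directly; these are interchangeable variants of the same idea.
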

\begin{proof}
Let $\Gamma_N \in\G_{N}^{*}$ and $v \in  T_{\Gamma_N}\G_{N}^{*}$, and note that $T_{\Gamma_N}\G_{N}^{*} =\G_{N}^{*}$. Suppose that $dF[\Gamma_N](v) = 0$ for all $F\in\A_{H,N}$. We will show that $v = 0$.

Consider functionals of the form $F_{f,k_0}(\cdot ) \coloneqq i\Tr\paren*{A_{N,k_0}\cdot}$,
\begin{equation}
A_{N,k_0}^{(k)} \coloneqq
\begin{cases}
-i\ket*{f^{(k_0)}}\bra*{f^{(k_0)}}, & {k=k_0} \\
0, & {\text{otherwise}}
\end{cases},
\end{equation}
for $k_0\in\N_{\leq N}$ and $f^{(k_0)}\in\Sc_s(\R^{k_0})$. By \cref{rem:con_GD}, we have $dF_{f,k_0}[\Gamma_N](\cdot) = F_{f,k_0}(\cdot)$, so if $v = (v^{(k)})_{k \in \N_{\leq N}}  \in \G_{N}^{*} $ is as above, we have by definition of the trace that
\begin{equation}
 F_{f,k_0}(v) = \ip{v^{(k_0)} f^{(k_0)}}{f^{(k_0)}} = 0.
\end{equation}
Since $v^{(k)}$ extends uniquely to a bounded operator on $L_{s}^{2}(\R^{k})$ and $\Sc_{s}(\R^{k})$ is dense in $L_{s}^{2}(\R^{k})$, it follows from a standard polarization argument that $v^{(k)}=0$ for all $k \in \N_{\leq N}$, which completes the proof.
\end{proof}

Lastly, we show the existence of a unique Hamiltonian vector $X_H$ for $H\in \A_{H,N}$ with respect to the Poisson structure $\pb{\cdot}{\cdot}_{\G_N^*}$. With this last (most difficult) step, the proof of \cref{prop:NH_WP} will be complete.

\begin{lemma}
\label{lem:H_WP_VF}
$(\G_N^*, \A_{H,N},\pb{\cdot}{\cdot}_{\G_N^*})$ satisfies property \ref{item:wp_P3} in \cref{def:WP}. Furthermore, if $H\in\A_{H,N}$, then we have the following formula for the Hamiltonian vector field $X_H$: 
\begin{equation}
\begin{split}
X_H(\Gamma_N)^{(\ell)} &= \sum_{j=1}^{N} \sum_{r=r_0}^{\min\{\ell,j\}} C_{\ell jkrN}' \Tr_{\ell+1,\ldots,k }\paren*{\comm{\sum_{\ul{\alpha}_r\in P_r^\ell} dH[\Gamma_N]_{(\ul{\alpha}_r,\ell+1,\ldots,\min\{\ell+j-r,k \})}^{(j)}}{\gamma_N^{(k)}}},
\end{split}
\end{equation} 
where
\begin{align*}
k \coloneqq \min\{\ell+j-1,N\},  \qquad r_0 \coloneqq \max\{1,\min\{\ell,j\}-(N-\max\{\ell,j\})\}
\end{align*}
and where
\begin{equation*}
 C_{\ell jk rN}' \coloneqq {j\choose r}\frac{N C_{\ell,N}C_{j,N}}{C_{k,N} \prod_{m=1}^{r-1}(N-k+m)},
\end{equation*}
for $C_{\ell, N}, C_{k,N}$ as in \eqref{eq:Ckn_def}.
\end{lemma}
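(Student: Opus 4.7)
The strategy is a direct construction. I propose a vector field $X_H$ given by the claimed formula, verify that it takes values in $T_{\Gamma_N}\G_N^* = \G_N^*$ and depends smoothly on $\Gamma_N$, and then check the defining identity $X_H F = \pb{F}{H}_{\G_N^*}$ for all $F \in \A_{H,N}$; uniqueness then follows from property \ref{item:wp_P2} in \cref{lem:H_LA_P2} via \cref{rem:hvf_u}. By \cref{rem:gen_vf} and the Leibnitz identity $X_{FG} = F X_G + G X_F$, property \ref{item:wp_P3} need only be verified on the generators of $\A_{H,N}$. Constant functionals give $X_H \equiv 0$ trivially, so I may assume $H(\Gamma_N) = i\Tr(dH[0]\cdot\Gamma_N)$ is a trace functional, for which $dH[\Gamma_N] \equiv dH[0] \in \G_N$ by \cref{rem:con_GD}. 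Identifying $dF[\Gamma_N]$ with its image in $\widetilde{\G}_N$ under the trace pairing of \cref{lem:dual_GN*}, the identity to verify becomes
\begin{equation*}
i\Tr\paren*{dF[\Gamma_N]\cdot X_H(\Gamma_N)} = i\Tr\paren*{\comm{dF[\Gamma_N]}{dH[0]}_{\G_N}\cdot\Gamma_N}.
\end{equation*}

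To match the two sides component by component, I expand the right-hand side using the explicit formula \eqref{eq:N_LB_form} from \cref{prop:Lie_hi_form} and pair against $\gamma_N^{(k)}$. Since $\gamma_N^{(k)}$ is bosonic, \cref{lem:tr_bos} lets me drop the $\Sym_k$ operator appearing inside $\Tr_{1,\ldots,k}(\cdot\,\gamma_N^{(k)})$. Conversely, on the left-hand side, the identity $\Tr_{1,\ldots,\ell}(dF^{(\ell)}\Tr_{\ell+1,\ldots,k}(B)) = \Tr_{1,\ldots,k}(dF^{(\ell)}_{(1,\ldots,\ell)} B)$ converts the pairing with $X_H(\Gamma_N)^{(\ell)}$ into a trace over all $k$ coordinates. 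The inner bracket $\comm{dF^{(\ell)}}{dH^{(j)}}_r = \binom{j}{r}dF^{(\ell)}\circ_r dH^{(j)} - \binom{\ell}{r}dH^{(j)}\circ_r dF^{(\ell)}$ then contributes two families of terms. For the first, $\binom{j}{r}dF^{(\ell)}\circ_r dH^{(j)} = \binom{j}{r}dF^{(\ell)}_{(1,\ldots,\ell)}\sum_{\ul{\alpha}_r\in P_r^\ell}dH^{(j)}_{(\ul{\alpha}_r,\ell+1,\ldots)}$, so $dF^{(\ell)}$ is already on the outside and this matches the positive part of the commutator in the claimed formula for $X_H^{(\ell)}$. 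For the second family, $-\binom{\ell}{r}dH^{(j)}\circ_r dF^{(\ell)}$ places $dF^{(\ell)}$ inside the product; cyclicity of the trace moves it to the right of $\gamma_N^{(k)}$, and then \cref{lem:tr_bos} is used to permute coordinates of the full integrand so as to return $dF^{(\ell)}$ to act on $(1,\ldots,\ell)$. A double counting argument then shows that, under the trace against the bosonic $\gamma_N^{(k)}$, summing over $\ul{\alpha}_r\in P_r^j$ with weight $\binom{\ell}{r}$ equals summing over $\ul{\alpha}_r\in P_r^\ell$ with weight $\binom{j}{r}$, producing the negative part $\gamma_N^{(k)}\sum_{\ul{\alpha}_r\in P_r^\ell}dH^{(j)}_{(\ul{\alpha}_r,\ell+1,\ldots)}$ of the commutator and the unified factor $C'_{\ell jkrN} = \binom{j}{r}C_{\ell jkrN}$.

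It remains to verify that $X_H(\Gamma_N)$ lies in $\G_N^*$ and that the map $\Gamma_N\mapsto X_H(\Gamma_N)$ is smooth. Each component $X_H(\Gamma_N)^{(\ell)}$ is self-adjoint because $\sum_{\ul{\alpha}_r\in P_r^\ell} dH[0]^{(j)}_{(\ul{\alpha}_r,\ell+1,\ldots)}$ is skew-adjoint on $\Sc_s(\R^k)$ as a sum of coordinate extensions of the skew-adjoint $dH[0]^{(j)}$, its commutator with the self-adjoint $\gamma_N^{(k)}$ is self-adjoint, and the partial trace preserves self-adjointness. Smoothness of $X_H:\G_N^*\to\G_N^*$ follows from its linearity in $\Gamma_N$ (since $dH[0]$ does not depend on $\Gamma_N$), together with the continuity of composition and of the partial trace in the Schwartz setting.

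The main obstacle is the combinatorial identity underlying the second family of terms: exchanging the sum over $\ul{\alpha}_r\in P_r^j$ with weight $\binom{\ell}{r}$ for the sum over $\ul{\alpha}_r\in P_r^\ell$ with weight $\binom{j}{r}$. This exchange is effected by \cref{lem:tr_bos} applied to carefully chosen permutations $\tau_{\ul{\alpha}_r}$ that relabel the coordinate set of $dF^{(\ell)}$ to $(1,\ldots,\ell)$, and the matching of binomials is a genuine double counting: both sides enumerate the same underlying structure, namely the choice of $r$ coordinates shared between $dF^{(\ell)}$ and $dH^{(j)}$ together with orderings of the remaining coordinates of each operator. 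The boundary case $k = N < \ell+j-1$ requires the truncation $\min\{\ell+j-r, k\}$ in the upper limit of the coordinate range of $dH^{(j)}$ and the lower bound $r\geq r_0 = \max\{1,\min\{\ell,j\}-(N-\max\{\ell,j\})\}$, reflecting the forced coordinate overlap when $\ell+j-1>N$, but the same algebraic manipulations apply verbatim.
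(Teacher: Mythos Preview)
Your proposal is correct and follows essentially the same route as the paper: expand $\pb{F}{H}_{\G_N^*}$ via \cref{prop:Lie_hi_form}, drop the $\Sym_k$ with \cref{lem:tr_bos}, use explicit coordinate permutations together with trace cyclicity to bring every $dF^{(\ell)}$ factor to act on $(1,\ldots,\ell)$, match the binomial weights by the counting $\binom{\ell}{r}|P_r^j| = \binom{\ell}{r}\binom{j}{r}r! = \binom{j}{r}|P_r^\ell|$, and then read off $X_H^{(\ell)}$ as a partial trace. The only cosmetic difference is that you reduce $H$ to a trace functional at the outset via \cref{rem:gen_vf}, whereas the paper carries a general $H\in\A_{H,N}$ through the computation (keeping $dH[\Gamma_N]$ in the formula) and appeals to the generator structure only when checking that $X_H:\G_N^*\to\G_N^*$ is smooth.
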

\begin{proof}
Given $F,H\in\A_{H,N}$, we first identify a candidate vector field $X_H$ by directly computing $\pb{F}{H}_{\G_N^*}$. Once we have found the candidate and verified its smoothness as a map $\G_N^*\rightarrow\G_N^*$, the proof is complete by the uniqueness guaranteed by \cref{rem:hvf_u}.

By definition of the Poisson bracket on $\G_N^*$, we have that
\begin{align}
\pb{F}{H}_{\G_N^*}(\Gamma_N) &= i\Tr\paren*{\comm{dF[\Gamma_N]}{dH[\Gamma_N]}_{\G_N}\cdot \Gamma_N} \nonumber\\
&=i\sum_{k=1}^N \Tr_{1,\ldots,k}\paren*{\comm{dF[\Gamma_N]}{dH[\Gamma_N]}_{\G_N}^{(k)}\gamma_N^{(k)}},
\end{align}
for $\Gamma_N=(\gamma_N^{(k)})_{k=1}^{N} \in \G_N^*$. Using the linearity of the $\Sym_k$ operator, we have by the formula from \cref{prop:Lie_hi_form} that
\begin{align*}
\comm{dF[\Gamma_N]}{dH[\Gamma_N]}_{\G_N}^{(k)} &= \sum_{{1\leq \ell,j\leq N}\atop {\min\{\ell+j-1,N\}=k}}\sum_{r=r_0}^{\min\{\ell,j\}} C_{\ell jkrN} \Sym_k\paren*{\comm{dF[\Gamma_N]^{(\ell)}}{dH[\Gamma_N]^{(j)}}_r},
\end{align*}
and
\begin{equation*}
\begin{split}
\Sym_k\paren*{\comm{dF[\Gamma_N]^{(\ell)}}{dH[\Gamma_N]^{(j)}}_r} &= \Sym_k\biggl({j\choose r} dF[\Gamma_N]_{(1,\ldots,\ell)}^{(\ell)}\biggl(\sum_{\ul{\alpha}_r\in P_r^\ell}dH[\Gamma_N]_{(\ul{\alpha}_r,\ell+1,\ldots,\ell+j-r)}^{(j)}\biggr)\biggr)  \\
&\phantom{=} - \Sym_k\biggl({\ell\choose r} dH[\Gamma_N]_{(1,\ldots,j)}^{(j)}\biggl(\sum_{\ul{\alpha}_r\in P_r^j} dF[\Gamma_N]_{(\ul{\alpha}_r,j+1,\ldots,j+\ell-r)}^{(\ell)}\biggr)\biggr),
\end{split}
\end{equation*}
where we have used the combinatorial notation $C_{\ell jkrN}$ defined in \cref{eq:comb_not}. Recall from \cref{rem:tbyt} that we are justified in writing
\begin{equation}
dH[\Gamma_N]_{(1,\ldots,j)}^{(j)}\biggl(\sum_{\ul{\alpha}_r\in P_r^j} dF[\Gamma_N]_{(\ul{\alpha}_r,j+1,\ldots,j+\ell-r)}^{(\ell)}\biggr) = \sum_{\ul{\alpha}_r\in P_r^j} dH[\Gamma_N]_{(1,\ldots,j)}^{(j)}dF[\Gamma_N]_{(\ul{\alpha}_r,j+1,\ldots,j+\ell-r)}^{(\ell)}.
\end{equation}
Let $(m_1,\ldots,m_{j-r})$ be the increasing arrangement of the set $\N_{\leq j}\setminus \{\alpha_1,\ldots,\alpha_r\}$. Defining the permutation $\tau\in\mathbb{S}_k$ by the formula
\begin{equation}
\tau(a) \coloneqq
\begin{cases}
i, & {a=\alpha_i \text{ for $1\leq i\leq  r$}} \\
a-j+r, & {j+1\leq a \leq j+\ell-r} \\
\ell+i, & {a = m_i \text{ for $1\leq i\leq j-r$}} \\
a, & {\text{otherwise}}
\end{cases},
\end{equation}
we find that for each $\ul{\alpha}_r\in P_r^j$ fixed,
\begin{equation}
\paren*{dH[\Gamma_N]_{(1,\ldots,j)}^{(j)}dF[\Gamma_N]_{(\ul{\alpha}_r,j+1,\ldots,j+\ell-r)}^{(\ell)}}_{(\tau(1),\ldots,\tau(k))} = dH[\Gamma_N]_{(1,\ldots,r,\ell+1,\ldots,\ell+j-r)}^{(j)}dF[\Gamma_N]_{(1,\ldots,\ell)}^{(\ell)}.
\end{equation}
Since the $\Sym_k$ operator is $\mathbb{S}_k$-invariant, it then follows that
\begin{equation}
\Sym_k\paren*{dH[\Gamma_N]_{(1,\ldots,j)}^{(j)}dF[\Gamma_N]_{(\ul{\alpha}_r,\ell+1,\ldots,\ell+j-r)}^{(\ell)}} = \Sym_k\paren*{dH[\Gamma_N]_{(1,\ldots,r,\ell+1,\ldots,\ell+j-r)}^{(j)}dF[\Gamma_N]_{(1,\ldots,\ell)}^{(\ell)}}.
\end{equation}
Consequently, using that $|P_r^j| = {j\choose r}r!$, we obtain that
\begin{equation}
\begin{split}
&\Sym_k\biggl({\ell\choose r} dH[\Gamma_N]_{(1,\ldots,j)}^{(j)}\biggl(\sum_{\ul{\alpha}_r\in P_r^j} dF[\Gamma_N]_{(\ul{\alpha}_r,j+1,\ldots,j+\ell-r)}^{(\ell)}\biggr)\biggr) \\
&={\ell\choose r}{j\choose r}r!\Sym_k\paren*{dH[\Gamma_N]_{(1,\ldots,r,\ell+1,\ldots,\ell+j-r)}^{(j)}dF[\Gamma_N]_{(1,\ldots,\ell)}^{(\ell)}}.
\end{split}
\end{equation}
Now given $\ul{\alpha}_r\in P_r^\ell$, let $(m_1,\ldots,m_{\ell-r})$ be the increasing arrangement of the set $\N_{\leq \ell} \setminus\{\alpha_1,\ldots,\alpha_r\}$. We recycle notation to define a new permutation $\tau\in\Ss_k$ by
\begin{equation}
\tau(i) \coloneqq 
\begin{cases}
\alpha_i, & {1\leq i\leq r} \\
m_{i-r}, & {r+1\leq i\leq \ell}\\
i, & {\text{otherwise}}
\end{cases}.
\end{equation}
Then
\begin{equation}
\begin{split}
&\Sym_k\paren*{\paren*{dH[\Gamma_N]_{(1,\ldots,r,\ell+1,\ldots,\ell+j-r)}^{(j)}dF[\Gamma_N]_{(1,\ldots,\ell)}^{(\ell)}}_{(\tau(1),\ldots,\tau(k))}}\\
&= \Sym_k\paren*{dH[\Gamma_N]_{(\ul{\alpha}_r,\ell+1,\ldots,\ell+j-r)}^{(j)}dF[\Gamma_N]_{(1,\ldots,\ell)}^{(\ell)}},
\end{split}
\end{equation}
where we can ``undo'' the permutation $\tau$'s effect on $dF[\Gamma_N]_{(1,\ldots,\ell)}^{(\ell)}$ by its $\Ss_\ell$-invariance. Using that $|P_r^\ell|={\ell\choose r}r!$, we obtain that
\begin{equation}
\begin{split}
&{\ell\choose r}{j\choose r}r!\Sym_k\paren*{dH[\Gamma_N]_{(1,\ldots,r,\ell+1,\ldots,\ell+j-r)}^{(j)}dF[\Gamma_N]_{(1,\ldots,\ell)}^{(\ell)}} \\
&={j\choose r}\sum_{\ul{\alpha}_r\in P_r^\ell} \Sym_k\paren*{dH[\Gamma_N]_{(\ul{\alpha}_r,\ell+1,\ldots,\ell+j-r)}^{(j)}dF[\Gamma_N]_{(1,\ldots,\ell)}^{(\ell)}}.
\end{split}
\end{equation}
Substituting the preceding identity into the expression $\Tr_{1,\ldots,k}(\comm{dF[\Gamma_N]}{dH[\Gamma_N]}_{\G_N}^{(k)}\gamma_N^{(k)})$ and using \cref{lem:tr_bos} to eliminate the $\Sym_k$ operator, we obtain that
\begin{equation}
\begin{split}
&i\Tr_{1,\ldots,k}\paren*{\comm{dF[\Gamma_N]}{dH[\Gamma_N]}_{\G_N}^{(k)}\gamma_N^{(k)}} \\
&=i\sum_{\min\{\ell+j-1,N\}=k}\sum_{r=r_0}^{\min\{\ell,j\}}C_{\ell jkrN}{j\choose r} \sum_{{\ul{\alpha}_r\in P_r^\ell}} \Biggl(\Tr_{1,\ldots,k}\paren*{dF[\Gamma_N]_{(1,\ldots,\ell)}^{(\ell)}dH[\Gamma_N]_{(\ul{\alpha}_r,\ell+1,\ldots,\ell+j-r)}^{(j)}\gamma_N^{(k)} } \\
&\hspace{75mm} - \Tr_{1,\ldots,k}\paren*{dH[\Gamma_N]_{(\ul{\alpha}_r,\ell+1,\ldots,\ell+j-r)}^{(j)}dF[\Gamma_N]_{(1,\ldots,\ell)}^{(\ell)}\gamma_N^{(k)}} \Biggr).
\end{split}
\end{equation}
Since $dH[\Gamma_N]_{(\ul{\alpha}_r,\ell+1,\ldots,\ell+j-r)}^{(j)}$ is skew-adjoint and therefore by duality extends to an element in $\L(\Sc_s'(\R^k),\Sc'(\R^k))$, it follows from the cyclicity property of \cref{prop:gtr_prop}\ref{item:gtr_cyc} that
\begin{equation}
\begin{split}
&\Tr_{1,\ldots,k}\paren*{dH[\Gamma_N]_{(\ul{\alpha}_r,\ell+1,\ldots,\ell+j-r)}^{(j)}dF[\Gamma_N]_{(1,\ldots,\ell)}^{(\ell)}\gamma_N^{(k)}}\\
&=\Tr_{1,\ldots,k}\paren*{dF[\Gamma_N]_{(1,\ldots,\ell)}^{(\ell)}(\gamma_N^{(k)}dH[\Gamma_N]_{(\ul{\alpha}_r,\ell+1,\ldots,\ell+j-r)}^{(j)})}.
\end{split}
\end{equation}
Since
\begin{equation}
dH[\Gamma_N]_{(\ul{\alpha}_r,\ell+1,\ldots,\ell+j-r)}^{(j)}\gamma_N^{(k)}, \,\,\, \gamma_N^{(k)}dH[\Gamma_N]_{(\ul{\alpha}_r,\ell+1,\ldots,\ell+j-r)}^{(j)} \in \L(\Sc_s'(\R^k),\Sc(\R^k)),
\end{equation}
the usual partial trace $\Tr_{\ell+1,\ldots,k}$ of each of these operators exists and defines an element of $\L(\Sc_s'(\R^\ell),\Sc(\R^\ell))$. Moreover, since $dH[\Gamma_N]^{(j)}$ and $\gamma_N^{(k)}$ are skew- and self-adjoint, respectively, these partial traces are self-adjoint.

Returning to the expression $i\Tr(\comm{dF[\Gamma_N]}{dH[\Gamma_N]}_{\G_N}\cdot\Gamma_N)$ and interchanging the order of the $k$ and $\ell$ summations, we see that
\begin{equation*}
\begin{split}
&\sum_{k=1}^N i\Tr_{1,\ldots,k}\paren*{\comm{dF[\Gamma_N]}{dH[\Gamma_N]}_{\G_N}^{(k)}\gamma_N^{(k)}} \\
&=i\sum_{\ell=1}^{N}\sum_{j=1}^{N} \sum_{r=r_0}^{\min\{\ell,j\}} C_{lj\tl{k}rN}'  \Biggl( \Tr_{1,\ldots,\ell}\biggl(dF[\Gamma_N]^{(\ell)}\biggl(\sum_{\ul{\alpha}_r\in P_r^\ell}\Tr_{\ell+1,\ldots,\tl{k}}\biggl(dH[\Gamma_N]_{(\ul{\alpha}_r,\ell+1,\ldots,\min\{\ell+j-r,\tl{k}\})}^{(j)}\gamma_N^{(\tl{k})}\biggr)\biggr)\biggr) \\
&\hspace{50mm} - \Tr_{1,\ldots,\ell}\biggl(dF[\Gamma_N]^{(\ell)}\biggl(\sum_{\ul{\alpha}_r\in P_r^\ell}\Tr_{\ell+1,\ldots,\tl{k}}\biggl(\gamma_N^{(\tl{k})}dH[\Gamma_N]_{(\ul{\alpha}_r,\ell+1,\ldots,\min\{\ell+j-r,\tl{k}\})}^{(j)}\biggr)\biggr)\biggr) \Biggr),
\end{split}
\end{equation*}
where
\begin{align}
\tl{k} &\coloneqq \min\{\ell+j-1,N\},\\
C_{\ell j\tl{k}rN}' &\coloneqq \frac{N C_{\ell,N}C_{j,N}}{C_{\tl{k},N}\prod_{m=1}^{r-1}(N-\tl{k}+m)}{j\choose r}.
\end{align}
Note that since $\gamma_N^{(\tl{k})}$ admits a decomposition
\begin{equation}
\gamma_N^{(\tl{k})} = \sum_{m=1}^\infty \lambda_m \ket*{f_m^{(\tl{k})}}\bra*{f_m^{(\tl{k})}},
\end{equation}
where $\sum_{m=1}^\infty |\lambda_m|\leq 1$ and $f_m^{(\tl{k})}, g_m^{(\tl{k})}$ converge to zero in $\Sc_s(\R^{\tl{k}})$, we see that
\begin{equation}
\begin{split}
&\Tr_{\ell+1,\ldots,\tl{k}}\paren*{\gamma_N^{\tl{k}}dH[\Gamma_N]_{(\ul{\alpha}_r,\ell+1,\ldots,\min\{\ell+j-r,\tl{k}\})}^{(j)}}\\
&=\sum_{m=1}^\infty \lambda_m \ip{f_m^{(\tl{k})}}{dH[\Gamma_N]_{(\ul{\alpha}_r,\ell+1,\ldots,\min\{\ell+j-r,\tl{k}\})}^{(j)}f_m^{(\tl{k})}},
\end{split}
\end{equation}
which is independent of the choice of extension of $dH[\Gamma_N]^{(j)}$ to domain $\Sc(\R^j)$ by the permutation invariance of each $f_m^{(\tl{k})}$. Furthermore, the operator
\begin{equation}
\sum_{\ul{\alpha}_r\in P_r^\ell}\Tr_{\ell+1,\ldots,\tl{k}}\paren*{\gamma_N^{(\tl{k})}dH[\Gamma_N]_{(\ul{\alpha}_r,\ell+1,\ldots,\min\{\ell+j-r,\tl{k}\})}^{(j)}}
\end{equation}
is invariant under the $\Ss_\ell$ action, since $P_r^{\ell}$ is invariant under the $\Ss_{\ell}$ group action. Hence, it maps into $\Sc_s(\R^\ell)$, and its left-composition with $dF[\Gamma_N]^{(\ell)}$ is well-defined.

Using the bilinearity of the generalized trace, we obtain the candidate Hamiltonian vector field
\begin{equation}
\begin{split}
X_H(\Gamma_N)^{(\ell)} &\coloneqq \sum_{j=1}^{N} \sum_{r=r_0}^{\min\{\ell,j\}} C_{\ell j\tl{k}rN}' \sum_{\ul{\alpha}_r\in P_r^\ell}\Biggl(\Tr_{\ell+1,\ldots,\tl{k}}\paren*{dH[\Gamma_N]_{(\ul{\alpha}_r,\ell+1,\ldots,\min\{\ell+j-r,\tl{k}\})}^{(j)}\gamma_N^{(\tl{k})}} \\
&\hspace{50mm} - \Tr_{\ell+1,\ldots,\tl{k}}\paren*{\gamma_N^{(\tl{k})}dH[\Gamma_N]_{(\ul{\alpha}_r,\ell+1,\ldots,\min\{\ell+j-r,\tl{k}\})}^{(j)}} \Biggr).
\end{split}
\end{equation}

We now verify that $X_H$, as defined above, is a smooth map $\G_N^*\rightarrow \G_N^*$, so that we may conclude the proof by \cref{rem:con_GD}. We claim that the right-hand side of the preceding identity defines a continuous linear (hence, smooth) map 
\begin{equation}
\G_N^*\rightarrow \bigoplus_{k=1}^N \L(\Sc_s'(\R^k),\Sc_s(\R^k)).
\end{equation}
Linearity is obvious, and the map is continuous from 
\[
\G_N^*\rightarrow \bigoplus_{k=1}^{N} \L(\Sc_s'(\R^k),\Sc(\R^k))
\]
by \cref{prop:partial_trace}. That we may replace the target $\Sc(\R^k)$ by the bosonic subspace $\Sc_s(\R^k)$ is a consequence of the following facts: $P_r^\ell$ is invariant under the $\mathbb{S}_\ell$ group action, $dH[\Gamma_N]^{(j)}$ is $\mathbb{S}_j$-invariant, and $\gamma_N^{(\tl{k})}$ is a fortiori $\mathbb{S}_\ell$-invariant. The self-adjointness of $X_H(\Gamma_N)^{(\ell)}$ follows from the skew- and self-adjointness of $dH[\Gamma_N]^{(j)}$ and $\gamma_N^{(\tl{k})}$, respectively, and the adjoint properties of the generalized partial trace.
\end{proof}

\subsection{Density matrix maps as Poisson morphisms}
\label{ssec:dm_pomo}
We close this section with the observations that the well-known operations of forming a density matrix out of a wave function and forming an $N$-hierarchy of reduced density matrices from an $N$-body density matrix respect the geometric structure we have developed, in the sense that these operations define Poisson morphisms.

We first define the \emph{density matrix map} or \emph{ket-bra map} from $N$-body bosonic wave functions to $N$-body bosonic density matrices.

\begin{mydef}[Density matrix map]
We define the \emph{density matrix map} or \emph{ket-bra map} by
\begin{equation}
\iota_{DM,N}: \mathcal{S}_{s}(\R^{N}) \rightarrow \g_N^* \qquad \iota_{DM,N}(\Phi_{N}) \coloneqq \ket{\Phi_{N}}\bra{\Phi_{N}} = \Phi_{N} \otimes \ol{\Phi_{N}}.
\end{equation}
\end{mydef}

It is easy to verify that $\iota_{DM,N}$ is a smooth map from $\mathcal{S}_{s}(\R^{N})$ to $\g_{N}^{*}$. We now show that the density matrix map is a Poisson map. To prove this property, we recall from \cref{def:po_map} the requirement that $\iota_{DM,N}^*\A_{DM,N}\subset \A_{\Sc}$. If $F$ is smooth, then the smoothness of $\iota_{DM,N}$ implies by the chain rule that $f = F \circ \iota_{DM,N} \in C^{\infty}(\mathcal{S}_{s}(\R^{N});\R)$. However, it is not a priori clear that $f\in \A_{\Sc}$, where we recall that  $A_{\Sc}\subset C^{\infty}(\Sc(\R^{N});\R)$ is defined by
\begin{equation}\label{equ:Asc}
\A_{\Sc} \coloneqq \bigl \{ H \,:\, \grad_{s}H \in C^{\infty}(\Sc(\R^N);\Sc(\R^N)) \bigr\} ,
\end{equation}
In the sequel, we will use the notation $\A_{\Sc, N}$ to make the dependence on $N$ explicit. 

\begin{lemma}\label{lem:f_adm}
Let $N\in\N$. For any $F\in\A_{DM,N}$, the functional $f \coloneqq F\circ \iota_{DM,N} \in C^{\infty}(\mathcal{S}_{s}(\R^{N});\R)$ belongs to $\A_{\Sc, N}$. Furthermore,
\begin{equation}
\grad_s f(\Phi_N) = dF[\iota_{DM,N}(\Phi_N)](\Phi_N), \qquad \forall \Phi_N\in\Sc_s(\R^N),
\end{equation}
where we identify $dF[\iota_{DM,N}(\Phi_N)]$ as a skew-adjoint operator by \cref{rem:GD_Op_DM}.
\end{lemma}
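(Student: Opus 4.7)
The plan is to exploit the generating description of $\A_{DM,N}$: every element is a finite linear combination of finite products of constant functionals and trace functionals of the form $F_A(\Psi_N) = i\Tr(A^{(N)}\cdot\Psi_N)$ with $A^{(N)}\in\g_N$. Since the real $L^2$ gradient is $\R$-linear and satisfies the Leibnitz rule (inherited from the Leibnitz rule for the G\^ateaux derivative together with \cref{lem:grad_unq}), it suffices to establish both the membership in $\A_{\Sc,N}$ and the pointwise formula for the symplectic gradient for a single trace functional, and then assemble the general case.

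First, I would treat a generator $F = F_A$. Writing
\begin{equation*}
f(\Phi_N) = F_A(\ket*{\Phi_N}\bra*{\Phi_N}) = i\ip{\Phi_N}{A^{(N)}\Phi_N},
\end{equation*}
I would compute the G\^ateaux derivative at $\Phi_N\in\Sc_s(\R^N)$ in the direction $\delta\Phi_N\in\Sc_s(\R^N)$:
\begin{equation*}
df[\Phi_N](\delta\Phi_N) = i\ip{\delta\Phi_N}{A^{(N)}\Phi_N} + i\ip{\Phi_N}{A^{(N)}\delta\Phi_N}.
\end{equation*}
Invoking the skew-adjointness $(A^{(N)})^{*}=-A^{(N)}$ to rewrite the second term as $-i\overline{\ip{\delta\Phi_N}{A^{(N)}\Phi_N}}$, the right-hand side collapses to $-2\Im\ip{\delta\Phi_N}{A^{(N)}\Phi_N}$, and by the definition \eqref{l2_symp} of $\omega_{L^{2}}$ this equals $\omega_{L^{2}}(A^{(N)}\Phi_N,\delta\Phi_N)$. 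By uniqueness of the symplectic gradient (\cref{schwartz_deriv} together with the uniqueness argument underlying \cref{lem:grad_unq}), this identifies
\begin{equation*}
\grad_{s}f(\Phi_N) = A^{(N)}\Phi_N = dF_A[\iota_{DM,N}(\Phi_N)](\Phi_N),
\end{equation*}
where the second equality uses \cref{rem:con_GD} (the G\^ateaux derivative of a trace functional is constant and agrees with $A^{(N)}\in\g_N$ under the identification of \cref{rem:GD_Op_DM}). Since $A^{(N)}\in\L(\Sc_s(\R^N),\Sc_s(\R^N))$, the map $\Phi_N\mapsto A^{(N)}\Phi_N$ is continuous and $\R$-linear, hence smooth, so $f\in\A_{\Sc,N}$.

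Next, for a product $F = F_{A_1}\cdots F_{A_m}$ of trace functionals (the constant case being trivial), I would write $f = f_1\cdots f_m$ with $f_j\coloneqq F_{A_j}\circ\iota_{DM,N}$ and apply the Leibnitz rule for $\grad_s$ to obtain
\begin{equation*}
\grad_{s}f(\Phi_N) = \sum_{j=1}^{m}\biggl(\prod_{i\neq j} f_i(\Phi_N)\biggr) A_j^{(N)}\Phi_N.
\end{equation*}
On the other hand, iterating the Leibnitz rule for the G\^ateaux derivative yields
\begin{equation*}
dF[\iota_{DM,N}(\Phi_N)] = \sum_{j=1}^{m}\biggl(\prod_{i\neq j} F_{A_i}(\iota_{DM,N}(\Phi_N))\biggr) A_j^{(N)} \in \g_N,
\end{equation*}
under the identification of \cref{rem:GD_Op_DM}. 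Since $F_{A_i}\circ\iota_{DM,N} = f_i$ by construction, evaluating this operator on $\Phi_N$ reproduces the formula for $\grad_{s}f(\Phi_N)$. Smoothness of $\grad_{s}f:\Sc_s(\R^N)\rightarrow\Sc_s(\R^N)$ follows from the smoothness of each scalar factor $f_i$ and of the linear factor $\Phi_N\mapsto A_j^{(N)}\Phi_N$, so $f\in\A_{\Sc,N}$. Extending by $\R$-linearity over finite sums completes the argument.

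The only genuinely nontrivial step is the computation that $-i\overline{\ip{\delta\Phi_N}{A^{(N)}\Phi_N}}+i\ip{\delta\Phi_N}{A^{(N)}\Phi_N}$ is exactly a symplectic pairing against $A^{(N)}\Phi_N$, which is really just bookkeeping with the skew-adjointness hypothesis; the rest is an application of the product rule together with the generating structure of $\A_{DM,N}$. I do not anticipate any functional-analytic obstruction: because the generators are linear in $\Psi_N$, their G\^ateaux derivatives are constant, so no delicate regularity argument for the differential is needed, and $\g_N\subset\L(\Sc_s(\R^N),\Sc_s(\R^N))$ ensures the gradient lands in $\Sc_s(\R^N)$ without additional effort.
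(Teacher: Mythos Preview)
Your proof is correct and follows essentially the same approach as the paper: both hinge on the skew-adjointness of $dF[\iota_{DM,N}(\Phi_N)]$ to convert $df[\Phi_N](\delta\Phi_N)$ into a symplectic pairing against $dF[\iota_{DM,N}(\Phi_N)]\Phi_N$, and both invoke the generator structure of $\A_{DM,N}$ together with the Leibnitz rule to handle smoothness. The only difference is organizational---the paper performs the chain-rule computation once for general $F$ and then specializes to generators to verify smoothness, whereas you compute on generators first and then assemble---but the substance is identical.
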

\begin{proof}
Observe from the chain rule that for $\Phi_{N},\delta\Phi_{N}\in\mathcal{S}_{s}(\R^{N})$,
\begin{align}
df[\Phi_{N}](\delta\Phi_{N}) &= dF[\iota_{DM,N}(\Phi_{N})]\paren*{d\iota_{DM,N}[\Phi_{N}](\delta\Phi_{N})} \nonumber\\
&=dF[\iota_{DM,N}(\Phi_{N})]\paren*{\ket{\Phi_{N}}\bra{\delta\Phi_{N}} + \ket{\delta\Phi_{N}}\bra{\Phi_{N}}},
\end{align}
where we use the elementary computation
\begin{equation}
\label{diota_comp}
d\iota_{DM,N}[\Phi_{N}](\delta\Phi_{N}) =\ket*{\Phi_{N}}\bra*{\delta\Phi_{N}} + \ket*{\delta\Phi_{N}}\bra*{\Phi_{N}}.
\end{equation}
Identifying the functional $dF[\iota_{DM,N}(\Phi_{N})](\cdot)$ with a skew-adjoint DVO given by $dF[\iota_{DM,N}(\Phi_{N})]$ as in \cref{rem:GD_Op_DM}, we have that
\begin{align*}
dF[\iota_{DM,N}(\Phi_{N})]\paren*{\ket{\Phi_{N}}\bra{\delta\Phi_{N}} + \ket{\delta\Phi_{N}}\bra{\Phi_{N}}} &= i\Tr_{1,\ldots,N}\paren*{dF[\iota_{DM,N}(\Phi_{N})]\paren*{\ket{\Phi_{N}}\bra{\delta\Phi_{N}} + \ket{\delta\Phi_{N}}\bra{\Phi_{N}}}} \nonumber\\
&=i\ip{\delta\Phi_N}{dF[\iota_{DM,N}(\Phi_N)]\Phi_N} + i\ip{\Phi_N}{dF[\iota_{DM,N}[\Phi_N]\delta\Phi_N} .
\end{align*}
Since $dF[\iota_{DM,N}(\Phi_{N})]$ is skew-adjoint, the preceding expression equals
\begin{align*}
i\ip{\delta\Phi_N}{dF[\iota_{DM,N}(\Phi_N)]\Phi_N}- i\ip{dF[\iota_{DM,N}(\Phi_N)]\Phi_N}{\delta\Phi_N} &=-2\Im \ip{\delta\Phi_N}{dF[\iota_{DM,N}(\Phi_N)]\Phi_N} \nonumber\\
&=\omega_{L^2}(dF[\iota_{DM,N}(\Phi_N)]\Phi_N, \delta\Phi_N).
\end{align*}
We claim that the map $\Phi_N \mapsto dF[\iota_{DM,N}(\Phi_N)]\Phi_N$ is a smooth map of $\Sc_s(\R^N)$ to itself, which justifies our preceding manipulations. Indeed, suppose first that $F\in\A_{DM,N}$ is a trace functional. Then $dF[\iota_{DM,N}(\Phi_N)]=dF[0]$, and therefore the claim follows since $dF[0]$ is a continuous linear map of $\Sc_s(\R^N)$ to itself by definition of $\A_{DM,N}$. The general case then follows by the Leibnitz rule for the G\^ateaux derivative. Therefore, the functional $f$ has symplectic $L^2$ gradient 
\[
\grad_s f(\Phi_N) = dF[\iota_{DM,N}(\Phi_N)]\Phi_N,
\]
 and $\grad_s f$ is a smooth map of $\Sc_s(\R^N)$ to itself, which implies that $f\in\A_{\Sc,N}$.
\end{proof}

We recall from \eqref{pb_intro} the definition for $\pb{\cdot}{\cdot}_{L^{2}}$, and we consider the rescaled Poisson bracket
\begin{equation}
\pb{\cdot}{\cdot}_{L^{2},N} \coloneqq N\pb{\cdot}{\cdot}_{L^{2}}.
\end{equation}

\begin{prop}
\label{prop:DM_po}
Let $N\in\N$. Then 
\begin{equation}
\iota_{DM,N}: (\mathcal{S}_{s}(\R^{N}), \A_{\Sc,N}, \pb{\cdot}{\cdot}_{L^{2},N}) \rightarrow (\g_{N}^{*}, \A_{DM,N}, \pb{\cdot}{\cdot}_{\g_{N}^{*}})
\end{equation}
is a Poisson map.
\end{prop}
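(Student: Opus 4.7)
The plan is to verify the two requirements of \cref{def:po_map}: first, that the pullback sends $\A_{DM,N}$ into $\A_{\Sc,N}$, and second, that the Poisson brackets transform correctly. The first requirement is precisely the content of \cref{lem:f_adm}, which in fact gives us the crucial formula
\begin{equation}
\grad_{s}(F\circ\iota_{DM,N})(\Phi_{N}) = dF[\iota_{DM,N}(\Phi_{N})]\Phi_{N}, \qquad \forall \Phi_{N}\in\Sc_{s}(\R^{N}),
\end{equation}
where $dF[\iota_{DM,N}(\Phi_{N})]$ is identified with a skew-adjoint element of $\widetilde{\g}_{N}$ (and is actually an element of $\g_{N}$ for $F$ a trace functional, by definition of $\A_{DM,N}$).

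To verify the bracket identity, I would first reduce to the case that $F$ and $G$ are trace functionals. Indeed, the generating set for $\A_{DM,N}$ is given by constants and trace functionals, and both $\pb{\cdot}{\cdot}_{L^{2},N}$ and $\pb{\cdot}{\cdot}_{\g_{N}^{*}}$ satisfy the Leibnitz rule (as weak Poisson brackets). Since $\iota_{DM,N}^{*}$ is a ring homomorphism with respect to pointwise multiplication, the collection of pairs $(F,G)\in\A_{DM,N}\times\A_{DM,N}$ for which the Poisson morphism identity holds is closed under products in each slot. Pairs involving a constant contribute zero on both sides (constants have vanishing G\^ateaux derivative, hence vanishing symplectic gradient as well).

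For $F,G$ trace functionals, set $A\coloneqq dF[\iota_{DM,N}(\Phi_{N})]$ and $B\coloneqq dG[\iota_{DM,N}(\Phi_{N})]$, both of which are constant in $\Phi_{N}$ and lie in $\g_{N}$. Using the formula for the symplectic gradient, together with the definition of $\omega_{L^{2}}$ and the rescaling $\pb{\cdot}{\cdot}_{L^{2},N}=N\pb{\cdot}{\cdot}_{L^{2}}$, the left-hand side becomes
\begin{equation}
\pb{F\circ\iota_{DM,N}}{G\circ\iota_{DM,N}}_{L^{2},N}(\Phi_{N}) = 2N\Im\ip{A\Phi_{N}}{B\Phi_{N}}.
\end{equation}
For the right-hand side, I would unpack the Lie-Poisson bracket using $\comm{A}{B}_{\g_{N}}=N(AB-BA)$ and the trace identity $\Tr_{1,\ldots,N}(C\ket{\Phi_{N}}\bra{\Phi_{N}})=\ip{\Phi_{N}}{C\Phi_{N}}$, then apply the skew-adjointness $A^{*}=-A$, $B^{*}=-B$ to obtain
\begin{equation}
\pb{F}{G}_{\g_{N}^{*}}(\iota_{DM,N}(\Phi_{N})) = iN\paren*{-\ip{A\Phi_{N}}{B\Phi_{N}}+\ol{\ip{A\Phi_{N}}{B\Phi_{N}}}} = 2N\Im\ip{A\Phi_{N}}{B\Phi_{N}},
\end{equation}
matching the left-hand side.

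The routine but non-trivial step is keeping track of the skew-adjoint identification of the G\^ateaux derivatives under the $i\Tr(\cdot)$ pairing used to identify $(\g_{N}^{*})^{*}$; no major obstacle is expected beyond this bookkeeping and the reduction to trace functionals, which is where the factor of $N$ in the scaling $\pb{\cdot}{\cdot}_{L^{2},N}$ is precisely compensated by the factor of $N$ in the commutator defining $\comm{\cdot}{\cdot}_{\g_{N}}$.
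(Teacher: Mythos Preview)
Your proposal is correct and follows essentially the same approach as the paper. The only minor difference is that you first reduce to trace functionals via the Leibnitz rule, whereas the paper works directly with arbitrary $F,G\in\A_{DM,N}$: since \cref{lem:f_adm} already gives $\grad_{s}(F\circ\iota_{DM,N})(\Phi_{N})=dF[\iota_{DM,N}(\Phi_{N})]\Phi_{N}$ for every $F\in\A_{DM,N}$ (not just trace functionals), the computation $2N\Im\ip{dF[\iota_{DM,N}(\Phi_{N})]\Phi_{N}}{dG[\iota_{DM,N}(\Phi_{N})]\Phi_{N}}=\pb{F}{G}_{\g_{N}^{*}}(\iota_{DM,N}(\Phi_{N}))$ goes through unchanged, making the reduction step unnecessary.
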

\begin{proof}
As observed above, the smoothness of $\iota_{DM,N}$ is evident, and by \cref{lem:f_adm}, $F\circ \iota_{DM,N} \in \A_{\Sc,N}$ for any $F\in\A_{DM,N}$. Hence, it remains for us to show that for all $F,G\in \A_{DM,N}$,
\begin{equation}
\pb{F\circ\iota_{DM,N}}{G\circ \iota_{DM,N}}_{L^{2},N}(\Phi_N) = \pb{F}{G}_{\g_{N}^{*}}\circ \iota_{DM,N}(\Phi_N), \qquad \forall \,\Phi_N\in\mathcal{S}_{s}(\R^{N}).
\end{equation}
For convenience, we introduce the notation $f\coloneqq F\circ \iota_{DM,N}$ and $g\coloneqq G\circ\iota_{DM,N}$.  We first consider the expression $\pb{f}{g}_{L^{2},N}(\Phi_N)$. Observe that by definition of the Poisson bracket $\pb{\cdot}{\cdot}_{L^2,N}$,
\begin{align}
\pb{f}{g}_{L^{2},N}(\Phi_{N}) &=  N\omega_{L^2}(\grad_s f(\Phi_{N}),\grad_s g(\Phi_{N}))  \nonumber \\
&= 2N\Im\ip{dF[\iota_{DM,N}(\Phi_N)]\Phi_N}{dG[\iota_{DM,N}(\Phi_N)]\Phi_N}.
\end{align}
Now using the skew-adjointness of $dG[\iota_{DM,N}(\Phi_{N})]$ and $dF[\iota_{DM,N}(\Phi_{N})]$, we conclude that the last expression equals
\begin{align}
& iN\paren*{\ip{\Phi_N}{dF[\iota_{DM,N}(\Phi_N) dG[\iota_{DM,N}(\Phi_N)]\Phi_N} - \ip{\Phi_N}{dG[\iota_{DM,N}(\Phi_N)]dF[\iota_{DM,N}(\Phi_N)]\Phi_N}}  \nonumber\\
&=i\Tr_{1,\ldots,N}\paren*{\comm{dF[\iota_{DM,N}(\Phi_N)]}{dG[\iota_{DM,N}(\Phi_N)]}_{\g_N}\ket*{\Phi_N}\bra*{\Phi_N}} \nonumber\\
&=\pb{F}{G}_{\g_{N}^{*}} \circ\iota_{DM,N}(\Phi_{N}),
\end{align}
which is exactly what we wanted to show.
\end{proof}

We next show that there is a linear homomorphism of Lie algebras $\G_{N}\rightarrow \g_{N}$ induced by the embeddings $\{\epsilon_{k,N}\}_{k\in\N_{\leq N}}$. We will then combine this fact with a duality argument to prove that the reduced density matrix operation is a Poisson mapping 
\begin{equation}
(\g_N^*,\A_{DM,N},\pb{\cdot}{\cdot}_{\g_N^*}) \rightarrow (\G_N^*, \A_{H,N}, \pb{\cdot}{\cdot}_{\G_N^*}).
\end{equation}

\begin{prop}
\label{prop:LA_hom_sum}
For any $N\in \N$, the map
\begin{equation}
\iota_{\ep,N}: \G_{N} \rightarrow \g_{N}, \qquad \iota_{\ep,N}(A_N) \coloneqq \sum_{k=1}^{N} \epsilon_{k,N}(A_N^{(k)}),
\end{equation}
is a continuous linear homomorphism of Lie algebras.
\end{prop}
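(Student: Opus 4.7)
The plan is to verify the three required properties—linearity, continuity, and compatibility with Lie brackets—in turn, with the bracket identity being essentially built into our definition of $[\cdot,\cdot]_{\G_N}$.

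For linearity, this is immediate since $\iota_{\ep,N}$ is a finite sum of the maps $\ep_{k,N}$, each of which is linear by \cref{lem:ep_con}. For continuity, I would invoke the universal property of the locally convex direct sum $\G_N=\bigoplus_{k=1}^N\g_k$: a linear map out of a direct sum is continuous iff its restriction to each summand is continuous. Since each composition $A^{(k)}\mapsto \ep_{k,N}(A^{(k)}) \hookrightarrow \g_N$ is continuous (again by \cref{lem:ep_con}, together with the fact that the skew-adjointness constraint defines a closed subspace with the subspace topology), $\iota_{\ep,N}$ is continuous.

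The crucial step is showing the Lie bracket identity
\begin{equation*}
\iota_{\ep,N}\paren*{\comm{A}{B}_{\G_N}} = \comm{\iota_{\ep,N}(A)}{\iota_{\ep,N}(B)}_{\g_N}, \qquad A,B\in\G_N.
\end{equation*}
Here the proof is almost tautological from \eqref{lie_def}. Applying $\ep_{k,N}$ to the $k$-th component of $[A,B]_{\G_N}$, the inverse $\ep_{k,N}^{-1}$ disappears and we get
\begin{equation*}
\ep_{k,N}\paren*{\comm{A}{B}_{\G_N}^{(k)}} = \sum_{\substack{1\leq \ell,j\leq N \\ \min\{\ell+j-1,N\}=k}} \comm{\ep_{\ell,N}(A^{(\ell)})}{\ep_{j,N}(B^{(j)})}_{\g_N},
\end{equation*}
which is well-defined precisely by the filtration property \cref{lem:hi_fil}. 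Summing over $k\in\N_{\leq N}$ on both sides, the right-hand side becomes the double sum over all pairs $(\ell,j)\in\N_{\leq N}\times\N_{\leq N}$, since each such pair contributes to exactly one value of $k=\min\{\ell+j-1,N\}$. By the bilinearity of $[\cdot,\cdot]_{\g_N}$, this double sum factors as $[\sum_\ell \ep_{\ell,N}(A^{(\ell)}), \sum_j \ep_{j,N}(B^{(j)})]_{\g_N} = [\iota_{\ep,N}(A), \iota_{\ep,N}(B)]_{\g_N}$, which is exactly what we want.

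I do not foresee significant obstacles: the entire construction of $[\cdot,\cdot]_{\G_N}$ in \cref{ssec:N_LA} was designed so that $\iota_{\ep,N}$ would be a Lie algebra homomorphism, and the only mild subtlety is bookkeeping the partition of the pair-index set $\{(\ell,j):1\leq\ell,j\leq N\}$ according to the value of $\min\{\ell+j-1,N\}$. Continuity is routine from the direct-sum topology and \cref{lem:ep_con}.
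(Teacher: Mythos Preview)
Your proposal is correct and follows essentially the same approach as the paper: linearity and continuity from \cref{lem:ep_con}, then the bracket identity by applying $\ep_{k,N}$ to \eqref{lie_def}, summing over $k$, using the partition of $\{(\ell,j)\in(\N_{\leq N})^2\}$ by the value of $\min\{\ell+j-1,N\}$, and finally factoring via bilinearity of $\comm{\cdot}{\cdot}_{\g_N}$. The only cosmetic difference is that you spell out the direct-sum universal property for continuity, whereas the paper simply declares it evident.
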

\begin{proof}
Continuity and linearity are evident from the continuity and linearity of the maps $\ep_{k,N}$ (recall \cref{lem:ep_con}). To show that $\iota_{sum,N}$ is a homomorphism of Lie algebras, we need to show that for any 
\begin{equation}
A_N=(A_N^{(k)})_{k\in\N_{\leq N}},\,\, B_N=(B_N^{(k)})_{k\in\N_{\leq N}}\in\G_{N},
\end{equation}
we have that
\begin{equation}
\iota_{\ep,N}\paren*{\comm{A_N}{B_N}_{\G_{N}}} = \comm{\iota_{\ep,N}(A_N)}{\iota_{\ep,N}(B_N)}_{\g_{N}}.
\end{equation}

Consider the left-hand side expression. By the definition of the map $\iota_{\ep,N}$, the definition of the Lie bracket $\comm{\cdot}{\cdot}_{\G_{N}}$ from \cref{lie_def}, and \cref{lem:hi_fil}, we obtain that
\begin{align}
\iota_{\ep,N}\paren*{\comm{A_N}{B_N}_{\G_{N}}} &= \sum_{k=1}^{N} \epsilon_{k,N}\paren*{\comm{A_N}{B_N}_{\G_{N}}^{(k)}} \nonumber\\
&=\sum_{k=1}^{N}\epsilon_{k,N}(C_N^{(k)}) \nonumber\\
&=\sum_{k=1}^{N} \sum_{{1\leq \ell,j\leq N}\atop {\min\{\ell+j-1,N\}=k}} \comm{\epsilon_{\ell,N}(A_N^{(\ell)})}{\epsilon_{j,N}(B_N^{(j)})}_{\g_{N}} \nonumber.
\end{align}
Using the partition
\begin{equation}
\{(\ell,j) \in (\N_{\leq N})^2 \} = \bigcup_{k=1}^{N} \{(\ell,j)\in  (\N_{\leq N})^2: \min\{\ell+j-1,N\}=k\},
\end{equation}
we see that
\begin{equation}
\sum_{k=1}^{N} \sum_{{1\leq \ell,j\leq N}\atop {\min\{\ell+j-1,N\}=k}} \comm{\epsilon_{\ell,N}(A_N^{(\ell)})}{\epsilon_{j,N}(B_N^{(j)})}_{\g_{N}}  = \sum_{\ell=1}^{N}\sum_{j=1}^{N} \comm{\epsilon_{\ell,N}(A_N^{(\ell)})}{\epsilon_{j,N}(B_N^{(j)})}_{\g_{N}}.
\end{equation}
By the definition of the map $\iota_{\ep,N}$ and the bilinearity of Lie brackets, we observe that
\begin{equation}
\sum_{\ell=1}^{N}\sum_{j=1}^{N}\comm{\epsilon_{\ell,N}(A_N^{(\ell)})}{\epsilon_{j,N}(B_N^{(j)})}_{\g_{N}} = \comm{\iota_{\ep,N}(A_N)}{\iota_{\ep,N}(B_N)}_{\g_{N}} ,
\end{equation}
which completes the proof.
\end{proof}

Finally, we show that there is a canonical Poisson mapping of $\g_{N}^{*} \rightarrow \G_{N}^{*}$ given by taking the sequence of reduced density matrices.

\begin{prop}[RDM Map is Poisson]\label{prop:RDM_Po}
The map $\iota_{RDM,N}: \g_{N}^{*} \rightarrow \G_{N}^{*}$ given by
\begin{equation}
\iota_{RDM,N}(\Psi_{N}) \coloneqq \Gamma_{N} = (\gamma_{N}^{(k)})_{k\in\N_{\leq N}}, \qquad \gamma_{N}^{(k)} \coloneqq \Tr_{k+1,\ldots,N}\paren*{\Psi_{N}}
\end{equation}
is a Poisson map.
\end{prop}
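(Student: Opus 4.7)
The plan is to recognize $\iota_{RDM,N}$ as the dual of the Lie algebra homomorphism $\iota_{\ep,N}:\G_N\to\g_N$ from \cref{prop:LA_hom_sum} under the canonical trace pairings, and to leverage this duality to transfer the bracket-preservation property of $\iota_{\ep,N}$ up to the Poisson level. Smoothness of $\iota_{RDM,N}$ is immediate from its linearity together with the continuity of each generalized partial trace $\Tr_{k+1,\ldots,N}$ (\cref{prop:partial_trace}).

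The first substantive step is the pullback-of-observables computation. By \cref{rem:AH_struc}, it suffices to consider a trace functional $F(\Gamma_N)=i\Tr(A_N\cdot\Gamma_N)$ with $A_N\in\G_N$. Unfolding the definition of the generalized partial trace and using the cyclicity of the ordinary trace, I would rewrite
\begin{equation*}
F(\iota_{RDM,N}(\Psi_N))=i\sum_{k=1}^N\Tr_{1,\ldots,N}\paren*{A_{N,(1,\ldots,k)}^{(k)}\Psi_N}.
\end{equation*}
Since $\Psi_N\in\g_N^*$ is bosonic, \cref{lem:tr_bos} allows me to average each summand over the orbit $P_k^N$ of permuted placements, and the combinatorial factor $C_{k,N}=1/|P_k^N|$ absorbs the cardinality exactly to produce the pivotal identity
\begin{equation}\label{eq:rdm_pullback}
F\circ\iota_{RDM,N}(\Psi_N)=i\Tr_{1,\ldots,N}\paren*{\iota_{\ep,N}(A_N)\Psi_N}.
\end{equation}
This is a trace functional on $\g_N^*$, hence lies in $\A_{DM,N}$, establishing $\iota_{RDM,N}^*\A_{H,N}\subset\A_{DM,N}$.

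The heart of the argument is the verification of the bracket identity $\pb{F\circ\iota_{RDM,N}}{G\circ\iota_{RDM,N}}_{\g_N^*}=\pb{F}{G}_{\G_N^*}\circ\iota_{RDM,N}$ for all $F,G\in\A_{H,N}$. The key lemma, which follows from the linearity of $\iota_{RDM,N}$ together with the Leibnitz rule for the G\^ateaux derivative, will be the identification
\begin{equation*}
d(F\circ\iota_{RDM,N})[\Psi_N]=\iota_{\ep,N}\paren*{dF[\iota_{RDM,N}(\Psi_N)]}\quad\text{in }\g_N,
\end{equation*}
viewed via the canonical trace pairing on $\g_N^*$. Inserting this into the Lie-Poisson formula for $\pb{\cdot}{\cdot}_{\g_N^*}$, invoking \cref{prop:LA_hom_sum} to bring $\iota_{\ep,N}$ outside the commutator, and then reapplying \eqref{eq:rdm_pullback} (with $A_N$ chosen to be $\comm{dF[\iota_{RDM,N}(\Psi_N)]}{dG[\iota_{RDM,N}(\Psi_N)]}_{\G_N}$) converts the trace against $\Psi_N$ into a trace against $\iota_{RDM,N}(\Psi_N)$, which is precisely $\pb{F}{G}_{\G_N^*}(\iota_{RDM,N}(\Psi_N))$.

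The main obstacle I anticipate is the verification of the G\^ateaux derivative identification for non-trace elements of $\A_{H,N}$, i.e.\ finite products of trace functionals. In that case the Leibnitz rule produces scalar factors depending on the base point $\iota_{RDM,N}(\Psi_N)$, and these must be shown to pass transparently through the linear map $\iota_{\ep,N}$ to match on both sides of the proposed identity. This is a direct analogue of the bookkeeping performed in \cref{lem:tr_gen_H}, and I expect to handle it by first reducing to products of two trace functionals using the Leibnitz rule and then iterating.
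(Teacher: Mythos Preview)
Your proposal is correct and follows the same conceptual route as the paper: both identify $\iota_{RDM,N}$ as the dual of the Lie algebra homomorphism $\iota_{\ep,N}$ via the key trace identity \eqref{eq:rdm_pullback}, using \cref{lem:tr_bos} and the combinatorics $|P_k^N|=1/C_{k,N}$ exactly as you describe. The only difference is packaging: the paper invokes the abstract principle (stated as a lemma from \cite{MR2013}) that the dual of a Lie algebra homomorphism is automatically a linear Poisson map, whereas you unpack that principle by hand---computing $d(F\circ\iota_{RDM,N})[\Psi_N]=\iota_{\ep,N}(dF[\iota_{RDM,N}(\Psi_N)])$, applying \cref{prop:LA_hom_sum} inside the bracket, and feeding the result back through \eqref{eq:rdm_pullback}. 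Your explicit route has the minor advantage of not relying on a black-box lemma whose statement in \cite{MR2013} is formulated for classical (strong) Lie--Poisson structures rather than the weak setting of \cref{def:WP}; the paper's route is shorter once that lemma is granted.
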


To prove \cref{prop:RDM_Po}, we will show that $\iota_{RDM,N}$ is the dual of the map $\iota_{sum,N}$, which, by \cref{prop:LA_hom_sum}, we know is a continuous linear homomorphism of Lie algebras. We then appeal to the following general result, the statement of which we have taken from \cite[Proposition 10.7.2]{MR2013}.

\begin{lemma}
Let $(\g,\comm{\cdot}{\cdot}_{\g})$ and $(\h,\comm{\cdot}{\cdot}_{\h})$ be Lie algebras. Let $\alpha:\g \rightarrow \h$ be a linear map. Then the map $\alpha$ is a homomorphism of Lie algebras if and only if its dual map $\alpha^{*}: \h^{*} \rightarrow \g^{*}$ is a (linear) Poisson map.
\end{lemma}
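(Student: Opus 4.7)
My plan is to verify the equivalence by a direct computation of both sides of the Poisson morphism identity using the explicit formula for the Lie-Poisson bracket and the chain rule for G\^ateaux derivatives. Throughout, I will use the nondegenerate pairings $\ip{\cdot}{\cdot}_{\g-\g^*}$ and $\ip{\cdot}{\cdot}_{\h-\h^*}$ and denote by $\delta F/\delta \mu \in \g$ the functional derivative of a smooth $F \in C^\infty(\g^*;\mathbb{F})$ at $\mu \in \g^*$.

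First, I would unpack what it means for the linear map $\alpha^* : \h^* \to \g^*$, defined by $\ip{x}{\alpha^*(\nu)}_{\g-\g^*} = \ip{\alpha(x)}{\nu}_{\h-\h^*}$, to be a Poisson map. The condition requires that for every admissible $F, G \in C^\infty(\g^*;\mathbb{F})$ and every $\nu \in \h^*$,
\begin{equation}
\pb{F}{G}_{\g^*}(\alpha^*(\nu)) = \pb{F\circ\alpha^*}{G\circ\alpha^*}_{\h^*}(\nu).
\end{equation}
I would then compute both sides. For the left-hand side, I apply the Lie-Poisson formula together with the definition of $\alpha^*$ to obtain
\begin{equation}
\pb{F}{G}_{\g^*}(\alpha^*(\nu)) = \ipp{\comm{\tfrac{\delta F}{\delta \mu}}{\tfrac{\delta G}{\delta \mu}}_{\g},\alpha^*(\nu)}_{\g-\g^*} = \ipp{\alpha\paren*{\comm{\tfrac{\delta F}{\delta \mu}}{\tfrac{\delta G}{\delta \mu}}_{\g}},\nu}_{\h-\h^*},
\end{equation}
where the functional derivatives are evaluated at $\mu = \alpha^*(\nu)$.

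Next, I compute the right-hand side. The key intermediate step is the chain-rule identity $\delta(F \circ \alpha^*)/\delta\nu = \alpha(\delta F/\delta\mu|_{\alpha^*(\nu)})$, which follows from
\begin{equation}
d(F\circ\alpha^*)[\nu](\delta\nu) = \ipp{\tfrac{\delta F}{\delta\mu}\big|_{\alpha^*(\nu)},\alpha^*(\delta\nu)}_{\g-\g^*} = \ipp{\alpha\paren*{\tfrac{\delta F}{\delta\mu}\big|_{\alpha^*(\nu)}},\delta\nu}_{\h-\h^*}
\end{equation}
together with uniqueness of the functional derivative. Inserting this into the Lie-Poisson formula on $\h^*$ yields
\begin{equation}
\pb{F\circ\alpha^*}{G\circ\alpha^*}_{\h^*}(\nu) = \ipp{\comm{\alpha\paren*{\tfrac{\delta F}{\delta \mu}}}{\alpha\paren*{\tfrac{\delta G}{\delta \mu}}}_{\h},\nu}_{\h-\h^*}.
\end{equation}
Consequently, $\alpha^*$ is a Poisson map if and only if
\begin{equation}\label{eq:dual_equiv}
\ipp{\alpha\paren*{\comm{\tfrac{\delta F}{\delta \mu}}{\tfrac{\delta G}{\delta \mu}}_{\g}},\nu}_{\h-\h^*} = \ipp{\comm{\alpha\paren*{\tfrac{\delta F}{\delta \mu}}}{\alpha\paren*{\tfrac{\delta G}{\delta \mu}}}_{\h},\nu}_{\h-\h^*}
\end{equation}
for all admissible $F, G$ and all $\nu \in \h^*$.

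To conclude, I would observe that one direction is immediate: if $\alpha$ is a Lie algebra homomorphism, then the two sides of \eqref{eq:dual_equiv} agree identically in $\g$, so $\alpha^*$ is a Poisson map. For the converse, I would exploit the freedom in choosing $F$ and $G$: for any $x \in \g$, the linear functional $F_x(\mu) \coloneqq \ip{x}{\mu}_{\g-\g^*}$ satisfies $\delta F_x/\delta \mu \equiv x$, and analogously for $y \in \g$. Feeding such $F_x, F_y$ into \eqref{eq:dual_equiv} yields
\begin{equation}
\ipp{\alpha\paren*{\comm{x}{y}_{\g}} - \comm{\alpha(x)}{\alpha(y)}_{\h},\nu}_{\h-\h^*} = 0
\end{equation}
for all $\nu \in \h^*$, and $\h^*$-nondegeneracy of the pairing forces $\alpha([x,y]_\g) = [\alpha(x),\alpha(y)]_\h$. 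The main (mild) subtlety will be ensuring that the class of admissible functionals on $\g^*$ and $\h^*$ is large enough to contain these linear functionals and is mapped into itself by pullback along $\alpha^*$; in the classical setting of \cite{MR2013} this is automatic, and no significant obstacle arises.
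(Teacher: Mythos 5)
Your argument is correct and is the standard proof of this fact; the paper itself omits the proof entirely, simply citing \cite[Proposition 10.7.2]{MR2013}, so there is nothing to compare against beyond noting that your computation (chain rule for the functional derivative of $F\circ\alpha^*$, followed by testing with the linear functionals $F_x(\mu)=\ip{x}{\mu}_{\g-\g^*}$ and invoking nondegeneracy of the pairing on the $\h$ side) is exactly the textbook route. Your closing caveat about the admissible class of functionals is well placed: the linear functionals $F_x$ must be admissible and pullback along $\alpha^*$ must preserve admissibility, which holds in the classical setting the lemma is quoted from and is precisely what the paper arranges separately (via the generating structure of $\A_{DM,N}$ and $\A_{H,N}$) when it applies the lemma to $\iota_{\ep,N}$.
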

\begin{proof}[Proof of \cref{prop:RDM_Po}]
As stated above, we want to show that the reduced density matrix $\iota_{RDM,N}$ is the dual of the map
\begin{equation}
\iota_{\ep,N}:\G_{N} \rightarrow \g_{N}, \qquad A_N = (A_N^{(1)},\ldots,A_N^{(N)}) \mapsto \sum_{k=1}^{N} \epsilon_{k,N}(A_N^{(k)}).
\end{equation}
Indeed, observe that for $\Psi_{N}\in \g_{N}^{*}$ and $A_N=(A_N^{(k)})_{k\in\N_{\leq N}}\in \G_{N}$, we see from unpacking the definition of $\iota_{\ep,N}$ and using the bilinearity of the generalized trace that
\begin{equation}
 \iota_{\ep,N}^{*}(\Psi_{N})(A_N) = i\Tr_{1,\ldots,N}\paren*{ \iota_{\ep,N}(A_N)\Psi_{N}} =\sum_{k=1}^{N} i\Tr_{1,\ldots,N}\paren*{\epsilon_{k,N}(A_N^{(k)})\Psi_{N}}.
\end{equation}
Unpacking the definition \eqref{eps_def} of the map $\epsilon_{k,N}(A_N^{(k)})$ and using the bilinearity of the generalized trace again, we see that
\begin{equation}
\sum_{k=1}^{N} i\Tr_{1,\ldots,N}\paren*{\epsilon_{k,N}(A_N^{(k)})\Psi_{N}} = \sum_{k=1}^{N}\sum_{\ul{p}_k\in P_{k}^{N}} iC_{k,N}\Tr_{1,\ldots,N}\paren*{A_{N,(p_1,\ldots,p_k)}^{(k)}\Psi_N}.
\end{equation}
Hence using that $\Psi_{N}$ is bosonic and \cref{lem:tr_bos}, we have that
\begin{align}
\Tr_{1,\ldots,N}\paren*{A_{N,(p_1,\ldots,p_k)}^{(k)}\Psi_N} = \Tr_{1,\ldots,N}\paren*{A_{N,(1,\ldots,k)}^{(k)}\Psi_N} &= \Tr_{1,\ldots,k}\paren*{A_N^{(k)} \Tr_{k+1,\ldots,N}\paren*{\Psi_{N}}} \nonumber\\
&= \Tr_{1,\ldots,k}\paren*{A_N^{(k)}\gamma_{N}^{(k)}},
\end{align}
where the ultimate equality follows by definition of $\gamma_N^{(k)}$. Since $|P_{k}^{N}| = 1/C_{k,N}$, we conclude that
\begin{equation}
\iota_{\ep,N}^*(\Psi_N)(A_N) = \sum_{k=1}^{N} i\Tr_{1,\ldots,k}\paren*{A_N^{(k)}\gamma_{N}^{(k)}} = i\Tr\paren*{A_N\cdot\iota_{RDM,N}(\Psi_{N})},
\end{equation}
which completes the proof of the proposition.
\end{proof}

\section{Geometric structure for infinity hierarchies} \label{sec:geom}
In this section, we compute the limit of the $N$-body Lie algebra $(\G_N,\comm{\cdot}{\cdot}_{\G_N})$ as $N\rightarrow\infty$. We then show that in this limit, the higher-order contractions appearing in formula \eqref{eq:N_LB_form} vanish. Consequently, the domain of definition of the Lie bracket may be enlarged, for which we construct the Lie algebra $(\G_{\infty},\comm{\cdot}{\cdot}_{\G_{\infty}})$ of observable $\infty$-hierarchies and dually, the weak Lie-Poisson manifold $(\G_{\infty}^{*},\A_{\infty},\pb{\cdot}{\cdot}_{\G_{\infty}^{*}})$ of density matrix $\infty$-hierarchies.

\subsection{The limit of $\G_N$ as $N\rightarrow\infty$}
In order to pass from the $N$-particle setting to the $\infty$-particle setting, we first study the limit of the Lie algebra $(\G_N,\comm{\cdot}{\cdot}_{\G_N})$ as $N\rightarrow \infty$.

Via the natural inclusion map, we can identify $\G_N$ as the subspace of the locally convex direct sum
\begin{equation}\label{colim2}
\mathfrak{F}_\infty \coloneqq \bigcup_{N=1}^\infty \G_N =  \bigoplus_{k=1}^\infty \g_k
\end{equation}
consisting of elements $A=(A^{(k)})_{k\in\N}$, where $A^{(k)} = 0$ for $k\geq N+1$. In our next result, \cref{prop:LB_lim}, we establish a formula for the limiting bracket structure for $\G_\infty$.

\LBlim*

\begin{proof}
Let $k\in\N$. For $M\gg k$, we have by \cref{prop:Lie_hi_form} and the linearity of the map $\epsilon_{k,N}$ that
\begin{align}
&\sum_{{ \ell,j\geq 1}\atop {\ell+j-1=k}} \epsilon_{k,M}^{-1}\paren*{\comm{\epsilon_{\ell,M}(A^{(\ell)})}{\epsilon_{j,M}(B^{(j)})}_{\g_{M}}} \nonumber\\
&=\sum_{{ \ell,j\geq 1}\atop {\ell+j-1=k}}\Sym_k\paren*{\sum_{r=1}^{\min\{\ell,j\}} \frac{M C_{\ell,M}C_{j,M}}{C_{k,M} \prod_{a=1}^{r-1}(M-k+a)} \comm{A^{(\ell)}}{B^{(j)}}_r} \nonumber\\
&=\sum_{{ \ell,j\geq 1}\atop {\ell+j-1=k}}\Sym_k\paren*{\frac{MC_{\ell,M}C_{j,M}}{C_{k,M}} \comm{A^{(\ell)}}{B^{(j)}}_1}\nonumber\\
&\phantom{=} \hspace{24mm}+ \sum_{{ \ell,j\geq 1}\atop {\ell+j-1=k}}\Sym_k\paren*{\sum_{r=2}^{\min\{\ell,j\}}\frac{MC_{\ell,M}C_{j,M}}{C_{k,M}\prod_{a=1}^{r-1}(M-k+a)} \comm{A^{(\ell)}}{B^{(j)}}_r} \nonumber\\
&\eqqcolon \mathrm{Term}_{1,M}+\mathrm{Term}_{2,M}.
\end{align}

We first consider $\mathrm{Term}_{1,M}$. Since
\begin{equation*}
\lim_{M\rightarrow\infty}\frac{MC_{\ell,M}C_{j,M}}{C_{k,M}}=  \lim_{M\rightarrow\infty}\frac{M \prod_{a=1}^{k} (M+1-a)}{(\prod_{a=1}^\ell (M+1-a)) (\prod_{a=1}^j (M+1-a))} =\lim_{M\rightarrow\infty} \frac{M^{k+1}}{M^{\ell+j}} = 1,
\end{equation*}
we see that
\begin{equation}
\mathrm{Term}_{1,M} \rightarrow \sum_{\ell,j\geq 1;\ell+j-1=k}\Sym_k\paren*{\comm{A^{(\ell)}}{B^{(j)}}_1},
\end{equation}
as $M\rightarrow \infty$, in $\g_k$.

We next consider $\mathrm{Term}_{2,M}$. Let $2\leq r\leq \min\{\ell,j\}$. Since
\begin{align}
\lim_{M\rightarrow\infty} \frac{MC_{\ell,N}C_{j,M}}{C_{k,M}\prod_{a=1}^{r-1} (M-k+a)} &=  \lim_{M\rightarrow\infty}\frac{M \prod_{a=1}^{k} (M+1-a)}{(\prod_{a=1}^\ell (M+1-a)) (\prod_{a=1}^j (M+1-a)) (\prod_{a=1}^{r-1} (M-k+a))} \nonumber\\
&= \lim_{M\rightarrow\infty} \frac{M^{k+1}}{M^{\ell+j+r-1}}  \nonumber\\
&=\lim_{M\rightarrow\infty} M^{1-r} \nonumber\\
&=0,
\end{align}
we see that
\begin{equation}
\Sym_k\paren*{\frac{MC_{\ell,M}C_{j,M}}{C_{k,M}\prod_{a=1}^{r-1}(M-k+a)} \comm{A^{(\ell)}}{B^{(j)}}_r} \rightarrow 0,
\end{equation}
as $M\rightarrow\infty$, in $\g_k$. Summing over the ranges $2\leq r\leq \min\{\ell,j\}$ and $\ell+j-1=k$, for a total of finitely many terms, we conclude that
\begin{equation}
\mathrm{Term}_{2,M} \rightarrow 0,
\end{equation}
as $M\rightarrow \infty$, in $\g_k$, proving the result.
\end{proof}

\subsection{The Lie algebra $\G_{\infty}$ of observable $\infty$-hierarchies}\label{ssec:geo_LA}
As mentioned in the introduction, the simplified form of $\comm{\cdot}{\cdot}_{\G_\infty}$ allows us to take advantage of the good mapping property and extend this bracket to a map on a much larger real topological vector space, which we redefine $\G_\infty$ to be, to obtain a Lie algebra of observable $\infty$-hierarchies. We rigorously construct this extension now. 

We define $\g_{k,gmp}$ to be
\begin{align}\label{gkgmp_def}
\g_{k,gmp} \coloneqq \{A^{(k)} \in \L_{gmp}(\Sc_{s}(\R^{k}), \Sc_{s}'(\R^{k})) : A^{(k)} = - (A^{(k)})^{*}\}.
\end{align}
In words, $\g_{k,gmp}$ is the real, locally convex space consisting of skew-adjoint elements of $\L_{gmp}(\Sc_{s}(\R^{k}),\Sc_{s}'(\R^{k}))$. We will hereafter refer to the elements of $\g_{k, gmp}$ as \emph{$k$-particle or $k$-body observables}.  We define the locally convex direct sum 
\begin{align}\label{Ginf_def}
\G_{\infty} \coloneqq \bigoplus_{k=1}^{\infty}\g_{k,gmp} .
\end{align}
We refer to the elements of $\G_{\infty}$ as \emph{observable $\infty$-hierarchies}. For  
\[
A = (A^{(k)})_{k\in\N},\,\, B = (B^{(k)})_{k\in\N} \in \G_\infty,
\]
 we define
\begin{equation}\label{eq:C_def}
\begin{split}
&\comm{A}{B}_{\G_{\infty}} \coloneqq C=(C^{(k)})_{k\in\N},\\
&C^{(k)} \coloneqq \Sym_k\biggl(\sum_{{ \ell,j\geq 1}\atop {\ell+j-1=k}} \comm{A^{(\ell)}}{B^{(j)}}_1\biggr),
\end{split}
\end{equation}
where $\Sym_k$ denotes the bosonic symmetrization operator defined in \cref{sec:pre}, which we recall is given by
\begin{equation}\label{eq:sym_repeat}
\Sym_k(A^{(k)}) \coloneqq \frac{1}{k!} \sum_{\pi\in\Ss_{k}} A^{(k)}_{(\pi(1), \ldots, \pi(k))}, \quad A^{(k)}_{(\pi(1), \ldots, \pi(k))} = \pi \circ A^{(k)}_{1, \ldots, k} \circ \pi^{-1}
\end{equation}
and where $\comm{A^{(\ell)}}{B^{(j)}}_1$ is given according to \eqref{eq:comm_r} by
\begin{equation}
\begin{split}
\comm{A^{(\ell)}}{B^{(j)}}_1 &= j A^{(\ell)}\circ_1 B^{(j)} - \ell B^{(j)}\circ_1 A^{(\ell)} \\
&= j A_{(1,\ldots,\ell)}^{(\ell)}\biggl(\sum_{\alpha=1}^\ell B_{(\alpha,\ell+1,\ldots,\ell+j-1)}^{(j)}\biggr) - \ell B^{(j)}\biggl(\sum_{\alpha=1}^j A_{(\alpha,j+1,\ldots,j+\ell-1)}^{(\ell)}\biggr).
\end{split}
\end{equation}

The main goal of this section is to establish the existence of a Lie algebra of observable $\infty$-hierarchies, namely, to prove \cref{prop:G_inf_br}:

\LA*

The construction follows closely our $N$-body approach in \cref{sec:geom_N}; however, there are new technical difficulties that have to be considered. Indeed, $\G_\infty$ contains more singular objects than $\G_N$, and we have to heavily exploit the good mapping property in order to handle this issue. We remind the reader the enlarged definition of $\G_\infty$, as opposed to simply the union of the $\G_N$, is necessary to accommodate the observable $\infty$-hierarchy $-i\W_{GP}$ which generates the GP Hamiltonian functional.

We first need to establish that the Lie bracket given by \eqref{eq:C_def} is well-defined on $\G_\infty$. To this end, we must begin by giving meaning to the composition
\begin{equation}\label{eq:comp}
A_{(1,\ldots,\ell)}^{(\ell)}\paren*{\sum_{\alpha=1}^\ell B_{(\alpha,\ell+1,\ldots,\ell+j-1)}^{(j)}}
\end{equation}
as an operator in $\L(\Sc(\R^{k}),\Sc'(\R^{k}))$, for which it will be convenient to proceed term-wise by extending $A^{(\ell)}$ and $B^{(j)}$ to operators defined on the entire space $\Sc(\R^\ell)$ and $\Sc(\R^j)$, respectively, as described in \cref{rem:tbyt}.\footnote{We will see later that the choice of extension is immaterial.} For general $A^{(\ell)}\in\L(\Sc(\R^{\ell}),\Sc'(\R^{\ell}))$ and $B^{(j)}\in \L(\Sc(\R^{j}),\Sc'(\R^{j}))$, such a composition may not be well-defined, see \cref{not_well_def}, and hence we appeal to the good mapping property of \cref{def:gmp} to give meaning to \eqref{eq:comp}. It will be useful in the sequel to observe that the definition of the good mapping property says the following: let  $A^{(\ell)} \in \L(\Sc(\R^{\ell}),\Sc'(\R^{\ell}))$ and $(f^{(\ell)},g^{(\ell)})\in\Sc(\R^{\ell})\times\Sc(\R^{\ell})$, and for fixed $x_\alpha'\in \R$, consider the distribution in $\Sc'(\R)$ defined by
\begin{equation}
\phi \mapsto \ipp*{A^{(\ell)}f^{(\ell)},\paren*{\phi\otimes_\alpha g^{(\ell)}(\cdot,x_{\alpha}',\cdot)}}_{\Sc'(\R^\ell)-\Sc(\R^\ell)},
\end{equation}
where
\begin{equation}
\label{eq:def_ot_a}
\paren*{\phi\otimes_\alpha g^{(\ell)}(\cdot,x_\alpha',\cdot)}(\ul{y}_\ell) \coloneqq \phi(y_\alpha)g^{(\ell)}(\ul{y}_{1;\alpha-1}, x_\alpha',\ul{y}_{\alpha+1;\ell}),\qquad \ul{y}_\ell \in\R^{\ell}.
\end{equation}
Then $A^{(\ell)}\in \L_{gmp}(\Sc(\R^{\ell}),\Sc'(\R^{\ell}))$ if the element of $\Sc(\R;\Sc'(\R))$\footnote{ Given a Hausdorff locally convex space $E$, we let $\Sc(\R^d;E)$ denote the space of functions $f\in C^\infty(\R^d;E)$ such that for each pair of $d$-dimensional polynomials $P$ and $Q$ with complex coefficients, the union $\bigcup_{x\in\R^d} \{P(x)Q(\p_x)f(x)\}$ is contained in a bounded subset of $E$. We endow $\Sc(\R^d;E)$ with the topology of uniform convergence of the functions $P(x)Q(\p_x)f(x)$, for all $P$ and $Q$.} defined by
\begin{equation}
x_{\alpha}' \mapsto \ipp*{A^{(\ell)}f^{(\ell)}, (\cdot)\otimes_\alpha g^{(\ell)}(\cdot,x_\alpha',\cdot)}_{\Sc'(\R^\ell)-\Sc(\R^\ell)},
\end{equation}
may be identified with a (necessarily unique) Schwartz function $\Phi(f^{(\ell)},g^{(\ell)})$ in $\Sc(\R^{2})$ by
\begin{equation}
\ipp*{A^{(\ell)}f^{(\ell)}, \phi\otimes_\alpha g^{(\ell)}(\cdot,x_\alpha',\cdot)}_{\Sc'(\R^\ell)-\Sc(\R^\ell)} = \int_{\R}dx_{\alpha}\Phi(f,g)(x_{\alpha},x_{\alpha}') \phi(x_{\alpha}), \qquad x_\alpha'\in\R,
\end{equation}
and the assignment $\Phi: \Sc(\R^{\ell}) \times \Sc(\R^{\ell}) \rightarrow \Sc(\R^{2})$ is continuous.

\begin{lemma}[$\circ_\alpha^\beta$ contraction]\label{lem:gmp}
Let $i,j\in\N$, let $k\coloneqq i+j-1$, and let $(\alpha,\beta)\in\N_{\leq i}\times\N_{\leq j}$. Then there exists a bilinear map, continuous in the first entry,
\begin{equation}
\circ_\alpha^\beta : \L(\Sc(\R^{i}),\Sc'(\R^{i})) \times \L_{gmp}(\Sc(\R^{j}),\Sc'(\R^{j})) \rightarrow \L(\Sc(\R^{k}),\Sc'(\R^{k})),
\end{equation}
such that $A^{(i)}\circ_\alpha^\beta B^{(j)}$ corresponds to
\begin{equation}
A^{(i)}\circ_\alpha^\beta B^{(j)} = A_{(1,\ldots,i)}^{(i)}B_{(i+1,\ldots,i+\beta-1,\alpha,i+\beta,\ldots,k)}^{(j)},
\end{equation}
when $A^{(i)}\in \L(\Sc(\R^i),\Sc(\R^i))$ and $B^{(j)}\in\L(\Sc(\R^j),\Sc(\R^j))$ or $A^{(i)}\in \L(\Sc(\R^i),\Sc'(\R^i))$ and $B^{(j)}\in \L(\Sc'(\R^j),\Sc'(\R^j))$. If we replace the domain space $\L(\Sc(\R^{i}),\Sc'(\R^{i}))$ for the first entry by $\L_{gmp}(\Sc(\R^{i}),\Sc'(\R^{i}))$, then the bilinear map
\begin{equation}
\circ_\alpha^\beta: \L_{gmp}(\Sc(\R^{i}),\Sc'(\R^{i}))\times \L_{gmp}(\Sc(\R^{j}),\Sc'(\R^{j})) \rightarrow \L_{gmp}(\Sc(\R^{k}),\Sc'(\R^{k}))
\end{equation}
is continuous in the first entry.
\end{lemma}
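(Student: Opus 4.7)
The plan is to construct $A^{(i)}\circ_\alpha^\beta B^{(j)}$ by specifying its Schwartz kernel as a jointly continuous bilinear form on $\Sc(\R^k)\times\Sc(\R^k)$ and then invoking the Schwartz kernel theorem. First I would decompose the ambient indices into three groups: the ``$A$-only'' coordinates $I_A^\circ \coloneqq \{1,\ldots,i\}\setminus\{\alpha\}$, the ``$B$-only'' coordinates $I_B^\circ\coloneqq\{i+1,\ldots,k\}$, and the shared index $\{\alpha\}$. The operator $B^{(j)}$ is to act on $I_B^\circ\cup\{\alpha\}$ with $\alpha$ placed in its $\beta$-th slot, while $A^{(i)}$ acts on $I_A^\circ\cup\{\alpha\}$. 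The key analytic input, which allows us to make sense of the a priori ill-defined composition of two distribution-valued operators, is the good mapping property of $B^{(j)}$, which rescues the contraction in the $\alpha$-coordinate.

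For $f,g\in\Sc(\R^k)$, I would first use the GMP of $B^{(j)}$, applied with its $\beta$-th coordinate kept, to integrate $B^{(j)}f$ against $g$ over the $j-1$ remaining $B$-coordinates. The GMP guarantees that the resulting assignment is continuous into $\Sc(\R^2)$ in the two $\beta$-th variables (primed and unprimed); parameterizing over the spectator $A$-coordinates via the kernel theorem for Schwartz-function-valued maps produces a jointly continuous bilinear map
\begin{equation*}
\Phi_B : \Sc(\R^k)\times\Sc(\R^k) \rightarrow \Sc(\R^{2i}),
\end{equation*}
where the ``$\alpha$'' and ``$\alpha'$'' slots of the output correspond to the kept $B$-coordinate and its prime. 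Since $\Sc(\R^{2i})\cong\Sc(\R^i)\hat\otimes\Sc(\R^i)$ and $A^{(i)}$ may be viewed as an element of $\Sc'(\R^{2i})$ via its Schwartz kernel, I would define
\begin{equation*}
\ipp*{(A^{(i)}\circ_\alpha^\beta B^{(j)})f,\, g} \coloneqq \ipp*{A^{(i)},\, \Phi_B(f,g)}_{\Sc'(\R^{2i})-\Sc(\R^{2i})}.
\end{equation*}
Joint continuity of $\Phi_B$ together with nuclearity $\Sc(\R^k)\hat\otimes\Sc(\R^k)\cong\Sc(\R^{2k})$ and the Schwartz kernel theorem then yield a well-defined element of $\L(\Sc(\R^k),\Sc'(\R^k))$. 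Continuity in the first entry for $A^{(i)}$ is immediate, since $A^{(i)}$ enters linearly through the pairing against a fixed continuous map $\Phi_B$, so bounded subsets of $\Sc(\R^i)\times\Sc(\R^i)$ pull back through $\Phi_B$ to bounded subsets of $\Sc(\R^k)\times\Sc(\R^k)$.

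To verify that this definition agrees with the naive composition $A_{(1,\ldots,i)}^{(i)}B_{(i+1,\ldots,i+\beta-1,\alpha,i+\beta,\ldots,k)}^{(j)}$ in the two settings stated (either $A\in\L(\Sc,\Sc)$ and $B\in\L(\Sc,\Sc)$, or $A\in\L(\Sc,\Sc')$ and $B\in\L(\Sc',\Sc')$), I would unravel the pairing in Step~2: in the first case the iterated integrals are absolutely convergent and Fubini identifies the two expressions, while in the second case the composition makes sense in the naive order because $B^{(j)}$ maps $\Sc'\to\Sc'$, and one again checks via the definition of $\Phi_B$ that the pairings match.

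The hardest step will be proving that if $A^{(i)}\in\L_{gmp}(\Sc(\R^i),\Sc'(\R^i))$ then the composition lies in $\L_{gmp}(\Sc(\R^k),\Sc'(\R^k))$, and that the resulting map $A^{(i)}\mapsto A^{(i)}\circ_\alpha^\beta B^{(j)}$ is continuous in the refined subspace topology. Verifying the good mapping property for the composition amounts to checking that for each $\gamma\in\N_{\leq k}$, tracing the composition against a test function in all but the $\gamma$-th coordinate produces a map valued in $\Sc(\R^2)$. When $\gamma\in I_B^\circ$, one invokes the GMP of $B^{(j)}$ in the $\gamma$-slot before applying $A^{(i)}$, which only sees the resulting Schwartz-valued output; when $\gamma\in I_A^\circ$, one first performs the $\Phi_B$-reduction in the $\alpha$-coordinate to land in $\Sc(\R^{2i})$ and then invokes the GMP of $A^{(i)}$ in the $\gamma$-slot. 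The delicate case is $\gamma=\alpha$, where both GMPs must be composed in compatible coordinates; the point here is that the two GMP-reductions together yield a bilinear map valued in a completed tensor product of $\Sc(\R)$ with itself, which is $\Sc(\R^2)$ by nuclearity. Continuity in the first entry for the $\L_{gmp}$ target then follows by repeating the preceding estimates uniformly over the defining seminorms.
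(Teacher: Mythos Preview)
Your proposal follows essentially the same route as the paper: both construct the composition by using the good mapping property of $B^{(j)}$ to produce a continuous bilinear map $\Phi_{B}:\Sc(\R^k)\times\Sc(\R^k)\to\Sc(\R^{2i})$, then define $\ipp{(A^{(i)}\circ_\alpha^\beta B^{(j)})f,g}$ by pairing the Schwartz kernel $K_{A^{(i)}}\in\Sc'(\R^{2i})$ against $\Phi_B(f,g)$, and then establish continuity in the first entry by pulling bounded subsets of $\Sc(\R^k)$ through $\Phi_B$ to bounded subsets of $\Sc(\R^{2i})$.

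The only noteworthy difference is one of emphasis. You flag the preservation of the good mapping property (when $A^{(i)}\in\L_{gmp}$) as the hardest step and sketch a careful case analysis in the retained coordinate $\gamma$; the paper disposes of this in a single sentence, asserting that the construction via $\Phi_B$ makes it clear the output has the good mapping property, and instead spends its effort writing out the continuity estimate in detail (reducing to a generalized-trace bound and extracting a bounded set in $\Sc(\R^i)$ from a tensor decomposition). Your case analysis is sound and arguably more informative than the paper's one-line claim, especially for the case $\gamma\in I_A^\circ$ and $\gamma=\alpha$, where the GMP of $A^{(i)}$ is genuinely used.
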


\begin{remark}\label{rem:comm_def}
Using this lemma and bosonic symmetry, we note that we can rewrite our definition of $\comm{\cdot}{\cdot}_1$ from \eqref{eq:comm_r} using the contractions $\circ_\alpha^\beta$ as follows: Let $i,j\in\N$ and set $k\coloneqq i+j-1$. We extend $\comm{\cdot}{\cdot}_1$ to be the bilinear, continuous in the first entry, map
\begin{equation}\label{def:ab_brak}
\begin{split}
&\comm{\cdot}{\cdot}_1: \L_{gmp}(\Sc(\R^{i}),\Sc'(\R^{i})) \times \L_{gmp}(\Sc(\R^{j}),\Sc'(\R^{j})) \rightarrow \L_{gmp}(\Sc(\R^{k}),\Sc'(\R^{k})) \\
&\phantom{=}\qquad (A^{(i)},B^{(j)}) \mapsto \sum_{\alpha=1}^i\sum_{\beta=1}^j A^{(i)} \circ_\alpha^\beta B^{(j)} - B^{(j)}\circ_\beta^\alpha A^{(i)},
\end{split}
\end{equation}
for $\circ_\alpha^\beta$ and $\circ_\beta^\alpha$ as in \cref{lem:gmp}.
\end{remark}

\begin{proof}[Proof of \cref{lem:gmp}]
We first show that for fixed $f\in\Sc(\R^{k})$, there is a well-defined element
\begin{equation}
\label{comp}
(A^{(i)}\circ_\alpha^\beta B^{(j)})(f)\in\Sc'(\R^k)
\end{equation}
corresponding to
\begin{equation}
\label{eq:comp_cor}
A_{(1,\ldots,i)}^{(i)}B_{(i+1,\ldots,i+\beta-1,\alpha,i+\beta,\ldots,k)}^{(j)}(f).
\end{equation}
Let $g\in\Sc(\R^{k})$. Now it follows from the assumption that $B^{(j)}$ has the good mapping property and \cref{rem:gmp} that the bilinear map
\begin{equation}
\begin{split}
(\tl{f},\tl{g}) \mapsto \,\,&\ipp*{B_{(2,\ldots,\beta,1,\beta+1,\ldots,j)}^{(j)}(\tl{f}(\ux_{\alpha-1},\cdot,\ux_{\alpha+1;i},\cdot)),(\cdot)\otimes \tl{g}(\ux_i',\cdot)}_{\Sc'(\R^j)-\Sc(\R^j)},
\end{split}
\end{equation}
which is a priori a bilinear continuous map
\begin{equation}
\Sc(\R^k)\times\Sc(\R^k) \rightarrow \Sc_{(\ux_{\alpha-1},\ux_{\alpha+1;i},\ux_i')}(\R^{\alpha-1}\times\R^{i-\alpha}\times\R^i; \Sc_{x_\alpha}'(\R)),
\end{equation}
is identifiable with a unique smooth map
\begin{equation}
\Phi_{B^{(j)},\alpha,\beta}: \Sc(\R^k)\times\Sc(\R^k) \rightarrow \Sc_{(\ux_i;\ux_i')}(\R^{2i}).
\end{equation}
Since we have the canonical isomorphism
\begin{equation}
\L(\Sc(\R^{i}),\Sc'(\R^{i}))  \cong \Sc'(\R^{2i})
\end{equation}
by the Schwartz kernel theorem, we therefore define the composition \eqref{comp} by
\begin{equation}\label{eq:comp_def}
\begin{split}
&\ipp{(A^{(i)}\circ_\alpha^\beta B^{(j)})f,g}_{\Sc'(\R^{k})-\Sc(\R^{k})} \coloneqq \ipp*{K_{A^{(i)}},\Phi_{B^{(j)},\alpha,\beta}(f,g)^t}_{\Sc'(\R^{2i})-\Sc(\R^{2i})},
\end{split}
\end{equation}
where 
\begin{equation*}
\Phi_{B^{(j)},\alpha,\beta}(f,g)^t(\ux_i;\ux_i') = \Phi_{B^{(j)},\alpha,\beta}(f,g)(\ux_i';\ux_i), \qquad (\ux_i,\ux_i')\in\R^{2i}.
\end{equation*}

Hence, taking \eqref{eq:comp_def} as the definition of \eqref{comp} for $f\in\Sc(\R^{k})$, we have defined an evidently linear map 
\begin{equation}
A^{(i)}\circ_\alpha^\beta B^{(j)} : \Sc(\R^{k}) \rightarrow \Sc'(\R^{k}).
\end{equation}
The continuity of this map follows from its definition as a composition of continuous maps. Bilinearity of $\circ_\alpha^\beta$ in $A^{(i)}$ and $B^{(j)}$ is obvious. Moreover, it is clear that if $B^{(j)}$ has the good mapping property, then $A^{(i)}\circ_\alpha^\beta B^{(j)}$ has the good mapping property. Lastly, the reader can check from the distributional Fubini-Tonelli theorem that our definition of $A^{(i)}\circ_\alpha^\beta B^{(j)}$ coincides with the composition \eqref{eq:comp_cor} in the cse where $A^{(i)}\in \L(\Sc(\R^i),\Sc(\R^i))$ and $B^{(j)}\in\L(\Sc(\R^j),\Sc(\R^j))$ or $A^{(i)}\in \L(\Sc(\R^i),\Sc'(\R^i))$ and $B^{(j)}\in \L(\Sc'(\R^j),\Sc'(\R^j))$.

We now prove that the map
\begin{equation}
(\cdot)\circ_\alpha^\beta(\cdot): \L(\Sc(\R^{i}),\Sc'(\R^{i}))\times \L_{gmp}(\Sc(\R^{j}),\Sc'(\R^{j})) \rightarrow \L_{gmp}(\Sc(\R^{k}),\Sc'(\R^{k}))
\end{equation}
is continuous in the first entry, that is, for fixed $B^{(j)}\in \L_{gmp}(\Sc(\R^{i}),\Sc'(\R^{i}))$, the map
\begin{equation}
\L(\Sc(\R^{i}),\Sc'(\R^{i})) \rightarrow \L_{gmp}(\Sc(\R^{k}),\Sc'(\R^{k})), \qquad A^{(i)} \mapsto A^{(i)}\circ_\alpha^\beta B^{(j)}
\end{equation}
is continuous. By considerations of symmetry, it suffices to consider the case $(\alpha,\beta)= (1,1)$. To this end, it suffices to show that given a bounded subset $\mathfrak{R}^{(k)} \subset \Sc(\R^{k})$, there exists a bounded subset $\mathfrak{R}^{(i)} \subset \Sc(\R^{i})$ such that
\begin{equation}
\sup_{f^{(k)},g^{(k)}\in\mathfrak{R}^{(k)}} \left|\ip{(A^{(i)}\circ_1^1 B^{(j)}) f^{(k)}}{g^{(k)}}\right| \lesssim \sup_{f^{(i)},g^{(i)}\in\mathfrak{R}^{(i)}} \left|\ip{A^{(i)}f^{(i)}}{g^{(i)}}\right|.
\end{equation}
To see how to obtain the desired seminorm, first observe that
\begin{align}
\left|\ip{(A^{(i)}\circ_1^1 B^{(j)})f^{(k)}}{g^{(k)}}\right| &= \left|\ipp*{K_{A^{(i)}},\Phi_{B^{(j)},1,1}(f^{(k)},g^{(k)})^t}_{\Sc'(\R^{2i})-\Sc(\R^{2i})}\right| \nonumber\\
&=\left|\Tr_{1,\ldots,i}\paren*{A^{(i)}\Phi_{B^{(j)},1,1}(f^{(k)},g^{(k)})}\right|,
\end{align}
where the ultimate equality follows from the definition of the generalized trace (recall \cref{def:gen_trace}) and we commit an abuse of notation by using $\Phi_{B^{(j)},1,1}(f^{(k)},g^{(k)})$ to denote the operator in $\L(\Sc'(\R^{i}),\Sc(\R^{i}))$ defined by this integral kernel. Since $\mathfrak{R}^{(k)}$ is bounded, the image $\Phi_{B^{(j)},1,1}(\mathfrak{R}^{(k)}\times\mathfrak{R}^{(k)})$ is a bounded subset of $\Sc(\R^{2i}) \cong \L(\Sc'(\R^{i}),\Sc(\R^{i}))$, and since $A^{(i)}$ is continuous, it follows that
\begin{equation}
\sup_{\gamma^{(i)}\in\Phi_{B^{(j)},1,1}(\mathfrak{R}^{(k)}\times\mathfrak{R}^{(k)})} \left|\Tr_{1,\ldots,i}\paren*{A^{(i)}\gamma^{(i)}}\right| <\infty.
\end{equation}
Hence, there exists an element $\gamma_{0}^{(i)}\in\Phi_{B^{(j)},1,1}(\mathfrak{R}^{(k)}\times\mathfrak{R}^{(k)})$ such that
\begin{equation}
\left|\Tr_{1,\ldots,i}\paren*{A^{(i)}\gamma_{0}^{(i)}}\right| \geq \frac{1}{2}\sup_{\gamma^{(i)}\in\Phi_{B^{(j)},1,1}(\mathfrak{R}^{(k)}\times\mathfrak{R}^{(k)})} \left|\Tr_{1,\ldots,i}\paren*{A^{(i)}\gamma^{(i)}}\right|.
\end{equation}
Since each element of $\Sc(\R^{2i})$ can be written as $\sum_{\ell=1}^{\infty}\lambda_{\ell} f_{\ell}^{(i)}\otimes g_{\ell}^{(i)}$, where $\sum_{\ell=1}^{\infty}|\lambda_{\ell}|\leq 1$, and $f_{\ell}^{(i)},g_{\ell}^{(i)}$ are sequences in $\Sc(\R^i)$ converging to zero, we see from the separate continuity of the generalized trace that
\begin{align}
\left|\Tr_{1,\ldots,i}\paren*{A^{(i)}\gamma_{0}^{(i)}}\right| &\leq \sum_{\ell=1}^{\infty} |\lambda_{\ell}| \left|\Tr_{1,\ldots,i}\paren*{A^{(i)}(f_{0,\ell}^{(i)}\otimes g_{0,\ell}^{(i)})}\right| \nonumber\\
&\leq \sup_{f^{(i)},g^{(i)}\in\{f_{0,\ell'}^{(i)},g_{0,\ell'}^{(i)}\}_{\ell'=1}^{\infty}}\left|\ipp{A^{(i)}f^{(i)},g^{(i)})}_{\Sc'(\R^i)-\Sc(\R^i)}\right| .
\end{align}
We claim that $\{f_{0,\ell}^{(i)}, g_{0,\ell}^{(i)}\}_{\ell=1}^{\infty}$ is a bounded subset of $\Sc(\R^{i})$, which then completes the proof. Indeed, this follows readily from the fact that $f_{0,\ell}^{(i)},g_{0,\ell}^{(i)}$ converge to zero.
\end{proof}

\begin{remark}\label{rem:sc}
If we restrict the domain of the map $\circ_\alpha^\beta$ to the space
\begin{equation*}
\L_{gmp,*}(\Sc(\R^{i}),\Sc'(\R^{i})) \times \L_{gmp,*}(\Sc(\R^j),\Sc'(\R^j))
\end{equation*}
consisting of distribution-valued operators satisfying the good mapping property such that their adjoints also satisfy the good mapping property, which we endow with the subspace topology, then it follows by duality that $\circ_\alpha^\beta$ is separately continuous on this space.
\end{remark}

\begin{remark}
\label{rem:bos_sym}
If $B^{(j)} \in \L_{gmp}(\Sc_s(\R^j),\Sc_s'(\R^j))$, then it follows from bosonic symmetry that for any $(\alpha,\beta) \in \N_{\leq i}\times\N_{\leq j}$,
\begin{equation}
A^{(i)}\circ_\alpha^\beta B^{(j)} = A^{(i)}\circ_\alpha^1 B^{(j)}.
\end{equation}
\end{remark}

\begin{remark}\label{rem:ext_wd}
If $A^{(i)}\in \L(\Sc_s(\R^i),\Sc'(\R^i))$ and $B^{(j)}\in \L_{gmp}(\Sc_s(\R^j),\Sc_s'(\R^j))$, then given two extensions $A_1^{(i)}, A_2^{(i)}\in \L(\Sc(\R^i),\Sc'(\R^i))$ of $A^{(i)}$, we claim that
\begin{equation}
\sum_{\alpha=1}^i A_1^{(i)}\circ_\alpha^1 B^{(j)} = \sum_{\alpha=1}^i A_2^{(i)}\circ_\alpha^1 B^{(j)} \in \L(\Sc_s(\R^k),\Sc'(\R^k)).
\end{equation}
Indeed, for $f\in \Sc_s(\R^k), g\in \Sc(\R^k)$, we have that
\begin{align}
\sum_{\alpha=1}^i \ipp{g,(A_1^{(i)}\circ_\alpha^1 B^{(j)})f}_{\Sc(\R^k)-\Sc'(\R^k)} &=\sum_{\alpha=1}^i \ipp*{K_{A_1^{(i)}}, \Phi_{B^{(j)},\alpha,1}(f,g)^t}_{\Sc'(\R^{2i})-\Sc(\R^{2i})}.
\end{align}
Since each $\Phi_{B^{(j)},\alpha,1}(f,g)\in\Sc(\R^{2i})$ and $f\in\Sc_s(\R^k)$, we see that
\begin{equation}
\sum_{\alpha=1}^i \Phi_{B^{(j)},\alpha,1}(f,g)(\pi(\ux_i);\ux_i') = \sum_{\alpha=1}^i \Phi_{B^{(j)},\alpha,1}(f,g)(\ux_i;\ux_i'), \qquad (\ux_i,\ux_i') \in \R^{2i},
\end{equation}
for any permutation $\pi\in\Ss_i$. Consequently, for fixed $\ux_i'\in\R^i$, the function $\sum_{\alpha=1}^i \Phi_{B^{(j)},\alpha,1}(f,g)(\cdot,\ux_i')$ belongs to $\Sc_s(\R^i)$ on which the two extensions $A_1^{(i)}$ and $A_2^{(i)}$ agree. It then follows from the Schwartz kernel theorem that 
\begin{equation}
\begin{split}
\ipp*{K_{A_1^{(i)}}, \paren*{\sum_{\alpha=1}^i \Phi_{B^{(j)},\alpha,1}(f,g)}^t}_{\Sc'(\R^{2i})-\Sc(\R^{2i})} &= \ipp*{K_{A_2^{(i)}}, \paren*{\sum_{\alpha=1}^i \Phi_{B^{(j)},\alpha,1}(f,g)}^t}_{\Sc'(\R^{2i})-\Sc(\R^{2i})},
\end{split}
\end{equation}
and therefore
\begin{equation}
\sum_{\alpha=1}^i \ipp{g,(A_1^{(i)}\circ_\alpha^1 B^{(j)})f}_{\Sc(\R^k)-\Sc'(\R^k)} = \sum_{\alpha=1}^i \ipp{g,(A_2^{(i)}\circ_\alpha^1 B^{(j)})f}_{\Sc(\R^k)-\Sc'(\R^k)},
\end{equation}
which establishes our claim.
\end{remark}

By \cref{lem:gmp},
\begin{equation}
A^{(\ell)}\circ_\alpha^\beta B^{(j)} \in \L_{gmp}(\Sc(\R^{k}),\Sc'(\R^{k})), \qquad  \textup{for }\ell+j-1=k.
\end{equation}
Hence, by definition of the bracket $\comm{\cdot}{\cdot}_1$ and \cref{rem:comm_def},
\begin{equation}
\sum_{{ \ell,j\geq 1}\atop {\ell+j-1=k}}\comm{A^{(\ell)}}{B^{(j)}}_1 \in \L_{gmp}(\Sc_s(\R^{k}),\Sc'(\R^{k})).
\end{equation}
Thus it remains to show two properties: first that the symmetrization of an operator preserves the good mapping property, which will then establish that $C^{(k)} \in \L_{gmp}(\Sc_s(\R^{k}),\Sc_s'(\R^{k}))$, where $C^{(k)}$ is defined according to \eqref{eq:C_def}, and second that $C^{(k)}$ is skew-adjoint. We begin with the following lemma which establishes the desired property of the symmetrization operators.

\begin{lemma}\label{sym_gmp}
If $A=(A^{(k)})_{k\in\N}\in\bigoplus_{k=1}^{\infty}\L_{gmp}(\Sc(\R^{k}),\Sc'(\R^{k}))$, then 
\[
\Sym(A)\in \bigoplus_{k=1}^{\infty}\L_{gmp}(\Sc_{s}(\R^{k}),\Sc_{s}'(\R^{k})).
\]
\end{lemma}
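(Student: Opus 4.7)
The plan is to reduce the statement to a single assertion: if $A^{(k)} \in \L_{gmp}(\Sc(\R^{k}),\Sc'(\R^{k}))$ and $\pi\in\Ss_k$, then the conjugated operator $A^{(k)}_{(\pi(1),\ldots,\pi(k))} = \pi\circ A^{(k)}\circ\pi^{-1}$ again belongs to $\L_{gmp}(\Sc(\R^{k}),\Sc'(\R^{k}))$. Once that is established, \eqref{eq:sym_repeat} expresses $\Sym_k(A^{(k)})$ as a finite linear combination of such conjugates, and the remark immediately following \cref{def:gmp} tells us that $\L_{gmp}$ is closed under finite linear combinations. The output space may then be upgraded from $\Sc'(\R^k)$ to $\Sc_s'(\R^k)$ and the domain restricted to $\Sc_s(\R^k)$ by \cref{lem:sym_op_space}, after which the direct-sum assertion is immediate since $\Sym$ acts componentwise and preserves the (finite) support of sequences in $\bigoplus_{k=1}^\infty \L_{gmp}(\Sc(\R^k),\Sc'(\R^k))$.

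For the main step, fix $\alpha\in\N_{\leq k}$ and consider the GMP defining integral for $B^{(k)}\coloneqq \pi\circ A^{(k)}\circ\pi^{-1}$:
\begin{equation*}
I_\alpha(f,g)(x_\alpha,x_\alpha')\coloneqq \int_{\R^{k-1}}\prod_{i\neq \alpha}dx_i \; B^{(k)}(f)(\ul{x}_k)\, g(x_1,\ldots,x_{\alpha-1},x_\alpha',x_{\alpha+1},\ldots,x_k).
\end{equation*}
Substituting the definition of $B^{(k)}$ and performing the measure-preserving change of variables $y_j = x_{\pi(j)}$, so that $x_\alpha = y_\beta$ with $\beta\coloneqq \pi^{-1}(\alpha)$, I would rewrite this as
\begin{equation*}
I_\alpha(f,g)(y_\beta,x_\alpha') = \int_{\R^{k-1}}\prod_{j\neq \beta} dy_j\; A^{(k)}(\pi^{-1}f)(\ul{y}_k)\, (\pi^{-1}g)(y_1,\ldots,y_{\beta-1},x_\alpha',y_{\beta+1},\ldots,y_k),
\end{equation*}
which is exactly the GMP integral for $A^{(k)}$ at the index $\beta$, evaluated on the test functions $(\pi^{-1}f, \pi^{-1}g)$. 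The hypothesis that $A^{(k)}\in\L_{gmp}$ then yields a Schwartz function of $(y_\beta,x_\alpha')\in\R^2$ depending continuously on the pair $(\pi^{-1}f,\pi^{-1}g)$, and since the coordinate permutations $\pi,\pi^{-1}\colon\Sc(\R^k)\to\Sc(\R^k)$ are continuous linear endomorphisms, the composition $(f,g)\mapsto I_\alpha(f,g)$ is a continuous bilinear map $\Sc(\R^k)\times\Sc(\R^k)\to\Sc(\R^2)$, as required by \cref{def:gmp}.

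The main (mildly bookkeeping) obstacle is making the index chase in the change of variables precise, in particular being careful that integration over $\{x_i : i\neq \alpha\}$ really does correspond to integration over $\{y_j : j\neq \beta\}$ with the free variables correctly matched up; once one adopts the convention $x_\alpha = y_\beta$ the identification of the two integrals is forced. No additional analytic input beyond the GMP of $A^{(k)}$ and the continuity of permutation actions on $\Sc(\R^k)$ is needed. With the conjugation claim in hand, linearity of $\Sym_k$, the subspace property of $\L_{gmp}$, and \cref{lem:sym_op_space} complete the proof, exhibiting $\Sym_k(A^{(k)})$ as an element of $\g_{k,gmp}$ whenever $A^{(k)}\in \L_{gmp}(\Sc(\R^k),\Sc'(\R^k))$.
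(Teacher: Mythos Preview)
Your proof is correct and follows essentially the same change-of-variables argument as the paper: both reduce the GMP integral for $\Sym_k(A^{(k)})$ (or, in your case, for a single conjugate $\pi\circ A^{(k)}\circ\pi^{-1}$) at index $\alpha$ to the GMP integral for $A^{(k)}$ at the permuted index $\beta=\pi^{-1}(\alpha)$. The only cosmetic difference is that the paper works directly with bosonic test functions $f^{(k)},g^{(k)}\in\Sc_s(\R^k)$ and uses $\pi^{-1}f^{(k)}=f^{(k)}$ to drop the permutation on $f$, whereas you carry $\pi^{-1}f$ and $\pi^{-1}g$ through and thereby prove the slightly stronger statement that each conjugate lies in $\L_{gmp}(\Sc(\R^k),\Sc'(\R^k))$; one small slip is your final clause about $\g_{k,gmp}$, which also encodes skew-adjointness not asserted in this lemma.
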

\begin{proof}
It suffices to show that for each $k\in\N$, if $A^{(k)}\in\L_{gmp}(\Sc(\R^{k}),\Sc'(\R^{k}))$, then 
\[
\Sym_k(A^{(k)})\in\L_{gmp}(\Sc_{s}(\R^{k}),\Sc_{s}'(\R^{k})).
\]
Let $\alpha\in\N_{\leq k}$. We need to show that the map
\begin{equation}
\begin{split}
&\Sc_{s}(\R^{k})\times\Sc_{s}(\R^{k}) \rightarrow \Sc(\R;\Sc'(\R)) \\
&\phantom{=} (f^{(k)},g^{(k)})\mapsto \ipp*{\Sym_k(A^{(k)})(f^{(k)}), (\cdot)\otimes_\alpha g(\cdot,x_\alpha',\cdot)}_{\Sc'(\R^k)-\Sc(\R^k)}
\end{split}
\end{equation}
may be identified with a continuous map $\Sc_{s}(\R^{k})\times\Sc_{s}(\R^{k})\rightarrow \Sc(\R^{2})$. By definition of the $\Sym_k$ operator and bilinearity of the distributional pairing, we have that
\begin{align}
&\ipp*{\Sym_k(A^{(k)})f^{(k)}, (\cdot)\otimes_{\alpha} g^{(k)}(\cdot,x_\alpha',\cdot)}_{\Sc'(\R^k)-\Sc(\R^k)} \nonumber\\
&= \frac{1}{k!}\sum_{\pi\in\Ss_{k}} \ipp*{A_{(\pi(1),\ldots,\pi(k))}^{(k)}f^{(k)}, (\cdot)\otimes_\alpha g^{(k)}(\cdot,x_\alpha',\cdot)}_{\Sc'(\R^k)-\Sc(\R^k)}. \label{eq:gmp_sym_sum}
\end{align}
By definition of the notation $A_{(\pi(1),\ldots,\pi(k))}^{(k)} = \pi\circ A^{(k)}_{1, \ldots, k}\circ\pi^{-1}$, we have that
\begin{align}
&\ipp*{A_{(\pi(1),\ldots,\pi(k))}^{(k)}f^{(k)}, (\cdot)\otimes_\alpha g^{(k)}(\cdot,x_\alpha',\cdot)}_{\Sc'(\R^k)-\Sc(\R^k)} \nonumber\\
&= \ipp*{A^{(k)}(f^{(k)}\circ\pi^{-1})\circ\pi, (\cdot)\otimes_\alpha g^{(k)}(\cdot,x_\alpha',\cdot)}_{\Sc'(\R^k)-\Sc(\R^k)} \nonumber\\
&= \ipp*{A^{(k)}(f^{(k)})\circ\pi, (\cdot)\otimes_\alpha g^{(k)}(\cdot,x_\alpha',\cdot)}_{\Sc'(\R^k)-\Sc(\R^k)},
\end{align}
where the ultimate equality follows from the assumption $f^{(k)}\in\Sc_{s}(\R^{k})$. Let $\phi\in\Sc(\R)$ be a test function. Then by definition of the permutation of a distribution,
\begin{align}
\ipp*{A^{(k)}(f^{(k)})\circ\pi, \phi\otimes_\alpha g^{(k)}(\cdot,x_\alpha',\cdot)}_{\Sc'(\R^k)-\Sc(\R^k)} = \ipp*{A^{(k)}f^{(k)}, (\phi\otimes_\alpha g^{(k)}(\cdot,x_\alpha',\cdot))\circ\pi^{-1}}_{\Sc'(\R^k)-\Sc(\R^k)}.
\end{align}
Observing that
\begin{equation}
((\phi\otimes_\alpha g^{(k)}(\cdot,x_\alpha',\cdot))\circ\pi^{-1})(\ux_k) =  g^{(k)}(x_{\pi^{-1}(1)},\ldots,x_{\pi^{-1}(\alpha-1)},x_{\alpha}',x_{\pi^{-1}(\alpha+1)},\ldots,x_{\pi^{-1}(k)})\phi(x_{\pi^{-1}(\alpha)}), \quad \ux_k\in\R^k,
\end{equation}
upon setting $j\coloneqq \pi^{-1}(\alpha)$ and using the bosonic symmetry of $g^{(k)}$, we obtain that
\begin{equation}
((\phi\otimes_\alpha g^{(k)}(\cdot,x_\alpha',\cdot))\circ\pi^{-1})(\ux_k) = g^{(k)}(\ux_{j-1},x_\alpha',\ux_{j+1;k})\phi(x_j) = (\phi \otimes_j g^{(k)}(\cdot,x_\alpha',\cdot))(\ux_k).
\end{equation}
Since $A^{(k)}$ has the good mapping property, we have that
\begin{equation}
\begin{split}
\ipp*{A^{(k)}f^{(k)}, \phi\otimes_j g^{(k)}(\cdot,x_\alpha',\cdot)}_{\Sc'(\R^k)-\Sc(\R^k)} = \ipp*{\Phi_{A^{(k)},j}(f^{(k)},g^{(k)})(\cdot,x_\alpha'), \phi}_{\Sc'(\R)-\Sc(\R)},
\end{split}
\end{equation}
where $\Phi_{A^{(k)},j}:\Sc(\R^k)\times\Sc(\R^k)\rightarrow \Sc(\R^2)$ is a continuous bilinear map. Since $\Sc_{s}(\R^{k})$ continuously embeds (trivially) in $\Sc(\R^{k})$ and since $\alpha\in\N_{\leq k}$ was arbitrary, we conclude that \eqref{eq:gmp_sym_sum} is identifiable with a finite sum of continuous bilinear maps $\Sc_{s}(\R^{k})\times\Sc_{s}(\R^{k})\rightarrow\Sc(\R^{2})$, and the proof of the lemma is complete.
\end{proof}

Finally, to conclude our proof that the Lie bracket is well-defined, we only need to verify that $C^{(k)}$ defined according to \eqref{eq:C_def} is skew-adjoint. This is a consequence of \cref{rem:comm_def}, \cref{rem:ext_wd}, and the following lemma.

\begin{lemma}
Let $i,j\in\N$, and define $k\coloneqq i+j-1$. Let $A^{(i)} \in \L_{gmp}(\Sc(\R^i),\Sc'(\R^i))$ and $B^{(j)}\in\L_{gmp}(\Sc(\R^j),\Sc'(\R^j))$ be skew-adjoint distribution-valued operators. Then for any $(\alpha,\beta)\in\N_{\leq i}\times\N_{\leq j}$,
\begin{equation}
(A^{(i)}\circ_\alpha^\beta B^{(j)})^* = (B^{(j)}\circ_\beta^\alpha A^{(i)})_{(i+1,\ldots,i+\beta-1,\alpha,i+\beta,\ldots,k,1,\ldots,i)} \in \L_{gmp}(\Sc(\R^k),\Sc'(\R^k)).
\end{equation}
\end{lemma}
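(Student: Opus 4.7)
My plan is to break the proof into two steps: first verify the identity in the honest ``smooth'' setting where $A^{(i)}$ and $B^{(j)}$ are continuous endomorphisms of the Schwartz spaces, so the contractions reduce to literal operator compositions; then extend to the general distributional case by unpacking the definition of $\circ_\alpha^\beta$ via the Schwartz kernel theorem.

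For Step 1, I would assume $A^{(i)}\in\L(\Sc(\R^i),\Sc(\R^i))$ and $B^{(j)}\in\L(\Sc(\R^j),\Sc(\R^j))$ are skew-adjoint. By the second clause of \cref{lem:gmp}, the contraction coincides with genuine operator composition,
\[
A^{(i)}\circ_\alpha^\beta B^{(j)} = A_{(1,\ldots,i)}^{(i)}\, B_{(i+1,\ldots,i+\beta-1,\alpha,i+\beta,\ldots,k)}^{(j)}.
\]
Taking Hilbert-space adjoints in $L^2(\R^k)$, using $(XY)^*=Y^*X^*$ and the elementary fact that $(X_{(\tau(1),\ldots,\tau(m))})^* = (X^*)_{(\tau(1),\ldots,\tau(m))}$ for any permutation $\tau$, yields
\[
(A^{(i)}\circ_\alpha^\beta B^{(j)})^* = (-B_{(i+1,\ldots,i+\beta-1,\alpha,i+\beta,\ldots,k)}^{(j)})\,(-A_{(1,\ldots,i)}^{(i)}) = B_{(i+1,\ldots,i+\beta-1,\alpha,i+\beta,\ldots,k)}^{(j)}\, A_{(1,\ldots,i)}^{(i)}.
\]
The remainder of Step 1 is combinatorial: I need to produce a permutation $\sigma\in\Ss_k$ such that conjugating $B^{(j)}\circ_\beta^\alpha A^{(i)} = B_{(1,\ldots,j)}^{(j)}\, A_{(j+1,\ldots,j+\alpha-1,\beta,j+\alpha,\ldots,k)}^{(i)}$ by $\sigma$ reproduces the right-hand side above. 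Reading off the image tuple coordinate-by-coordinate forces $\sigma$ to be the permutation whose one-line notation appears in the statement (with the understanding that the terminal segment ``$1,\ldots,i$'' omits the index $\alpha$, which already occurs in the middle segment, so that one obtains a genuine element of $\Ss_k$). A direct check that $\sigma$ maps positions $(1,\ldots,j)$ onto the $B$-block $(i+1,\ldots,i+\beta-1,\alpha,i+\beta,\ldots,k)$ and positions $(j+1,\ldots,j+\alpha-1,\beta,j+\alpha,\ldots,k)$ onto the $A$-block $(1,\ldots,i)$ closes this step.

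For Step 2, I would remove the smoothness hypothesis by reducing both sides of the claimed identity to an equality of Schwartz kernels. Given $f,g\in\Sc(\R^k)$, the proof of \cref{lem:gmp} shows that
\[
\ipp{(A^{(i)}\circ_\alpha^\beta B^{(j)})f, g}_{\Sc'-\Sc} = \ipp{K_{A^{(i)}}, \Phi_{B^{(j)},\alpha,\beta}(f,g)^t}_{\Sc'(\R^{2i})-\Sc(\R^{2i})}.
\]
The adjoint corresponds to swapping $f\leftrightarrow g$ and complex conjugating. Using the skew-adjointness of $K_{A^{(i)}}$ and $K_{B^{(j)}}$ (which is a statement purely about the kernels, hence makes sense in the distributional category), together with a bookkeeping identification of $\overline{\Phi_{B^{(j)},\alpha,\beta}(f,g)^t}$ with $\Phi_{A^{(i)},\beta,\alpha}(g,f)^t$ after permutation by $\sigma$, one obtains exactly the pairing corresponding to $(B^{(j)}\circ_\beta^\alpha A^{(i)})_{(\sigma(1),\ldots,\sigma(k))}$ evaluated at $(f,g)$. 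The combinatorial identity established in Step 1 then transfers verbatim to the kernel level, establishing the distributional identity. Finally, the good mapping property of the right-hand side follows from \cref{sym_gmp}-type considerations, since $B^{(j)}\circ_\beta^\alpha A^{(i)}\in\L_{gmp}(\Sc(\R^k),\Sc'(\R^k))$ by \cref{lem:gmp} and the good mapping property is preserved under coordinate conjugation.

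The main obstacle is precisely the reason the lemma is nontrivial: in the distributional setting there is no underlying Hilbert-space operator structure on which to apply the naive $(XY)^*=Y^*X^*$ identity, so the proof must take place at the kernel level, where one has to match the pairings of the Schwartz kernels of $A^{(i)}$ and $B^{(j)}$ against the auxiliary Schwartz functions $\Phi_{B^{(j)},\alpha,\beta}(f,g)$ and $\Phi_{A^{(i)},\beta,\alpha}(g,f)$ under a coordinate permutation. This is purely bookkeeping once the right $\sigma$ has been identified, but the combinatorics must be tracked carefully to avoid sign errors and index collisions.
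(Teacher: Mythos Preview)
Your overall strategy aligns with the paper's, but Step 2 as written contains a claim that cannot be executed as stated. You assert a ``bookkeeping identification of $\overline{\Phi_{B^{(j)},\alpha,\beta}(f,g)^t}$ with $\Phi_{A^{(i)},\beta,\alpha}(g,f)^t$ after permutation by $\sigma$,'' but these live in $\Sc(\R^{2i})$ and $\Sc(\R^{2j})$ respectively, so for $i\neq j$ no coordinate permutation can relate them. The actual passage from one side to the other is less symmetric: the paper first pairs $\Phi_{B^{(j)},1,1}(f,\bar g)^t$ against $K_{A^{(i)}}$, uses the good mapping property to extract a one-variable Schwartz function $\phi_{B^{(j)},1}$, then applies skew-adjointness of $A^{(i)}$ to produce a companion function $\phi_{A^{(i)},1}$, pairs \emph{those} against each other in $\Sc(\R)$, and only then reassembles (via skew-adjointness of $B^{(j)}$) into the form $\ipp{K_{B^{(j)}},\Phi_{A^{(i)},1,1}(g\circ\pi,\bar f\circ\pi)^t}$. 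To make this tractable the paper reduces $f,g$ to pure tensors by density in $\Sc(\R^k)$, so that both $\Phi$-functions factor explicitly and the one-variable swap becomes transparent; you do not mention this reduction, and without it the intermediate factorization does not emerge.

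That said, your Step 1 points toward a genuinely different and arguably cleaner route the paper does not take. Having verified the identity for $A^{(i)},B^{(j)}$ mapping Schwartz space to itself, you can extend to general skew-adjoint elements of $\L_{gmp}$ by density of operators rather than test functions: smooth skew-adjoint operators are dense (\cref{gmp_dense} plus taking skew parts), the contraction $\circ_\alpha^\beta$ is separately continuous on $\L_{gmp,*}$ (\cref{rem:sc}, applicable since skew-adjointness forces the adjoint back into $\L_{gmp}$), and both the adjoint map (\cref{lem:dvo_adj}) and coordinate conjugation are continuous. This bypasses the kernel gymnastics of Step 2 entirely. Your remark that the terminal segment ``$1,\ldots,i$'' of the permutation must tacitly omit the index $\alpha$ is correct and worth recording.
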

\begin{proof}
By considerations of symmetry, it suffices to consider the case where $(\alpha,\beta)=(1,1)$. Recalling the definition of the adjoint of a distribution-valued operator, see \cref{lem:dvo_adj}, we need to show that
\begin{equation}
\begin{split}
&\ipp*{(B^{(j)}\circ_1^1 A^{(i)})_{(1,i+1,\ldots,k,2,\ldots,i)}g, \bar{f}}_{\Sc'(\R^k)-\Sc(\R^k)}\\
&=\ol{\ipp*{(A^{(i)}\circ_1^1 B^{(j)})f, \ol{g}}_{\Sc'(\R^k)-\Sc(\R^k)}},
\end{split}
\end{equation}
for any $f,g\in\Sc(\R^k)$. By \cref{lem:ext_sa},
\begin{equation*}
A^{(i)}_{(1,\ldots,i)} \enspace \text{and} \enspace B^{(j)}_{(1,i+1,\ldots,k)}
\end{equation*}
are both skew-adjoint elements of $\L_{gmp}(\Sc(\R^{k}),\Sc'(\R^{k}))$. Now by density of linear combinations of pure tensors, linearity, and the continuity of the operators $A^{(i)}_{(1,\ldots,i)}$, $B^{(j)}_{(1,i+1,\ldots,k)}$, and $A^{(i)}\circ_1^1 B^{(j)}$, it suffices to consider the expression
\begin{equation}
\ol{\ipp*{(A^{(i)}\circ_1^1 B^{(j)})f, \ol{g}}_{\Sc'(\R^k)-\Sc(\R^k)}}
\end{equation}
in the case where $f,g\in\Sc(\R^{k})$ are pure tensors of the form
\begin{equation}
f=\bigotimes_{a=1}^k f_a \enspace \text{and} \enspace g=\bigotimes_{a=1}^k g_a,
\end{equation}
respectively, where $f_1,\ldots,f_k,g_1,\ldots,g_k\in\Sc(\R)$. Recalling the definition \eqref{eq:comp_def} for $A^{(i)}\circ_1^1 B^{(j)}$, we have that
\begin{align*}
\ol{\ipp*{(A^{(i)}\circ_1^1 B^{(j)})f, \bar{g}}_{\Sc'(\R^k)-\Sc(\R^k)}} &=\ol{ \ipp*{K_{A^{(i)}}, \Phi_{B^{(j)},1,1}(f,\bar{g})^t}_{\Sc'(\R^{2i})-\Sc(\R^{2i})}}.
\end{align*}
An examination of the $\Phi_{B^{(j)}}(f,\bar{g})$ together with the tensor product structure of $f$ and $g$ reveals that
\begin{equation}
\label{eq:bphi_Bj}
\begin{split}
\Phi_{B^{(j)},1,1}(f,\bar{g})(\ux_i;\ux_i') &= \underbrace{(\bigotimes_{a=2}^i f_a)}_{\eqqcolon f^{(i-1)}}(\ux_{2;i}) \underbrace{(\bigotimes_{a=1}^i \ol{g_a})}_{\eqqcolon \ol{g^{(1)}}\otimes \ol{g^{(i-1)}}}(\ux_i')\\
&\phantom{=}\quad \times\ipp*{B^{(j)}\paren*{f_1\otimes \bigotimes_{a=i+1}^{k} f_{a}}, (\cdot) \otimes \bigotimes_{a=i+1}^k \ol{g_a}}_{\Sc'(\R^j)-\Sc(\R^j)}(x_1).
\end{split}
\end{equation}
Since $B^{(j)}$ has the good mapping property, it follows that the element of $\Sc_{x_1}'(\R)$ defined by the second factor in the right-hand side of \eqref{eq:bphi_Bj} is in fact an element of $\Sc(\R)$, which we denote by
\begin{equation}
\label{eq:lphi_Bj}
\phi_{B^{(j)},1}\paren*{f_1\otimes\bigotimes_{a=i+1}^k f_a, \bigotimes_{a=i+1}^k \ol{g_a}} \eqqcolon \phi_{B^{(j)},1}(f^{(j)}, \ol{g^{(j-1)}}).
\end{equation}
Thus, using \eqref{eq:lphi_Bj} and \eqref{eq:bphi_Bj}, we can write
\begin{equation}
\Phi_{B^{(j)},1,1}(f,\bar{g})(\ux_i;\ux_i') = \phi_{B^{(j)},1}(f^{(j)},\ol{g^{(j-1)}})(x_1) f^{(i-1)}(\ux_{2;i}) \ol{g^{(1)}}(x_1') \ol{g^{(i-1)}}(\ux_{2;i}'), \qquad (\ux_i,\ux_i')\in\R^{2i},
\end{equation}
and
\begin{align}
&\ol{\ipp*{K_{A^{(i)}}, \Phi_{B^{(j)},1,1}(f,\bar{g})^t}_{\Sc'(\R^{2i})-\Sc(\R^{2i})}} \nonumber\\
&=\ol{\ipp*{A^{(i)}\paren*{\phi_{B^{(j)},1}(f^{(j)},\ol{g^{(j-1)}}) \otimes f^{(i-1)}}, \ol{g^{(1)}\otimes g^{(i-1)}}}_{\Sc'(\R^i)-\Sc(\R^i)}}
\end{align}
by the Schwartz kernel theorem. Since $A^{(i)}$ is skew-adjoint, we have that this last expression equals
\begin{align}
&-\ipp*{A^{(i)}\paren*{g^{(1)}\otimes g^{(i-1)}}, \ol{\phi_{B^{(j)},1}(f^{(j)}, \ol{g^{(j-1)}})\otimes f^{(i-1)}}}_{\Sc'(\R^i)-\Sc(\R^i)}. \label{eq:Ai_eval}
\end{align}
Now since $A^{(i)}$ also has the good mapping property by assumption, the element of $\Sc_{x_1}'(\R)$ defined by
\begin{equation}
-\ipp*{A^{(i)}\paren*{g^{(1)}\otimes g^{(i-1)}}, (\cdot) \otimes \ol{f^{(i-1)}}}_{\Sc'(\R^i)-\Sc(\R^i)}
\end{equation}
is identifiable with a unique element of $\Sc_{x_1}(\R)$, which we denote by
\begin{equation}
\label{eq:lphi_Ai}
-\phi_{A^{(i)},1}(g^{(1)}\otimes g^{(i-1)}, \ol{f^{(i-1)}}).
\end{equation}
Using \eqref{eq:lphi_Ai}, we see that
\begin{equation}
\label{eq:AB_phi_pair}
\eqref{eq:Ai_eval} = -\int_{\R}dx \phi_{A^{(i)},1}(g^{(1)}\otimes g^{(i-1)}, \ol{f^{(i-1)}})(x) \ol{\phi_{B^{(j)},1}(f^{(j)},\ol{g^{(j-1)}})(x)}.
\end{equation}
After unpacking the definition of the Schwartz function $\phi_{B^{(j)},1}(f^{(j)},\ol{g^{(j-1)}})$ given in \eqref{eq:bphi_Bj} and \eqref{eq:lphi_Bj}, it follows that
\begin{align}
\eqref{eq:AB_phi_pair} &= \ol{\ipp*{B^{(j)}f^{(j)} , \ol{\phi_{A^{(i)},1}(g^{(1)}\otimes g^{(i-1)},\ol{f^{(i-1)}}) \otimes g^{(j-1)}}}_{\Sc'(\R^j)-\Sc(\R^j)}} \nonumber\\
&=\ipp*{B^{(j)}\paren*{\phi_{A^{(i)},1}(g^{(1)}\otimes g^{(i-1)}, \ol{f^{(i-1)}}) \otimes g^{(j-1)}} , \ol{f^{(j)}} }_{\Sc'(\R^j)-\Sc(\R^j)} \nonumber\\
&=\ipp*{K_{B^{(j)}}, \paren*{\paren*{\phi_{A^{(i)},1}(g^{(1)}\otimes g^{(i-1)},\ol{f^{(i-1)}})\otimes g^{(j-1)}} \otimes \ol{f^{(j)}} }^t}_{\Sc'(\R^{2j})-\Sc(\R^{2j})}, \label{eq:Bj_goal_LHS}
\end{align}
where we use the skew-adjointness of $B^{(j)}$ to obtain the penultimate equality and the Schwartz kernel theorem to obtain the ultimate equality.

Our goal now is to show that
\begin{equation}
\label{eq:Phi_perm}
\begin{split}
&\paren*{\phi_{A^{(i)},1}(g^{(1)}\otimes g^{(i-1)}, \ol{f^{(i-1)}}) \otimes g^{(j-1)}}\otimes \ol{f^{(j)}} (\ux_j;\ux_j') \\
&=\Phi_{A^{(i)},1,1}(g\circ\pi, \bar{f}\circ\pi)(\ux_j;\ux_j')
\end{split}
\end{equation}
where $\pi\in\Ss_k$ is the permutation
\begin{equation}
\pi(a) =
\begin{cases}
1, & {a=1} \\
a+j-1, &{2\leq a\leq i} \\
a-i+1, & {i+1\leq a\leq k}.
\end{cases}
\end{equation}
With \eqref{eq:Phi_perm}, we then have by definition of the composite distribution $B^{(j)}\circ_1^1 A^{(i)}$, see \eqref{eq:comp_def}, and the notation
\begin{equation*}
(B^{(j)}\circ_1^1 A^{(i)})_{(1,i+1,\ldots,k,2,\ldots,i)},
\end{equation*}
see \cref{prop:ext_k}, that
\begin{align}
\eqref{eq:Bj_goal_LHS} &= \ipp*{K_{B^{(j)}}, \Phi_{A^{(i)},1,1}(g\circ\pi, \bar{f}\circ\pi)^t}_{\Sc'(\R^{2j})-\Sc(\R^{2j})} \nonumber\\
&=\ipp*{(B^{(j)}\circ_1^1 A^{(i)})(g\circ\pi), \bar{f}\circ\pi}_{\Sc'(\R^k)-\Sc(\R^k)} \nonumber\\
&=\ipp*{ (B^{(j)}\circ_1^1 A^{(i)})_{(1,i+1,\ldots,k,2,\ldots,i)}g,\bar{f}}_{\Sc'(\R^k)-\Sc(\R^k)},
\end{align}
which is exactly what we needed to show.

Turning to \eqref{eq:Phi_perm}, observe that
\begin{equation}
(g\circ\pi)(\ux_k) = g(x_1,x_{j+1},\ldots,x_k,x_2,\ldots,x_j) = g_1(x_1) (\bigotimes_{a=2}^{i} g_a)(\ux_{j+1;k}) (\bigotimes_{a=i+1}^k g_a)(\ux_{2;j}),
\end{equation}
and similarly for $(\bar{f}\circ\pi)$. By the same analysis as in \eqref{eq:bphi_Bj}, it then follows that
\begin{align}
\Phi_{A^{(i)},1,1}(g\circ\pi,\bar{f}\circ\pi)(\ux_j;\ux_j') &= (\bigotimes_{a=i+1}^k g_a)(\ux_{2;j}) (\bigotimes_{a=i+1}^k \ol{f_a})(\ux_{2;j}')\ol{f_1}(x_1') \nonumber\\
&\phantom{=}\quad \times \ipp*{A^{(i)}(\bigotimes_{a=1}^i g_a), (\cdot)\otimes \bigotimes_{a=2}^{i-1} \ol{f_a}}_{\Sc'(\R^i)-\Sc(\R^i)}(x_1) \nonumber\\
&=\phi_{A^{(i)},1}(g^{(1)}\otimes g^{(i-1)}, \ol{f^{(i-1)}})(x_1) g^{(j-1)}(\ux_{2;j}) f^{(j)}(x_j'),
\end{align}
as desired.
\end{proof}

We now turn to the proof of \cref{prop:G_inf_br}.

\begin{proof}[Proof of \protect{\cref{prop:G_inf_br}}]
We first verify the Lie bracket properties \ref{item:LA_1}-\ref{item:LA_3} in \cref{def:la}. Bilinearity and anti-symmetry are immediate from the linearity of the bosonic symmetrization $\Sym$ operator, see \cref{eq:sym_repeat} above, and the bilinearity and anti-symmetry of the bracket $\comm{\cdot}{\cdot}_{1}$. 

To verify the Jacobi identity
\begin{equation}
\comm{A}{\comm{B}{C}}^{(k)} + \comm{C}{\comm{A}{B}}^{(k)} + \comm{B}{\comm{C}{A}}^{(k)} = 0,
\end{equation}
we use our convergence result \cref{prop:LB_lim} together with the fact that $\comm{\cdot}{\cdot}_{\G_N}$ is a Lie bracket by \cref{prop:NH_LA}. Let $A, B, C \in \G_\infty$, where $A=(A^{(k)})_{k\in\N}, B=(B^{(k)})_{k\in\N}, C=(C^{(k)})_{k\in\N}$. Note that since $\G_\infty$ is a direct sum, there exists an $N_0\in\N$ such that $A^{(k)}=B^{(k)}=C^{(k)}=0$ for $k\geq N_0$. Now by mollifying and truncating the Schwartz kernels of the $k$-particle components $A^{(k)}, B^{(k)}, C^{(k)}$, we obtain approximating sequences
\begin{equation}
A_{n_1}\coloneqq (A_{n_1}^{(k)})_{k\in\N}, \,\,B_{n_2}\coloneqq (B_{n_2}^{(k)})_{k\in\N}, \,\,C_{n_3}\coloneqq (C_{n_3}^{(k)})_{k\in\N} \in \G_\infty \cap \bigoplus_{k=1}^\infty \L(\Sc_s'(\R^k),\Sc_s(\R^k))
\end{equation}
such that for all $(n_1,n_2,n_3)\in\N^3$, $A_{n_1}^{(k)}=B_{n_2}^{(k)}=C_{n_3}^{(k)}=0\in\g_{k, gmp}$ for $k\geq N_0$. In particular, $A_{n_1}, B_{n_2}, C_{n_3}\in \G_M$ for any integer $M\geq N_0$. Now for such $M$, we know from the Jacobi identity for $\comm{\cdot}{\cdot}_{\G_M}$ that
\begin{equation}
\comm{A_{n_1}}{\comm{B_{n_2}}{C_{n_3}}_{\G_M}}_{\G_M}  + \comm{C_{n_3}}{\comm{A_{n_1}}{B_{n_2}}_{\G_M}}_{\G_M} + \comm{B_{n_2}}{\comm{C_{n_3}}{A_{n_1}}_{\G_M}}_{\G_M}=0 \in \G_M \subset \G_\infty.
\end{equation}
Consequently, for fixed $(n_1,n_2,n_3)\in\N^3$, we obtain from \cref{prop:LB_lim} that
\begin{align}
0&= \lim_{M\rightarrow\infty} \left(\comm{A_{n_1}}{\comm{B_{n_2}}{C_{n_3}}_{\G_M}}_{\G_M}  + \comm{C_{n_3}}{\comm{A_{n_1}}{B_{n_2}}_{\G_M}}_{\G_M} + \comm{B_{n_2}}{\comm{C_{n_3}}{A_{n_1}}_{\G_M}}_{\G_M} \right) \nonumber\\
&= \comm{A_{n_1}}{\comm{B_{n_2}}{C_{n_3}}_{\G_\infty}}_{\G_\infty}  + \comm{C_{n_3}}{\comm{A_{n_1}}{B_{n_2}}_{\G_\infty}}_{\G_\infty} + \comm{B_{n_2}}{\comm{C_{n_3}}{A_{n_1}}_{\G_\infty}}_{\G_\infty}.
\end{align}
Next, using three applications of the separate continuity of the bracket $\comm{\cdot}{\cdot}_{\G_\infty}$ established below, we have that
\begin{align}
\comm{A}{\comm{B}{C}_{\G_\infty}}_{\G_\infty} &= \lim_{n_1\rightarrow\infty}\lim_{n_2\rightarrow\infty}\lim_{n_3\rightarrow\infty} \comm{A_{n_1}}{\comm{B_{n_2}}{C_{n_3}}_{\G_\infty}}_{\G_\infty}, \\
\comm{C}{\comm{A}{B}_{\G_\infty}}_{\G_\infty} &= \lim_{n_1\rightarrow\infty}\lim_{n_2\rightarrow\infty}\lim_{n_3\rightarrow\infty}\comm{C_{n_3}}{\comm{A_{n_1}}{B_{n_2}}_{\G_\infty}}_{\G_\infty}, \\
\comm{B}{\comm{C}{A}_{\G_\infty}}_{\G_\infty} &= \lim_{n_1\rightarrow\infty}\lim_{n_2\rightarrow\infty}\lim_{n_3\rightarrow\infty}\comm{B_{n_2}}{\comm{C_{n_3}}{A_{n_1}}_{\G_\infty}}_{\G_\infty}.
\end{align}
Summarizing our computations, we have shown that
\begin{align}
0 &= \lim_{n_1\rightarrow\infty}\lim_{n_2\rightarrow\infty}\lim_{n_3\rightarrow\infty}\lim_{M\rightarrow\infty} \biggl( \comm{A_{n_1}}{\comm{B_{n_2}}{C_{n_3}}_{\G_M}}_{\G_M} + \comm{C_{n_3}}{\comm{A_{n_1}}{B_{n_2}}_{\G_M}}_{\G_M} \nonumber\\
&\hspace{100mm} + \comm{B_{n_2}}{\comm{C_{n_3}}{A_{n_1}}_{\G_M}}_{\G_M} \biggr)\nonumber\\
&= \comm{A}{\comm{B}{C}_{\G_\infty}}_{\G_\infty} + \comm{C}{\comm{A}{B}_{\G_\infty}}_{\G_\infty} + \comm{B}{\comm{C}{A}_{\G_\infty}}_{\G_\infty},
\end{align}
which completes the proof of the Jacobi identity.

Finally, we check that the map $\comm{\cdot}{\cdot}_{\G_{\infty}}$ is separately continuous. By linearity, it suffices to show that for each fixed $\ell,j \in\N $ and fixed $\alpha\in\N_{\leq\ell}$, the binary operation $\circ_\alpha^1$ is separately continuous as a map
\begin{equation}
\circ_\alpha^1: \g_{\ell,gmp}\times\g_{j,gmp} \rightarrow \L_{gmp,*}(\Sc(\R^{k}),\Sc'(\R^{k}))
\end{equation}
where $k\coloneqq \ell+j-1$ and where the space $\L_{gmp,*}(\Sc(\R^k),\Sc'(\R^k))$ consists of distribution-valued operators satisfying the good mapping property such that their adjoints also satisfy the good mapping property, endowed with the subspace topology. This property follows from \cref{rem:sc} together with the fact that the adjoints of elements in $\g_{\ell,gmp}$ and $\g_{j,gmp}$ also satisfy the good mapping property by skew-adjointness. Thus, the proof of the proposition is complete.
\end{proof}

\subsection{Lie-Poisson manifold $\G_{\infty}^{*}$ of density matrix $\infty$-hierarchies}\label{ssec:geo_LPM}
In this subsection, we define the Poisson structure on $\G_\infty^*$, which will be used in the sequel in order to establish Hamiltonian properties of the GP hierarchy. Since many of the proofs from \cref{ssec:LP_N} carry over with trivial modification, as they do not make use of the good mapping property, we focus instead in this section on the parts of the construction which require the good mapping property. To begin, we define the real topological vector space
\begin{equation}\label{Ginf_star_def}
\G_{\infty}^{*} \coloneqq \{\Gamma = (\gamma^{(k)})_{k \in \N} \in\prod_{k=1}^{\infty} \L(\Sc_{s}'(\R^{k}),\Sc_{s}(\R^{k})) : \gamma^{(k)} = (\gamma^{(k)})^{*} \enspace \forall k\in\N\},
\end{equation}
endowed with the product topology.\footnote{We remark that $\G_\infty^*$ is the projective limit of the spaces $\{\G_N^*\}_{N\in\N}$ directed with respect to reverse inclusion.} Analogous to \cref{lem:dual_GN}, it holds that $\G_\infty^*$ is isomorphic to the dual of $(\G_\infty)^*$.

\begin{lemma}[Dual of $\G_{\infty}$]\label{lem:G_inf_dual}
The topological dual of $\G_{\infty}$, denoted by $(\G_{\infty})^{*}$ and endowed with the strong dual topology, is isomorphic to $\G_{\infty}^{*}$.
\end{lemma}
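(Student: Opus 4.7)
The proof will follow the blueprint of \cref{lem:dual_GN}, with two modifications required to pass from the finite to the infinite setting: first, we must handle a countable rather than a finite direct sum, and second, each summand $\g_{k,gmp}$ is a subspace of $\L(\Sc_s(\R^{k}),\Sc_s'(\R^{k}))$ cut out by the good mapping property, rather than $\tl{\L}(\Sc_s(\R^{k}),\Sc_s(\R^{k}))$. The starting point is the duality of locally convex direct sums and direct products from \cite[Prop.\ 2 in \S 14, Chap.\ 3]{Horvath1966}, which yields the canonical topological isomorphism
\[
(\G_\infty)^* \;=\; \Bigl(\bigoplus_{k=1}^{\infty}\g_{k,gmp}\Bigr)^{*} \;\cong\; \prod_{k=1}^{\infty}(\g_{k,gmp})^{*},
\]
implemented by evaluation on each summand, where all duals carry the strong dual topology.

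Next, for each fixed $k\in\N$, I would identify $(\g_{k,gmp})^{*}$ with the self-adjoint subspace $\{\gamma^{(k)}\in\L(\Sc_s'(\R^{k}),\Sc_s(\R^{k})) : (\gamma^{(k)})^{*}=\gamma^{(k)}\}$ via the trace pairing $\gamma^{(k)}\mapsto i\Tr_{1,\ldots,k}(\,\cdot\,\gamma^{(k)})$. The argument is the polarization-type construction used in the surjectivity step of \cref{lem:dual_GN}: given $F\in(\g_{k,gmp})^{*}$, extend $F$ to a functional $\widetilde{F}$ on the ambient real space $\L_{gmp}(\Sc_s(\R^{k}),\Sc_s'(\R^{k}))$ by the formula analogous to \eqref{f_def} (decomposing a general $A^{(k)}$ into its skew- and self-adjoint parts and taking real and imaginary components); appeal to the isomorphism $(\L_{gmp}(\Sc_s(\R^{k}),\Sc_s'(\R^{k})))^{*}\cong\L(\Sc_s'(\R^{k}),\Sc_s(\R^{k}))$ supplied by \cref{lem:gmp_dual} to represent $\widetilde{F}$ as $i\Tr(\,\cdot\,\gamma^{(k)})$ for a unique $\gamma^{(k)}$; and finally replace $\gamma^{(k)}$ by its self-adjointization $\tfrac{1}{2}(\gamma^{(k)}+(\gamma^{(k)})^{*})$, using cyclicity of the generalized trace, the skew-adjointness of elements of $\g_{k,gmp}$, and the same $\ol{\Tr(A\gamma)}=-\Tr(A\gamma)$ computation as in \cref{lem:dual_GN} to verify that the symmetrized representative still recovers $F$ on $\g_{k,gmp}$. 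Injectivity of the trace pairing on the self-adjoint subspace is immediate from testing against pure rank-one operators $\ket{f}\bra{f}$ with $f\in\Sc_s(\R^{k})$, which always lie in $\g_{k,gmp}$.

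Composing the two identifications above yields the candidate bijection $\Phi:\G_\infty^{*}\to (\G_\infty)^{*}$, $\Gamma\mapsto i\Tr(\,\cdot\,\Gamma)$. Continuity of $\Phi$ would then be obtained as in \cref{lem:dual_GN} by factoring it through the canonical continuous inclusions
\[
\iota_{\G_\infty}: \G_\infty \hookrightarrow \bigoplus_{k=1}^{\infty}\L_{gmp}(\Sc_s(\R^{k}),\Sc_s'(\R^{k})),\qquad \iota_{\G_\infty^{*}}: \G_\infty^{*} \hookrightarrow \prod_{k=1}^{\infty}\L(\Sc_s'(\R^{k}),\Sc_s(\R^{k})),
\]
writing $\Phi=\iota_{\G_\infty}^{*}\circ(\Phi')^{-1}\circ\iota_{\G_\infty^{*}}$ where $\Phi'$ is the global trace isomorphism of the ambient pair, and invoking the fact that adjoints of continuous linear maps are continuous for the strong dual topologies \cite[Cor.\ of Prop.\ 19.5]{Treves1967}. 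Continuity of $\Phi^{-1}$ is then essentially free, since $\G_\infty^{*}$ is a countable product of Fr\'echet spaces (each $\L(\Sc_s'(\R^{k}),\Sc_s(\R^{k}))\cong\Sc_{s,s}(\R^{2k})$ is Fr\'echet), hence itself Fr\'echet, and the open mapping theorem applies as in \cref{lem:dual_GN}.

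The main obstacle I anticipate is the internal consistency of the skew-adjoint extension step: one must verify that the polarization formula produces a functional that is both well-defined on all of $\L_{gmp}$ and continuous with respect to the subspace topology inherited from $\L(\Sc_s(\R^{k}),\Sc_s'(\R^{k}))$, and that symmetrization commutes with the trace pairing against skew-adjoint arguments. All of this is bookkeeping in the spirit of \cref{lem:dual_GN}, but the distribution-valued nature of the operators in $\g_{k,gmp}$ means each step requires an explicit check that compositions and adjoints are meaningful; in practice this is what \cref{lem:gmp_dual} is designed to bundle and is where the good mapping property is used crucially to justify the identification of the adjoint of a continuous inclusion with the trace pairing.
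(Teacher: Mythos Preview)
Your proposal is correct and matches the paper's approach: the paper omits the explicit proof of \cref{lem:G_inf_dual}, indicating only that it is analogous to \cref{lem:dual_GN}, and your outline is precisely that adaptation, using the duality of locally convex direct sums and products together with \cref{lem:gmp_dual} for the componentwise identification and the polarization/open-mapping steps from \cref{lem:dual_GN}. There is nothing materially different to flag.
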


We now need to established the existence of a Poisson structure on $\G_{\infty}^{*}$. We start by specifying a unital sub-algebra of $C^\infty(\G_\infty^*; \R)$.

\begin{mydef}\label{a_inf_def}
Let $\A_{\infty}$ be the algebra with respect to point-wise product generated by functionals in
\begin{equation}
\begin{split}
&\{F\in C^{\infty}(\G_{\infty}^{*};\R) : F(\cdot) = i\Tr( A\cdot),\,\,  A\in\G_{\infty}\}  \cup \{F\in C^{\infty}(\G_{\infty}^{*};\R) : F(\cdot) \equiv C\in\R\}.
\end{split}
\end{equation}
\end{mydef}
In other words, $\A_{\infty}$ is the algebra (under point-wise product) generated by constants and the image of $\G_{\infty}$ under the canonical embedding into $(\G_{\infty}^*)^{*}$. We note that our previous remarks \cref{rem:AH_can}, \cref{rem:AH_struc}, \cref{rem:con_GD} carry over with $\A_{H,N}$ replaced by $\A_{\infty}$.

\medskip
We now wish to define the Lie-Poisson bracket $\pb{\cdot}{\cdot}_{\G_{\infty}^{*}}$ on $\A_{\infty}\times \A_{\infty}$ using the Lie bracket constructed in \cref{ssec:geo_LA}. In order to so, we first need an identification of continuous linear functionals as skew-adjoint operators, which follows from \cref{lem:dual_GN*}.

\begin{lemma}[Dual of $\G_{\infty}^{*}$]\label{dual_dual}
The topological dual of $\G_{\infty}^{*}$, denoted by $(\G_{\infty}^{*})^{*}$ and endowed with the strong dual topology, is isomorphic to
\begin{equation}
\widetilde{\G}_{\infty}\coloneqq \{A\in \bigoplus_{k=1}^{\infty}\L(\Sc_{s}(\R^{k}), \Sc_{s}'(\R^{k})) : (A^{(k)})^{*} = -A^{(k)}\},
\end{equation}
equipped with the subspace topology induced by $\bigoplus_{k=1}^{\infty}\L(\Sc_{s}(\R^{k}),\Sc_{s}'(\R^{k}))$, via the canonical bilinear form
\begin{equation}
i\Tr(A\cdot \Gamma) = i\sum_{k=1}^{\infty}\Tr_{1,\ldots,k}(A^{(k)}\gamma^{(k)}), \qquad \Gamma=(\gamma^{(k)})_{k\in\N}\in \G_{\infty}^{*}.
\end{equation}
\end{lemma}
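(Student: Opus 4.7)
The strategy closely mirrors the proof of the analogous finite-$N$ statement, \cref{lem:dual_GN}, with the essential new ingredient being the duality between locally convex direct sums and direct products replacing the finite direct sum. First I would establish the base duality isomorphism: for each $k\in\N$, the Schwartz kernel theorem restricted to bosonic subspaces identifies $\L(\Sc_s'(\R^k),\Sc_s(\R^k))$ with $\Sc_{s,s}(\R^k\times\R^k)$ as a nuclear Fr\'echet space and $\L(\Sc_s(\R^k),\Sc_s'(\R^k))$ with its strong topological dual $\Sc_{s,s}'(\R^k\times\R^k)$, paired via $(A^{(k)},\gamma^{(k)})\mapsto \Tr_{1,\ldots,k}(A^{(k)}\gamma^{(k)})$. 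Combined with the standard duality between locally convex direct sums and direct products (cf.\ Proposition 2 in \S 14, Chapter 3 of \cite{Horvath1966}), this yields a topological linear isomorphism
\begin{equation*}
\Phi' : \bigoplus_{k=1}^{\infty}\L(\Sc_s(\R^k),\Sc_s'(\R^k)) \cong \paren*{\prod_{k=1}^\infty \L(\Sc_s'(\R^k),\Sc_s(\R^k))}^*
\end{equation*}
implemented by the canonical trace pairing.

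Next I would define $\Phi : \widetilde{\G}_\infty \to (\G_\infty^*)^*$ by $\Phi(A)(\Gamma)\coloneqq i\Tr(A\cdot\Gamma)$. Well-definedness is immediate: since $A\in\bigoplus_k$ has only finitely many nonzero components, the sum $\sum_k i\Tr_{1,\ldots,k}(A^{(k)}\gamma^{(k)})$ is finite, each term is continuous in $\Gamma$, and the skew-/self-adjointness of $A^{(k)},\gamma^{(k)}$ together with the cyclicity of the trace forces $\ol{\Tr(A^{(k)}\gamma^{(k)})} = -\Tr(A^{(k)}\gamma^{(k)})$, so $\Phi(A)(\Gamma)\in\R$. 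Injectivity follows by a polarization argument exactly as in \cref{lem:H_LA_P2}: testing $\Phi(A)$ against elementary self-adjoint hierarchies of the form $(\delta_{k,k_0}\ket{f^{(k_0)}}\bra{f^{(k_0)}})_{k\in\N}$ with $f^{(k_0)}\in\Sc_s(\R^{k_0})$, and invoking density of $\Sc_s(\R^{k_0})$ in $L_s^2(\R^{k_0})$, forces $A^{(k_0)}=0$ for every $k_0$.

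For surjectivity, given $F\in(\G_\infty^*)^*$ I would adapt formula \eqref{f_def} to construct a continuous linear extension $\widetilde F$ to the full product $\prod_k\L(\Sc_s'(\R^k),\Sc_s(\R^k))$, exploiting the fact that $\Gamma\mapsto \Gamma^*$ is a continuous involution of the product (so that $\widetilde F$ inherits continuity from $F$). Then $\widehat A\coloneqq (\Phi')^{-1}(\widetilde F)$ lies in $\bigoplus_k\L(\Sc_s(\R^k),\Sc_s'(\R^k))$, and antisymmetrizing to $A\coloneqq \tfrac{1}{2}(\widehat A - \widehat A^*)\in\widetilde{\G}_\infty$, together with the cyclicity-of-trace argument used in the proof of \cref{lem:dual_GN}, yields $\Phi(A)=F$. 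The main obstacle I anticipate is the bicontinuity of $\Phi$: since $\widetilde{\G}_\infty$ carries a locally convex direct sum (LF) topology while $(\G_\infty^*)^*$ is the strong dual of a Fr\'echet space, the direct appeal to the open mapping theorem used in \cref{lem:dual_GN} is unavailable. I would instead verify continuity of $\Phi$ and $\Phi^{-1}$ individually by factoring each through $\Phi'$ and the continuous canonical inclusion $\iota_{\G_\infty^*}:\G_\infty^*\hookrightarrow \prod_k\L(\Sc_s'(\R^k),\Sc_s(\R^k))$ together with its adjoint (as in the proof of \cref{lem:dual_GN}), relying on the corollary of Proposition 19.5 in \cite{Treves1967} to obtain continuity of the adjoint map in the strong dual topology.
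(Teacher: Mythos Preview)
Your proposal is correct and follows the same approach the paper intends: the paper does not give an explicit proof of this lemma but simply notes that the identification ``follows from \cref{lem:dual_GN*}'' (the finite-$N$ analogue, itself proved by the method of \cref{lem:dual_GN}), and your argument is exactly a faithful expansion of that strategy to the countable-product setting, correctly replacing the finite direct sum/product duality by the general one and correctly flagging that the open mapping theorem step must be handled differently since $\widetilde{\G}_\infty$ is no longer Fr\'echet.
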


\begin{remark}
The previous lemma implies that, given a smooth real-valued functional $F:\G_{\infty}^{*}\rightarrow \R$ and a point $\Gamma\in\G_{\infty}^{*}$, we may identify the continuous linear functional $dF[\Gamma]$, given by the G\^ateaux derivative of $F$ at $\Gamma$, as a skew-adjoint element of $\bigoplus_{k=1}^{\infty}\L(\mathcal{S}_{s}(\mathbb{R}^{k}), \Sc_{s}'(\R^{k}))$. We will abuse notation by denoting this element by $dF[\Gamma]=(dF[\Gamma]^{(k)})_{k\in\N}$.
\end{remark}

We are now prepared to introduce the Lie-Poisson bracket $\pb{\cdot}{\cdot}_{\G_{\infty}^{*}}$ on $\A_{\infty}\times \A_{\infty}$.

\begin{mydef}
\label{def:Ginf_pb}
For $F, G \in \A_\infty$, we define
\begin{equation}\label{equ:poisson_def}
\pb{F}{G}_{\G_{\infty}^{*}}(\Gamma)  \coloneqq i\Tr\paren*{\comm{dF[\Gamma]}{dG[\Gamma]}_{\G_{\infty}}\cdot\Gamma}, \qquad \forall \Gamma\in \G_{\infty}^{*}.
\end{equation}
\end{mydef}

\begin{remark}[Existence of Casimirs]
The functional $F(\Gamma) \coloneqq \Tr_{1}(\gamma^{(1)})$ is a Casimir\footnote{i.e. it Poisson commutes with every functional in $\A_{\infty}$.} for the Poisson bracket $\pb{\cdot}{\cdot}_{\G_{\infty}^{*}}$. Consequently, the Poisson bracket $\pb{\cdot}{\cdot}_{\G_{\infty}^{*}}$ is not canonically induced by a symplectic structure on $\G_{\infty}^{*}$.
\end{remark}

We now turn to our ultimate goal of this subsection, that is, proving the following: 

\LP*

Properties \ref{item:wp_P1} and \ref{item:wp_P2} in \cref{def:WP} for weak Poisson manifolds are readily proved using the same arguments in the proofs of \cref{lem:H_LA_P1} and \cref{lem:H_LA_P2}, respectively, together with the following technical result, which in turn follows from the same argument as in \cref{lem:tr_gen_H}. We omit the details of the verification of these properties. 

\begin{lemma}\label{lem:pb_func}
Suppose that $G_{j}\in\A_{\infty}$ is a trace functional $G_{j}(\Gamma) = i\Tr(dG_{j}[0]\cdot\Gamma)$ for $j=1,2$. Then for all $\Gamma\in\G_{\infty}^{*}$, the G\^ateaux derivative $d\pb{G_{1}}{G_{2}}_{\G_{\infty}^{*}}[\Gamma]$ at the point $\Gamma$ may be identified with the element
\begin{equation}
\comm{dG_{1}[0]}{dG_{2}[0]}_{\G_{\infty}}\in \G_{\infty}
\end{equation}
via the canonical trace pairing. If $G_{1}$ is a trace functional and $G_{2}=G_{2,1}G_{2,2}$ is the product of two trace functionals in $\A_{\infty}$, then $d\pb{G_{1}}{G_{2}}_{\G_{\infty}^{*}}[\Gamma]$ may be identified with
\begin{equation}
G_{2,1}(\Gamma)\comm{dG_{1}[0]}{dG_{2,2}[0]}_{\G_{\infty}} + G_{2,2}(\Gamma)\comm{dG_{1}[0]}{dG_{2,1}[0]}_{\G_{\infty}}
\end{equation}
for all $\Gamma\in\G_{\infty}^{*}$ via the canonical trace pairing.
\end{lemma}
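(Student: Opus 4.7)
The plan is to closely mirror the proof of the $N$-body analog, Lemma \ref{lem:tr_gen_H}, since the structural content of the argument is identical, with only the simpler limiting bracket formula \eqref{eq:C_def} replacing the more elaborate hierarchy bracket given by Proposition \ref{prop:Lie_hi_form}. In particular, no genuinely new analytic input is required: the proof is essentially a direct calculation exploiting the Leibniz rule for the G\^ateaux derivative and bilinearity of $\comm{\cdot}{\cdot}_{\G_\infty}$.

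For the first assertion, both $G_1$ and $G_2$ are trace functionals, so Remark \ref{rem:con_GD} gives $dG_j[\Gamma] = dG_j[0]$ for every $\Gamma \in \G_\infty^*$. Unpacking the definition of the Lie--Poisson bracket (Definition \ref{def:Ginf_pb}) then yields
\begin{equation*}
\pb{G_1}{G_2}_{\G_\infty^*}(\Gamma) = i\Tr\paren*{\comm{dG_1[0]}{dG_2[0]}_{\G_\infty} \cdot \Gamma},
\end{equation*}
where $\comm{dG_1[0]}{dG_2[0]}_{\G_\infty}$ is a well-defined element of $\G_\infty$ by Proposition \ref{prop:G_inf_br}. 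This is a trace functional in $\Gamma$, so its G\^ateaux derivative is constant in $\Gamma$ and is identified, via the canonical trace pairing of Lemma \ref{dual_dual}, with the generator $\comm{dG_1[0]}{dG_2[0]}_{\G_\infty}$.

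For the second assertion, since $G_1$ is a trace functional we again have $dG_1[\Gamma] = dG_1[0]$. Applying the Leibniz rule for the G\^ateaux derivative to $G_2 = G_{2,1} G_{2,2}$, and using that each $G_{2,j}$ is a trace functional with constant G\^ateaux derivative, I obtain the identification
\begin{equation*}
dG_2[\Gamma] = G_{2,1}(\Gamma)\, dG_{2,2}[0] + G_{2,2}(\Gamma)\, dG_{2,1}[0]
\end{equation*}
as elements of $\widetilde{\G}_\infty$ via the trace pairing. The $\R$-bilinearity of $\comm{\cdot}{\cdot}_{\G_\infty}$ --- which descends from bilinearity of the contraction operators $\circ_\alpha^\beta$ of Lemma \ref{lem:gmp}, of the elementary bracket $\comm{\cdot}{\cdot}_1$ in Remark \ref{rem:comm_def}, of bosonic symmetrization $\Sym_k$, and of the finite sum in \eqref{eq:C_def} --- then permits pulling the scalar factors $G_{2,j}(\Gamma)$ outside the bracket to give
\begin{equation*}
\comm{dG_1[\Gamma]}{dG_2[\Gamma]}_{\G_\infty} = G_{2,1}(\Gamma)\,\comm{dG_1[0]}{dG_{2,2}[0]}_{\G_\infty} + G_{2,2}(\Gamma)\,\comm{dG_1[0]}{dG_{2,1}[0]}_{\G_\infty}.
\end{equation*}
Substituting into the definition of $\pb{G_1}{G_2}_{\G_\infty^*}(\Gamma)$ and identifying via the canonical trace pairing yields the claim.

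I do not expect any serious obstacle here; the proof is bookkeeping once the Lie algebra structure of $\G_\infty$ from Section \ref{sec:geom} has been established. The one point which warrants verification is that the right-hand side of the displayed bracket identity is a well-defined element of $\G_\infty$, but this is immediate since $\G_\infty$ is closed under $\R$-scalar multiplication and $\comm{\cdot}{\cdot}_{\G_\infty}$ takes values in $\G_\infty$ by Proposition \ref{prop:G_inf_br}. This stands in contrast to the $N$-body proof where one must also check that higher $r$-fold contractions ($r \geq 2$) behave correctly; here, those contractions vanish in the limit (by Proposition \ref{prop:LB_lim}) and are absent from the formula \eqref{eq:C_def} entirely, so the argument is in fact simpler than its $N$-body counterpart.
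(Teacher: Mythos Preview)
Your proposal is correct and takes essentially the same approach as the paper, which explicitly states that the lemma ``follows from the same argument as in \cref{lem:tr_gen_H}'' and omits the details. You have supplied precisely those details: constant G\^ateaux derivatives for trace functionals, the Leibniz rule for $G_2=G_{2,1}G_{2,2}$, and bilinearity of $\comm{\cdot}{\cdot}_{\G_\infty}$, noting correctly that the argument is in fact simpler here since the higher $r$-fold contractions of \cref{prop:Lie_hi_form} are absent from the limiting bracket \eqref{eq:C_def}.
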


Property \ref{item:wp_P3} is more delicate: to show that the Hamiltonian vector field is well-defined, we have to exploit the good mapping property. Analogous to the proof of \cref{prop:G_inf_br}, rather than prove directly the well-definedness of the Hamiltonian vector field, we can use our earlier investment of work in proving \cref{lem:H_WP_VF}, which gives an explicit formula for the $N$-body vector field, together with our convergence result \cref{prop:LB_lim} and an approximation argument.

\begin{lemma}\label{lem:WP_P3}
$(\G_\infty^*, \A_{\infty}, \pb{\cdot}{\cdot}_{\G_{\infty}^{*}})$ satisfies property \ref{item:wp_P3} in \cref{def:WP}. Furthermore, if $H\in\A_{\infty}$, then we have the following formula for the Hamiltonian vector field $X_{H}$:
\begin{equation}
\label{eq:Ginf_vf}
\begin{split}
X_{H}(\Gamma)^{(\ell)} &= \sum_{j=1}^{\infty}j\Tr_{\ell+1,\ldots,\ell+j-1}\paren*{\comm{\sum_{\alpha=1}^{\ell} dH[\Gamma]^{(j)}_{(\alpha,\ell+1,\ldots,\ell+j-1)}}{\gamma^{(\ell+j-1)}} }.
\end{split}
\end{equation}
\end{lemma}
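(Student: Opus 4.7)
The plan is to mirror the proof of \cref{lem:H_WP_VF}, but with the formula for the bracket $\comm{\cdot}{\cdot}_{\G_\infty}$ in place of $\comm{\cdot}{\cdot}_{\G_N}$. First, by \cref{rem:gen_vf}, it suffices to exhibit a smooth Hamiltonian vector field for Hamiltonians drawn from a generating subset of $\A_\infty$. Constant functionals are trivial (take $X_H\equiv 0$), so I focus on trace functionals $H(\Gamma)=i\Tr(W\cdot\Gamma)$ with $W=(W^{(j)})_{j\in\N}\in\G_\infty$, for which $dH[\Gamma]\equiv W$ by \cref{rem:con_GD}. A second appeal to \cref{rem:gen_vf} (combined with the Leibnitz rule and the fact that products in $\A_\infty$ have constant Gâteaux derivatives outside the differentiated factor) then reduces the defining identity $X_H F=\pb{F}{H}_{\G_\infty^*}$ to the case where $F(\Gamma)=i\Tr(A\cdot\Gamma)$ is itself a trace functional.

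For such $F$ and $H$, expand $\pb{F}{H}_{\G_\infty^*}(\Gamma)$ using \cref{def:Ginf_pb}, \cref{eq:C_def}, and \cref{rem:comm_def}. By \cref{rem:bos_sym} every $\circ_\alpha^\beta$ collapses to $\circ_\alpha^1$, and \cref{lem:tr_bos} absorbs the $\Sym_k$ operator (since $\gamma^{(k)}$ is bosonic). Swapping the $k$- and $\ell$-summations and applying the cyclicity of the generalized partial trace (\cref{prop:gtr_prop}), the resulting expression can be regrouped into
\[
\pb{F}{H}_{\G_\infty^*}(\Gamma)=\sum_{\ell=1}^{\infty}i\Tr_{1,\ldots,\ell}\bigl(A^{(\ell)}\cdot Y^{(\ell)}(\Gamma)\bigr),
\]
with
\[
Y^{(\ell)}(\Gamma)\coloneqq \sum_{j=1}^{\infty}j\,\Tr_{\ell+1,\ldots,\ell+j-1}\biggl(\comm*{\sum_{\alpha=1}^{\ell}W^{(j)}_{(\alpha,\ell+1,\ldots,\ell+j-1)}}{\gamma^{(\ell+j-1)}}\biggr).
\]
This matches the formula in \eqref{eq:Ginf_vf}, so once we verify that $Y(\Gamma)\in\G_\infty^*$ and that $\Gamma\mapsto Y(\Gamma)$ is smooth, we obtain $X_H(\Gamma)=Y(\Gamma)$ and conclude by the non-degeneracy property \ref{item:wp_P2} (already established in the proof of \cref{prop:LP}) that $X_H$ is the unique Hamiltonian vector field.

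The main obstacle is the verification that each $Y^{(\ell)}(\Gamma)$ is a well-defined, self-adjoint element of $\L(\Sc_s'(\R^\ell),\Sc_s(\R^\ell))$, and that the assignment $\Gamma\mapsto Y(\Gamma)$ is smooth into $\G_\infty^*$. Linearity (and hence smoothness) in $\Gamma$ is immediate from the construction, so the heart of the matter is giving meaning to the composition $W^{(j)}_{(\alpha,\ell+1,\ldots,\ell+j-1)}\gamma^{(\ell+j-1)}$ and to the subsequent partial trace. This is precisely where the good mapping property of $W^{(j)}\in\g_{j,gmp}$ intervenes: combined with the Schwartz-class kernel of $\gamma^{(\ell+j-1)}$, it places the product in $\L(\Sc_s'(\R^{\ell+j-1}),\Sc(\R^{\ell+j-1}))$, to which \cref{prop:partial_trace} applies to yield an element of $\L(\Sc_s'(\R^\ell),\Sc(\R^\ell))$, continuously in $\Gamma$. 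Self-adjointness of $Y^{(\ell)}(\Gamma)$ follows from the skew-adjointness of $W^{(j)}$ and the self-adjointness of $\gamma^{(\ell+j-1)}$, paralleling the computation at the end of \cref{lem:H_WP_VF}. Finally, the $\Ss_\ell$-invariance of the index set $\{1,\ldots,\ell\}$ over which $\alpha$ ranges, together with the $\Ss_j$-invariance of $W^{(j)}$ and the bosonic symmetry of $\gamma^{(\ell+j-1)}$, ensures that $Y^{(\ell)}(\Gamma)$ maps $\Sc_s'(\R^\ell)$ into the bosonic subspace $\Sc_s(\R^\ell)$, completing the proof.
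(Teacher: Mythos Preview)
Your direct-computation strategy is sound in outline and genuinely different from the paper's proof. The paper does \emph{not} mimic \cref{lem:H_WP_VF} directly; instead it mollifies the generators of $F$ and $H$ to obtain approximants $F_{n_1},H_{n_2}$ whose G\^ateaux derivatives lie in $\G_M$ for all large $M$, invokes the $N$-body formula from \cref{lem:H_WP_VF}, and then passes to the limit in three stages ($M\to\infty$ via \cref{prop:LB_lim}, then $n_2\to\infty$, then $n_1\to\infty$) using separate continuity. Your route is more elementary in that it avoids the approximation machinery; the paper's route is cleaner because it reuses the $N$-body computation verbatim and only needs separate continuity to pass limits.

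There is, however, a genuine soft spot in your argument at the cyclicity step. You cite \cref{prop:gtr_prop} to move $W^{(j)}_{(\alpha,\ell+1,\dots,\ell+j-1)}$ across the trace, but item \ref{item:gtr_cyc} of that proposition requires the cycled operator to lie in $\L(\Sc'(\R^k),\Sc'(\R^k))$. In the $N$-body proof this holds because $dH[\Gamma_N]^{(j)}\in\g_j$ maps Schwartz to Schwartz and hence, by skew-adjointness and duality, extends to tempered distributions. In $\G_\infty$ one only has $W^{(j)}\in\g_{j,gmp}\subset\L(\Sc_s(\R^j),\Sc_s'(\R^j))$, which gives no such extension; the hypothesis of \cref{prop:gtr_prop}\ref{item:gtr_cyc} fails, and likewise $\gamma^{(k)}W^{(j)}_{(\alpha,\dots)}$ is only in $\L(\Sc(\R^k),\Sc(\R^k))$, so \cref{prop:partial_trace} does not apply to it directly either. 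This is precisely why the paper takes the approximation detour: for the mollified $W^{(j)}_{n_2}$ the cyclicity \emph{is} available, and the limiting formula \eqref{eq:Ginf_vf} is inherited.

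Your approach can be repaired, but not by a one-line citation: you must go back to the definition \eqref{eq:comp_def} of $A^{(\ell)}\circ_\alpha^\beta W^{(j)}$ as a kernel pairing $\ipp{K_{A^{(\ell)}},\Phi_{W^{(j)},\alpha,\beta}(f,g)^t}$ and compute the trace against $\gamma^{(k)}$ directly from this, using the good mapping property of $W^{(j)}$ (and of $A^{(\ell)}$ for the other half of the commutator) to see that the relevant $\Phi$-maps land in Schwartz space. This yields the regrouped expression with $Y^{(\ell)}$ having a Schwartz kernel, without ever invoking cyclicity for an operator that does not extend to $\Sc'$. Your claim that the product $W^{(j)}_{(\alpha,\dots)}\gamma^{(\ell+j-1)}$ itself lies in $\L(\Sc_s',\Sc)$ is also not quite right---it is only in $\L(\Sc_s',\Sc')$ a priori; what the good mapping property buys is that \emph{after} partial tracing the last $j-1$ variables, the result has Schwartz kernel.
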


\begin{proof}
Let $F,H\in\A_{\infty}$. In order to find a candidate Hamiltonian vector field, we compute $\pb{F}{H}_{\G_{\infty}^{*}}$ using an approximation to reduce to the case where $F$ and $G$ belong to $\A_{H,N}$, for all $N$ sufficiently large, together with the $N$-hierarchy Hamiltonian vector field result \cref{lem:H_WP_VF} and our convergence result \cref{prop:LB_lim}. Once we have found a candidate, we then verify that the vector field is a smooth map $\G_{\infty}^{*}\rightarrow\G_{\infty}^{*}$, which then completes the proof by the uniqueness guaranteed by \cref{rem:hvf_u}.

By definition of $\A_\infty$, the functionals $F$ and $H$ are finite linear combinations of finite products of trace functionals generated by elements in $\G_\infty$:
\begin{equation}
F(\Gamma) = \sum_{a=1}^{M_F} C_{a,F}\prod_{b=1}^{M_{a,F}} i\Tr(A_{b,F}\cdot \Gamma), \qquad H(\Gamma) = \sum_{a=1}^{M_H}C_{a,H}\prod_{b=1}^{M_{a,H}} i\Tr(A_{b,H}\cdot\Gamma),
\end{equation}
where $M_F,M_H,M_{a,F},M_{a,H}\in\N$, $C_{a,F},C_{a,H}\in\R$, and $A_{b,F}=(A_{b,F}^{(k)})_{k\in\N}, A_{b,H}=(A_{b,H}^{(k)})_{k\in\N} \in\G_\infty$. Additionally, since $\G_\infty$ is a direct sum, there exists an integer $N_0\in\N$ such that for each $1\leq a\leq M_F$ and $1\leq b\leq M_{a,F}$, 
\begin{equation}
A_{b,F}^{(k)}= 0\in\g_{k, gmp}, \qquad \forall 1\leq k\leq N_0
\end{equation}
and similarly for $A_{b,H}^{(k)}$. So by mollifying and truncating the Schwartz kernels of each $A_{b,F}^{(k)}, A_{b,H}^{(k)}$, we obtain approximating sequences $A_{n,b,F}\coloneqq (A_{n,b,F}^{(k)})_{k\in\N}$ and $A_{n,b,H}\coloneqq (A_{n,b,H}^{(k)})_{k\in\N}$, such that 
\begin{equation}
A_{n,b,F}, A_{n,b,H} \in \G_\infty \cap \bigoplus_{k=1}^\infty \L(\Sc_s'(\R^k),\Sc_s(\R^k)),
\end{equation}
$A_{n,b,F}\rightarrow A_{b,F}$, and $A_{n,b,H}\rightarrow A_{b,H}$ in $\G_\infty$ as $n\rightarrow\infty$. In particular, each $A_{n,b,F}, A_{n,b,H}\in\G_M$ for every integer $M\geq N_0$. Now using the approximants $A_{n,b,F}$ and $A_{n,b,H}$, we can define sequences $(F_n)_{n\in\N}$ and $(H_n)_{n\in\N}$ of functionals in $\A_{\infty}$ by
\begin{equation}
F_n(\Gamma) \coloneqq \sum_{a=1}^{M_F}C_{a,F}\prod_{b=1}^{M_{a,F}}i\Tr(A_{n,b,F}\cdot\Gamma), \qquad H_n(\Gamma) \coloneqq \sum_{a=1}^{M_H}C_{a,H}\prod_{b=1}^{M_{a,H}} i\Tr(A_{n,b,H}\cdot\Gamma),
\end{equation}
such that $F_n(\Gamma)\rightarrow F(\Gamma)$ and $H_n(\Gamma)\rightarrow H(\Gamma)$ as $n\rightarrow\infty$ uniformly on bounded subsets of $\G_\infty^*$. Lastly, note that by the Leibnitz rule for the G\^{a}teaux derivative,
\begin{equation}
dF_n[\Gamma], dH_n[\Gamma]\in \G_M, \qquad \forall M\geq N_0
\end{equation}
and $dF_n[\Gamma]\rightarrow dF[\Gamma]$ and $dH_n[\Gamma]\rightarrow dH[\Gamma]$ in $\bigoplus_{k=1}^\infty \L(\Sc_s(\R^k),\Sc_s'(\R^k))$, as $n\rightarrow\infty$, uniformly on bounded subsets of $\G_\infty^*$.

Now by separate continuity of the Lie bracket $\comm{\cdot}{\cdot}_{\G_\infty}$ and the separate continuity of the generalized trace (see \cref{prop:gtr_prop}), we obtain from the definition of $\pb{\cdot}{\cdot}_{\G_\infty^*}$ that
\begin{align}
\pb{F}{H}_{\G_\infty^*}(\Gamma) &= i\Tr\paren*{\comm{dF[\Gamma]}{dH[\Gamma]}_{\G_\infty}\cdot\Gamma} \nonumber\\
&=i\lim_{n_1\rightarrow\infty}\lim_{n_2\rightarrow\infty} \Tr\paren*{\comm{dF_{n_1}[\Gamma]}{dH_{n_2}[\Gamma]}_{\G_\infty}\cdot\Gamma} \nonumber\\
&=\lim_{n_1\rightarrow\infty}\lim_{n_2\rightarrow \infty} \pb{F_{n_1}}{H_{n_2}}_{\G_\infty^*}(\Gamma),
\end{align}
for each $\Gamma\in\G_\infty^*$. Since
\begin{equation}
\label{eq:aa_zero}
dF_{n_1}[\Gamma]^{(k)} = dH_{n_2}[\Gamma]^{(k)} = 0\in\g_{k, gmp}, \qquad \forall k\geq N_0, \enspace (n_1,n_2)\in\N^2, \enspace \Gamma\in\G_\infty^*,
\end{equation}
it follows from an examination of the definition of $\comm{dF_{n_1}[\Gamma]}{dH_{n_2}[\Gamma]}_{\G_\infty}$ that
\begin{equation}
\comm{dF_{n_1}[\Gamma]}{dH_{n_2}[\Gamma]}_{\G_\infty}^{(k)} = 0\in \g_{k, gmp}, \qquad \forall k\geq 2N_0+1, \enspace (n_1,n_2)\in\N^2, \enspace \Gamma\in\G_\infty^*.
\end{equation}
Therefore, if $\Gamma=(\gamma^{(k)})_{k\in\N}\in\G_\infty^*$, then letting $\Gamma_M \coloneqq (\gamma^{(k)})_{k=1}^M$ be the projection onto an element of $\G_M^*$, for $M\geq 2N_0+1$, we see that
\begin{align}
\Tr\paren*{\comm{dF_{n_1}[\Gamma]}{dH_{n_2}[\Gamma]}_{\G_\infty}\cdot\Gamma} &= \Tr\paren*{\comm{dF_{n_1}[\Gamma]}{dH_{n_2}[\Gamma]}_{\G_\infty}\cdot\Gamma_{2N_0+1}} \nonumber\\
&=\Tr\paren*{\comm{dF_{n_1}[\Gamma_{2N_0+1}]}{dH_{n_2}[\Gamma_{2N_0+1}]}_{\G_\infty}\cdot\Gamma_{2N_0+1}}.
\end{align}
For each $(n_1,n_2)\in\N^2$, we have by \cref{prop:LB_lim} and the separate continuity of the generalized trace that
\begin{align}
\Tr\paren*{\comm{dF_{n_1}[\Gamma_{2N_0+1}]}{dH_{n_2}[\Gamma_{2N_0+1}]}_{\G_\infty}\cdot\Gamma_{2N_0+1}} &= \lim_{M\rightarrow\infty}\Tr\paren*{\comm{dF_{n_1}[\Gamma_{2N_0+1}]}{dH_{n_2}[\Gamma_{2N_0+1}]}_{\G_M}\cdot\Gamma_{2N_0+1}}.
\end{align}
For $M\gg 2_{N_0+1}$, we have by \cref{lem:H_WP_VF} that
\begin{align}
i\Tr\paren*{\comm{dF_{n_1}[\Gamma_{2N_0+1}]}{dH_{n_2}[\Gamma_{2N_0+1}]}_{\G_M}\cdot\Gamma_{2N_0+1}} &= \pb{F_{n_1}}{H_{n_2}}_{\G_M^*}(\Gamma_{2N_0+1}) \nonumber\\
&=\sum_{\ell=1}^{N_0} i\Tr_{1,\ldots,\ell}\paren*{dF_{n_1}[\Gamma_{2N_0+1}]^{(\ell)} X_{H_{n_2},\G_M^*}(\Gamma_{2N_0+1})^{(\ell)}}, \label{eq:nz_con}
\end{align}
where
\begin{equation}
\label{eq:X_Hn2}
\begin{split}
&X_{H_{n_2},\G_M^*}(\Gamma_{2N_0+1})^{(\ell)} \\
&=\sum_{j=1}^M\sum_{r=r_0}^{\min\{\ell,j\}}C_{\ell jkr M}'\Tr_{\ell+1,\ldots,k}\paren*{\comm{\sum_{\ul{\alpha}_r\in P_r^\ell} dH_{n_2}[\Gamma_{2N_0+1}]_{(\ul{\alpha}_r,\ell+1,\ldots,\min\{\ell+j-r,k\})}^{(j)}}{\gamma_{2N_0+1}^{(k)}}}
\end{split}
\end{equation}
and where
\begin{align}
k &\coloneqq \min\{\ell+j-1,M\}, \quad r_0 \coloneqq \max\{1,\min\{\ell,j\}-(M-\max\{\ell,j\})\},
\end{align}
and
\begin{align}
C_{\ell j k r M}' \coloneqq {j\choose r}\frac{M C_{\ell,M}C_{j,M}}{C_{k,M}\prod_{m=1}^{r-1}(M-k+m)}.
\end{align}
Since $dF_{n_1}[\Gamma_{2N_0+1}]^{(\ell)}=0\in\g_\ell$ and $dH_{n_2}[\Gamma_{2N_0+1}]^{(j)}=0\in\g_j$, for $\ell,j\geq N_0$, we see upon substituting the right-hand side of \eqref{eq:X_Hn2} into \eqref{eq:nz_con} that, for any $M\geq 2N_0+1$, only pairs $(\ell,j)$ satisfying $\ell+j-1\leq M$ give a nonzero contribution to the resulting expression. Similarly, only pairs $(\ell,j)$ such that $r_0=1$ give a nonzero contribution to \eqref{eq:nz_con}. Therefore, we may write
\begin{equation}
\label{eq:X_H_form}
\begin{split}
&X_{H_{n_2},\G_M^*}(\Gamma_{2N_0+1})^{(\ell)}\\
&= \sum_{j=1}^M \sum_{r=1}^{\min\{\ell,j\}} C_{\ell j k r M}'\Tr_{\ell+1,\ldots,\ell+j-1}\paren*{\comm{\sum_{\ul{\alpha}_r\in P_r^\ell} dH_{n_2}[\Gamma_{2N_0+1}]_{(\ul{\alpha}_r,\ell+1,\ldots,\ell+j-r)}^{(j)}}{\gamma_{2N_0+1}^{(\ell+j-1)}}}.
\end{split}
\end{equation}
By the analysis from the proof of \cref{prop:LB_lim}, we have that
\begin{equation}
\lim_{M\rightarrow\infty} C_{\ell j k r M}' =
\begin{cases}
j, & {r=1} \\
0, & {2\leq r\leq \min\{\ell,j\}}
\end{cases}.
\end{equation}
Since the summands in \eqref{eq:X_H_form} are zero for $j\geq N_0$, it then follows that
\begin{equation}
X_{H_{n_2},\G_M^*}(\Gamma_{2N_0+1})^{(\ell)} \xrightarrow[M\rightarrow\infty]{\g_\ell^*} \underbrace{\sum_{j=1}^\infty j\Tr_{\ell+1,\ldots,\ell+j-1}\paren*{\comm{\sum_{\alpha=1}^\ell dH_{n_2}[\Gamma_{2N_0+1}]_{(\alpha,\ell+1,\ldots,\ell+j-1)}^{(j)}}{\gamma_{2N_0+1}^{(\ell+j-1)}}}}_{\eqqcolon X_{H_{n_2},\G_\infty^*}(\Gamma_{2N_0+1})^{(\ell)}}.
\end{equation}
The preceding convergence result implies, by the separate continuity of the generalized trace, that for fixed $(n_1,n_2)\in\N^2$,
\begin{equation}
\begin{split}
&\lim_{M\rightarrow\infty} \sum_{\ell=1}^{N_0}i\Tr_{1,\ldots,\ell}\paren*{dF_{n_1}[\Gamma_{2N_0+1}]^{(\ell)} X_{H_{n_2},\G_M^*}(\Gamma_{2N_0+1})^{(\ell)}}\\
&= \sum_{\ell=1}^{N_0}i\Tr_{1,\ldots,\ell}\paren*{dF_{n_1}[\Gamma_{2N_0+1}]^{(\ell)} X_{H_{n_2},\G_\infty^*}(\Gamma_{2N_0+1})^{(\ell)}}.
\end{split}
\end{equation}
Recalling from \eqref{eq:aa_zero} that $dH_{n_2}[\Gamma_{2N_0+1}]^{(j)}=dH_{n_2}[\Gamma]^{(j)}$, for all $j\in\N$, and 
\[
\gamma_{2N_0+1}^{(\ell+j-1)}=\gamma^{(\ell+j-1)}, \quad \textup{for $\ell+j-1\leq 2N_0+1$},
\]
 by definition of the projection $\Gamma_{2N_0+1}$, we obtain that
\begin{equation}
X_{H_{n_2},\G_\infty^*}(\Gamma_{2N_0+1})^{(\ell)} = \underbrace{\sum_{j=1}^\infty j\Tr_{\ell+1,\ldots,\ell+j-1}\paren*{\comm{\sum_{\alpha=1}^\ell dH_{n_2}[\Gamma]_{(\alpha,\ell+1,\ldots,\ell+j-1)}^{(j)}}{\gamma^{(\ell+j-1)}}}}_{\eqqcolon X_{H_{n_2}}(\Gamma)^{(\ell)}},
\end{equation}
for $\ell\in\N_{\leq N_0}$. Similarly, by \eqref{eq:aa_zero}, $dF_{n_1}[\Gamma_{2N_0+1}]^{(\ell)} = dF_{n_1}[\Gamma]^{(\ell)}$, and so we have that
\begin{equation}
\sum_{\ell=1}^{N_0}i\Tr_{1,\ldots,\ell}\paren*{dF_{n_1}[\Gamma_{2N_0+1}]^{(\ell)} X_{H_{n_2},\G_\infty^*}(\Gamma_{2N_0+1})^{(\ell)}} = \sum_{\ell=1}^{N_0} i\Tr_{1,\ldots,\ell}\paren*{dF_{n_1}[\Gamma]^{(\ell)} X_{H_{n_2}}(\Gamma)^{(\ell)}}.
\end{equation}

We now proceed to the analysis of the iterative limits $n_2\rightarrow\infty$ followed by $n_1\rightarrow\infty$. Since 
\[
dH_{n_2}[\Gamma] \rightarrow dH[\Gamma]
\]
 in $\G_\infty$, as $n_2\rightarrow\infty$, it follows from \cref{prop:ext_k} and the universal property of the tensor product that the $(\ell+j-1)$-particle extensions
\begin{equation}
dH_{n_2}[\Gamma]_{(\alpha,\ell+1,\ldots,\ell+j-1)}^{(j)} \longrightarrow  dH[\Gamma]_{(\alpha,\ell+1,\ldots,\ell+j-1)}^{(j)},
\end{equation}
in $\L_{gmp}(\Sc(\R^{\ell+j-1}),\Sc'(\R^{\ell+j-1}))$ as $M \to \infty$.
for $\Gamma\in\G_\infty^*$ fixed. The continuity of the commutator bracket, the good mapping property, and the separate continuity of the generalized trace imply that
\begin{equation}
X_{H_{n_2}}(\Gamma) \longrightarrow  X_H(\Gamma).
\end{equation}
in $\prod_{k=1}^\infty \L(\Sc_s'(\R^k),\Sc_s(\R^k))$ as $n_2 \to \infty$. Moreover, the continuity of the adjoint operation (see \cref{lem:dvo_adj}) and the self-adjointness of $X_{H_{n_2}}(\Gamma)$ imply that $X_{H}(\Gamma)$ is self-adjoint, hence an element of $\G_\infty^*$. We note that writing $X_H(\Gamma)$ is a slight abuse of notation since we have not yet verified that $X_H$ satisfies all of the desired properties, but this limit, $X_H$, will be our candidate Hamiltonian vector field from the statement of the lemma.

For each $n_1\in\N$ fixed, the separate continuity of the generalized trace and the fact that $dF_{n_1}[\Gamma]^{(\ell)}=0$, for $\ell\geq N_0$, then implies
\begin{equation}
\lim_{n_2\rightarrow\infty} i\Tr\paren*{dF_{n_1}[\Gamma]\cdot X_{H_{n_2}}(\Gamma)} = i\Tr\paren*{dF_{n_1}[\Gamma]\cdot X_H(\Gamma)}.
\end{equation}
Since $dF_{n_1}[\Gamma]\rightarrow dF[\Gamma]$ in $\G_\infty$, as $n_1\rightarrow\infty$, by construction of the approximations $F_{n_1}$, another application of the separate continuity of the generalized trace yields
\begin{equation}
\lim_{n_1\rightarrow\infty} i\Tr\paren*{dF_{n_1}[\Gamma]\cdot X_H(\Gamma)} = i\Tr\paren*{dF[\Gamma]\cdot X_H(\Gamma)}.
\end{equation}

After a little bookkeeping, we have shown that for every $\Gamma\in\G_\infty^*$,
\begin{align}
\pb{F}{G}_{\G_\infty^*}(\Gamma) &= \lim_{n_1\rightarrow\infty}\lim_{n_2\rightarrow\infty}\lim_{M\rightarrow\infty} i\Tr\paren*{\comm{dF_{n_1}[\Gamma_{2N_0+1}]}{dH_{n_2}[\Gamma_{2N_0+1}]}_{\G_M}\cdot\Gamma_{2N_0+1}} \nonumber\\
&=\lim_{n_1\rightarrow\infty}\lim_{n_2\rightarrow\infty}\lim_{M\rightarrow\infty} i\Tr\paren*{dF[\Gamma_{2N_0+1}]\cdot X_{H_{n_2},\G_M}(\Gamma_{2N_0+1})} \nonumber\\
&=\lim_{n_1\rightarrow\infty}\lim_{n_2\rightarrow\infty} i\Tr\paren*{dF_{n_1}[\Gamma]\cdot X_{H_{n_2}}(\Gamma)} \nonumber\\
&=i\Tr\paren*{dF[\Gamma]\cdot X_{H}(\Gamma)}.
\end{align}

We now verify that $X_H$ is a smooth map $\G_{\infty}^{*}\rightarrow\G_{\infty}^{*}$ in order to conclude by \cref{rem:hvf_u}. It remains only to check the smoothness property. If $H$ is a trace functional, then since $dH[\Gamma]^{(j)}=dH[0]^{(j)}$ satisfies the good mapping property, the desired conclusion is immediate. The general case then follows by the Leibnitz rule for the G\^ateaux derivative, since constant functionals and trace functionals generate $\A_\infty$.
\end{proof}

\subsection{The Poisson morphism $\iota: \Sc(\R) \rightarrow \G_{\infty}^{*}$}
\label{ssec:po_emb}
We now turn to the proof of \cref{thm:pomo}. We recall that we are considering the map
\begin{equation}
\iota: \mathcal{S}(\R) \rightarrow \G_{\infty}^{*}, \qquad \iota(\phi) \coloneqq \paren*{\ket*{\phi^{\otimes k}}\bra*{\phi^{\otimes k}}}_{k\in\N},
\end{equation}
which sends a $1$-particle wave function to a density matrix $\infty$-hierarchy. We recall the definition
\begin{align*}
\A_{\Sc} = \bigl \{ H \,:\, \grad_{s}H \in C^{\infty}(\Sc(\R);\Sc(\R)) \bigr\} \subset C^{\infty}(\Sc(\R);\R).
\end{align*}
and we restate \cref{thm:pomo} here for the reader's convenience.

\pomo*

We recall that although we set $d=1$ in the proof, it works in any dimension. To prove \cref{thm:pomo}, we will need the following technical result which gives a formula for the G\^ateaux derivative of $\iota$.
\begin{lemma}[Formula for $d\iota$]\label{lem:diota}
Let $\phi,\psi\in\mathcal{S}(\R)$. Then for all $k\in\N$,
\begin{align}
\label{eq:diota}
d\iota[\phi](\psi)^{(k)} &= \sum_{m=1}^{k} \ket*{\phi^{\otimes (m-1)}\otimes \psi \otimes \phi^{\otimes (k-m)}}\bra*{\phi^{\otimes k}} + \sum_{m=1}^k \ket*{\phi^{\otimes k}}\bra*{\phi^{\otimes m-1}\otimes \psi \otimes \phi^{\otimes (k-m)}}.
\end{align}
\end{lemma}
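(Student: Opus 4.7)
\medskip
\noindent\textbf{Proof proposal for Lemma \ref{lem:diota}.}
The strategy is to compute $d\iota[\phi](\psi)^{(k)}$ directly from the definition of the G\^ateaux derivative as the limit
\begin{equation*}
d\iota[\phi](\psi)^{(k)} = \lim_{t\to 0}\frac{1}{t}\paren*{\ket*{(\phi+t\psi)^{\otimes k}}\bra*{(\phi+t\psi)^{\otimes k}} - \ket*{\phi^{\otimes k}}\bra*{\phi^{\otimes k}}},
\end{equation*}
where the limit is taken in $\L(\Sc_s'(\R^k),\Sc_s(\R^k))$. The whole computation is an application of the product rule, and the main task is to interpret it in the operator sense; there is no real analytic obstacle since everything reduces to finite multilinear combinations of Schwartz functions.

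The first step is to expand $(\phi+t\psi)^{\otimes k}$ by iterated use of the bilinearity of the tensor product. More precisely, one obtains the polynomial identity
\begin{equation*}
(\phi+t\psi)^{\otimes k} = \phi^{\otimes k} + t \sum_{m=1}^{k} \phi^{\otimes (m-1)}\otimes \psi \otimes \phi^{\otimes (k-m)} + R_k(t),
\end{equation*}
where $R_k(t)$ is a Schwartz-valued polynomial in $t$ with no constant or linear term, and therefore $t^{-1}R_k(t)\to 0$ in $\Sc(\R^{k})$ as $t\to 0$. This is the only genuinely combinatorial step; the factor $\sum_{m=1}^{k}\phi^{\otimes(m-1)}\otimes\psi\otimes\phi^{\otimes(k-m)}$ just counts the $k$ positions where a single $\psi$ may replace a $\phi$.

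Next, I would plug this expansion into the ket-bra construction and use the fact that $(f,g)\mapsto \ket{f}\bra{g}$ is bilinear and jointly continuous as a map $\Sc(\R^{k})\times\Sc(\R^k) \to \L(\Sc_s'(\R^k),\Sc_s(\R^k))$. Grouping by powers of $t$, the constant term is $\ket*{\phi^{\otimes k}}\bra*{\phi^{\otimes k}}$, the $t^1$ term is
\begin{equation*}
\sum_{m=1}^{k}\ket*{\phi^{\otimes (m-1)}\otimes \psi\otimes \phi^{\otimes(k-m)}}\bra*{\phi^{\otimes k}} + \sum_{m=1}^{k}\ket*{\phi^{\otimes k}}\bra*{\phi^{\otimes(m-1)}\otimes \psi \otimes \phi^{\otimes (k-m)}},
\end{equation*}
and every remaining term is of the form $t^{j}$ with $j\geq 2$ times a bounded family of operators in $\L(\Sc_s'(\R^k),\Sc_s(\R^k))$. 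Dividing by $t$, the error tends to zero in $\L(\Sc_s'(\R^k),\Sc_s(\R^k))$, yielding exactly \eqref{eq:diota}.

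Finally, one should note that this argument produces the G\^ateaux derivative componentwise; to get $d\iota[\phi](\psi)$ as an element of $\G_\infty^*$ it suffices to observe that $\G_\infty^*$ carries the product topology and that the limit above exists in every component, which is the content of the computation just carried out.
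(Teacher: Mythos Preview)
Your proposal is correct and is essentially the same approach as the paper's: the paper's proof consists of the single sentence ``The desired formula follows readily from the product rule,'' and what you have written is precisely the product rule computation spelled out in detail.
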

\begin{proof}
The desired formula follows readily from the product rule.
\end{proof}

\begin{remark}
\label{rem:diota_sym}
We record here the observation that for $\phi\in\Sc(\R)$ fixed, each sum in \eqref{eq:diota} has co-domain $\L(\Sc_s'(\R^k),\Sc_s(\R^k))$. We will use this observation throughout the proof of \cref{thm:pomo} below.
\end{remark}

\begin{proof}[Proof of \cref{thm:pomo}]
Smoothness of $\iota$ follows readily from \cref{lem:diota} and induction on $k$, therefore, it remains to check that
\begin{enumerate}[(i)]
\item\label{item:alg_pback}
$\iota^{*}\A_{\infty}\subset \A_{\Sc}$,
\item\label{item:pb_pback}
$\iota^{*}\pb{\cdot}{\cdot}_{\G_{\infty}^{*}} = \pb{\iota^{*}\cdot}{\iota^{*}\cdot}_{\Sc(\R)}$.
\end{enumerate} 

We prove assertion \ref{item:alg_pback}. Let $F\in\A_{\infty}$. We need to show that $f\coloneqq F\circ \iota\in\A_{\Sc}$, that is, we need to show the symplectic $L^{2}$ gradient of $f$ exists and is a smooth $\Sc(\R)$-valued map. To this end, observe that by the chain rule, for any $\phi,\delta\phi\in\Sc(\R)$, we have
\begin{align}
df[\phi](\delta\phi) &= dF[\iota(\phi)]\paren*{d\iota[\phi](\delta\phi)} \nonumber\\
&= i\Tr\paren*{dF[\iota(\phi)] \cdot d\iota[\phi](\delta\phi)} \nonumber\\
&= i\sum_{k=1}^{\infty}\Tr_{1,\ldots,k}\paren*{dF[\iota(\phi)]^{(k)} d\iota[\phi]^{(k)}(\delta\phi)}, \label{chain}
\end{align}
where the penultimate equality follows from the identification of $dF[\iota(\phi)]$ as  an element of $\widetilde{\G_\infty}$, the bi-dual of $\G_\infty$, via the canonical trace pairing and the ultimate equality follows from the definition of the dot product. Now applying \cref{lem:diota} and the bilinearity of the generalized trace, we see that
\begin{align}
\Tr_{1,\ldots,k}\paren*{dF[\iota(\phi)]^{(k)} d\iota[\phi]^{(k)}(\delta\phi)} &= \Tr_{1,\ldots,k}\paren*{dF[\iota(\phi)]^{(k)}\paren*{\sum_{m=1}^k \ket*{\phi^{\otimes (m-1)}\otimes \delta\phi\otimes \phi^{\otimes (k-m)}}\bra*{\phi^{\otimes k}}}} \nonumber\\
&\phantom{=}+ \Tr_{1,\ldots,k}\paren*{dF[\iota(\phi)]^{(k)}\paren*{\sum_{m=1}^k\ket*{\phi^{\otimes k}}\bra*{\phi^{\otimes (m-1)}\otimes \delta\phi\otimes \phi^{\otimes (k-m)}}}} \nonumber\\
&= \ip{\phi^{\otimes k}}{dF[\iota(\phi)]^{(k)}\paren*{\sum_{m=1}^k \phi^{\otimes (m-1)}\otimes \delta\phi\otimes \phi^{\otimes (k-m)}}} \nonumber\\
&\phantom{=} + \ip{\sum_{m=1}^k \phi^{\otimes (m-1)}\otimes\delta\phi\otimes\phi^{\otimes (k-m)}}{dF[\iota(\phi)]^{(k)}\phi^{\otimes k}}, \label{expan}
\end{align}
where the ultimate equality is just applying the definition of the generalized trace. Since $dF[\iota(\phi)]^{(k)}$ is skew-adjoint, we have that
\begin{equation}
\begin{split}
&\ip{\phi^{\otimes k}}{dF[\iota(\phi)]^{(k)}\paren*{\sum_{m=1}^k \phi^{\otimes (m-1)}\otimes \delta\phi\otimes \phi^{\otimes (k-m)}}} \\
&=-\ip{dF[\iota(\phi)]^{(k)}\phi^{\otimes k}}{\sum_{m=1}^k \phi^{\otimes (m-1)}\otimes \delta\phi\otimes \phi^{\otimes (k-m)}}.
\end{split}
\end{equation}
Since $dF[\iota(\phi)]^{(k)}$ satisfies the good mapping property, the preceding expression can be written as $-\ip{\psi_{F,k}}{\delta\phi}$, where $\psi_{F,k}\in\Sc(\R)$ is the unique Schwartz function coinciding with the bosonic tempered distribution
\begin{equation}\label{equ:psi_fm}
\ip{\sum_{\alpha=1}^k (\cdot)\otimes_\alpha \phi^{\otimes (k-1)}}{dF[\iota(\phi)]^{(k)}\phi^{\otimes k}},
\end{equation}
and we recall the notation $(\cdot)\otimes_\alpha\phi^{\otimes (k-1)}$ introduced in \eqref{eq:def_ot_a}. Similarly,
\begin{equation}
\ip{\sum_{m=1}^k \phi^{\otimes (m-1)}\otimes\delta\phi\otimes\phi^{\otimes (k-m)}}{dF[\iota(\phi)]^{(k)}\phi^{\otimes k}} = \ip{\delta\phi}{\psi_{F,k}}.
\end{equation}
Therefore, we have shown that
\begin{align}
&\ip{\phi^{\otimes k}}{dF[\iota(\phi)]^{(k)}\paren*{\sum_{m=1}^k \phi^{\otimes (m-1)}\otimes \delta\phi\otimes \phi^{\otimes (k-m)}}} \nonumber\\
&\phantom{=} + \ip{\sum_{m=1}^k \phi^{\otimes (m-1)}\otimes\delta\phi\otimes\phi^{\otimes (k-m)}}{dF[\iota(\phi)]^{(k)}\phi^{\otimes k}} \nonumber\\
&=2i\Im{\ip{\delta\phi}{\psi_{F,k}}} \nonumber\\
&=i\omega_{L^2}(\delta\phi,\psi_{F,k}) \label{expan2}
\end{align}
and consequently by \cref{chain}, \cref{expan}, \cref{expan2} and bilinearity
\begin{equation}
i\sum_{k=1}^{\infty}\Tr_{1,\ldots,k}\paren*{dF[\iota(\phi)]^{(k)}d\iota[\phi]^{(k)}(\delta\phi)} = -\sum_{k=1}^{\infty} \omega_{L^2}(\delta\phi,\psi_{F,k}) = \omega_{L^2}(\psi_F,\delta\phi),
\end{equation}
where we have defined $\psi_{F}\coloneqq \sum_{k=1}^\infty \psi_{F,k}$ and used the anti-symmetry of $\omega_{L^2}$ to obtain the ultimate equality. Note that moving the summation inside the second entry of $\omega_{L^2}$ is justified by the bilinearity of the symplectic form since $dF[\iota(\phi)]^{(k)}=0$ for all but finitely many $k$, by assumption that $F\in \A_\infty$ and the generating structure of $\A_\infty$. Consequently, $\psi_{F,k}\equiv 0$ for all but finitely many $k$. We conclude that
\begin{equation}
df[\phi](\delta\phi)  = \omega_{L^2}(\psi_F,\delta\phi),
\end{equation}
and hence, recalling the definition of the symplectic $L^2$ gradient in \cref{schwartz_deriv}, we have that 
\begin{equation}
\label{eq:grad_s_f}
\grad_s f(\phi) = \psi_F\in\Sc(\R).
\end{equation}

Lastly, using the identity \eqref{eq:grad_s_f}, we prove assertion \ref{item:pb_pback}. By definition of the Hamiltonian vector field $X_G(\iota(\phi))$ in \ref{item:wp_P3} together with \cref{lem:WP_P3}, which gives a formula for $X_G(\iota(\phi))$, we have that for $F,G\in\A_{\infty}$,
\begin{align}
&\pb{F}{G}_{\G_{\infty}^{*}}(\iota(\phi)) \nonumber\\
&= dF[\iota(\phi)](X_G(\iota(\phi))) \nonumber\\
&= i\sum_{k=1}^\infty \Tr_{1,\ldots,k}\paren*{dF[\iota(\phi)]^{(k)}\sum_{j=1}^\infty j\Tr_{k+1,\ldots,k+j-1}\paren*{\comm{\sum_{\alpha=1}^k dG[\iota(\phi)]_{(\alpha,k+1,\ldots,k+j-1)}^{(j)}}{\iota(\phi)^{(k+j-1)}}}}. \label{eq:vf_expan}
\end{align}
Observe that
\begin{equation}
dG[\iota(\phi)]_{(\alpha,k+1,\ldots,k+j-1)}^{(j)}\iota(\phi)^{(k+j-1)} = \ket*{\phi^{\otimes(k-1)}\otimes^\alpha dG[\iota(\phi)]^{(j)}(\phi^{\otimes j})}\bra*{\phi^{\otimes (k+j-1)}},
\end{equation}
where $\phi^{\otimes (k-1)}\otimes^\alpha dG[\iota(\phi)]^{(j)}(\phi^{\otimes j})$ is the tempered distribution in $\Sc'(\R^{k+j-1})$ defined by
\begin{equation}
\begin{split}
&\paren*{\phi^{\otimes (k-1)}\otimes^\alpha dG[\iota(\phi)]^{(j)}(\phi^{\otimes j})}(\ux_{k+j-1})\\
&\coloneqq \phi^{\otimes (\alpha-1)}(\ux_{\alpha-1})\phi^{\otimes (k-\alpha)}(\ux_{\alpha+1;k}) dG[\iota(\phi)]^{(j)}(x_\alpha,\ux_{k+1;k+j-1}).
\end{split}
\end{equation}
Since $dG[\iota(\phi)]^{(j)}$ has the good mapping property by assumption $G\in\A_\infty$, it follows from \cref{rem:gmp} and the definition of the generalized partial trace that
\begin{equation}
\begin{split}
&\Tr_{k+1,\ldots,k+j-1}\paren*{dG[\iota(\phi)]_{(\alpha,k+1,\ldots,k+j-1)}^{(j)}\iota(\phi)^{(k+j-1)}}\\
&=\ket*{\phi^{\otimes (\alpha-1)}\otimes \psi_{G,j,\alpha}\otimes \phi^{\otimes (k-\alpha)}}\bra*{\phi^{\otimes k}},
\end{split}
\end{equation}
where $\psi_{G,j,\alpha}\in\Sc(\R)$ is the unique Schwartz function such that
\begin{equation}
\ip{\delta\phi}{\psi_{G,j,\alpha}} = \ip{\delta\phi\otimes_\alpha \phi^{\otimes (j-1)}}{dG[\iota(\phi)]^{(j)}(\phi^{\otimes j})}, \qquad \forall \delta\phi\in\Sc(\R).
\end{equation}
Moreover, since $dG[\iota(\phi)]^{(j)}(\phi^{\otimes j})\in\Sc_s'(\R^j)$, it follows from \cref{lem:bos_td} that
\begin{equation}
\ip{\delta\phi\otimes_\alpha \phi^{\otimes (j-1)}}{dG[\iota(\phi)]^{(j)}(\phi^{\otimes j})} = \ip{\delta\phi\otimes_{\alpha'} \phi^{\otimes (j-1)}}{dG[\iota(\phi)]^{(j)}(\phi^{\otimes j})},
\end{equation}
for any $1\leq \alpha,\alpha'\leq j$, and therefore $\psi_{G,j,\alpha}=\psi_{G,j,\alpha'}$. Hence,
\begin{equation}
\label{eq:dG_expan}
\begin{split}
&\Tr_{k+1,\ldots,k+j-1}\paren*{dG[\iota(\phi)]_{(\alpha,k+1,\ldots,k+j-1)}^{(j)}\iota(\phi)^{(k+j-1)}}\\
&=\frac{1}{j}\ket*{\phi^{\otimes(\alpha-1)}\otimes \psi_{G,j}\otimes \phi^{\otimes (k-\alpha)}}\bra*{\phi^{\otimes k}},
\end{split}
\end{equation}
where $\psi_{G,j}$ is defined the same as $\psi_{F,k}$ above, except with $(F,k)$ replaced by $(G,j)$.  By completely analogous reasoning together with the skew-adjointness of $dG[\iota(\phi)]^{(j)}$, we also obtain that
\begin{equation}
\label{eq:dG_expan'}
\begin{split}
&\Tr_{k+1,\ldots,k+j-1}\paren*{\iota(\phi)^{(k+j-1)}dG[\iota(\phi)]_{(\alpha,k+1,\ldots,k+j-1)}^{(j)}}\\
&=-\frac{1}{j}\ket*{\phi^{\otimes k}}\bra*{\phi^{\otimes(\alpha-1)}\otimes \psi_{G,j}\otimes \phi^{\otimes (k-\alpha)}},
\end{split}
\end{equation}
Substituting the identities \eqref{eq:dG_expan} and \eqref{eq:dG_expan'} into \eqref{eq:vf_expan}, we obtain the expression
\begin{align}
&i\sum_{k=1}^\infty \Tr_{1,\ldots,k}\Big(dF[\iota(\phi)]^{(k)}\Big(\sum_{j=1}^\infty\sum_{\alpha=1}^k \ket*{\phi^{\otimes(\alpha-1)}\otimes\psi_{G,j}\otimes\phi^{\otimes (k-\alpha)}}\bra*{\phi^{\otimes k}} \nonumber\\
&\phantom{=}\hspace{55mm}+ \ket*{\phi^{\otimes k}}\bra*{\phi^{\otimes(\alpha-1)}\otimes\psi_{G,j}\otimes\phi^{\otimes(k-\alpha)}}\Big)\Big) \nonumber\\
&=i\sum_{j=1}^\infty \sum_{k=1}^\infty \ip{\phi^{\otimes k}}{dF[\iota(\phi)]^{(k)}\paren*{\sum_{\alpha=1}^k \phi^{\otimes (\alpha-1)}\otimes\psi_{G,j}\otimes\phi^{\otimes (k-\alpha)}}} \nonumber\\
&\phantom{=}\hspace{25mm}+ \ip{\sum_{\alpha=1}^k \phi^{\otimes(\alpha-1)}\otimes\psi_{G,j}\otimes\phi^{\otimes(k-\alpha)}}{dF[\iota(\phi)]^{(k)}\phi^{\otimes k}} \nonumber\\
&=-2\sum_{j=1}^\infty \sum_{k=1}^\infty \Im{\ip{\sum_{\alpha=1}^k \phi^{\otimes(\alpha-1)}\otimes\psi_{G,j}\otimes\phi^{\otimes (k-\alpha)}}{dF[\iota(\phi)]^{(k)}\phi^{\otimes k}}} \nonumber\\
&=-2\sum_{j=1}^\infty \sum_{k=1}^\infty \Im{\ip{\psi_{G,j}}{\psi_{F,k}}},
\end{align}
where the penultimate equality follows from the skew-adjointness of $dF[\iota(\phi)]^{(k)}$ and the ultimate equality follows from the definition of $\psi_{F,k}$. Since $\psi_{F,k}=\psi_{G,j}\equiv 0$ for all but finitely many $j,k$, we are justified in writing
\begin{equation}
-2\sum_{j=1}^\infty \sum_{k=1}^\infty \Im{\ip{\psi_{G,j}}{\psi_{F,k}}} = -2\Im{\ip{\psi_G}{\psi_F}},
\end{equation}
where $\psi_F$ is defined as above and $\psi_G\coloneqq \sum_{j=1}^\infty \psi_{G,j}$ is defined completely analogously. Recalling \eqref{l2_symp} for the definition of $\omega_{L^2}$ and identity \eqref{eq:grad_s_f} for the symplectic gradient, we obtain that
\begin{equation}
-2\Im{\ip{\psi_G}{\psi_F}} = \omega_{L^2}(\grad_s f(\phi),\grad_s g(\phi)).
\end{equation}

After a little bookkeeping, we realize that we have shown that
\begin{equation}
\pb{F}{G}_{\G_\infty^*}(\iota(\phi)) = \omega_{L^2}(\grad_s f(\phi),\grad_s g(\phi)).
\end{equation} 
Since the symplectic form $\omega_{L^2}$ canonically induces the Poisson bracket $\pb{\cdot}{\cdot}_{L^2}$ through
\begin{equation}
\pb{f}{g}_{L^2}(\phi) = \omega_{L^2}(\grad_s f(\phi), \grad_s g(\phi)),
\end{equation}
the proof of assertion \ref{item:pb_pback} is complete. 
\end{proof}

\section{GP Hamiltonian flows}\label{sec:gp_flows}

In this last section, we prove \cref{thm:BBGKY_ham} and its limiting version \cref{thm:GP_ham}. 

\subsection{BBGKY Hamiltonian Flow}
For the reader's benefit, we recall that the BBGKY Hamiltonian $\H_{BBGKY,N}$ is the trace functional given by
\begin{equation}
\H_{BBGKY,N}(\Gamma_N) = \Tr(\W_{BBGKY,N}\cdot\Gamma_N),
\end{equation}
where
\begin{equation}\label{w_bbgky}
\W_{BBGKY,N} = (-\Delta_{x}, \kappa V_{N}(X_{1}-X_{2}),0,\ldots),
\end{equation}
with $\kappa$ and $V_N$ as in \eqref{eq:N_bod_ham}. We also recall here the statement of \cref{thm:BBGKY_ham}.

\BBGKYham*

We now proceed to proving \cref{thm:BBGKY_ham}. Since by \cref{lem:H_WP_VF}, we have the formula
\begin{equation}
\label{eq:BBGKY_vf}
\begin{split}
&X_{\H_{BBGKY,N}}(\Gamma_N)^{(\ell)} \\
&=  \sum_{j=1}^{N} \sum_{r=r_0}^{\min\{\ell,j\}} C_{\ell jkrN}' \Tr_{\ell+1,\ldots,k}\paren*{\comm{\sum_{\ul{\alpha}_r\in P_r^\ell} d\H_{BBGKY,N}[\Gamma_N]_{(\ul{\alpha}_r,\ell+1,\ldots,\min\{\ell+j-r,k\})}^{(j)}}{\gamma_N^{(k)}}},
\end{split}
\end{equation}
where
\begin{align}
k &\coloneqq \min\{\ell+j-1,N\}, \quad  r_0 \coloneqq \max\{1,\min\{\ell,j\}-(N-\max\{\ell,j\})\} ,
\end{align}
and 
\[
C_{\ell jkrN}' \coloneqq \frac{N C_{\ell,N}C_{j,N}}{C_{k,N}\prod_{m=1}^{r-1}(N-k+m)}{j\choose r},
\]
our task reduces to simplifying the expression in the right-hand side of \eqref{eq:BBGKY_vf}.

To this end, we first need a formula for the G\^ateaux derivative $d\H_{BBGKY,N}$ of $\H_{BBGKY,N}$ and its identification with an observable $N$-hierarchy via the canonical trace pairing.  Indeed, let $N\in\N$. Then for any $\Gamma_N = (\gamma_N^{(k)})_{k=1}^N \in \G_N^*$, we have that
\begin{equation}
d\H_{BBGKY,N}[\Gamma_N](\delta\Gamma_N) = \Tr\paren*{\W_{BBGKY,N}\cdot \delta\Gamma_N}, \qquad \forall \delta\Gamma_N \in \G_N^*.
\end{equation}
Therefore, $d\H_{BBGKY,N}[\Gamma_N] = d\H_{BBGKY,N}[0]$ is uniquely identifiable with the observable $2$-hierarchy $-i\W_{BBGKY,N}$. As a consequence, we see that
\begin{equation}
d\H_{BBGKY,N}[\Gamma_N]_{(\ul{\alpha}_r,\ell+1,\ldots,\min\{\ell+j-r,k\})}^{(j)} = 0
\end{equation}
for $3\leq j\leq N$. Therefore, by \eqref{eq:BBGKY_vf}, we have
\begin{align}
X_{\H_{BBGKY,N}}(\Gamma_N)^{(\ell)} &= -i C_{\ell 1\ell 1N}'\sum_{\alpha=1}^\ell \comm{(-\Delta_{x_1})_{(\alpha)}}{\gamma_N^{(\ell)}} \nonumber\\
&\phantom{=} \quad  -i\kappa \sum_{r=r_0}^{\min\{\ell,2\}} C_{\ell 2krN}'\sum_{\ul{\alpha}_r\in P_r^\ell} \Tr_{\ell+1,\ldots,k}\paren*{\comm{(V_{N}(X_1-X_2))_{(\ul{\alpha}_r,\ell+1,\ldots,\min\{\ell+2-r,k\})}}{\gamma_N^{(k)}}} \nonumber\\
&\eqqcolon \mathrm{Term}_{1,\ell} + \mathrm{Term}_{2,\ell}.
\end{align}

We first consider $\mathrm{Term}_{1,\ell}$. Note that $(-\Delta_x)_{(\alpha)} = -\Delta_{x_\alpha}$. Now unpacking the definition of the normalizing constant $C_{\ell 1\ell 1N}'$, we find that
\begin{align}
C_{\ell 1\ell 1N}' &= \frac{N C_{\ell,N} C_{1,N}}{C_{\ell,N}} = N C_{1,N} = 1,
\end{align}
where the ultimate equality follows from the fact that $C_{1,N} = 1/|P_{1}^N| = 1/N$. Hence,
\begin{equation}
\mathrm{Term}_{1,\ell} = -i\sum_{\alpha=1}^\ell \comm{-\Delta_{x_\alpha}}{\gamma_N^{(\ell)}}.
\end{equation}

We next consider $\mathrm{Term}_{2,\ell}$. We divide into cases based on the values of $\ell\in\{1,\ldots,N\}$.
\begin{itemize}
\item
If $\ell=1$, then
\begin{equation}
\mathrm{Term}_{2,1} = -i\kappa C_{1221N}' \Tr_{2}\paren*{\comm{(V_N(X_1-X_2)_{(1,2)}}{\gamma_N^{(2)}}},
\end{equation}
where we use that $k = 2$. Since $(V_N(X_1-X_2))_{(1,2)} = V_N(X_1-X_{2})$, it follows that
\begin{equation}
\mathrm{Term}_{2,1} = -i\kappa C_{1221N}' \Tr_{2}\paren*{\comm{V_N(X_1-X_{2})}{\gamma_N^{(2)}}}.
\end{equation}
Unpacking the definition of the constant $C_{1221N}'$, we see that
\begin{align}
C_{1221N}' &= \frac{N C_{1,N} C_{2,N}}{C_{2,N}}{2\choose 1}= 2NC_{1,N} = 2,
\end{align}
hence,
\begin{equation}
\mathrm{Term}_{2,1} = -2i\kappa \Tr_{2}\paren*{\comm{V_N(X_1-X_{2})}{\gamma_N^{(2)}}}.
\end{equation}
\item
If $2\leq \ell\leq N-1$, then
\begin{equation}
r_0 = \max\{\min\{\ell,2\} - (N-\max\{\ell,2\}),1\} = \max\{2- (N-\ell),1\}=1
\end{equation}
and therefore
\begin{equation}
\mathrm{Term}_{2,\ell} = -i\kappa\sum_{r=1}^{2} C_{\ell 2(\ell+1)rN}' \sum_{\ul{\alpha}_r \in P_r^\ell} \Tr_{\ell+1}\paren*{\comm{V_N(X_1-X_2)_{(\ul{\alpha}_r,\ell+1)}}{\gamma_N^{(\ell+1)}}},
\end{equation}
where we use that $k=\ell+1$. If $r=1$, then
\begin{align}
\sum_{\ul{\alpha}_1 \in P_1^\ell} \Tr_{\ell+1}\paren*{\comm{V_N(X_1-X_2)_{(\ul{\alpha}_1,\ell+1)}}{\gamma_N^{(\ell+1)}}} & = \sum_{\alpha=1}^\ell \Tr_{\ell+1}\paren*{\comm{V_N(X_\alpha-X_{\ell+1})}{\gamma_N^{(\ell+1)}}},
\end{align}
and recalling \eqref{eq:Ckn_def}, we have
\begin{equation}
\label{eq:coeff_comp}
C_{\ell 2(\ell+1)1N}' = \frac{N C_{\ell,N} C_{2,N}}{C_{\ell+1,N}}{2\choose 1} = \frac{2(N-\ell)}{(N-1)}. 
\end{equation}
If $r=2$, then $\min\{\ell+2-r,k\} = \ell$, which per our notation implies that
\begin{align}
\sum_{\ul{\alpha}_r \in P_r^\ell} \Tr_{\ell+1}\paren*{\comm{V_N(X_1-X_2)_{(\ul{\alpha}_r,\ell+1)}}{\gamma_N^{(\ell+1)}}} &= \sum_{(\alpha_1,\alpha_2) \in P_2^\ell} \Tr_{\ell+1}\paren*{\comm{(V_N(X_1-X_2)_{(\alpha_1,\alpha_2)}}{\gamma_N^{(\ell+1)}}}.
\end{align}
Since $\alpha_1,\alpha_2\in\N_{\leq \ell}$ and $V_N(X_1-X_2)_{(\alpha_1,\alpha_2)} = V_N(X_{\alpha_1}-X_{\alpha_2})$, we have that
\begin{equation}
\Tr_{\ell+1}\paren*{\comm{(V_N(X_1-X_2)_{(\alpha_1,\alpha_2)}}{\gamma_N^{(\ell+1)}}} = \comm{V_N(X_{\alpha_1}-X_{\alpha_2})}{\gamma_N^{(\ell)}}.
\end{equation}
Now since $k=\ell+1$, it follows from our computation in \eqref{eq:coeff_comp} that
\begin{equation}
C_{\ell 2(\ell+1)2N}' = \frac{N C_{\ell,N}C_{2,N}}{C_{\ell+1,N} (N-k+1)}{2\choose 2} = \frac{1}{N-1}.
\end{equation}
Since $V_N(X_{\alpha_1}-X_{\alpha_2}) = V_N(X_{\alpha_2}-X_{\alpha_1})$ by the evenness of the potential $V$, it follows that
\begin{equation}
\sum_{\ul{\alpha}_2\in P_2^\ell} \comm{V_N(X_{\alpha_1}-X_{\alpha_2})}{\gamma_N^{(\ell)}} = \frac{2}{N-1}\sum_{1\leq \alpha_1<\alpha_2\leq \ell} \comm{V_N(X_{\alpha_1}-X_{\alpha_2})}{\gamma_N^{(\ell)}}.
\end{equation}
After a little bookkeeping, we obtain that
\begin{equation}
\begin{split}
\mathrm{Term}_{2,\ell} &= -i\kappa\frac{2 (N-\ell)}{N-1} \sum_{\alpha=1}^\ell \Tr_{\ell+1}\paren*{\comm{V_N(X_\alpha-X_{\ell+1})}{\gamma_N^{(\ell+1)}}} \\
&\phantom{=} -i\kappa\frac{2}{N-1}\sum_{1\leq \alpha_1<\alpha_2\leq \ell} \comm{V_N(X_{\alpha_1}-X_{\alpha_2})}{\gamma_N^{(\ell)}}.
\end{split}
\end{equation}

\item
Lastly, if $\ell=N$, then
\begin{equation}
r_0 = \max\{\min\{N,2\} - (N-\max\{N,2\}),1\} = 2.
\end{equation}
Moreover, $k= N$, so that
\begin{equation}
\mathrm{Term}_{2,N} = -i\kappa C_{N2N2N}' \sum_{\ul{\alpha}_2\in P_2^N} \comm{(V_N(X_1-X_2))_{(\ul{\alpha}_2)}}{\gamma_N^{(N)}}.
\end{equation}
Since
\begin{equation}
C_{N2N2N}' = \frac{N C_{N,N} C_{2,N}}{C_{N,N}} {2\choose 2} = \frac{1}{N-1},
\end{equation}
we can again use the evenness of the potential $V$ to conclude that
\begin{equation}
\mathrm{Term}_{2,N} = -\frac{2i\kappa}{N-1} \sum_{1\leq \alpha_1<\alpha_2\leq N} \comm{V_N(X_{\alpha_1}-X_{\alpha_2})}{\gamma_N^{(N)}}.
\end{equation}
\end{itemize}

Putting our case analysis together, we obtain
\begin{equation}
X_{\mathcal{H}_{BBGKY,N}}(\Gamma_N)^{(1)} = -i\comm{-\Delta_{x_1}}{\gamma_N^{(1)}} - 2i\kappa\Tr_2\paren*{\comm{V_N(X_1-X_2)}{\gamma_N^{(2)}}},
\end{equation}
while for $2\leq \ell\leq N-1$ we have
\begin{equation}
\begin{split}
X_{\mathcal{H}_{BBGKY,N}}(\Gamma_N)^{(\ell)} &= -i\sum_{\alpha=1}^\ell \comm{-\Delta_{x_\alpha}}{\gamma_N^{(\ell)}} - \frac{2i\kappa}{N-1}\sum_{1\leq \alpha_1<\alpha_2\leq \ell} \comm{V_N(X_{\alpha_1}-X_{\alpha_2})}{\gamma_N^{(\ell)}} \\
&\hspace{14mm}\phantom{=} -\frac{2i\kappa(N-\ell)}{N-1}\sum_{\alpha=1}^\ell \Tr_{\ell+1}\paren*{\comm{V_N(X_\alpha-X_{\ell+1})}{\gamma_N^{(\ell+1)}}},
\end{split}
\end{equation}
and finally
\begin{equation}
X_{\mathcal{H}_{BBGKY,N}}(\Gamma_N)^{(N)} = -i\sum_{\alpha=1}^N \comm{-\Delta_{x_\alpha}}{\gamma_N^{(\ell)}} -\frac{2i\kappa}{N-1}\sum_{1\leq \alpha_1<\alpha_2\leq N} \comm{V_N(X_{\alpha_1}-X_{\alpha_2})}{\gamma_N^{(N)}},
\end{equation}
which we see, upon comparison with \eqref{eq:BBGKY}, are precisely the equations for solutions to the BBGKY hierarchy, thus completing the proof.

\subsection{GP Hamiltonian Flow}\label{ssec:gp_flows_prf}

In this subsection, we prove \cref{thm:GP_ham}. For the reader's benefit, we recall that the GP Hamiltonian $\H_{GP}$ is the trace functional given by
\begin{equation}
\label{eq:GP_recall}
\H_{GP}(\Gamma) \coloneqq \Tr\paren*{\W_{GP}\cdot\Gamma}, \enspace\Gamma\in\G_\infty^*; \qquad \W_{GP}=(-\Delta_x,\kappa\delta(X_1-X_2), 0,\ldots).
\end{equation}
We recall the statement of the theorem.

\GPham*

The proof is similar to the proof that the BBGKY hierarchy is a Hamiltonian equation of motion, and \cref{thm:GP_ham} may be viewed as the $N\rightarrow\infty$ limit of \cref{thm:BBGKY_ham}. In our companion work \cite{MNPRS2_2019}, we will obtain \cref{thm:GP_ham} for the 1D cubic GP hierarchy as part of a more general theorem which connects the Hamiltonian structure of an infinte coupled system of linear equations, which we call the $n$-th GP hierarchy, to the Hamiltonian structure of the $n$-th equation of the nonlinear Schr\"odinger hierarchy, which is of fundamental interest in the study of the NLS as an integrable system (see, for instance, the survey of Palais \cite{Palais1997}). The GP hierarchy under consideration here then corresponds to the $n=3$ equation of the aforementioned family of equations.

We now proceed to proving \cref{thm:GP_ham}. Recalling equation \eqref{eq:GP} for the GP hierarchy, we need to show that
\begin{equation}
\label{eq:GP_ham_goal}
X_{\H_{GP}}(\Gamma)^{(k)} = -i\paren*{\comm{-\Delta_{\ux_k}}{\gamma^{(k)}} + 2\kappa B_{k+1}\gamma^{(k+1)}}, \qquad k\in\N,
\end{equation}
for any $\Gamma=(\gamma^{(k)})\in \G_\infty^*$, which we do by direct computation.

Let $\Gamma\in\G_\infty^*$. By application of \cref{lem:WP_P3} to $\H_{GP}$ together with the  identification
\begin{equation}
d\H_{GP}[\Gamma] = -i\W_{GP},
\end{equation}
which is immediate from the fact that $\H_{GP}$ is a trace functional, we know that
\begin{equation}
\begin{split}
X_{\H_{GP}}(\Gamma)^{(k)} &=  \sum_{j=1}^{\infty}j\Tr_{k+1,\ldots,k+j-1}\paren*{\comm{\sum_{\alpha=1}^k d\H_{GP}[\Gamma]^{(j)}_{(\alpha,k+1,\ldots,k+j-1)}}{\gamma^{(k+j-1)}} }.
\end{split}
\end{equation}
Since $-i\W_{GP}^{(j)} = 0\in \g_{j,gmp}$, for $j\geq 3$, we see from \eqref{eq:GP_recall} that the formula for $X_{\H_{GP}}(\Gamma)$ simplifies to
\begin{equation}
\begin{split}
X_{\H_{GP}}(\Gamma)^{(k)} &= -i\sum_{\alpha=1}^k \paren*{(-\Delta_{x_1})_{(\alpha)}\gamma^{(k)}  - \gamma^{(k)}(-\Delta_{x_1})_{(\alpha)}} \\
&\phantom{=} -i2\kappa \sum_{\alpha=1}^k\Tr_{k+1}\paren*{\delta(X_1-X_2)_{(\alpha,k+1)}\gamma^{(k+1)}} - \Tr_{k+1}\paren*{\gamma^{(k+1)}\delta(X_1-X_2)_{(\alpha,k+1)}},
\end{split}
\end{equation}
for $k\in\N$.

Since $(-\Delta_{x_1})_{(\alpha)} = -\Delta_{x_\alpha}$ and $\Delta_{\ux_k} = \sum_{\alpha=1}^k\Delta_{x_\alpha}$ by definition, it follows that
\begin{equation}
-i\sum_{\alpha=1}^k \paren*{(-\Delta_{x_1})_{(\alpha)}\gamma^{(k)}  - \gamma^{(k)}(-\Delta_{x_1})_{(\alpha)}} = -i\comm{-\Delta_{\ux_k}}{\gamma^{(k)}}.
\end{equation}

Since $\delta(X_1-X_2)_{(\alpha,k+1)} = \delta(X_\alpha-X_{k+1})$, it follows from \cref{prop:partial_trace} for the generalized partial trace that $\Tr_{k+1}(\delta(X_\alpha-X_{k+1})\gamma^{(k+1)})$ is the element of $\L(\Sc_s'(\R^k),\Sc(\R^k))$ with Schwartz kernel
\begin{equation}
\int_{\R}dx_{k+1}\delta(x_\alpha-x_{k+1})\gamma^{(k+1)}(\ux_{k+1};\ux_k',x_{k+1}) = \gamma^{(k+1)}(\ux_k,x_\alpha;\ux_k',x_\alpha) = B_{\alpha;k+1}^+\gamma^{(k+1)}(\ux_k;\ux_k').
\end{equation}
Similarly, $\Tr_{k+1}(\gamma^{(k+1)}\delta(X_\alpha-X_{k+1}))$ is the operator with Schwartz kernel
\begin{equation}
\int_{\R}dx_{k+1}' \delta(x_\alpha'-x_{k+1})\gamma^{(k+1)}(\ux_{k},x_{k+1}';\ux_{k+1}') = \gamma^{(k+1)}(\ux_k,x_\alpha';\ux_k',x_\alpha') = B_{\alpha;k+1}^-\gamma^{(k+1)}(\ux_k;\ux_k').
\end{equation}
Since $B_{k+1}=\sum_{\alpha=1}^k B_{\alpha;k+1}^+ - B_{\alpha;k+1}^-$ by definition, we conclude that
\begin{equation}
\begin{split}
&-2\kappa i \sum_{\alpha=1}^k\Tr_{k+1}\paren*{\delta(X_1-X_2)_{(\alpha,k+1)}\gamma^{(k+1)}} - \Tr_{k+1}\paren*{\gamma^{(k+1)}\delta(X_1-X_2)_{(\alpha,k+1)}} \\
&=-2\kappa iB_{k+1}\gamma^{(k+1)}.
\end{split}
\end{equation}

After a little bookkeeping, we see that we have shown \eqref{eq:GP_ham_goal}, thus completing the proof of \cref{thm:GP_ham}.
\appendix

\section{Locally convex spaces}\label{local_cvx}
\subsection{Calculus on locally convex spaces}
The following material is intended as a crash course on calculus in the setting of locally convex topological vector spaces. Since we are in general not dealing with Banach spaces or Banach manifolds, the usual notion of the Fr\'{e}chet derivative is not suitable for our purposes. Indeed, the prototypical example we ask the reader to keep in mind is the Schwartz space $\mathcal{S}(\R)$.

One main issue posed by this more general setting is that there are several inequivalent notions of the derivative for maps between locally convex spaces. Here, we use the definition which is typically called the G\^{a}teaux derivative, which has the property that $C^{1}$ maps are continuous,\footnote{For a notion of smoothness which allows for maps to be smooth but not continuous, we refer the reader to the monograph \cite{KMich1997}.} and hence enables us to regard the derivative of a smooth real-valued functional $f$ at a point $x\in X$, which we denote by $df[x]$, as an element of the topological dual $X^{*}$.

The following material can be found in lecture notes by Milnor \cite{Milnor1984}. Many of the definitions we record are standard, but we include them for completeness. The proofs are omitted, but can be found in \cite{Hamilton1982}.

\begin{mydef}[Topological vector space]
A real or complex \emph{topological vector space (tvs)} $X$ is a vector space over  a field $\K\in\{\R,\C\}$ with a topology $\tau$ which is Hausdorff and such that the operations of addition
\begin{equation}
+: X\times X \rightarrow X, \qquad (x,y) \mapsto x+y
\end{equation}
and scalar multiplication
\begin{equation}
\cdot: \K \times X \rightarrow X, \qquad (\lambda,x) \mapsto \lambda x
\end{equation}
are continuous (the domains are equipped with the product topology).
\end{mydef}

\begin{mydef}[Locally convex space]
A tvs $X$ is said to be \emph{locally convex} if every neighborhood $U\ni 0$ contains a neighborhood $U'\ni 0$ which is convex.
\end{mydef}

A particularly nice consequence of local convexity is the following Hahn-Banach type result.
\begin{prop}[Hahn-Banach]\label{prop:HB}
If $X$ is locally convex, then given two distinct vectors $x,y\in X$, there exists a continuous $\K$-linear map $\ell: X\rightarrow \K$ with $\ell(x)\neq \ell(y)$.
\end{prop}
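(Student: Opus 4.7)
The plan is to reduce the separation statement to the standard analytic form of the Hahn--Banach theorem via the Minkowski gauge of a suitable convex balanced neighborhood. Specifically, I would first set $z \coloneqq x - y$, which is nonzero by hypothesis, and reformulate the goal as producing a continuous $\K$-linear functional $\ell$ on $X$ with $\ell(z) \neq 0$.

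Next, I would exploit the Hausdorff property of $X$ together with local convexity to produce an open convex neighborhood $U$ of $0$ with $z \notin U$. Replacing $U$ by its balanced core
\begin{equation*}
V \coloneqq \bigcap_{|\lambda| \leq 1} \lambda U,
\end{equation*}
one obtains a convex, balanced, absorbing open neighborhood of $0$ (balancedness uses continuity of scalar multiplication to show $V$ has nonempty interior containing $0$; in the standard treatment one shows $V$ itself is a neighborhood of $0$). The Minkowski functional $p(w) \coloneqq \inf\{t > 0 : w \in tV\}$ is then a continuous seminorm on $X$ satisfying $p(z) \geq 1 > 0$, and the continuity of $p$ follows from $\{w : p(w) < \epsilon\} \supset \epsilon V$.

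I would then define a $\K$-linear functional $\ell_{0}$ on the one-dimensional subspace $\K z \subset X$ by $\ell_{0}(\lambda z) \coloneqq \lambda\, p(z)$, which manifestly satisfies $|\ell_{0}(w)| \leq p(w)$ for $w \in \K z$ (and in the complex case one uses the standard trick of first treating the real-linear extension and then complexifying via $\ell(w) \coloneqq \ell_{\R}(w) - i\ell_{\R}(iw)$). The analytic Hahn--Banach theorem, which is a purely algebraic statement provable by Zorn's lemma and whose proof does not invoke \cref{prop:HB}, then yields a $\K$-linear extension $\ell : X \rightarrow \K$ dominated by $p$ everywhere, i.e.\ $|\ell(w)| \leq p(w)$ for all $w \in X$.

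The final step is to upgrade this domination to continuity: since $p$ is continuous at $0$, for any $\epsilon > 0$ the set $\{w : p(w) < \epsilon\}$ is a neighborhood of $0$ on which $|\ell| < \epsilon$, so $\ell$ is continuous at $0$ and hence everywhere by linearity. By construction $\ell(z) = p(z) \geq 1 \neq 0$, so $\ell(x) \neq \ell(y)$. The main subtlety---really the only non-routine point---is verifying that the balanced convex open neighborhoods of $0$ form a neighborhood basis (so that $V$ above can be chosen open and missing $z$); this is the content of the standard fact that every locally convex Hausdorff space admits a defining family of continuous seminorms, and once this is granted the rest of the argument is formal.
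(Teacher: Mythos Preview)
The paper does not actually prove this proposition; it is stated as background and the proofs in that appendix are explicitly omitted with a reference to \cite{Hamilton1982}. So there is nothing to compare against, and your argument is the standard one that any such reference would give.

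That said, there is a slip in your formula for the balanced core. As written,
\[
V = \bigcap_{|\lambda|\leq 1}\lambda U
\]
collapses to $\{0\}$ (already for $X=\R$, $U=(-1,1)$ one gets $\bigcap_{0<|\lambda|\leq 1}(-|\lambda|,|\lambda|)=\{0\}$, and including $\lambda=0$ makes this immediate). The balanced core of a convex neighborhood $U\ni 0$ is
\[
V=\{w: \mu w\in U \text{ for all } |\mu|\leq 1\}=\bigcap_{|\lambda|\geq 1}\lambda U = \bigcap_{|\lambda|=1}\lambda U,
\]
the last equality using convexity and $0\in U$. With this correction, the fact that $V$ is a neighborhood of $0$ follows from continuity of scalar multiplication at $(0,0)$ (or, as you note, from the standard characterization of locally convex spaces via seminorms), and the rest of your argument---Minkowski functional, one-dimensional definition of $\ell_0$, analytic Hahn--Banach extension, and continuity from domination by a continuous seminorm---goes through verbatim.
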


\begin{mydef}[G\^{a}teaux derivative]\label{gateaux_deriv}
Let $X$ and $Y$ be locally convex $\R$-tvs, let $X_{0}\subset X$ and $Y_{0}\subset Y$ be open sets, and let $f: X_{0}\rightarrow Y_{0}$ be a continuous map. Given a point $x\in X_{0}$ and a direction $v\in X$, we define the \emph{directional derivative} or \emph{G\^{a}teaux derivative} of $f$ at $x$ in the direction $v$ to be the vector
\begin{equation}
f'(x;v) \eqqcolon f_{x}'(v) \coloneqq \lim_{t\rightarrow 0} \frac{f(x+tv) - f(x)}{t},
\end{equation}
if this limit exists. We call the map $f_{x}': X\rightarrow Y$ the \emph{derivative of $f$ at the point $x$}. We use the notation $df[x](v) \coloneqq f'(x;v)$.
\end{mydef}

\begin{mydef}[$C^{1}$ G\^{a}teaux map]
Let $X_{0},Y_{0}$, and $f$ be as above. The map $f: X_{0}\rightarrow Y_{0}$ is $C^{1}$ if $f'(x;v)$ exists for all $x\in X_{0},v \in X$ and is continuous as a map
\begin{equation}
f': X_{0}\times X\rightarrow Y,
\end{equation}
where the domain is equipped with the product topology.
\end{mydef}

The G\^{a}teaux derivative $f_{x}'$ of a map $f$ between two locally convex spaces may fail to be linear in the direction $v$. However, $C^{1}$ smoothness is enough to ensure linearity in the direction variable. We always work with $C^{\infty}$ functionals (see \cref{def:smooth}), so the requisite $C^1$ smoothness is not problematic for our purposes.

\begin{prop}[Linearity of derivative]
If $f$ is $C^{1}$, then for all $x_{0}$ fixed, the map
\begin{equation}
X\rightarrow Y, \qquad v\mapsto f'(x_{0};v)
\end{equation}
is linear.
\end{prop}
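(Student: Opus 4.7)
The plan is to establish linearity by verifying homogeneity and additivity of $v \mapsto f'(x_0;v)$ separately, since these are the two defining axioms of a linear map between $\R$-vector spaces. The homogeneity assertion $f'(x_0;\lambda v) = \lambda f'(x_0;v)$ for $\lambda \in \R$ is essentially a formal manipulation of the limit in \cref{gateaux_deriv}, while additivity is the genuine content of the statement and is where the $C^1$ hypothesis (as opposed to mere existence of directional derivatives) is crucial.

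For homogeneity, the case $\lambda = 0$ follows since the difference quotient vanishes identically. For $\lambda \neq 0$, I would substitute $s = \lambda t$ in
\[
f'(x_0;\lambda v) = \lim_{t\to 0}\frac{f(x_0 + t\lambda v) - f(x_0)}{t} = \lambda \lim_{s\to 0}\frac{f(x_0 + s v) - f(x_0)}{s} = \lambda f'(x_0;v),
\]
where the reparametrization is justified because $t \mapsto \lambda t$ is a homeomorphism of $\R$ fixing $0$. No appeal to $C^1$ is needed here.

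For additivity, the approach I would take is to split the difference quotient telescopically, writing
\[
\frac{f(x_0 + t(v_1+v_2)) - f(x_0)}{t} = \frac{f(x_0 + tv_1 + tv_2) - f(x_0+tv_1)}{t} + \frac{f(x_0 + tv_1) - f(x_0)}{t}.
\]
The second term converges to $f'(x_0;v_1)$ by definition. For the first term, I would use the fundamental theorem of calculus along the path $s \mapsto f(x_0 + tv_1 + sv_2)$, which is $C^1$ on a neighborhood of $[0,t]$ by the chain rule and the $C^1$ hypothesis, to write it as $\frac{1}{t}\int_0^t f'(x_0 + tv_1 + sv_2;v_2)\,ds$. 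Joint continuity of $(x,v) \mapsto f'(x;v)$ then lets me pass the limit $t\to 0$ inside the averaging integral to obtain $f'(x_0;v_2)$.

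The main obstacle is making sense of the vector-valued Riemann integral and the fundamental theorem of calculus in the general locally convex target $Y$, where standard Banach-space proofs (relying on norms and completeness) do not directly apply. I would handle this via a Hahn-Banach reduction, using \cref{prop:HB}: for any continuous linear functional $\ell$ on $Y$, the scalar map $s \mapsto \ell\paren*{f(x_0 + tv_1 + sv_2)}$ is $C^1$ from an open subset of $\R$ into $\R$ with derivative $\ell\paren*{f'(x_0 + tv_1 + sv_2; v_2)}$, so the classical scalar FTC gives the identity after pairing with $\ell$. Since continuous linear functionals separate points by \cref{prop:HB}, the vector-valued identity follows. The limit $t \to 0$ is then handled analogously, using the continuity of $\ell \circ f'$ and the bounded convergence of the scalar integral averages, and one last application of Hahn-Banach to lift the scalar limit back to $Y$.
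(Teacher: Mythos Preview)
Your proposal is correct and follows the standard argument. The paper itself omits the proof, referring to \cite{Hamilton1982}, where the same telescoping-plus-FTC strategy is used; Hamilton works in Fr\'{e}chet spaces and so has a vector-valued Riemann integral available directly, whereas your Hahn--Banach reduction to the scalar FTC is the appropriate adaptation to the general locally convex setting of the paper. One minor point: the phrase ``bounded convergence'' is slightly misleading for the limit $t\to 0$ in the average $\int_0^1 \ell\bigl(f'(x_0 + tv_1 + t\sigma v_2;v_2)\bigr)\,d\sigma$; what you actually use is uniform convergence of the continuous integrand on $[0,1]$, which follows from joint continuity of $f'$ on the compact image of $[0,\epsilon]\times[0,1]$. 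Also note that you need not separately argue existence of the limit defining $f'(x_0;v_1+v_2)$, since this is already part of the $C^1$ hypothesis; your Hahn--Banach step then only has to identify two known vectors in $Y$, which is exactly what \cref{prop:HB} provides.
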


Having defined the derivative and $C^{1}$ regularity, we can inductively define higher-order derivatives and regularity.

\begin{mydef}[Higher derivatives]\label{def:smooth}
The map $f:X_{0}\rightarrow Y_{0}$ is \emph{$C^{2}$ G\^{a}teaux} if $f$ is a $C^{1}$ G\^{a}teaux map and for each $v_{1}\in X$ fixed, the map
\begin{equation}
X_{0} \rightarrow Y, \qquad x\mapsto f'(x;v_{1})
\end{equation}
is $C^1$ with G\^{a}teaux derivative
\begin{equation}
\lim_{t\rightarrow 0} \frac{f'(x+tv_{2}; v_{1}) - f'(x;v_{1})}{t}
\end{equation}
depending continuously on $(x;v_{1},v_{2})\in X_{0}\times X\times X$ equipped with the product topology. If this limit exists, we call it the \emph{second G\^{a}teaux derivative} of $f$ at $x$ in the directions $v_{1},v_{2}$ and denote it by $f''(x;v_{1},v_{2})$. We inductively define \emph{$C^{r}$ maps} $X_{0}\rightarrow Y_{0}$. If a map is $C^{r}$ for every $r\in \N$, then we say that $f$ is a \emph{$C^{\infty}$ map} or alternatively, \emph{smooth map}.
\end{mydef}

\begin{prop}[Symmetry and $r$-linearity of $f_{x_{0}}^{(r)}$]
If for $r\in\N$, the map $f$ is $C^{r}$, then for each fixed $x_{0}\in X_{0}$, the map
\begin{equation}
\underbrace{X\times\cdots \times X}_{r} \rightarrow Y, \qquad (v_{1},\ldots,v_{r}) \mapsto f^{(r)}(x_{0};v_{1},\ldots,v_{r})
\end{equation}
is $r$-linear and symmetric, i.e. for any permutation $\pi\in\Ss_{r}$,
\begin{equation}
f^{(r)}(x_{0};v_{\pi(1)},\ldots,v_{\pi(r)}) = f^{(r)}(x_{0};v_{1},\ldots,v_{r}).
\end{equation}

\end{prop}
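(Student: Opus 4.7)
The plan is to establish $r$-linearity and symmetry by two separate inductive arguments, with the symmetry portion being the more delicate one.

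For $r$-linearity, I would induct on $r$. The base case $r = 1$ is the proposition ``Linearity of derivative'' already recorded in the excerpt. For the inductive step, observe that by the definition of higher G\^ateaux derivatives,
\[
f^{(r)}(x_0; v_1, \ldots, v_r) = \lim_{t \to 0} \frac{f^{(r-1)}(x_0 + t v_r; v_1, \ldots, v_{r-1}) - f^{(r-1)}(x_0; v_1, \ldots, v_{r-1})}{t}.
\]
Linearity in each of the first $r-1$ slots then passes from the inductive hypothesis (applied at each base point $x_0 + t v_r$) through the pointwise limit, using that $Y$ is Hausdorff. Linearity in the $v_r$-slot, with $(v_1, \ldots, v_{r-1})$ fixed, is immediate from the $r=1$ case applied to the map $x \mapsto f^{(r-1)}(x; v_1, \ldots, v_{r-1})$, which is $C^1$ by the $C^r$ hypothesis.

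For symmetry, the key reduction is to the classical Schwarz--Clairaut theorem via Hahn--Banach. I would first dispatch the base case $r = 2$: given $v_1, v_2 \in X$, define $g : U \to Y$ on a neighborhood $U$ of $0 \in \R^2$ by $g(s,t) \coloneqq f(x_0 + s v_1 + t v_2)$. Directly from the G\^ateaux definitions one computes
\[
\partial_s g(s,t) = f'(x_0 + s v_1 + t v_2;\, v_1), \qquad \partial_t g(s,t) = f'(x_0 + s v_1 + t v_2;\, v_2),
\]
and iterating,
\[
\partial_t \partial_s g(s,t) = f''(x_0 + s v_1 + t v_2;\, v_1, v_2), \qquad \partial_s \partial_t g(s,t) = f''(x_0 + s v_1 + t v_2;\, v_2, v_1),
\]
both of which are continuous on $U$ by the $C^2$ hypothesis. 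For any continuous $\R$-linear functional $\ell : Y \to \R$, the composition $\ell \circ g : U \to \R$ is therefore genuinely $C^2$ in the classical sense, so Schwarz's theorem yields $\ell(f''(x_0; v_1, v_2)) = \ell(f''(x_0; v_2, v_1))$. Since $Y$ is locally convex, \cref{prop:HB} provides enough continuous linear functionals to separate points, giving $f''(x_0; v_1, v_2) = f''(x_0; v_2, v_1)$.

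For the inductive step of the symmetry claim, I would exploit that $\Ss_r$ is generated by adjacent transpositions $(i, i+1)$. Any transposition fixing the last slot ($i < r-1$) acts only on the first $r - 1$ entries; writing
\[
f^{(r)}(x_0; v_{\pi(1)}, \ldots, v_{\pi(r-1)}, v_r) = \lim_{t \to 0} \frac{f^{(r-1)}(x_0 + t v_r;\, v_{\pi(1)}, \ldots, v_{\pi(r-1)}) - f^{(r-1)}(x_0;\, v_{\pi(1)}, \ldots, v_{\pi(r-1)})}{t}
\]
reduces the claim to the inductive hypothesis for $f^{(r-1)}$. The remaining transposition, swapping $v_{r-1}$ and $v_r$, I would handle by applying the $r=2$ base case to the auxiliary $C^2$ map $h(x) \coloneqq f^{(r-2)}(x; v_1, \ldots, v_{r-2})$ (taking $h \coloneqq f$ when $r = 2$); unpacking the definitions of $h', h''$ identifies $h''(x_0; v_{r-1}, v_r) = f^{(r)}(x_0; v_1, \ldots, v_{r-1}, v_r)$ and $h''(x_0; v_r, v_{r-1}) = f^{(r)}(x_0; v_1, \ldots, v_{r-2}, v_r, v_{r-1})$, so the base case closes the argument. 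The main technical hurdle is in the base case: one must verify carefully from first principles that the mixed classical partials of $g$ coincide with the prescribed G\^ateaux expressions and are continuous, so that classical Clairaut legitimately applies, and that the separation-of-points version of Hahn--Banach in \cref{prop:HB} can be promoted from equality against all continuous linear functionals to equality in $Y$.
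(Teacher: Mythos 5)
Your proof is correct. Note, however, that the paper itself does not prove this proposition: it sits in the background appendix on calculus in locally convex spaces, where the authors state that the proofs are omitted and can be found in \cite{Hamilton1982}, so there is no in-paper argument to compare against. Your route --- multilinearity by induction through the difference-quotient definition of $f^{(r)}$, and symmetry by reducing adjacent transpositions to the $r=2$ case, which is in turn settled by testing against continuous linear functionals (\cref{prop:HB}) and invoking classical Clairaut for the scalar function $(s,t)\mapsto \ell(f(x_0+sv_1+tv_2))$ --- is standard and complete. It does differ from the argument in the cited source: Hamilton establishes symmetry of the second derivative via a second-difference identity, writing $f(x+h+k)-f(x+h)-f(x+k)+f(x)$ as a double integral of $D^2f$ in two ways using the fundamental theorem of calculus, which avoids Hahn--Banach altogether and is better suited to settings where quantitative (tame) estimates are wanted. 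Your Hahn--Banach reduction is shorter and is legitimate here precisely because $Y$ is locally convex and Hausdorff, so continuous linear functionals separate points; the two hypotheses your argument genuinely leans on --- joint continuity of $f''$ in $(x;v_1,v_2)$, so that $\ell\circ g$ is classically $C^2$ and Clairaut applies, and the fact that $x\mapsto f^{(r-2)}(x;v_1,\ldots,v_{r-2})$ inherits $C^2$ regularity from $f\in C^r$ --- are both identified correctly and both follow from the paper's definition of $C^r$.
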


\begin{prop}[Composition]
If $f:X_{0}\rightarrow Y_{0}$ and $g: Y_{0}\rightarrow Z_{0}$ are $C^{r}$ maps, then $g\circ f: X_{0}\rightarrow Z_{0}$ is $C^{r}$ and the derivative of $(g\circ f)$ at the point $x\in X_{0}$ is the map $g_{f(x)}' \circ f_{x}': X\rightarrow Z$.
\end{prop}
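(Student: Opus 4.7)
I will prove the statement by induction on $r$, with the case $r=1$ containing all the essential analytic content and the inductive step being essentially bookkeeping.

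\textbf{Base case $r = 1$.} Fix $x \in X_0$ and $v \in X$, and set $y \coloneqq f(x)$, $w \coloneqq f'(x;v)$. For $t$ small and nonzero, define $w_t \coloneqq t^{-1}(f(x+tv) - f(x))$, so that by the definition of $f'(x;v)$ we have $w_t \to w$ in $Y$ as $t \to 0$. The identity $f(x+tv) = y + t w_t$ gives
\begin{equation*}
\frac{g(f(x+tv)) - g(f(x))}{t} = \frac{g(y + t w_t) - g(y)}{t} = g'(y; w_t) + R(t),
\end{equation*}
where $R(t) \coloneqq t^{-1}\paren{g(y+tw_t) - g(y) - tg'(y;w_t)}$. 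The joint continuity of $g'$ as a map $Y_0 \times Y \to Z$ (which is precisely the $C^1$ assumption on $g$) combined with $w_t \to w$ yields $g'(y;w_t) \to g'(y;w) = g'(f(x); f'(x;v))$ in $Z$. The crux of the proof is then showing $R(t) \to 0$ in $Z$.

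\textbf{Showing $R(t) \to 0$.} This is the main technical obstacle, since it is a uniformity statement: the defining limit $s^{-1}(g(y+su)-g(y)) \to g'(y;u)$ as $s\to 0$ is postulated for each \emph{fixed} direction $u$, but here the direction $w_t$ varies with $t$. The plan is to exploit $C^1$-regularity of $g$ via a mean-value argument, reduced to the scalar setting through the Hahn-Banach separation property (\cref{prop:HB}). For any continuous linear functional $\ell \in Z^*$, the scalar map $s \mapsto \ell\paren{g(y + s t w_t)}$ is $C^1$ on $[0,1]$ (for $t$ sufficiently small so that the segment $[y, y+tw_t]$ stays in $Y_0$), with derivative $t\ell\paren{g'(y + stw_t; w_t)}$. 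The scalar fundamental theorem of calculus gives
\begin{equation*}
\ell(R(t)) = \int_0^1 \ell\paren{g'(y + stw_t; w_t) - g'(y; w_t)}\,ds.
\end{equation*}
Since $g'$ is jointly continuous on $Y_0 \times Y$, the integrand converges to $0$ uniformly in $s \in [0,1]$, so $\ell(R(t)) \to 0$. Passing from the weak statement $\ell(R(t)) \to 0$ for all $\ell$ to the strong statement $R(t) \to 0$ in $Z$ can be done by running the same argument directly at the level of continuous seminorms on $Z$: given such a $p$, the continuity of $g'$ at $(y,w)$ provides, for each $\varepsilon > 0$, a neighborhood of $(y,w)$ in $Y_0 \times Y$ on which $p\paren{g'(\cdot,\cdot) - g'(y;w)} < \varepsilon$, and one then rewrites the scalar mean-value identity as a seminorm estimate (or equivalently assumes, as is standard and as Milnor \cite{Milnor1984} does implicitly, enough completeness on $Z$ so that the Riemann integral $\int_0^1 g'(y+stw_t; w_t)\,ds$ makes sense in $Z$). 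Continuity of $(g\circ f)'$ as a map $X_0 \times X \to Z$ then follows from the continuity of $(x,v) \mapsto (f(x), f'(x;v))$ and of $g'$.

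\textbf{Inductive step.} Assume the statement through order $r$ and suppose $f, g$ are $C^{r+1}$. The base case identifies the first G\^ateaux derivative of $g\circ f$ with the map
\begin{equation*}
\Phi: X_0 \times X \to Z, \qquad \Phi(x,v) \coloneqq g'(f(x); f'(x;v)).
\end{equation*}
Write $\Phi = g' \circ \Psi$ where $\Psi: X_0 \times X \to Y_0 \times Y$ is given by $\Psi(x,v) \coloneqq (f(x), f'(x;v))$. Since $f$ is $C^{r+1}$, the map $\Psi$ is $C^r$ (each component uses at most $r$ further differentiations of $f$), and since $g$ is $C^{r+1}$, the map $g'$ is $C^r$ on $Y_0 \times Y$. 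By the inductive hypothesis, $\Phi = g' \circ \Psi$ is $C^r$, which precisely says that $(g\circ f)'$ is $C^r$, i.e.\ that $g\circ f$ is $C^{r+1}$. This closes the induction and completes the proof.
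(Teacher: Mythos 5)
The paper offers no proof of this proposition at all --- it explicitly states that ``the proofs are omitted, but can be found in \cite{Hamilton1982}'' --- so there is no internal argument to compare against; what you have written is the standard chain-rule proof from that reference, and it is essentially correct, including the identification of the derivative as $g_{f(x)}'\circ f_x'$ in the base case. Two points are stated loosely and deserve a more careful word. First, in passing from $\ell(R(t))\to 0$ to $R(t)\to 0$: your seminorm fix is the right one, and it can be carried out without any completeness hypothesis on $Z$, because the vector-valued integral $\int_0^1 g'(y+stw_t;w_t)\,ds$ is pinned down a priori --- its Riemann sums converge weakly to $t^{-1}(g(y+tw_t)-g(y))$, which already lives in $Z$, and since for a continuous seminorm $p$ the set $\{z\in Z: p(z)\leq M\}$ is convex and closed, hence weakly closed, the bound $p(R(t))\leq \sup_{s\in[0,1]}p(g'(y+stw_t;w_t)-g'(y;w_t))$ survives the weak limit and tends to $0$ by joint continuity of $g'$. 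Second, the inductive step tacitly identifies ``$h$ is $C^{r+1}$'' with ``$h$ is $C^1$ and $h':X_0\times X\to Z$ is $C^r$,'' and likewise extracts ``$g'$ is $C^r$ on $Y_0\times Y$'' from ``$g$ is $C^{r+1}$.'' Under \cref{def:smooth} these equivalences are true but not immediate: one direction recovers $h^{(j+1)}(x;v_1,\ldots,v_{j+1})$ as the derivative of $h'$ at the point $(x,v_1)$ in the direction $(v_{j+1},0)$, while the converse uses linearity of $h'(x;\cdot)$ to dispose of derivatives in the directions $(0,u)$. Recording that bookkeeping lemma explicitly would close the only genuine gap in the write-up.
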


\subsection{Smooth locally convex manifolds}
In this subsection, we use the calculus reviewed in the preceding subsection to introduce the basics of smooth manifolds modeled on locally convex topological vector spaces, which is needed for the construction of the Lie-Poisson manifold structure in \cref{sec:geom}. Much of the theory parallels the finite-dimensional setting, where the model space $\R^{d}$ is now replaced by an arbitrary, possibly infinite-dimensional locally convex tvs. Consequently, many of the definitions below will be familiar to the reader with a minimal knowledge of differential topology, but we record them for completeness. As in the last subsection, we closely follow \cite{Milnor1984} in our presentation.

\begin{mydef}[Smooth manifold]
A \emph{smooth manifold} modeled on a locally convex space $V$ consists of a regular, Hausdorff topological space $M$ together with a collection of homeomorphisms $\varphi_{\alpha}:V_{\alpha}\rightarrow M_{\alpha}$ satisfying the following properties:
\begin{enumerate}[(M1)]
\item\label{item:sm_p1}
$V_{\alpha}\subset V$ is open.
\item\label{item:sm_p2}
$M_{\alpha}\subset M$ is open and $\bigcup_{\alpha} M_{\alpha}=M$.
\item\label{item:sm_p3}
$\varphi_{\beta}^{-1}\circ\varphi_{\alpha}:\varphi_{\alpha}^{-1}(M_{\alpha}\cap M_{\beta}) \rightarrow \varphi_{\beta}^{-1}(M_{\alpha}\cap M_{\beta})$ is a smooth map between open subsets of $V$. We refer to the maps $\varphi_{\alpha}$ as \emph{local coordinate systems} on $M$ and the maps $\varphi_{\alpha}^{-1}$ as \emph{coordinate charts}.
\end{enumerate}
\end{mydef}

\begin{remark}
We will sometimes say that the manifold $M$ is a \emph{Fr\'{e}chet manifold} if the locally convex model space $V$ is a Fr\'{e}chet space.
\end{remark}

Using the smooth structure together with the calculus from the last subsection, we can define the notion of a smooth map between manifolds.

\begin{mydef}[Smooth map]
If $M_{1}$ and $M_{2}$ are smooth manifolds modeled on locally convex spaces $V_{1}$ and $V_{2}$, respectively, then a continuous function $f:M_{1}\rightarrow M_{2}$ is \emph{smooth} if the composition
\begin{equation}
\varphi_{\beta,2}^{-1}\circ f\circ\varphi_{\alpha,1}: \varphi_{\alpha,1}^{-1}\paren*{M_{1, \alpha}\cap f^{-1}(M_{2,\beta})} \rightarrow V_{2,\beta}
\end{equation}
is smooth whenever $f(M_{1,\alpha})\cap M_{2,\beta}\neq\emptyset$. We say that $f$ is a \emph{diffeomorphism} if it is bijective and both $f$ and $f^{-1}$ are smooth.
\end{mydef}

\begin{mydef}[Submanifold]\label{def:subm}
A subset $N$ of a smooth locally convex manifold $M$ is a \emph{submanifold} if for each $m\in N$, there exists a chart $(M_{\alpha},\varphi_{\alpha}^{-1})$ about the point $m$, such that $\varphi_{\alpha}^{-1}(M_{\alpha}\cap N) = \varphi_{\alpha}^{-1}(M_{\alpha}) \cap W$, where $W$ is a closed subspace of the space $V$ on which $M$ is modeled.
\end{mydef}

\begin{remark}
The submanifold $N$ is smooth locally convex manifold modeled on $W$. Indeed, the reader may check that the maps $\varphi_{\alpha}|_{V_{\alpha}\cap W}: V_{\alpha}\cap W \rightarrow M_{\alpha} \cap N$ are homeomorphisms which satisfy properties \ref{item:sm_p1} - \ref{item:sm_p3}.
\end{remark}

In this work, we use the kinematic definition of tangent vectors (i.e. equivalence classes of smooth curves), as opposed to the operational definition (i.e. derivations). While these two definitions are equivalent in the finite-dimensional setting, they are in general inequivalent in the infinite-dimensional setting.

\begin{mydef}[Tangent space]
Let $\varphi_{\alpha}:V_{\alpha}\rightarrow M_{\alpha}$ be a local coordinate system on $M$ with $x_{0}\in M_{\alpha}$. Let $p_{1},p_{2}:I\rightarrow M$ be smooth maps on an open interval $I\subset \R$ with $p_{i}(0)=x_{0}$ for $i=1,2$. We say that $p_{1}\sim p_{2}$ if and only if
\begin{equation}
\frac{d}{dt}\paren*{\varphi_{\alpha}^{-1}\circ p_{1}}|_{t=0} = \frac{d}{dt}\paren*{\varphi_{\alpha}^{-1}\circ p_{2}}|_{t=0}.
\end{equation}
The reader may verify that $\sim$ defines an equivalence relation on smooth curves $p:I\rightarrow M$ with $p(0)=x_{0}$. The set of all such equivalence classes is called the \emph{tangent space at $x_{0}$}, denoted by $T_{x_{0}}M$.
\end{mydef}

\begin{mydef}[Tangent bundle]
We define the \emph{tangent bundle} $TM$ as a set by 
\[
\coprod_{x\in M}T_{x}M.
\]
We define a smooth locally convex structure on $TM$ modeled on $V\times V$ by the local coordinate systems
\begin{equation}
\psi_{\alpha}: V_{\alpha}\times V \rightarrow TM_{\alpha} \subset TM,
\end{equation}
where $\psi_{\alpha}(u,v)$ is defined to be the equivalence class containing the smooth curve $t\mapsto \varphi_{\alpha}(u+tv)$ through the point $\varphi_{\alpha}(u)\in M$. The reader may verify that $\psi_{\alpha}$ maps $\{u\}\times V$ isomorphically onto the tangent space $T_{\varphi_{\alpha}(u)}M$.
\end{mydef}

\begin{mydef}[Derivative]
Let $M_{1}$ and $M_{2}$ be smooth locally convex manifolds. A smooth map $f:M_{1}\rightarrow M_{2}$ induces a continuous map
\begin{equation}
f_{x}':T_{x}M_{1} \rightarrow T_{f(x)} M_{2}, \qquad [p_{1}] \mapsto [f\circ p_{1}]
\end{equation}
called the \emph{derivative of $f$ at $x$}. Together, the maps $f_{x}'$ induce a smooth map
\begin{equation}
f_{*}:TM_{1}\rightarrow TM_{2}, \qquad (x,v) \mapsto (f(x),f_{x}'(v))
\end{equation}
which maps $T_{x}M_{1}$ linearly into $T_{f(x)}M_{2}$.
\end{mydef}

\begin{mydef}[Smooth vector field]
A \emph{smooth vector field} on $M$ is a smooth map $X:M\rightarrow TM$ such that $X(x)\in T_{x}M$. We denote the vector space of smooth vector fields on $M$ by $\mathfrak{X}(M)$.
\end{mydef}

\section{Distribution-valued operators}\label{app:DVO}
We review and develop some properties of distribution-valued operators (DVOs), that is,  elements of $\L(\Sc(\R^{k}),\Sc'(\R^{k}))$,  which are used extensively in this work. Most of these properties are a special case of a more general theory involving topological tensor products of locally convex spaces for which we refer the reader to \cite{Schwartz1966, Horvath1966, Treves1967} for further reading.

\subsection{Adjoint}
In this subsection, we record some properties of the adjoint of a DVO as well as some properties of the map taking a DVO to its adjoint. The proofs follow more or less readily from the definition and standard arguments, and are left to the reader.

\begin{lemma}[Adjoint map]\label{lem:dvo_adj}
Let $k\in\N$, and let $A^{(k)}\in \L(\Sc(\R^{k}),\Sc'(\R^{k}))$. Then there is a unique map $(A^{(k)})^{*}\in\L(\Sc(\R^{k}),\Sc'(\R^{k}))$ such that
\begin{equation}\label{eq:adj_prop}
\ipp*{(A^{(k)})^{*}g^{(k)},  \ol{f^{(k)}}}_{\Sc'(\R^{k})-\Sc(\R^{k})} = \ol{\ipp*{A^{(k)}f^{(k)}, \ol{g^{(k)}}}}_{\Sc'(\R^{k})-\Sc(\R^{k})}, \qquad \forall f^{(k)},g^{(k)}\in\Sc(\R^{k}).
\end{equation}
Furthermore, the adjoint map
\begin{equation}
*:\L(\Sc(\R^{k}),\Sc'(\R^{k})) \rightarrow \L(\Sc(\R^{k}),\Sc'(\R^{k})), \qquad A^{(k)} \mapsto (A^{(k)})^{*}
\end{equation}
is a continuous involution. 

Additionally, for $B^{(k)}\in\L(\Sc'(\R^{k}),\Sc'(\R^{k}))$, there exists a unique linear map in $(B^{(k)})^{*}\in\L(\Sc(\R^{k}),\Sc(\R^{k}))$ such that
\begin{equation}
\ipp*{u^{(k)},\ol{(B^{(k)})^{*}g^{(k)}}}_{\Sc'(\R^{k})-\Sc(\R^{k})} = \ipp*{B^{(k)}u^{(k)}, \ol{g^{(k)}}}_{\Sc'(\R^{k})-\Sc(\R^{k})}, \quad \forall (g^{(k)},u^{(k)})\in\Sc(\R^{k})\times\Sc'(\R^{k}).
\end{equation}
Moreover, the adjoint map
\begin{equation}
*:\L(\Sc'(\R^{k}),\Sc'(\R^{k})) \rightarrow \L(\Sc(\R^{k}),\Sc(\R^{k}))
\end{equation}
is a continuous involution.
\end{lemma}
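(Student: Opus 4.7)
The plan is to construct $(A^{(k)})^*$ directly from the defining relation, verify uniqueness via nondegeneracy of the distributional pairing, and then extract both involutivity and continuity from the symmetric form of that relation.

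First I would fix $g^{(k)}\in\Sc(\R^k)$ and consider the map
\begin{equation*}
T_{g^{(k)}}:\Sc(\R^k)\rightarrow\C,\qquad f^{(k)}\mapsto \ol{\ipp*{A^{(k)}f^{(k)},\ol{g^{(k)}}}_{\Sc'(\R^k)-\Sc(\R^k)}}.
\end{equation*}
Since $f^{(k)}\mapsto \ol{f^{(k)}}$ is antilinear and continuous on $\Sc(\R^k)$ and $A^{(k)}\in\L(\Sc(\R^k),\Sc'(\R^k))$, the map $T_{g^{(k)}}$ is linear and continuous, so it defines an element of $\Sc'(\R^k)$, which I declare to be $(A^{(k)})^{*}g^{(k)}$. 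By construction, the relation \eqref{eq:adj_prop} holds. Linearity of $g^{(k)}\mapsto (A^{(k)})^{*}g^{(k)}$ is immediate from bilinearity of the pairing and antilinearity of conjugation. Continuity of this assignment as a map $\Sc(\R^k)\rightarrow\Sc'(\R^k)$ can be checked via the seminorms $\|\cdot\|_{\mathfrak{R}}$ on $\L(\Sc(\R^k),\Sc'(\R^k))$: given a bounded set $\mathfrak{R}\subset\Sc(\R^k)$, the image $\ol{\mathfrak{R}}$ is bounded, so continuity of $A^{(k)}$ provides the needed seminorm bound.

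Uniqueness of $(A^{(k)})^{*}$ is a one-line argument: if another operator $\tl{A}$ satisfies \eqref{eq:adj_prop}, then $\ipp*{((A^{(k)})^{*}-\tl{A})g^{(k)},\ol{f^{(k)}}}_{\Sc'(\R^k)-\Sc(\R^k)}=0$ for all $f^{(k)}\in\Sc(\R^k)$, and since $\Sc(\R^k)$ separates $\Sc'(\R^k)$ one concludes $(A^{(k)})^{*}=\tl{A}$. For involutivity, apply \eqref{eq:adj_prop} to $(A^{(k)})^{*}$ in place of $A^{(k)}$ and conjugate both sides; the symmetric form of the identity together with uniqueness forces $((A^{(k)})^{*})^{*}=A^{(k)}$.

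For continuity of the adjoint map itself, I would again work directly with seminorms: for a bounded set $\mathfrak{R}$, one has
\begin{equation*}
\|(A^{(k)})^{*}\|_{\mathfrak{R}}=\sup_{f,g\in\mathfrak{R}}\left|\ipp*{(A^{(k)})^{*}g,f}_{\Sc'(\R^k)-\Sc(\R^k)}\right|=\sup_{f,g\in\mathfrak{R}}\left|\ipp*{A^{(k)}\ol{f},\ol{g}}_{\Sc'(\R^k)-\Sc(\R^k)}\right|\leq \|A^{(k)}\|_{\tl{\mathfrak{R}}},
\end{equation*}
where $\tl{\mathfrak{R}}=\{\ol{f}:f\in\mathfrak{R}\}$ is also bounded. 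This directly bounds each seminorm on the adjoint by a seminorm on the original operator, yielding continuity; combined with linearity (which is obvious from the defining relation) and the already established involutivity, this concludes the $A^{(k)}$ part.

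For the second assertion concerning $B^{(k)}\in\L(\Sc'(\R^k),\Sc'(\R^k))$, the strategy is entirely parallel but dualized on one side. Given $g^{(k)}\in\Sc(\R^k)$, I would define $(B^{(k)})^{*}g^{(k)}\in\Sc(\R^k)$ by prescribing its pairing against \emph{tempered distributions} $u^{(k)}\in\Sc'(\R^k)$ via the desired identity. The main technical point, and the only step where something non-formal is used, is ensuring that the resulting element actually lies in $\Sc(\R^k)$ and not merely in $\Sc'(\R^k)$; this is where the reflexivity of $\Sc(\R^k)$ (so that $\Sc(\R^k)\cong(\Sc'(\R^k))^{*}$ as topological vector spaces with appropriate topologies) enters, identifying the continuous antilinear functional $u^{(k)}\mapsto \ipp{B^{(k)}u^{(k)},\ol{g^{(k)}}}$ on $\Sc'(\R^k)$ with a (necessarily unique) Schwartz function. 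Uniqueness, involutivity, and continuity of the adjoint map on $\L(\Sc'(\R^k),\Sc'(\R^k))$ then follow by exactly the same seminorm-based arguments as in the first part, so the core obstacle is really only the reflexivity identification in this second case.
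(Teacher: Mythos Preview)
The paper omits this proof entirely, saying it follows ``readily from the definition and standard arguments''; your approach is exactly that standard argument and is correct in substance. One slip worth fixing: as written, $T_{g^{(k)}}(f^{(k)})=\ol{\ipp*{A^{(k)}f^{(k)},\ol{g^{(k)}}}}$ is \emph{antilinear} in $f^{(k)}$, not linear, so it is not itself a tempered distribution---you should instead define $(A^{(k)})^{*}g^{(k)}$ as the genuinely linear functional $h\mapsto T_{g^{(k)}}(\ol{h})$, after which \eqref{eq:adj_prop} follows by setting $h=\ol{f^{(k)}}$. Similarly, in the second part the functional $u^{(k)}\mapsto\ipp*{B^{(k)}u^{(k)},\ol{g^{(k)}}}$ is linear rather than antilinear, which is precisely what the reflexivity identification you invoke requires.
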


The next lemma is useful for computing the adjoint of the composition of maps. We omit the proof, which is standard.

\begin{lemma}
\label{lem:adj_comp}
Let $A^{(k)}\in\L(\Sc(\R^{k}),\Sc'(\R^{k}))$ and $B^{(k)}\in\L(\Sc'(\R^{k}),\Sc'(\R^{k}))$. Then
\begin{equation}
\paren*{B^{(k)}A^{(k)}}^{*} = (A^{(k)})^{*}(B^{(k)})^{*}.
\end{equation}
\end{lemma}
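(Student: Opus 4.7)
The plan is to proceed by the uniqueness characterization of the adjoint guaranteed by \cref{lem:dvo_adj}. Specifically, the first part of that lemma asserts that any continuous linear map $T\in\L(\Sc(\R^k),\Sc'(\R^k))$ has a \emph{unique} continuous linear adjoint $T^*\in\L(\Sc(\R^k),\Sc'(\R^k))$ satisfying \eqref{eq:adj_prop}. So to prove the identity $(B^{(k)}A^{(k)})^* = (A^{(k)})^*(B^{(k)})^*$, it is enough to verify that the right-hand side is a continuous linear map $\Sc(\R^k)\to\Sc'(\R^k)$ satisfying the defining property \eqref{eq:adj_prop} with $A^{(k)}$ replaced by the composition $B^{(k)}A^{(k)}$.

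First I would check the mapping and continuity properties. By hypothesis $(A^{(k)})^*\in\L(\Sc(\R^k),\Sc'(\R^k))$ by the first part of \cref{lem:dvo_adj}, and $(B^{(k)})^*\in\L(\Sc(\R^k),\Sc(\R^k))$ by the second part. Their composition is therefore a well-defined element of $\L(\Sc(\R^k),\Sc'(\R^k))$. Likewise $B^{(k)}A^{(k)}\in\L(\Sc(\R^k),\Sc'(\R^k))$ since $A^{(k)}$ maps $\Sc(\R^k)$ into $\Sc'(\R^k)$ and $B^{(k)}$ maps $\Sc'(\R^k)$ into $\Sc'(\R^k)$. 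So both sides of the claimed identity live in the same space and the uniqueness argument is meaningful.

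Next I would perform the pairing computation. Fix $f,g\in\Sc(\R^k)$ and start from the right-hand side of the defining identity \eqref{eq:adj_prop} applied to $T=B^{(k)}A^{(k)}$. Setting $u\coloneqq A^{(k)}f\in\Sc'(\R^k)$ and using the second part of \cref{lem:dvo_adj} (which is exactly the mechanism designed to move $B^{(k)}$ across a distribution pairing), we have
\begin{equation*}
\ipp*{B^{(k)}A^{(k)}f,\ol{g}}_{\Sc'-\Sc}=\ipp*{B^{(k)}u,\ol{g}}_{\Sc'-\Sc}=\ipp*{u,\ol{(B^{(k)})^*g}}_{\Sc'-\Sc}=\ipp*{A^{(k)}f,\ol{(B^{(k)})^*g}}_{\Sc'-\Sc}.
\end{equation*}
Setting $h\coloneqq (B^{(k)})^*g\in\Sc(\R^k)$ and applying the first part of \cref{lem:dvo_adj} to $A^{(k)}$ (with test function $h$ in place of $g$), we conjugate the above to obtain
\begin{equation*}
\ol{\ipp*{B^{(k)}A^{(k)}f,\ol{g}}_{\Sc'-\Sc}}=\ol{\ipp*{A^{(k)}f,\ol{h}}_{\Sc'-\Sc}}=\ipp*{(A^{(k)})^*h,\ol{f}}_{\Sc'-\Sc}=\ipp*{(A^{(k)})^*(B^{(k)})^*g,\ol{f}}_{\Sc'-\Sc}.
\end{equation*}
This is precisely the characterization \eqref{eq:adj_prop} with $A^{(k)}$ replaced by $B^{(k)}A^{(k)}$ and the adjoint replaced by $(A^{(k)})^*(B^{(k)})^*$, so the uniqueness clause of \cref{lem:dvo_adj} forces $(B^{(k)}A^{(k)})^*=(A^{(k)})^*(B^{(k)})^*$.

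There is no real obstacle here; the only place for a slip is bookkeeping the complex conjugates and applying the two different adjoint constructions of \cref{lem:dvo_adj} (the one for $\L(\Sc,\Sc')$-operators and the one for $\L(\Sc',\Sc')$-operators) in the right order. Since each step is an equality of continuous sesquilinear pairings on $\Sc(\R^k)\times\Sc(\R^k)$, no density or approximation argument is needed, and the proof is a one-line chain of dualities closed by invoking uniqueness.
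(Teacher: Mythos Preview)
Your argument is correct and is exactly the standard verification one has in mind: apply the two adjoint identities of \cref{lem:dvo_adj} in succession and invoke uniqueness. The paper itself omits the proof entirely, calling it standard, so there is nothing to compare against; your write-up simply supplies the routine details the authors chose to suppress.
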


\begin{mydef}[Self- and skew-adjoint]\label{def:dvo_sa}
Given $k\in\N$, we say that an operator $A^{(k)} \in \L(\Sc(\R^{k}),\Sc'(\R^{k}))$ is self-adjoint if $(A^{(k)})^{*}=A^{(k)}$. Similarly, we say that $A^{(k)}\in \L(\Sc(\R^{k}),\Sc'(\R^{k}))$ is skew-adjoint if $(A^{(k)})^{*}=-A^{(k)}$.
\end{mydef}

\begin{remark}
Note that if $A^{(k)}\in \L(\Sc(\R^{k}),\Sc'(\R^{k}))$ is an operator mapping $\Sc(\R^{k}) \rightarrow L^{2}(\R^{k})$, then our definition of self-adjoint does \emph{not} coincide with the usual Hilbert space definition for densely defined operators, but instead with the definition of a symmetric operator.
\end{remark}

\subsection{Trace and partial trace}\label{sec:trace}
In this subsection, we generalize the trace of an operator on a separable Hilbert space to the DVO setting. First, we record some remarks to motivate our definition. Since the operator $\ket*{f}\bra*{g}$, where $f,g\in L^2(\R^N)$, has trace equal to $\ip{f}{g}$, we might try to generalize the notion of trace to pure tensors of the form $f\otimes u$, where $u\in\Sc'(\R^N)$ and $f\in\Sc(\R^N)$, by defining
\begin{equation}
\label{eq:tr_pt}
\Tr_{1,\ldots,N}\paren*{f\otimes u} = \ipp{u,f}_{\Sc'(\R^N)-\Sc(\R^N)}
\end{equation}
and hope to extend this definition to $\Sc(\R^N)\hat{\otimes}\Sc'(\R^N)$ through linearity, continuity, and density. However, the evaluation map
\begin{equation}
\Sc(\R^{N})\times \Sc'(\R^{N}) \rightarrow \C, \qquad (f,u) \mapsto \ipp{u,f}_{\Sc'(\R^{N})-\Sc(\R^{N})},
\end{equation}
is not continuous, but only separately continuous, preventing us from appealing to the universal property of the tensor product to guarantee the existence of a \emph{unique} generalized trace 
\begin{equation}
\Tr_{1,\ldots,N}:\Sc(\R^{N})\hat{\otimes}\Sc'(\R^{N}) \to \C
\end{equation}
satisfying \eqref{eq:tr_pt}.

Nonetheless, by viewing the trace as a \emph{bilinear} map and using the canonical isomorphisms
\begin{equation}
\L(\Sc(\R^{N}),\Sc'(\R^{N})) \cong \Sc'(\R^{2N}) \enspace \text{and} \enspace \L(\Sc'(\R^{N}),\Sc(\R^{N})) \cong \Sc(\R^{2N}),
\end{equation}
we can uniquely define the generalized trace of the right-composition of an operator in $\L(\Sc(\R^{N}),\Sc'(\R^{N}))$ with an operator in $\L(\Sc'(\R^{N}),\Sc(\R^{N}))$ through the pairing of their Schwartz kernels. More precisely,
\begin{equation}
\label{eq:gtr_dpair}
\Tr_{1,\ldots,N}(A^{(N)}\gamma^{(N)}) = \ipp{A^{(N)}, (\gamma^{(N)})^t}_{\Sc'(\R^{2N})-\Sc(\R^{2N})}
\end{equation}
is, with an abuse of notation, the distributional pairing of the Schwartz kernel of $A^{(N)}$, which belongs to $\Sc'(\R^{2N})$, with the Schwartz kernel of the transpose of $\gamma^{(N)}$,\footnote{$(\gamma^{(N)})^t$ is the operator $f\mapsto \int_{\R^N}d\ux_N'\gamma(\ux_N';\ux_N)f(\ux_N')$.}, which belongs to $\Sc(\R^{2N})$. Equivalently, for each fixed $A^{(N)}\in \L(\Sc(\R^N),\Sc'(\R^N))$, the Schwartz kernel theorem implies the existence of a unique linear map $\L(\Sc'(\R^N),\Sc(\R^N)) \rightarrow \C$, such that
\begin{equation} \label{equ:tr}
\Tr_{1,\ldots,N}\paren*{A^{(N)}(f\otimes g)}=\ipp{A^{(N)}f,g}_{\Sc'(\R^{N})-\Sc(\R^{N})}
\end{equation}
for all $f, g \in \Sc(\R^{N})$.

\begin{mydef}[Generalized trace]\label{def:gen_trace}
We define
\begin{equation}
\begin{split}
&\Tr_{1,\ldots,N}:\L(\Sc(\R^{N}),\Sc'(\R^{N}))\times\L(\Sc'(\R^{N}),\Sc(\R^{N}))\rightarrow\C \\
&\Tr_{1,\ldots,N}\paren*{A^{(N)}\gamma^{(N)}} \coloneqq \ipp{A^{(N)}, (\gamma^{(N)})^t}_{\Sc'(\R^{2N})-\Sc(\R^{2N})}.
\end{split}
\end{equation}
\end{mydef}

\begin{remark}
The reader can check that if $A^{(N)}\in \L(\Sc(\R^N),\Sc'(\R^N))$ and $\gamma^{(N)}\in \L(\Sc'(\R^N),\Sc(\R^N))$ are such that $A^{(N)}\gamma^{(N)}$ is a trace-class operator $\rho^{(N)}$, then our definition of the generalized trace of $A^{(N)}\gamma^{(N)}$ coincides with the usual definition of the trace of $\rho^{(N)}$ as an operator on the Hilbert space $L^2(\R^N)$.
\end{remark}

We now establish some properties of the generalized trace which are reminiscent of properties of the usual trace encountered in functional analysis.

\begin{prop}[Properties of generalized trace]
\label{prop:gtr_prop}
Let $A^{(N)}\in \L(\Sc(\R^{N}),\Sc'(\R^{N}))$, and let $\gamma^{(N)}\in\L(\Sc'(\R^{N}),\Sc(\R^{N}))$. The following properties hold:
\begin{enumerate}[(i)]
\item\label{item:gtr_sc}
$\Tr_{1,\ldots,N}$ is separately continuous.
\item\label{item:gtr_adj} We have the following identity:
\begin{equation}
\Tr_{1,\ldots,N}\paren*{(A^{(N)})^{*}\gamma^{(N)}} = \ol{\Tr_{1,\ldots,N}\paren*{A^{(N)}(\gamma^{(N)})^{*}}}.
\end{equation}
\item\label{item:gtr_cyc}
If $B^{(N)}\in \L(\Sc'(\R^{N}),\Sc'(\R^{N}))$, then $\Tr_{1,\ldots,N}$ satisfies the cyclicity property
\begin{equation}
\Tr_{1,\ldots,N}\paren*{\paren*{B^{(N)}A^{(N)}}\gamma^{(N)}} = \Tr_{1,\ldots,N}\paren*{A^{(N)}\paren*{\gamma^{(N)}B^{(N)}}}.
\end{equation}
\end{enumerate}
\end{prop}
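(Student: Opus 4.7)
\medskip

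\noindent\textbf{Proof plan for \cref{prop:gtr_prop}.} My plan is to handle each of the three assertions separately, but all three will be reduced to a common framework via the kernel representation \eqref{eq:gtr_dpair}. The key tool throughout is the Schwartz kernel isomorphism $\L(\Sc'(\R^N),\Sc(\R^N))\cong \Sc(\R^{2N})$ together with the density of finite linear combinations of pure tensors $f\otimes g$ (with $f,g\in\Sc(\R^N)$) in $\Sc(\R^{2N})$. Combined with separate continuity, this density lets me prove identities by verifying them in the special case when $\gamma^{(N)}$ is of the form $\ket{f}\bra{g}$.

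For \ref{item:gtr_sc}, I would factor the generalized trace as the composition of three maps: the transpose $\gamma^{(N)}\mapsto(\gamma^{(N)})^t$, which is continuous as a coordinate permutation on $\Sc(\R^{2N})$; the identification of $A^{(N)}\in \L(\Sc(\R^N),\Sc'(\R^N))$ with its Schwartz kernel in $\Sc'(\R^{2N})$, which is a topological isomorphism; and the canonical distributional pairing $\Sc'(\R^{2N})\times\Sc(\R^{2N})\to\C$, which is separately continuous by definition of the strong dual topology. Separate continuity of $\Tr_{1,\ldots,N}$ in each of its arguments follows by holding one argument fixed and composing the above.

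For \ref{item:gtr_adj}, by the separate continuity just established and density of finite linear combinations of pure tensors, it is enough to check the identity when $\gamma^{(N)}=\ket{f}\bra{g}$ for $f,g\in\Sc(\R^N)$. Computing both sides using \eqref{equ:tr} then reduces the claim to
\begin{equation*}
\ipp{(A^{(N)})^{*}f, \bar{\bar{g}}}_{\Sc'-\Sc} = \ol{\ipp{A^{(N)}g, \bar{\bar{f}}}_{\Sc'-\Sc}},
\end{equation*}
which is exactly the defining property of the adjoint in \cref{lem:dvo_adj}. The only subtlety is a careful bookkeeping of which Schwartz kernel represents $(\ket{f}\bra{g})^*$ (the kernel of $\ket{g}\bra{f}$, modulo conjugation conventions).

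For \ref{item:gtr_cyc}, again I reduce to $\gamma^{(N)}=\ket{f}\bra{g}$. Then the left-hand side unpacks directly to $\ipp{B^{(N)}A^{(N)}f, \bar{\bar{g}}}_{\Sc'-\Sc}$. For the right-hand side, I would observe that $\gamma^{(N)}B^{(N)}\in \L(\Sc'(\R^N),\Sc(\R^N))$ sends a distribution $u$ to $\ipp{B^{(N)}u,g}_{\Sc'-\Sc}\, f$ (up to conjugation), and then apply the defining property of $(B^{(N)})^*$ from \cref{lem:dvo_adj} to rewrite this as $\ipp{u,(B^{(N)})^{*}g}_{\Sc'-\Sc}\, f$. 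Hence the Schwartz kernel of $\gamma^{(N)}B^{(N)}$ is (up to conjugation) $f\otimes (B^{(N)})^{*}g$, so \eqref{equ:tr} gives
\begin{equation*}
\Tr_{1,\ldots,N}\paren*{A^{(N)}(\gamma^{(N)}B^{(N)})} = \ipp{A^{(N)}f, (B^{(N)})^{*}g}_{\Sc'-\Sc},
\end{equation*}
which coincides with the left-hand side by a second application of the adjoint property of $B^{(N)}$ (this time in the form that moves $B^{(N)}$ back across the pairing). The main obstacle will be carefully tracking the complex conjugations in \cref{lem:dvo_adj}, which make the adjoint act antilinearly on the test-function side; this is mostly a notational burden rather than a conceptual one, but it requires care to ensure that the conjugations cancel rather than accumulate.
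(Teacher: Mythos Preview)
Your proposal is correct and follows essentially the same approach as the paper: all three parts are handled by reducing via density and separate continuity to the case of pure tensors $\gamma^{(N)}=f^{(N)}\otimes g^{(N)}$, and then appealing directly to the defining property of the adjoint from \cref{lem:dvo_adj}. The paper's treatment is terser (it leaves the bookkeeping of conjugations in \ref{item:gtr_cyc} to the reader), but the underlying argument is the same as yours.
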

\begin{proof}
Assertion \ref{item:gtr_sc} follows from the separate continuity of the distributional pairing $\ipp{\cdot,\cdot}_{\Sc'(\R^{2N})-\Sc(\R^{2N})}$.

To prove assertion \ref{item:gtr_adj}, it suffices by density of finite linear combinations of pure tensors together with bilinearity and separate continuity of the generalized trace to consider the case where $\gamma^{(N)}=f^{(N)}\otimes g^{(N)}$, for $f^{(N)},g^{(N)}\in\Sc(\R^{N})$. By definition of the generalized trace,
\begin{equation}
\Tr_{1,\ldots,N}\paren*{(A^{(N)})^{*}(f^{(N)}\otimes g^{(N)})} = \ipp*{(A^{(N)})^{*}f^{(N)},g^{(N)}}_{\Sc'(\R^{N})-\Sc(\R^{N})},
\end{equation}
and by definition of the adjoint in \cref{lem:dvo_adj},
\begin{equation}
\ipp*{ (A^{(N)})^{*}f^{(N)}, g^{(N)}}_{\Sc'(\R^{N})-\Sc(\R^{N})} = \ol{\ipp*{A^{(N)}\ol{g^{(N)}}, \ol{f^{(N)}}}}_{\Sc'(\R^{N})-\Sc(\R^{N})}.
\end{equation}
Since $(\gamma^{(N)})^{*} = \ol{g^{(N)}}\otimes \ol{f^{(N)}}$, the desired conclusion then follows from another application of the definition of the generalized trace.

To prove assertion \ref{item:gtr_cyc}, we note that since 
\begin{equation}
B^{(N)}A^{(N)}\in\L(\Sc(\R^{N}),\Sc'(\R^{N})), \qquad  \gamma^{(N)}B^{(N)}\in\L(\Sc'(\R^{N}),\Sc(\R^{N})),
\end{equation}
all expressions are well-defined. As before, it suffices to consider the case where $\gamma^{(N)}=f^{(N)}\otimes g^{(N)}$, for $f^{(N)},g^{(N)}\in\Sc(\R^{N})$. The proof then follows readily using the involution property of the adjoint and the definition of generalized trace.
\end{proof}

We now extend the partial trace map to our setting using our bilinear perspective.

\begin{prop}[Generalized partial trace]\label{prop:partial_trace}
Let $N\in\N$ and let $k\in\{0,\ldots,N-1\}$. Then there exists a unique bilinear, separately continuous map 
\begin{equation}
\Tr_{k+1,\ldots,N}: \L(\Sc(\R^{N}),\Sc'(\R^{N})) \times \L(\Sc'(\R^{N}),\Sc(\R^{N})) \rightarrow \L(\Sc(\R^{k}),\Sc'(\R^{k})),
\end{equation}
which satisfies
\begin{equation}\label{eq:gpt_up}
\Tr_{k+1,\ldots,N}\paren*{A^{(N)}(f^{(N)}\otimes g^{(N)})} = \int_{\R^{N-k}}d\ux_{k+1;N}(A^{(N)}f^{(N)})(\ux_{k},\ux_{k+1;N})g^{(N)}(\ux_{k}',\ux_{k+1;N}).
\end{equation}
for all $A^{(N)}\in \L(\Sc(\R^{N}),\Sc'(\R^{N}))$, and $f^{(N)}, g^{(N)} \in \Sc(\R^{N})$. That is,
\begin{equation}
\begin{split}
&\ipp*{\Tr_{k+1,\ldots,N}\paren*{A^{(N)}(f^{(N)}\otimes g^{(N)})}\phi^{(k)},\psi^{(k)}}_{\Sc'(\R^k)-\Sc(\R^k)} \\
&= \ipp*{A^{(N)}f^{(N)},\psi^{(k)}\otimes \ipp{g^{(N)},\phi^{(k)}}_{\Sc_{\ux_k}'(\R^k)-\Sc_{\ux_k}(\R^k)}}_{\Sc'(\R^N)-\Sc(\R^N)},
\end{split}
\end{equation}
for all $\phi^{(k)},\psi^{(k)}\in\Sc(\R^k)$.
\end{prop}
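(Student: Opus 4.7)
The plan is to construct $\Tr_{k+1,\ldots,N}$ directly at the level of Schwartz kernels and then verify that the resulting map has the prescribed action on pure tensors, with uniqueness following from a standard density argument. By the Schwartz kernel theorem, I identify
\[
\L(\Sc(\R^{N}),\Sc'(\R^{N}))\cong \Sc'(\R^{2N}),\quad \L(\Sc'(\R^{N}),\Sc(\R^{N}))\cong \Sc(\R^{2N}),\quad \L(\Sc(\R^{k}),\Sc'(\R^{k}))\cong \Sc'(\R^{2k}),
\]
and write $K_A, K_\gamma$ for the kernels of $A^{(N)},\gamma^{(N)}$. Formally, the operator $A^{(N)}\gamma^{(N)}$ has kernel $\int K_A(\ux_N;\uy_N)K_\gamma(\uy_N;\ux_N')\,d\uy_N$, and taking a partial trace amounts to identifying and integrating out the last $N-k$ pairs of variables. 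This motivates defining, for each $\varphi\in\Sc(\R^{2k})$ and $K_\gamma\in\Sc(\R^{2N})$, the auxiliary function
\[
\Psi(\varphi,K_\gamma)(\ux_N;\uy_N) \coloneqq \int_{\R^{k}} \varphi(\ux_k,\ux_k')\, K_\gamma(\uy_N;\ux_k',\ux_{k+1;N})\, d\ux_k'.
\]

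First, I would verify that the assignment $(\varphi,K_\gamma)\mapsto \Psi(\varphi,K_\gamma)$ is a continuous bilinear map $\Sc(\R^{2k})\times \Sc(\R^{2N})\to\Sc(\R^{2N})$; this is straightforward from the seminorm estimates for the Schwartz space and the fact that the relevant integration is over a finite-dimensional Euclidean space with Schwartz-class integrand. With $\Psi$ in hand, I define the candidate for the kernel $K_T$ of $\Tr_{k+1,\ldots,N}(A^{(N)}\gamma^{(N)})$ by
\[
\ipp{K_T,\varphi}_{\Sc'(\R^{2k})-\Sc(\R^{2k})} \coloneqq \ipp*{K_A,\Psi(\varphi,K_\gamma)}_{\Sc'(\R^{2N})-\Sc(\R^{2N})},\qquad \varphi\in\Sc(\R^{2k}).
\]
Continuity of $\varphi\mapsto\Psi(\varphi,K_\gamma)$ combined with continuity of $K_A$ as a functional on $\Sc(\R^{2N})$ ensures $K_T\in\Sc'(\R^{2k})$. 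Bilinearity is immediate. Separate continuity in $K_A$ (with $K_\gamma$ fixed) follows from continuity of the distributional pairing in the distribution slot, and separate continuity in $K_\gamma$ (with $K_A$ fixed) follows from the continuity of $\Psi(\varphi,\cdot)$ for each $\varphi$, combined with the standard equivalence between continuity of bilinear maps into $\Sc'$ and continuity of the associated maps between test function spaces.

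Next I would verify the pure-tensor formula \eqref{eq:gpt_up}. For $\gamma^{(N)}=f^{(N)}\otimes g^{(N)}$, $K_\gamma(\uy_N;\ux_N')=f^{(N)}(\uy_N)g^{(N)}(\ux_N')$, and so
\[
\Psi(\psi^{(k)}\otimes\phi^{(k)},K_\gamma)(\ux_N;\uy_N)
= f^{(N)}(\uy_N)\,\psi^{(k)}(\ux_k)\,\paren*{\int_{\R^k} \phi^{(k)}(\ux_k')\, g^{(N)}(\ux_k',\ux_{k+1;N})\, d\ux_k'}.
\]
Pairing this with $K_A$ and using the Schwartz kernel theorem to rewrite $\ipp{K_A,\psi^{(k)}\otimes h}=\ipp{A^{(N)}f^{(N)},\psi^{(k)}\otimes h}$ for suitable $h\in\Sc(\R^{N-k})\hat\otimes\Sc(\R^N)$ yields precisely the claimed identity. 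By bilinearity, this identifies $\Tr_{k+1,\ldots,N}(A^{(N)}(f^{(N)}\otimes g^{(N)}))$ with the partial-trace formula \eqref{eq:gpt_up} interpreted via its kernel.

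Finally, for uniqueness I would argue that any bilinear separately continuous map with the prescribed action on pure tensors $f^{(N)}\otimes g^{(N)}$ must agree with the construction above: finite linear combinations of pure tensors are dense in $\Sc(\R^{2N})\cong \L(\Sc'(\R^N),\Sc(\R^N))$, so by separate continuity the value of the map is uniquely determined in the second argument once the first is fixed; pairing the resulting element of $\Sc'(\R^{2k})$ against $\varphi\in\Sc(\R^{2k})$ and again invoking density of pure tensors in $\Sc(\R^{2k})$ pins down the full map. The main technical hurdle is verifying the separate continuity claim in $K_\gamma$: this requires tracking Schwartz seminorms through an integration against a distribution, but this is handled by noting that the bounded subsets of $\Sc(\R^{2N})$ map to bounded subsets of $\Sc(\R^{2N})$ under $K_\gamma\mapsto \Psi(\varphi,K_\gamma)$ (for $\varphi$ fixed), which suffices since $\L(\Sc,\Sc')$ is equipped with the topology of uniform convergence on bounded sets.
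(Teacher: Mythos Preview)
Your proposal is correct and is essentially the paper's own argument: both construct the partial trace by building an auxiliary Schwartz function from the test data and $K_\gamma$ (your $\Psi(\varphi,K_\gamma)$ is exactly the paper's trilinear kernel $K_{f^{(k)},g^{(k)},\gamma^{(N)}}$ specialized to $\varphi=g^{(k)}\otimes f^{(k)}$, up to a transpose), pair it with $K_A$, and then deduce uniqueness from density of pure tensors in $\Sc(\R^{2N})$ together with separate continuity. The only cosmetic difference is that you test against a general $\varphi\in\Sc(\R^{2k})$ from the outset while the paper tests against pure tensors $f^{(k)}\otimes g^{(k)}$, and the paper phrases uniqueness via the nuclear series decomposition of $\gamma^{(N)}$ rather than density of finite sums, but these are equivalent.
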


\begin{remark}
Our notation $\Tr_{k+1, \ldots, N}$ implies a partial trace over the variables with indices belonging to the index set $\{ i \,:\, k+1 \leq i \leq N\}$. To alleviate some notational complications, we will use the convention that if the index set of the partial trace is empty, we do not take a partial trace.
\end{remark}

\begin{proof}
We first show uniqueness. Fix $N\in\N$ and $k\in\{0,\ldots,N-1\}$. Fix $A^{(N)}\in\L(\Sc(\R^{N}),\Sc'(\R^{N}))$. Suppose that there are two maps $\Tr_{k+1,\ldots,N}$ and $\widehat{\Tr}_{k+1,\ldots,N}$ satisfying \cref{eq:gpt_up}. Since every element $\gamma^{(N)}\in\L(\Sc'(\R^{N}),\Sc(\R^{N}))$ is of the form
\begin{equation}
\gamma^{(N)} = \sum_{j=1}^{\infty}\lambda_{j} f_{j}^{(k)}\otimes f_{j}^{(N-k)}\otimes g_{j}^{(k)}\otimes g_{j}^{(N-k)},
\end{equation}
where $\{\lambda_{j}\}_{j\in\N}\in\ell^{1}$ and $f_{j}^{(k)},g_{j}^{(k)}$ and $f_{j}^{(N-k)},g_{j}^{(N-k)}$ are sequences converging to zero in $\Sc(\R^{k})$ and $\Sc(\R^{N-k})$, respectively. Since the partial sums converge in $\L(\Sc'(\R^{N}), \Sc(\R^{N}))$, we have by separate continuity that
\begin{align}
\Tr_{k+1,\ldots,N}\paren*{A^{(N)}\gamma^{(N)}} &= \sum_{j=1}^{\infty}\lambda_{j} \Tr_{k+1,\ldots,N}\paren*{A^{(N)}\paren*{f_{j}^{(k)}\otimes f_{j}^{(N-k)}\otimes g_{j}^{(k)} \otimes g_{j}^{(N-k)}}} \nonumber\\
&=\sum_{j=1}^{\infty}\lambda_{j} \widehat{\Tr}_{k+1,\ldots,N}\paren*{A^{(N)}\paren*{f_{j}^{(k)}\otimes f_{j}^{(N-k)}\otimes g_{j}^{(k)} \otimes g_{j}^{(N-k)}}} \nonumber\\
&=\widehat{\Tr}_{k+1,\ldots,N}\paren*{A^{(N)}\gamma^{(N)}},
\end{align}
which completes the proof of uniqueness.

We now prove existence. Let $N,k$ and $A^{(N)}$ be fixed as above. For $f^{(k)}, g^{(k)}\in\Sc(\R^k)$ and $\gamma^{(N)}\in \L(\Sc'(\R^N),\Sc(\R^N))$, we define the integral kernel
\begin{equation}
K_{f^{(k)}, g^{(k)}, \gamma^{(N)}}(\ux_N;\ux_N') \coloneqq g^{(k)}(\ux_k')\int_{\R^k}d\ul{y}_k \gamma^{(N)}(\ux_N; \ul{y}_k, \ux_{k+1;N}') f^{(k)}(\ul{y}_k), \qquad (\ux_N,\ux_N')\in\R^{2N}.
\end{equation}
It is evident that $K_{f^{(k)}, g^{(k)}, \gamma^{(N)}} \in \Sc(\R^{2N})$. Moreover, it is straightforward to check that the trilinear map
\begin{equation}
\Sc(\R^k)\times \Sc(\R^k)\times\Sc(\R^{2N}) \rightarrow \Sc(\R^{2N}), \qquad (f^{(k)}, g^{(k)},\gamma^{(N)}) \mapsto K_{f^{(k)}, g^{(k)}, \gamma^{(N)}}
\end{equation}
is continuous, where we abuse notation by using $\gamma^{(N)}$ to denote the Schwartz kernel as well as the operator. Therefore by the Schwartz kernel theorem and the fact that $A^{(N)}\in \L(\Sc(\R^N),\Sc'(\R^N))$ by assumption, for fixed $f^{(k)}\in\Sc(\R^k)$, the map
\begin{equation}
\Sc(\R^k) \rightarrow \C, \qquad g^{(k)} \mapsto \ipp*{K_{A^{(N)}}, K_{f^{(k)}, g^{(k)}, \gamma^{(N)}}^t}_{\Sc'(\R^{2N})-\Sc(\R^{2N})}
\end{equation}
defines an element of $\Sc'(\R^k)$ and the map
\begin{equation}
\Sc(\R^k) \rightarrow \Sc'(\R^k), \qquad f^{(k)} \mapsto \ipp*{K_{A^{(N)}}, K_{f^{(k)}, \cdot, \gamma^{(N)}}^t}_{\Sc'(\R^{2N})-\Sc(\R^{2N})}
\end{equation}
is continuous. We therefore define $\Tr_{k+1,\ldots,N}(A^{(N)}\gamma^{(N)})$ to be the element of $\L(\Sc(\R^k),\Sc'(\R^k))$ given by
\begin{equation}
\ipp*{\Tr_{k+1,\ldots,N}(A^{(N)}\gamma^{(N)})f^{(k)}, g^{(k)}}_{\Sc'(\R^k)-\Sc(\R^k)} \coloneqq \ipp*{K_{A^{(N)}}, K_{f^{(k)}, g^{(k)}, \gamma^{(N)}}^t}_{\Sc'(\R^{2N})-\Sc(\R^{2N})},
\end{equation}
which is evidently bilinear in $(A^{(N)},\gamma^{(N)})$.

It remains for us to prove separate continuity. Implicit in our work in the preceding paragraph is continuity in the second entry for fixed $A^{(N)}$. Continuity in the first entry for fixed $\gamma^{(N)}\in\L(\Sc'(\R^{N}),\Sc(\R^{N}))$ then follows by duality.
\end{proof}

\subsection{Contractions and the ``good mapping property''}\label{ssec:GMP}
Given $A^{(i)}\in \L(\Sc(\R^{i}),\Sc'(\R^{i}))$, an integer $k\geq i$, and a cardinality-$i$ subset $\{\ell_{1},\ldots,\ell_{i}\} \subset \N_{\leq k}$, we want to define to an operator acting only on the variables associated to $\{\ell_{1},\ldots,\ell_{i}\}$. We have the following result.

\begin{prop}[$k$-particle extensions]
\label{prop:ext_k}
There exists a unique $A_{(\ell_{1},\ldots,\ell_{i})}^{(i)}\in \L(\Sc(\R^{k}),\Sc'(\R^{k}))$, which satisfies
\begin{equation}\label{eq:op_index_notation}
A_{(\ell_{1},\ldots,\ell_{i})}^{(i)}(f_{1}\otimes\cdots\otimes f_{k})(\ux_{k}) = A^{(i)}(f_{\ell_{1}}\otimes\cdots\otimes f_{\ell_{i}})(x_{\ell_{1}},\ldots,x_{\ell_{i}}) \cdot \biggl(\prod_{\ell\in\N_{\leq k}\setminus\{\ell_{1},\ldots,\ell_{i}\}} f_{\ell}(x_{\ell})\biggr)
\end{equation}
in the sense of tempered distributions.
\end{prop}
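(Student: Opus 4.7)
My plan is to build $A^{(i)}_{(\ell_1,\ldots,\ell_i)}$ in two steps: first handle the canonical index tuple $(1,\ldots,i)$ via a tensor-product / Schwartz-kernel argument, and then conjugate by the permutation $\pi_{\ell_1\cdots\ell_i}\in\Ss_k$ introduced in \eqref{f_perm} to pass to the general case. For the canonical case, the Schwartz kernel theorem provides a unique $K_{A^{(i)}}\in\Sc'(\R^{2i})$ corresponding to $A^{(i)}$, and I would set
\begin{equation*}
K_{A^{(i)}_{(1,\ldots,i)}}(\ux_k;\ux_k') \coloneqq K_{A^{(i)}}(\ux_i;\ux_i')\otimes \prod_{j=i+1}^{k}\delta(x_j-x_j') \in \Sc'(\R^{2k}),
\end{equation*}
where the tensor product of distributions is well-defined by nuclearity. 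Running the Schwartz kernel theorem in reverse produces an element of $\L(\Sc(\R^k),\Sc'(\R^k))$, which I define to be $A^{(i)}_{(1,\ldots,i)}$. Evaluated on a pure tensor $f_1\otimes\cdots\otimes f_k$, a direct distributional computation against a test function $g_1\otimes\cdots\otimes g_k$, using the Fubini-type property of the tensor product of distributions, yields exactly the right-hand side of \eqref{eq:op_index_notation}.

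For general $(\ell_1,\ldots,\ell_i)\in P_i^k$, I would set
\begin{equation*}
A^{(i)}_{(\ell_1,\ldots,\ell_i)} \coloneqq \pi_{\ell_1\cdots\ell_i}^{-1}\circ A^{(i)}_{(1,\ldots,i)} \circ \pi_{\ell_1\cdots\ell_i},
\end{equation*}
where $\pi\in\Ss_k$ acts on $\Sc(\R^k)$ by $f\mapsto f\circ\pi$ and on $\Sc'(\R^k)$ by the dual action. Both actions are continuous linear automorphisms, so the composition is a well-defined element of $\L(\Sc(\R^k),\Sc'(\R^k))$. Plugging a pure tensor into this composition and using the formula already established for the canonical case, the $\pi$-action simply shuffles the indices of the $f_j$'s, giving exactly \eqref{eq:op_index_notation} after relabeling. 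Uniqueness follows from the density of finite linear combinations of pure tensors in $\Sc(\R^k)$, another consequence of nuclearity: any two operators in $\L(\Sc(\R^k),\Sc'(\R^k))$ which agree on such tensors must coincide on all of $\Sc(\R^k)$.

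The only genuinely subtle point is the tensor product $A^{(i)}\otimes\mathrm{Id}_{k-i}$, since $A^{(i)}$ lands in distributions rather than Schwartz functions; I would dispatch this by working at the level of Schwartz kernels as above, which sidesteps the need to compose with $\mathrm{Id}_{k-i}$ as an operator on distribution-valued functions. Everything else is bookkeeping with permutations and pure tensors.
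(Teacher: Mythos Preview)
Your proposal is correct and follows essentially the same two-step strategy as the paper: establish the canonical case $(\ell_1,\ldots,\ell_i)=(1,\ldots,i)$ first, then conjugate by a permutation, with uniqueness via density of pure tensors. The only difference is cosmetic: the paper invokes the universal property of the tensor product to define $A^{(i)}\otimes Id_{k-i}$ directly, whereas you work at the level of Schwartz kernels by tensoring $K_{A^{(i)}}$ with a product of deltas---these are equivalent realizations of the same operator, and the paper also checks (as you should) that the result is independent of the choice of permutation $\pi$ with $\pi(\ell_j)=j$, though your choice of the specific $\pi_{\ell_1\cdots\ell_i}$ makes this point moot for well-definedness.
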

\begin{proof}
We first consider the case $(\ell_{1},\ldots,\ell_{i})=(1,\ldots,i)$. By the universal property of the tensor product, there exists a unique continuous linear map
\begin{equation}
A_{(1,\ldots,i)}^{(i)} \coloneqq A^{(i)}\otimes Id_{k-i}: \Sc(\R^{i}) \hat{\otimes} \Sc(\R^{k-i}) \rightarrow \Sc'(\R^{i}) \hat{\otimes} \Sc'(\R^{k-i}),
\end{equation}
satisfying
\begin{equation}
A_{(1,\ldots,i)}^{(i)}(f^{(i)}\otimes g^{(k-i)})(\ux_{k}) = A^{(i)}(f^{(i)})(\ux_{i}) g^{(k-i)}(\ux_{k-i}), \qquad \forall f\in\Sc(\R^{i}), g\in \Sc(\R^{k-i}).
\end{equation}
For the general cases where $(\ell_{1},\ldots,\ell_{i})\neq (1,\ldots,i)$, we set
\begin{equation}
A_{(\ell_{1},\ldots,\ell_{i})}^{(i)} \coloneqq \pi^{-1}\circ A_{(1,\ldots,i)}^{(i)} \circ \pi,
\end{equation}
where $\pi\in\Ss_{k}$ is any permutation such that $\pi(\ell_{j})=j$ for $j\in\N_{\leq i}$ and we let $\pi$ act on measurable functions by \eqref{eq:pi_func_def} and on distributions by duality. Let $(\ell_{1}^{*},\ldots,\ell_{k-i}^{*})$ denote the increasing ordering of the elements of the set $\N_{\leq k}\setminus\{\ell_{1},\ldots,\ell_i\}$. Then for test functions $f_1,\ldots, f_k, g_{1},\ldots,g_{k} \in\Sc(\R)$, we have
\begin{align}
&\ipp*{(\pi^{-1}\circ A_{(1,\ldots,i)}^{(i)} \circ \pi)(\bigotimes_{\ell=1}^{k}f_{\ell}), \bigotimes_{\ell=1}^{k}g_{\ell}}_{\Sc'(\R^{i})-\Sc(\R^{i})} \nonumber\\
&= \ipp*{A^{(i)}( \bigotimes_{j=1}^{i}f_{\ell_{j}}) \otimes \bigotimes_{j=1}^{k-i}f_{\ell_{j}^{*}}, (\bigotimes_{j=1}^{k}g_{j})\circ\pi}_{\Sc'(\R^{k})-\Sc(\R^{k})} \nonumber\\
&= \ipp*{A^{(i)}(\bigotimes_{j=1}^{i}f_{\ell_{j}}), \bigotimes_{j=1}^{i}g_{\ell_{j}}}_{\Sc'(\R^{i})-\Sc(\R^{i})} \cdot \ipp*{\bigotimes_{j=1}^{k-i}f_{\ell_{j}^{*}}, \bigotimes_{j=1}^{k-i} g_{\ell_{j}^{*}} }_{\Sc'(\R^{k-i})-\Sc(\R^{k-i})} \nonumber\\
&= \ipp*{A^{(i)}(\bigotimes_{j=1}^{i}f_{\ell_{j}}), \bigotimes_{j=1}^{i}g_{\ell_{j}}}_{\Sc'(\R^{i})-\Sc(\R^{i})} \cdot \prod_{j\in\N_{\leq k}\setminus \{\ell_{1},\ldots,\ell_i\}} \ipp{f_{j},g_{j}}_{\Sc'(\R)-\Sc(\R)},
\end{align}
where the penultimate equality follows from the definition of the tensor product of two distributions. By the density of finite linear combinations of pure tensors in $\Sc(\R^{k})$, it follows from the preceding equality that our definition \eqref{eq:op_coord} is independent of the choice of permutation $\pi\in\Ss_{k}$ satisfying $\pi(\ell_{j}) = j$ for every $j\in\N_{\leq i}$.
\end{proof}

An important property of the above $k$-particle extension is that it preserves self- and skew-adjointness.
\begin{lemma}\label{lem:ext_sa}
Let $i\in\N$, let $k\in\N_{\geq i}$, and let $A^{(i)}\in\L(\Sc(\R^{k}),\Sc'(\R^{i}))$ be self-adjoint (resp skew-adjoint). Then for any cardinality-$i$ subset $\{\ell_{1},\ldots,\ell_{i}\}\subset\N_{\leq k}$, we have that $A_{(\ell_{1},\ldots,\ell_{i})}^{(i)}$ is self-adjoint (resp. skew-adjoint).
\end{lemma}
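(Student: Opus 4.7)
The plan is to split the verification into two steps: first, show that the ``canonical'' extension $A_{(1,\ldots,i)}^{(i)} = A^{(i)} \otimes Id_{k-i}$ preserves self-/skew-adjointness, and then show that conjugation by a permutation preserves these properties, which extends the result to arbitrary tuples $(\ell_1,\ldots,\ell_i)$. This division mirrors the two-stage definition of the $k$-particle extension given in \cref{prop:ext_k}.

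For the canonical case, I would establish the tensor-product adjoint identity $(A^{(i)} \otimes Id_{k-i})^* = (A^{(i)})^* \otimes Id_{k-i}$. On pure tensors $f_1 \otimes f_2,\, g_1 \otimes g_2 \in \Sc(\R^i) \otimes \Sc(\R^{k-i})$, a direct computation using the factorization of the distributional pairing, the defining relation of the adjoint in \cref{lem:dvo_adj}, and the elementary identity $\overline{\ipp{f_2,\bar{g_2}}_{\Sc'-\Sc}} = \ipp{g_2,\bar{f_2}}_{\Sc'-\Sc}$ yields
\[
\overline{\ipp{(A^{(i)} \otimes Id_{k-i})(f_1 \otimes f_2), \overline{g_1 \otimes g_2}}_{\Sc'-\Sc}} = \ipp{((A^{(i)})^* \otimes Id_{k-i})(g_1 \otimes g_2), \overline{f_1 \otimes f_2}}_{\Sc'-\Sc}.
\]
By density of finite linear combinations of pure tensors in $\Sc(\R^k) \cong \Sc(\R^i) \hat{\otimes} \Sc(\R^{k-i})$ and separate continuity of the distributional pairing, the identity extends to arbitrary $f, g \in \Sc(\R^k)$, and the uniqueness clause in \cref{lem:dvo_adj} then yields the claimed operator identity. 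Substituting $(A^{(i)})^* = \pm A^{(i)}$ shows that $A_{(1,\ldots,i)}^{(i)}$ is self- (respectively skew-)adjoint.

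For the reduction to the canonical case, recall that $A_{(\ell_1,\ldots,\ell_i)}^{(i)} = \pi^{-1} \circ A_{(1,\ldots,i)}^{(i)} \circ \pi$ for an appropriate $\pi \in \Ss_k$, where $\pi$ acts on $\Sc(\R^k)$ by composition and on $\Sc'(\R^k)$ by the duality rule $\ipp{\pi^{-1} u, \phi}_{\Sc'-\Sc} = \ipp{u, \pi \phi}_{\Sc'-\Sc}$, which is simply a change of variables when $u$ is a function. A short computation from the definition of the adjoint, using this duality rule and the elementary identity $\pi \bar\phi = \overline{\pi \phi}$ (i.e., that the permutation action commutes with complex conjugation), gives
\[
(\pi^{-1} \circ B \circ \pi)^* = \pi^{-1} \circ B^* \circ \pi, \qquad \forall B \in \L(\Sc(\R^k), \Sc'(\R^k)).
\]
Applying this with $B = A_{(1,\ldots,i)}^{(i)}$ and invoking the first step completes the proof.

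The argument is essentially routine, and no real obstacle is expected; the main item requiring care is the bookkeeping of the complex conjugation in the definition of the adjoint, which is handled cleanly by the fact that the permutation action commutes with complex conjugation. An alternative would be to invoke the composition rule \cref{lem:adj_comp} directly, but since $\pi$ maps $\Sc \to \Sc$ rather than $\Sc' \to \Sc'$, a direct calculation of $(A\pi)^*$ is still required, so I would prefer the route described above.
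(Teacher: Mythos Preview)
Your proposal is correct and follows essentially the same approach as the paper: verify the adjoint identity on pure tensors for the canonical extension $A^{(i)}\otimes Id_{k-i}$, extend by density, and then handle general tuples via the permutation conjugation. The paper is terser---it reduces skew-adjoint to self-adjoint by replacing $A^{(i)}$ with $iA^{(i)}$ and dismisses the permutation step as ``considerations of symmetry''---whereas you treat both cases at once and spell out the identity $(\pi^{-1}\circ B\circ\pi)^*=\pi^{-1}\circ B^*\circ\pi$, but the substance is the same.
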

\begin{proof}
Replacing $A^{(i)}$ by $iA^{(i)}$, it suffices to consider the self-adjoint case. By considerations of symmetry, it suffices to consider the case $(\ell_{1},\ldots,\ell_{i}) = (1,\ldots,i)$. The desired conclusion then follows from the fact that
\begin{align}
\ip{A_{(1,\ldots,i)}^{(i)}(f^{(i)}\otimes f^{(k-i)})}{g^{(i)}\otimes g^{(k-i)}} &= \ip{Af^{(i)}}{g^{(i)}} \ip{f^{(k-i)}}{g^{(k-i)}} \nonumber\\
&= \ip{f^{(i)}}{A^{(i)}g^{(i)}}\ip{f^{(k-i)}}{g^{(k-i)}} \nonumber\\
&= \ip{f^{(i)}\otimes f^{(k-i)}}{A_{(1,\ldots,i)}^{(i)}(g^{(i)}\otimes g^{(k-i)})},
\end{align}
for all $(f^{(i)},f^{(k-i)},g^{(i)},g^{(k-i)})\in (\Sc(\R^{i}) \times \Sc(\R^{k-i}))^{2}$, linearity, and density of linear combinations of such pure tensors in $\Sc(\R^{k})$.
\end{proof}

Now let $i,j\in\N$, let $k\coloneqq i+j-1$, and let $(\alpha,\beta)\in\N_{\leq i}\times\N_{\leq j}$. To construct a Lie bracket in \cref{ssec:geo_LA}, we need to give meaning to the composition
\begin{equation}
\label{eq:comp_wd}
A_{(1,\ldots,i)}^{(i)}B_{(i+1,\ldots,i+\beta-1,\alpha,i+\beta,\ldots,k)}^{(j)}
\end{equation}
as an operator in $\L(\Sc(\R^{k}),\Sc'(\R^{k}))$, when $A^{(i)}\in\L(\Sc(\R^{i}),\Sc'(\R^{i}))$ and $B^{(j)}\in \L(\Sc(\R^{j}),\Sc'(\R^{j}))$. 

\begin{remark}\label{not_well_def}
Without further conditions on $A^{(i)}$ or $B^{(j)}$, the composition \cref{eq:comp_wd} may not be well-defined. Indeed, consider the operator $A\in \L(\Sc(\R^{2}),\Sc'(\R^{2}))$ defined by
\begin{equation}
Af \coloneqq \delta_{0} f, \qquad \forall f\in\Sc(\R^{2}),
\end{equation}
where $\delta_{0}$ denotes the Dirac mass about the origin in $\R^{2}$. Then for $f,g\in\Sc(\R)$,
\begin{equation}
\int_{\R}dx_{2}(Af^{\otimes 2})(x_{1},x_{2})g^{\otimes 2}(x_{1}',x_{2}) = f(0)g(0) f(x_{1})g(x_{1}')\delta_{0}(x_{1}) \in\Sc'(\R)\otimes\Sc(\R).
\end{equation}
It is easy to show that $f\delta_{0}\in\Sc'(\R)$ does not coincide with a Schwartz function.
\end{remark}

This issue leads us to a property we call the \emph{good mapping property}. The intuition for the good mapping property is the basic fact from distribution theory that the convolution of a distribution of compact support with a Schwartz function is again a Schwartz function. We recall the definition of the good mapping property here.

\gmp*

\begin{remark}\label{rem:gmp}
By tensoring with identity, we see that if $A^{(i)}$ has the good mapping property, then $A_{(\ell_{1},\ldots,\ell_{i})}^{(i)}$ has the good mapping property, where $i$ is replaced by $k$ and $\alpha\in\N_{\leq k}$.
\end{remark}

\subsection{The subspace $\L_{gmp}(\Sc(\R^{k}),\Sc'(\R^{k}))$}

In this subsection, we expand more on $\L_{gmp}(\Sc(\R^{k}),\Sc'(\R^{k}))$ as a topological vector subspace of $\L(\Sc(\R^{k}),\Sc'(\R^{k}))$ and more on the identification of its topological dual. 

\begin{lemma}\label{gmp_dense}
$\L_{gmp}(\Sc(\R^{k}),\Sc'(\R^{k}))$ is a dense subspace of $\L(\Sc(\R^{k}),\Sc'(\R^{k}))$.
\end{lemma}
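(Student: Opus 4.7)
The plan is to exhibit a concretely understood subspace of $\L(\Sc(\R^k),\Sc'(\R^k))$ which is already known to be dense and which is contained in $\L_{gmp}(\Sc(\R^k),\Sc'(\R^k))$. The natural candidate is the space of operators whose Schwartz kernels are themselves Schwartz functions, i.e.\ the image under the Schwartz kernel theorem of $\Sc(\R^{2k}) \subset \Sc'(\R^{2k})$. Recall that the Schwartz kernel theorem provides an isomorphism
\begin{equation*}
\L(\Sc(\R^k),\Sc'(\R^k)) \cong \Sc'(\R^{2k}),
\end{equation*}
where the right-hand side carries the strong dual topology, and under this isomorphism the subspace of operators with Schwartz kernels corresponds to $\L(\Sc'(\R^k),\Sc(\R^k)) \cong \Sc(\R^{2k})$. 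Since $\Sc(\R^{2k})$ is dense in $\Sc'(\R^{2k})$ in the strong dual topology (e.g.\ by mollification and truncation), the operators with Schwartz kernels form a dense subspace of $\L(\Sc(\R^k),\Sc'(\R^k))$ in the topology of uniform convergence on bounded sets.

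Thus the proof reduces to the single verification
\begin{equation*}
\L(\Sc'(\R^k),\Sc(\R^k)) \subset \L_{gmp}(\Sc(\R^k),\Sc'(\R^k)).
\end{equation*}
Let $A^{(k)} \in \L(\Sc'(\R^k),\Sc(\R^k))$ with Schwartz kernel $K \in \Sc(\R^{2k})$, and fix $\alpha\in\N_{\leq k}$. For $f^{(k)},g^{(k)}\in\Sc(\R^{k})$, the bilinear map of \cref{def:gmp} is given by the iterated integral
\begin{equation*}
\int_{\R^{k-1}}dx_1\cdots \widehat{dx_\alpha}\cdots dx_k \paren*{\int_{\R^k} d\ul{y}_k\, K(\ul{x}_k;\ul{y}_k) f^{(k)}(\ul{y}_k)} g^{(k)}(\ul{x}_{\alpha-1}, x_\alpha', \ul{x}_{\alpha+1;k}),
\end{equation*}
which, by the distributional Fubini theorem for Schwartz class kernels, coincides with evaluating a suitable Schwartz function in the variables $(x_\alpha, x_\alpha')$ and then pairing in the $x_\alpha$ variable. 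Since all operations involved (multiplication, integration against a Schwartz kernel, and integration against a Schwartz function) are continuous bilinear maps between Schwartz spaces, the assignment $(f^{(k)},g^{(k)}) \mapsto \Phi(f^{(k)},g^{(k)}) \in \Sc(\R^2)$ is continuous bilinear, which establishes the good mapping property.

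Combining the two observations, $\L(\Sc'(\R^k),\Sc(\R^k))$ is a dense subspace of $\L(\Sc(\R^k),\Sc'(\R^k))$ contained in $\L_{gmp}(\Sc(\R^k),\Sc'(\R^k))$, so \emph{a fortiori} the latter is dense. The only nontrivial step in the argument is the verification of the good mapping property for Schwartz kernel operators, but this is a direct application of the distributional Fubini theorem together with the continuity of partial integration of Schwartz functions, so no serious obstacle is expected.
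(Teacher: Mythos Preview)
Your proposal is correct and follows essentially the same approach as the paper: both use the Schwartz kernel isomorphism $\L(\Sc(\R^k),\Sc'(\R^k)) \cong \Sc'(\R^{2k})$, invoke the density of $\Sc(\R^{2k})$ in $\Sc'(\R^{2k})$, and then note that operators with Schwartz kernels have the good mapping property. The only minor difference is that the paper phrases the last step as ``the integral operator defined by a Schwartz kernel is a continuous endomorphism of $\Sc(\R^k)$, hence belongs to $\L_{gmp}$,'' whereas you verify the good mapping property directly from the kernel representation; both are equally valid.
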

\begin{proof}
We first show density, beginning by recalling that $\L_{gmp}(\Sc(\R^k),\Sc'(\R^k))$ is endowed with the subspace topology induced by $\L(\Sc(\R^k),\Sc'(\R^k))$. Let $A^{(k)}\in\L(\Sc(\R^{k}),\Sc'(\R^{k}))$, and let $K_{A^{(k)}}\in\Sc'(\R^{2k})$ denote the Schwartz kernel of $A^{(k)}$. Since $\Sc(\R^{2k})$ is dense in $\Sc'(\R^{2k})$, given any bounded subset $\mathfrak{R}\subset \Sc(\R^{2k})$ and $\varepsilon>0$, there exists $K_{\mathfrak{R},\varepsilon}\in \Sc(\R^{2k})$ such that
\begin{equation}
\sup_{\tilde{K}\in \mathfrak{R}} \left|\ipp{K_{A^{(k)}}-K_{\mathfrak{R},\varepsilon},\tilde{K}}_{\Sc'(\R^{2k})-\Sc(\R^{2k})}\right| <  \varepsilon.
\end{equation}
Since the integral operator defined by the kernel $K_{\mathfrak{R},\varepsilon}$ is a continuous endomorphism of $\Sc(\R^{k})$, it belongs to $\L_{gmp}(\Sc(\R^{k}),\Sc'(\R^{k}))$. Since any bounded subset $\mathfrak{S}\subset \Sc(\R^{k})$ induces a bounded subset $\mathfrak{R} \subset \Sc(\R^{2k})$ by
\begin{equation}
\mathfrak{R} \coloneqq \mathfrak{S}\otimes\ol{\mathfrak{S}} \coloneqq \{f\otimes \bar{g} : f,g\in\mathfrak{S}\},
\end{equation}
we conclude that given any $\varepsilon>0$ and bounded subset $\mathfrak{S}\subset\Sc(\R^{k})$, there exists an element $A_{\mathfrak{S},\varepsilon}^{(k)}\in \L(\Sc'(\R^{k}),\Sc(\R^{k}))$ such that
\begin{equation}
\sup_{f,g\in\mathfrak{S}} \left|\ip{(A^{(k)}-A_{\mathfrak{S},\varepsilon}^{(k)})f}{g}\right| <\varepsilon.
\end{equation}
Since the preceding seminorms generate the topology for $\L(\Sc(\R^{k}),\Sc'(\R^{k}))$, the proof of density is complete.
\end{proof}

Using the preceding lemma, we can show that the strong dual of the subspace $\L_{gmp}(\Sc(\R^{k}),\Sc'(\R^{k}))$ is isomorphic to the space of linear operators with Schwartz-class kernels. 

\begin{lemma}\label{lem:gmp_dual}
The space $\L_{gmp}(\Sc(\R^{k}),\Sc'(\R^{k}))^{*}$ endowed with the strong dual topology is isomorphic to $\L(\Sc'(\R^{k}),\Sc(\R^{k}))$.
\end{lemma}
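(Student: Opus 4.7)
The plan is to define the candidate isomorphism via the generalized trace pairing, use \cref{gmp_dense} together with Hahn--Banach to obtain bijectivity, and then verify that the map is a topological isomorphism.

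First, I would define
\begin{equation*}
\Phi:\L(\Sc'(\R^{k}),\Sc(\R^{k})) \rightarrow \L_{gmp}(\Sc(\R^{k}),\Sc'(\R^{k}))^{*}, \qquad \Phi(\gamma)(A) \coloneqq \Tr_{1,\ldots,k}(A\gamma),
\end{equation*}
where the right-hand side is the generalized trace from \cref{def:gen_trace}. That $\Phi(\gamma)$ is a continuous linear functional on $\L_{gmp}(\Sc(\R^k),\Sc'(\R^k))$ follows from the separate continuity of the generalized trace (\cref{prop:gtr_prop}\ref{item:gtr_sc}) restricted to the subspace, and linearity of $\Phi$ is immediate from bilinearity of the trace.

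Next, I would prove injectivity. Suppose $\Phi(\gamma) = 0$, i.e. $\Tr(A\gamma) = 0$ for all $A \in \L_{gmp}$. Because, for $\gamma$ fixed, the map $A\mapsto \Tr(A\gamma)$ is continuous on the ambient space $\L(\Sc(\R^{k}),\Sc'(\R^{k}))$ by \cref{prop:gtr_prop}\ref{item:gtr_sc}, density of $\L_{gmp}$ in $\L(\Sc(\R^{k}),\Sc'(\R^{k}))$ from \cref{gmp_dense} upgrades the identity to $\Tr(A\gamma) = 0$ for every $A \in \L(\Sc(\R^{k}),\Sc'(\R^{k}))$. Via the Schwartz kernel theorem isomorphisms $\L(\Sc(\R^k),\Sc'(\R^k)) \cong \Sc'(\R^{2k})$ and $\L(\Sc'(\R^k),\Sc(\R^k)) \cong \Sc(\R^{2k})$, the trace becomes the canonical non-degenerate pairing $\ipp{\cdot,\cdot}_{\Sc'(\R^{2k})-\Sc(\R^{2k})}$, and non-degeneracy forces $\gamma = 0$.

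For surjectivity, given any $\ell \in \L_{gmp}(\Sc(\R^{k}),\Sc'(\R^{k}))^{*}$, the Hahn--Banach theorem for locally convex spaces provides a continuous linear extension $\tilde{\ell}$ to all of $\L(\Sc(\R^{k}),\Sc'(\R^{k}))$; by density (\cref{gmp_dense}) this extension is unique. Using the Schwartz kernel theorem to identify $\L(\Sc(\R^{k}),\Sc'(\R^{k})) \cong \Sc'(\R^{2k})$ with its strong dual topology, and the reflexivity of the Schwartz space (so that the strong dual of $\Sc'(\R^{2k})$ is $\Sc(\R^{2k}) \cong \L(\Sc'(\R^k),\Sc(\R^k))$), there exists a unique $\gamma \in \L(\Sc'(\R^k),\Sc(\R^k))$ such that $\tilde{\ell}(A) = \Tr(A\gamma)$ for all $A$, hence $\ell = \Phi(\gamma)$.

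Finally, I would verify that $\Phi$ is a homeomorphism when $\L_{gmp}^{*}$ carries the strong dual topology. Continuity of $\Phi$ reduces, for each bounded $\mathfrak{B} \subset \L_{gmp}$, to the estimate
\begin{equation*}
\sup_{A\in\mathfrak{B}} |\Tr(A\gamma)| = \sup_{A\in\mathfrak{B}} |\ipp{K_A, K_\gamma^{t}}_{\Sc'(\R^{2k})-\Sc(\R^{2k})}|,
\end{equation*}
which defines a continuous seminorm on $\Sc(\R^{2k}) \cong \L(\Sc'(\R^k),\Sc(\R^k))$ since $\{K_A : A\in \mathfrak{B}\}$ is a bounded subset of $\Sc'(\R^{2k})$. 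Continuity of the inverse is the place I expect to be most delicate: I would deduce it either by directly exhibiting, for each continuous seminorm on $\L(\Sc'(\R^k),\Sc(\R^k))$, an equivalent seminorm expressible as a strong-dual seminorm on $\L_{gmp}^{*}$ through the reflexive-space identifications above, or by invoking the open mapping theorem in a suitable category after verifying the barrelledness/Fr\'{e}chet-type hypotheses. The main obstacle is precisely this last topological step, since the ambient space $\L(\Sc(\R^k),\Sc'(\R^k))$ is not Fr\'{e}chet but rather a DF-space, so care is required to ensure that restriction to the dense subspace $\L_{gmp}$ does not enlarge the strong-dual topology.
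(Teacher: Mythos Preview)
Your approach is essentially the same as the paper's: define the candidate isomorphism via the trace pairing, use density (\cref{gmp_dense}) together with Hahn--Banach to get bijectivity, and then address the topological isomorphism. The paper organizes this slightly differently by factoring your map as $\iota^{*}\circ\Phi$, where $\Phi:\L(\Sc'(\R^{k}),\Sc(\R^{k}))\to \L(\Sc(\R^{k}),\Sc'(\R^{k}))^{*}$ is the canonical isomorphism and $\iota^{*}$ is the adjoint of the inclusion, but the content is identical.

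The one place you hesitate---continuity of the inverse---is exactly where the paper is more decisive: it simply observes that the \emph{domain} $\L(\Sc'(\R^{k}),\Sc(\R^{k}))\cong \Sc(\R^{2k})$ is a Fr\'echet space and invokes the open mapping theorem. Your worry that $\L(\Sc(\R^{k}),\Sc'(\R^{k}))$ is merely a DF-space is a red herring: that space plays the role of an intermediate object, not the domain of the map whose openness you need. Once you commit to the open mapping theorem with the Fr\'echet domain $\Sc(\R^{2k})$, the ``delicate'' step evaporates and your proof is complete.
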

\begin{proof}
Since the canonical embedding $\iota:\L_{gmp}(\Sc(\R^{k}),\Sc'(\R^{k}))\rightarrow \L(\Sc(\R^{k}),\Sc'(\R^{k}))$ is tautologically continuous, the adjoint map
\begin{equation}
\iota^{*}: \L(\Sc(\R^{k}),\Sc'(\R^{k}))^{*} \rightarrow \L_{gmp}(\Sc(\R^{k}),\Sc'(\R^{k}))^{*}
\end{equation}
is continuous. Now since $\L_{gmp}(\Sc(\R^{k}),\Sc'(\R^{k}))$ is dense in $\L(\Sc(\R^{k}),\Sc'(\R^{k}))$, any linear functional
\begin{equation}
\ell \in \L_{gmp}(\Sc(\R^{k}),\Sc'(\R^{k}))^{*}
\end{equation}
extends to a unique element $\tilde{\ell} \in \L(\Sc(\R^{k}),\Sc'(\R^{k}))^{*}$ by the Hahn-Banach theorem. Hence, $\iota^{*}$ is a continuous bijection. Since the domain of the canonical isomorphism
\begin{equation}
\Phi: \L(\Sc'(\R^{k}),\Sc(\R^{k})) \rightarrow \L(\Sc(\R^{k}),\Sc'(\R^{k}))^{*}
\end{equation}
is a Fr\'{e}chet space, it follows from the open mapping theorem that $\iota^{*}\circ\Phi$ is an isomorphism.
\end{proof}

\bibliographystyle{siam}
\bibliography{GPHam}

\begin{thebibliography}{10}

\bibitem{AM78}
{\sc R.~Abraham and J.~E. Marsden}, {\em Foundations of mechanics},
  Benjamin/Cummings Publishing Co., Inc., Advanced Book Program, Reading,
  Mass., 1978.
\newblock Second edition, revised and enlarged, With the assistance of Tudor
  Ra\c{t}iu and Richard Cushman.

\bibitem{ABGT2004}
{\sc R.~Adami, C.~Bardos, F.~Golse, and A.~Teta}, {\em Towards a rigorous
  derivation of the cubic {NLSE} in dimension one}, Asymptotic Analysis, 40
  (2004), pp.~93--108.

\bibitem{AGT2007}
{\sc R.~Adami, F.~Golse, and A.~Teta}, {\em Rigorous derivation of the cubic
  {NLS} in dimension one}, Journal of Statistical Physics, 127 (2007),
  pp.~1193--1220.

\bibitem{CHPS2015}
{\sc T.~Chen, C.~Hainzl, N.~Pavlovi{\'{c}}, and R.~Seiringer}, {\em
  {Unconditional uniqueness for the cubic {G}ross-{P}itaevskii hierarchy via
  quantum de {F}inetti}}, Comm. Pure Appl. Math., 68 (2015), pp.~1845--1884.

\bibitem{CP2011}
{\sc T.~Chen and N.~Pavlovi{\'{c}}}, {\em {The quintic {NLS} as the mean field
  limit of a boson gas with three-body interactions}}, J. Funct. Anal., 260
  (2011), pp.~959--997.

\bibitem{CM74}
{\sc P.~R. Chernoff and J.~E. Marsden}, {\em Properties of infinite dimensional
  {H}amiltonian systems}, Lecture Notes in Mathematics, Vol. 425,
  Springer-Verlag, Berlin-New York, 1974.

\bibitem{ESY2006}
{\sc L.~Erd{\"o}s, B.~Schlein, and H.-T. Yau}, {\em {Derivation of the
  {G}ross-{P}itaevskii hierarchy for the dynamics of {B}ose-{E}instein
  condensate}}, Comm. Pure Appl. Math., 59 (2006), pp.~1659--1741.

\bibitem{ESY2007}
\leavevmode\vrule height 2pt depth -1.6pt width 23pt, {\em {Derivation of the
  cubic non-linear {S}chr{\"{o}}dinger equation from quantum dynamics of
  many-body systems}}, Inven. Math., 167 (2007), pp.~515--614.

\bibitem{ESY2010}
\leavevmode\vrule height 2pt depth -1.6pt width 23pt, {\em {Derivation of the
  {G}ross-{P}itaevskii equation for the dynamics of {B}ose-{E}instein
  condensate}}, Ann. Math., 172 (2010), pp.~291--370.

\bibitem{FKSS2017}
{\sc J.~Fr\"{o}hlich, A.~Knowles, B.~Schlein, and V.~Sohinger}, {\em Gibbs
  measures of nonlinear {S}chr\"{o}dinger equations as limits of many-body
  quantum states in dimensions {$d \leqslant 3$}}, Comm. Math. Phys., 356
  (2017), pp.~883--980.

\bibitem{Hamilton1982}
{\sc R.~S. Hamilton}, {\em {The inverse function theorem of {Nash} and
  {Moser}}}, Bull. Am. Math. Soc., 7 (1982), pp.~65--222.

\bibitem{Horvath1966}
{\sc J.~Horv{\'a}th}, {\em Topological vector spaces and distributions}, no.~v.
  1 in Addison-Wesley series in mathematics, Addison-Wesley Pub. Co., 1966.

\bibitem{KM08}
{\sc S.~Klainerman and M.~Machedon}, {\em {On the uniqueness of solutions to
  the {G}ross-{P}itaevskii hierarchy}}, Comm. Math. Phys., 279 (2008),
  pp.~169--185.

\bibitem{KMich1997}
{\sc A.~Kriegl and P.~W. Michor}, {\em The convenient setting of global
  analysis}, vol.~53 of Mathematical Surveys and Monographs, American
  Mathematical Society, Providence, RI, 1997.

\bibitem{Lanford75}
{\sc O.~E. Lanford, III}, {\em Time evolution of large classical systems},
  (1975), pp.~1--111. Lecture Notes in Phys., Vol. 38.

\bibitem{Lanford76}
\leavevmode\vrule height 2pt depth -1.6pt width 23pt, {\em On a derivation of
  the {B}oltzmann equation},  (1976), pp.~117--137. Ast\'{e}risque, No. 40.

\bibitem{LNR2015}
{\sc M.~Lewin, P.~T. Nam, and N.~Rougerie}, {\em {Derivation of nonlinear
  {G}ibbs measures from many-body quantum mechanics}}, J. {\'{E}}c. polytech.
  Math., 2 (2015), pp.~65--115.

\bibitem{MW1983}
{\sc J.~Marsden and A.~Weinstein}, {\em Coadjoint orbits, vortices, and
  {C}lebsch variables for incompressible fluids}, vol.~7, 1983, pp.~305--323.
\newblock Order in chaos (Los Alamos, N.M., 1982).

\bibitem{MMW1984}
{\sc J.~E. Marsden, P.~J. Morrison, and A.~Weinstein}, {\em The {H}amiltonian
  structure of the {BBGKY} hierarchy equations}, in Fluids and plasmas:
  geometry and dynamics ({B}oulder, {C}olo., 1983), vol.~28 of Contemp. Math.,
  Amer. Math. Soc., Providence, RI, 1984, pp.~115--124.

\bibitem{MR2013}
{\sc J.~E. Marsden and T.~S. Ratiu}, {\em Introduction to mechanics and
  symmetry: a basic exposition of classical mechanical systems}, vol.~17,
  Springer Science \& Business Media, 2013.

\bibitem{MW82}
{\sc J.~E. Marsden and A.~Weinstein}, {\em The {H}amiltonian structure of the
  {M}axwell-{V}lasov equations}, Phys. D, 4 (1981/82), pp.~394--406.

\bibitem{MWRSS83}
{\sc J.~E. Marsden, A.~Weinstein, T.~Ratiu, R.~Schmid, and R.~G. Spencer}, {\em
  Hamiltonian systems with symmetry, coadjoint orbits and plasma physics}, in
  Proceedings of the {IUTAM}-{ISIMM} symposium on modern developments in
  analytical mechanics, {V}ol. {I} ({T}orino, 1982), vol.~117, 1983,
  pp.~289--340.

\bibitem{MNPRS2_2019}
{\sc D.~Mendelson, A.~Nahmod, N.~Pavlovi{\'{c}}, M.~Rosenzweig, and
  G.~Staffilani}, {\em Poisson commuting energies for a system of infinitely
  many bosons}, Preprint,  (2019).

\bibitem{Milnor1984}
{\sc J.~Milnor}, {\em {Remarks on infinite-dimensional Lie groups}}, in
  Relativ. groups Topol. 2, 1984.

\bibitem{Morrison80}
{\sc P.~J. Morrison}, {\em The {M}axwell-{V}lasov equations as a continuous
  {H}amiltonian system}, Phys. Lett. A, 80 (1980), pp.~383--386.

\bibitem{MG80}
{\sc P.~J. Morrison and J.~M. Greene}, {\em Noncanonical {H}amiltonian density
  formulation of hydrodynamics and ideal magnetohydrodynamics}, Phys. Rev.
  Lett., 45 (1980), pp.~790--794.

\bibitem{NST2014}
{\sc K.-H. Neeb, H.~Sahlmann, and T.~Thiemann}, {\em Weak poisson structures on
  infinite dimensional manifolds and hamiltonian actions}, in Lie Theory and
  Its Applications in Physics, V.~Dobrev, ed., Tokyo, 2014, Springer Japan,
  pp.~105--135.

\bibitem{Omori1979}
{\sc H.~Omori}, {\em Infinite-dimensional {L}ie groups}, vol.~158 of
  Translations of Mathematical Monographs, American Mathematical Society,
  Providence, RI, 1997.
\newblock Translated from the 1979 Japanese original and revised by the author.

\bibitem{Palais1997}
{\sc R.~Palais}, {\em The symmetries of solitons}, Bulletin of the American
  Mathematical Society, 34 (1997), pp.~339--403.

\bibitem{Rougerie2015}
{\sc N.~Rougerie}, {\em {De Finetti Theorems, Mean-Field Limits and
  Bose-Einstein Condensation}}.
\newblock Lectures notes from a course at the LMU, Munich. Translated and
  slightly expanded version of my cours Peccot, hal-01060125v4,
  arXiv:1409.1182., June 2015.

\bibitem{schlein_clay}
{\sc B.~Schlein}, {\em {Derivation of effective evolution equations from
  microscopic quantum dynamics}}, in Evol. equations, vol.~17 of Clay Math.
  Proc., Amer. Math. Soc., Providence, RI, 2013, pp.~511--572.

\bibitem{Schwartz1966}
{\sc L.~Schwartz}, {\em Th\'{e}orie des distributions}, Publications de
  l'Institut de Math\'{e}matique de l'Universit\'{e} de Strasbourg, No. IX-X.
  Nouvelle \'{e}dition, enti\'{e}rement corrig\'{e}e, refondue et
  augment\'{e}e, Hermann, Paris, 1966.

\bibitem{Spencer82}
{\sc R.~G. Spencer}, {\em The {H}amiltonian structure of multispecies fluid
  electrodynamics}, in Mathematical methods in hydrodynamics and integrability
  in dynamical systems ({L}a {J}olla, {C}alif., 1981), vol.~88 of AIP Conf.
  Proc., Amer. Inst. Phys., New York, 1982, pp.~121--126.

\bibitem{SK82}
{\sc R.~G. Spencer and A.~N. Kaufman}, {\em Hamiltonian structure of two-fluid
  plasma dynamics}, Phys. Rev. A (3), 25 (1982), pp.~2437--2439.

\bibitem{Spohn80}
{\sc H.~Spohn}, {\em {Kinetic equations from {H}amiltonian dynamics:
  {M}arkovian limits}}, Rev. Mod. Phys., 52 (1980), pp.~569--615.

\bibitem{Treves1967}
{\sc F.~Tr{\`e}ves}, {\em Topological vector spaces, distributions and
  kernels}, Academic Press, New York-London, 1967.

\bibitem{X2015}
{\sc Z.~Xie}, {\em Derivation of a nonlinear {S}chr\"{o}dinger equation with a
  general power-type nonlinearity in {$d=1,2$}}, Differential Integral
  Equations, 28 (2015), pp.~455--504.

\bibitem{Zakharov1968}
{\sc V.~E. Zakharov}, {\em Stability of periodic waves of finite amplitude on
  the surface of a deep fluid}, Journal of Applied Mechanics and Technical
  Physics, 9 (1968), pp.~190--194.

\end{thebibliography}
\end{document}